\documentclass{article}
\usepackage[top=3cm, bottom=3cm, left=2cm, right=2cm]{geometry}
\usepackage{graphicx} 
\usepackage{tabularx}
\usepackage{amsmath, amssymb, amsthm}
\usepackage{hyperref}
\usepackage{booktabs}
\usepackage{pifont}
\usepackage{xcolor}
\usepackage{xspace}
\usepackage{threeparttable}
\usepackage{tabularx}
\usepackage{verbatim}
\usepackage{cite}
\usepackage{tikz}
\usepackage{cleveref}
\usepackage{authblk}
\usepackage{bm}

\newcommand{\etal}{\textit{et al.\@} }
\newcommand{\etalx}{\textit{et al.\@}\xspace}
\newcommand{\ts}[1]{\ensuremath{_{\text{#1}}}}
\newcommand{\V}{{\color{green}\ding{51}}}
\newcommand{\X}{{\color{red}\ding{55}}}

\newcommand{\Vd}{\V$^\dagger$}
\newcommand{\Vs}{\,\,\,\V$^*$}
\newcommand{\Xdd}{\X$^\ddagger$}

\newcommand{\Vm}{\V$^\mathsection$}

\newcommand{\na}{{\color{blue}n.a.}}
\newcommand{\BB}{($\boldsymbol{\ast}$)}
\newcommand{\ind}{\mathrel{\perp\!\!\!\perp}}
\newcommand{\notind}{\mathrel{\not\!\perp\!\!\!\perp}}
\newcommand{\Xs}{X_1,\ldots,X_n}
\newcommand{\Xsm}{X_1,\ldots,X_{n-1}}
\newcommand{\XsY}{X_1,\ldots,X_n;Y}
\newcommand{\Ic}{\ensuremath{I_\cap}\xspace}
\newcommand{\Icap}{\ensuremath{I_\cap(X_1,X_2;Y)}}
\newcommand{\Icaps}{\ensuremath{I_\cap(\Xs;Y)}}
\newcommand{\Icapsm}{\ensuremath{I_\cap(X_1,\ldots,X_{n-1};Y)}}
\newcommand{\Ip}{\ensuremath{I_\partial}\xspace}

\newcommand{\cxxy}{\ensuremath{(X_1;X_2;Y)}}
\newcommand{\argct}[3]{\{#1\}\{#2\}\{#3\}}

\newcommand{\SR}{\textbf{(SR)}\xspace} %
\newcommand{\Sz}{\textbf{(S\ts{0})}\xspace} %
\newcommand{\Mz}{\textbf{(M\ts{0})}\xspace} %
\newcommand{\GP}{\textbf{(GP)}\xspace} %
\newcommand{\LPz}{\textbf{(LP\ts{0})}\xspace} %
\newcommand{\ID}{\textbf{(ID)}\xspace} %
\newcommand{\IID}{\textbf{(IID)}\xspace} %
\newcommand{\TM}{\textbf{(TM)}\xspace} %
\newcommand{\TC}{\textbf{(TC)}\xspace} %
\newcommand{\So}{\textbf{(S\ts{1})}\xspace} %
\newcommand{\Mo}{\textbf{(M\ts{1})}\xspace} %
\newcommand{\LPo}{\textbf{(LP\ts{1})}\xspace} %
\newcommand{\BP}{\textbf{(BP)}\xspace} %
\newcommand{\TE}{\textbf{(TE)}\xspace} %
\newcommand{\AST}{\textbf{($\boldsymbol{\ast}$)}\xspace} %
\newcommand{\AD}{\textbf{(AD)}\xspace} %
\newcommand{\CO}{\textbf{(CO)}\xspace} %
\newcommand{\LB}{\textbf{(LB)}\xspace} %
\newcommand{\EI}{\textbf{(EI)}\xspace} %
\newcommand{\SE}{\textbf{(SE)}\xspace} %
\newcommand{\LP}{\textbf{(LP)}\xspace}
\newcommand{\DI}{\textbf{(DI)}\xspace}
\newcommand{\DIp}{\textbf{(DI)}\small$^\|$\xspace}
\newcommand{\COp}{\textbf{(CO)}\small$^\|$\xspace}

\newcommand{\Imin}{\ensuremath{I_\cap^{\text{min}}}\xspace}
\newcommand{\Ired}{\ensuremath{I^{\text{red}}_\cap}\xspace}
\newcommand{\Iwedge}{\ensuremath{I_\cap^{\wedge}}\xspace}
\newcommand{\IBROJA}{\ensuremath{I_\cap^{\text{BROJA}}}\xspace}
\newcommand{\IMMI}{\ensuremath{I_\cap^{\text{MMI}}}\xspace}
\newcommand{\IPM}{\ensuremath{I_\cap^{\text{PM}}}\xspace}
\newcommand{\IRR}{\ensuremath{I_\cap^{\text{RR}}}\xspace}
\newcommand{\ICCS}{\ensuremath{I_\cap^{\text{CCS}}}\xspace}
\newcommand{\IDEP}{\ensuremath{I_\cap^{\text{DEP}}}\xspace}
\newcommand{\ICT}{\ensuremath{I_\cap^{\text{CT}}}\xspace}
\newcommand{\IMES}{\ensuremath{I_\cap^{\text{MES}}}\xspace}
\newcommand{\IIG}{\ensuremath{I_\cap^{\text{IG}}}\xspace}
\newcommand{\Ialpha}{\ensuremath{I_\cap^{\alpha}}\xspace}
\newcommand{\Iprec}{\ensuremath{I_\cap^{\prec}}\xspace}
\newcommand{\Ido}{\ensuremath{I_\cap^{\text{do}}}\xspace}
\newcommand{\Idelta}{\ensuremath{I_\cap^{\delta}}\xspace}
\newcommand{\IRAV}{\ensuremath{I_\cap^{\text{RAV}}}\xspace}
\newcommand{\IRDR}{\ensuremath{I_\cap^{\text{RDR}}}\xspace}
\newcommand{\ISX}{\ensuremath{I_\cap^{\text{SX}}}\xspace}

\newcommand{\IRRs}{\ensuremath{I_\cap^{\text{RR}}(X_1,X_2;Y)}\xspace}

\newcommand{\ICTs}{\ensuremath{I_\cap^{\text{CT}}(X_1,X_2;Y)}\xspace}
\newcommand{\IMESs}{\ensuremath{I_\cap^{\text{MES}}(X_1,X_2;Y)}\xspace}
\newcommand{\IIGs}{\ensuremath{I_\cap^{\text{IG}}(X_1,X_2;Y)}\xspace}

\newcommand{\Idos}{\ensuremath{I_\cap^{\text{do}}(X_1,X_2;Y)}\xspace}
\newcommand{\Ideltas}{\ensuremath{I_\cap^{\delta}(X_1,X_2;Y)}\xspace}
\newcommand{\IRAVs}{\ensuremath{I_\cap^{\text{RAV}}(X_1,X_2;Y)}\xspace}

\newcommand\figsubref[2]{\hyperref[#1]{\ref*{#1}#2}}

\newtheorem{theorem}{Theorem}
\newtheorem{lemma}{Lemma}
\newtheorem{proposition}{Proposition}

\newtheorem{remark}{Remark}

\newenvironment{DIFnomarkup}{}

\usepackage{caption}
\usepackage{subcaption}
\title{The mathematical landscape of partial information decomposition: \\ A comprehensive review of properties and measures}

\begin{document}

\author[1]{Alberto Liardi\thanks{Corresponding author: \href{mailto:a.liardi@imperial.ac.uk}{a.liardi@imperial.ac.uk}}}
\affil[1]{Department of Computing, Imperial College London, UK}

\author[1,2,3]{Keenan J. A. Down}
\affil[2]{Department of Psychology, University of Cambridge, Cambridge, UK}
\affil[3]{Department of Psychology, Queen Mary University, London, UK}

\author[4,1]{George Blackburne}
\affil[4]{Department of Experimental Psychology, University College London, London, UK}

\author[5]{Matteo Neri}
\affil[5]{Institut de Neurosciences de la Timone, Aix Marseille Université, Marseille, France}

\author[1,6]{\mbox{Pedro A. M. Mediano}}
\affil[6]{Division of Psychology and Language Sciences, University College London, UK}

\date{\vspace{-1.2cm}}

\maketitle

\begin{abstract}
    
Partial Information Decomposition (PID) has become one of the most prominent information-theoretic frameworks for describing the structure and quality of information in complex systems. Despite its widespread utility, there exists no unique solution constraining precisely how a PID should be constructed, leading to a multiverse of different formalisms with different mathematical commitments. In this work, we provide a comprehensive overview of the mathematical landscape of PID. By integrating existing PID measures into a common language, we systematically examine all major approaches to the PID framework that have emerged so far, determining for each measure whether or not each known property holds. In addition, we derive a web of all known theorems mapping the relationships and incompatibilities between these properties, before also revealing some novel interdependency results. 
In doing so, we chart a brief history of the framework, promote a unified perspective for its discussions, and offer a path towards both theoretical refinement and informed empirical applications for the future of this powerful method.
\end{abstract}

\section{Introduction}

Classical information theory, unified by Claude Shannon in 1948~\cite{shannon1948mathematical}, is one of the great mathematical triumphs of the 20th century, bringing together notions from communications theory, thermodynamics, and statistical mechanics into one all-encompassing theoretical discipline~\cite{nyquist1924certain, hartley1928transmission, clausius1865ueber, boltzmann1872weitere, boltzmann1970weitere, jaynes1957information}. A key reason for its influence across disciplines and its broad applicability is that information theory provides a precise and remarkably general formalism for describing and capturing all kinds of relationships between random variables. 
Be they linear, nonlinear, or otherwise difficult to describe, mutual information, for example, is always equal to zero if and only if the variables are entirely independent~\cite{cover1999elements}. 
That is to say, while correlation can detect linear relationships between variables (or at the very least, relatively monotonic relationships), mutual information and other information-based approaches can detect all possible classes of interaction phenomena.

Despite this, surprising gaps emerge in classical information theory~\cite{james2017multivariate, williams2010nonnegative}, especially in its treatment of higher-order dependencies in systems of many interacting components. Understanding these interdependencies is fundamental to studying the behaviour of complex systems, where precise mechanisms are often difficult to reckon with~\cite{rosas2022disentangling, luppi2020synergisticCore, down2026synergistic, neri2025taxonomy, combrisson2025higher, battiston2020networks, robiglio2025synergistic}. 
Fully describing these interactions using Shannon's theory alone is challenging, since the information-sharing relationships among more than two variables can take qualitatively distinct forms that are not captured by mutual information or other classical measures of interdependence.
Suppose, for example, we wish to send an encrypted letter, accompanied by a second letter that explains how to decrypt it. Neither letter alone provides any information about the intended message, each leaving the recipient uncertain of the intended message. It is only when both letters are combined that the intended information is conveyed: knowledge of both systems is required to describe the content, with the two letters acting \textit{synergistically} to convey the message. Suppose now that we write the same unencrypted message in both letters. In this case, the message would be transmitted \textit{redundantly}, offering the recipient two copies of the intended data. The classical theory, somewhat surprisingly, struggles to differentiate between these two effects: determining whether multiple variables contain repeated or complementary information about a target variable cannot be directly quantified using classical information theory alone.

Partial Information Decomposition (PID) is a method in contemporary information theory which addresses this particular weakness of the classical theory: the inability to separate redundant and synergistic contributions. Introduced in 2010 by Williams and Beer \cite{williams2010nonnegative}, PID aims to disentangle these components by extending the inclusion–exclusion view of information beyond the quantities described by Shannon. By defining an intersection measure $I_{\cap}(X_1, \ldots, X_n;Y)$ that captures how much information \textit{the sources} $X_1,\ldots, X_n$ share redundantly about \textit{a target} $Y$, one can derive elementary \textit{atoms} that partition information into qualitatively distinct parts. In doing so, atoms corresponding to redundant, synergistic, and unique contributions to the mutual information emerge naturally from the proposed lattice structure.

Despite the beauty and seminal nature of Williams' and Beer's PID framework, the intersection measure they proposed, $\Imin$, was immediately met with additional scrutiny~\cite{harder2013bivariate, bertschinger2013shared}. Most notably, Harder \etal pointed out that for two independent bits $X_1$ and $X_2$, with target $Y = X_1X_2$ (a system called the two-bit-copy, TBC), $\Imin$ assigns to the sources $X_1$ and $X_2$ one bit of redundant information about $Y$, even though they are entirely independent \cite{harder2013bivariate}. Bertschinger \etal also noted that $\Imin$ has the peculiar property that adding resolution to the target variable (by joining it with another variable, for example), can sometimes reduce the amount of redundancy. If \textit{more} can be known overall about the target variable, it is certainly surprising that the sources somehow share \textit{less} information about the target \cite{bertschinger2013shared}.

For these reasons, while the partial information lattice of Williams and Beer was (rightly) praised, their proposed intersection information measure $\Imin$ was not, leaving a complex vacuum: an algebraic theory without an accompanying measure. As a result, a suite of new properties or ``axioms'' that a redundancy function should satisfy was proposed shortly after PID's inception. This, in turn, resulted in the introduction of many different PID measures, each designed to cover properties that the original measure $\Imin$ could not achieve. Since the genesis of the partial information decomposition method, at least 19 different approaches have been proposed, and, at the time of this review, no general consensus on the best approach has emerged~\cite{williams2010nonnegative, harder2013bivariate, griffith2014intersection, bertschinger2014quantifying, griffith2014quantifying, griffith2015quantifying, barrett2015exploration, goodwell2017temporal, ince2017measuring, james2018unique, kay2018exact, james2018uniquekey, finn2018pointwise, niu2019measure, kay2024partial, sigtermans2020partial, sigtermans2020towards, makkeh2021introducing, schick2021partial, ehrlich2024partial, kolchinsky2022novel, venkatesh2022partial, mages2023decomposing, lyu2024explicit}.
 
Although this process enriched the conceptual landscape, it also led to a proliferation of partially overlapping and competing properties, each reflecting different conceptions of what redundant information is. To complicate the matter, several of these properties were subsequently shown to be mutually incompatible, implying that no single redundancy measure can satisfy them all \cite{rauh2014reconsidering}. Thus, the field now comprises a diverse collection of measures grounded in different, and often irreconcilable, axiomatic choices, contributing to ambiguity in both the theoretical interpretation and empirical application of PID.  In this work, we endeavour to alleviate some of these outstanding shortcomings by providing a unified and comprehensive perspective on the state of the field.

\subsection{Main contributions}
In this work, we provide a unified treatment of Partial Information Decomposition (PID), generating the first comprehensive formal resource for its largely decentralised research programme. 

Firstly, we present a brief mathematical introduction to those who are unfamiliar with the PID methodology, offering some motivation for the problem itself, and exploring why so much research has been done in this surprisingly challenging direction (Sec.~\ref{sec:guide}). 
In doing so, we provide an overview of all major properties stated in the literature, along with a brief discussion of their significance and the motivations behind their original conception (Sec.~\ref{sec:properties} and Table~\ref{tab:property_summary}).

Similarly, we also present a brief historical overview of each of the PID measures (Sec.~\ref{sec:brief_history}), with a longer version reported in the appendix, describing the rationale underpinning their introductions. 
To make comparisons between measures and properties more straightforward, we provide mathematical definitions of each PID measure and axiom in a standardised language in the appendix.

We then present our first major result: a systematic verification of whether each proposed PID property is satisfied by each existing PID measure (Table~\ref{tab:PID_properties}). For all results not already known in the literature, we provide either a proof or a counterexample for each combination (available in the appendix).
To yield further insights into the latent relationships between PID measures, we hierarchically categorise PID measures based on these axiomatic profiles (Sec.~\ref{sec:results_measures} and Fig.~\ref{fig:similarity}), providing insights into the major philosophical branches describing \textit{redundant} information.

Furthermore, we then collate all known theorems relating PID properties, encompassing both implications and no-go results (Sec.~\ref{sec:imps_nogos}). In doing so, we present new and old theorems, introduce a hypergraph-based representation of these relationships (Fig.~\ref{fig:nogo}), and integrate them into an open-source implementation based on an automatic theorem prover, enabling automated verification of property compatibility.

Finally, we discuss the implications of this work on the Partial Information Decomposition method as a whole (Sec.~\ref{sec:discussion}). There, we outline useful practices for empirical implementation of the PID framework, briefly describe alternative approaches to the PID method, and discuss potential gaps in the literature and open avenues for future research.

In short, our contributions are as follows:
\begin{itemize}
    \item We present an overview of all major PID properties stated in the literature, along with a brief description of their significance and motivation (Sec.~\ref{sec:guide} and Table~\ref{tab:property_summary}).
    \item We provide a systematic verification of whether or not each known property holds for each and every measure (Table~\ref{tab:PID_properties}).
    \item We collate all known theorems relating PID properties into a common language, including both implications and no-go theorems, provide novel theorems (Sec.~\ref{sec:properties}), and link them together in an automatic theorem prover~\cite{de2008z3}.
    \item We perform a descriptive analysis of the measures and properties, resulting in a hierarchy of clustered measures and a hypergraph description of the theorems (Sec.~\ref{sec:results}).
\end{itemize}

\section{A comprehensive guide to PID properties and their relationships} \label{sec:guide}

\subsection{A partially gentle introduction to Partial Information Decomposition}
Given a set of $n$ sources $\Xs$ and a target $Y$, partial information decomposition aims to decompose the information that a collection of $n$ sources $\Xs$ provides about a target $Y$ into multiple qualitatively different parts: redundancy, unique information, and synergy. These denote the information provided about the target by both sources separately, only one source, and the two sources taken together, respectively. In the case of more than two sources, combinations of these three can also appear.

To achieve the decomposition, Williams and Beer build on the inclusion-exclusion principle (IEP) and construct a partially ordered set (poset) of redundancies, i.e.\ information quantities that represent the information common to a set of sources. The corresponding algebraic object is known as the \textit{redundancy lattice}. Building the redundancy lattice involves considering the set of possible collections of sets of sources $\mathcal{A}$ where no source is a superset of any other:
\begin{equation}
    \mathcal{A}(\Xs) \equiv \{ \alpha\in\mathcal{P}(\mathcal{P}(\Xs)) \,|\, \forall A_i,A_j\in\alpha, A_i\not\subseteq A_j\} \,,
\end{equation}
where $\mathcal{P}(\mathcal{X})$ represents the set of all non-empty subsets of $\mathcal{X}$. This set can then be equipped with the partial order $\preceq$, where $\alpha \preceq \beta$ for $\alpha, \beta \in \mathcal{A}$ if, for every source $B \in \beta$ (consisting of possibly multiple variables), there is at least one source $A \in \alpha$ such that $A \subseteq B$. Put simply, the lattice creates a hierarchy of information based on availability, or the level of knowledge required to access it: $\alpha$ is ``below'' $\beta$ if the information in $\alpha$ can be fully recovered from any source present in $\beta$.

For each collection $\alpha\in\mathcal{A}$, we can then define a redundancy function $\Ic$. Consider, for example, a system of three variables $X_1, X_2, X_3$ providing information about a target $Y$. Then $\Ic(X_1,X_2X_3;Y)$ corresponds to the information that can be known about $Y$ from either $X_1$ or the joint variable $X_2X_3$.

The redundancy lattice provides insights into how redundant information is distributed across the sources, with higher elements in the lattice providing at least as much redundant information as lower nodes. Following this setup, one can implicitly define the so-called PID atoms $I_\partial$, which signify how much information is gained at each level:
\begin{align}\label{eq:PID_intersect_def}
    \Ic(\alpha;Y) = \sum_{\beta\preceq\alpha} I_\partial(\beta;Y) \,,
\end{align}
where $\alpha$ is a generic collection of sets of sources, and the value of $I_\partial(\alpha, Y)$ is the amount of information corresponding to the \textit{atom} represented by $\alpha$. Using this expression, we can isolate the atoms themselves by applying M{\"o}bius inversion to $\Ic$~\cite{rota1964foundations, jansma2025fast}, obtaining each atom recursively as:
\begin{equation} \label{eq:PID_partial_def}
    I_\partial(\alpha;Y) = \Ic(\alpha;Y) - \sum_{\beta\prec\alpha} I_\partial(\beta;Y) \,.
\end{equation}
Throughout this work, we will, wherever possible, explicitly write the collection of sets of sources that we refer to. For example, the redundancy function of the collection $\alpha=\{\{X_1\},\{X_2,X_3\}\}$ will be written as $\Ic(X_1,X_2X_3;Y)$ (with corresponding atom $I_\partial(X_1,X_2X_3;Y)$), so as to minimise ambiguity.

Unsurprisingly, the number of possible collections of sources $\alpha$ can become extremely large even for a limited number of sources, growing with the Dedekind numbers \cite{gutknecht2021bits}. Offering some further constraints on the problem, Williams and Beer introduce several axioms on the redundancy measure $I_\cap$. The first of these, (Weak) Symmetry \Sz, follows the intuitive idea that the ordering of sources in $\alpha$ should not affect the value of redundancy. Additionally, they introduce (Weak) Monotonicity \Mz, which states that adding sources to the redundancy function should not increase \Ic, keeping it unchanged if the added source is more informative than another already present. For reasons that will be apparent later, we refer to this last part of the property as Subset Equality \SE~\cite{ince2017measuring}. 

Finally, since this is a decomposition of mutual information, a connection between these quantities and classical information-theoretic measures needs to be established. Hence, Williams and Beer introduced the Self-Redundancy \SR property, imposing that the redundancy function of one source and the target is equal to the mutual information between the two.  Thus, not only are \Sz, \Mz, and \SR intuitive, but they also ground the redundancy lattice in a well-behaved framework.

Although some proposed PID measures challenge \Mz, the remaining \Sz, \SR, and \SE are core properties of all PID definitions. In fact, as we will see, almost all of the proposed measures possess all three.

An example worth directly referencing is the bivariate case. Given two sources $X_1,X_2$ and a target $Y$, the collection of sets of possible sources is given by $\mathcal{A}(X_1,X_2)=\{\{\{1\},\{2\}\},\{1\},\{2\},\{1,2\}\}$, for a total of four possible source combinations. Correspondingly, there are four nodes in the redundancy lattice, and we can relate the intersection measure $\Ic$ to the individual atoms $I_\partial$ using Eq.~\ref{eq:PID_intersect_def} as
\begin{align} \label{eq:PID2_1}
    \Icap & = I_\partial(X_1,X_2;Y) \\
    I(X_1;Y) & = I_\partial(X_1,X_2;Y) + I_\partial(X_1;Y) \\
    I(X_2;Y) & = I_\partial(X_1,X_2;Y) + I_\partial(X_2;Y) \\
    I(X_1,X_2;Y) & = I_\partial(X_1,X_2;Y)+I_\partial(X_1;Y)+I_\partial(X_2;Y)+I_\partial(X_1X_2;Y) \,, \label{eq:PID2_4}
\end{align}
where we employed \Sz, \Mz, and \SR to significantly decrease the number of non-trivial redundancy terms.
In the notation above, $I_\partial(X_1,X_2;Y)$ represents redundancy, $I_\partial(X_i;Y)$ indicates information unique to $X_i$, and $I_\partial(X_1X_2;Y)$ is synergy.

As previously mentioned, performing partial information decomposition on a system with $n\ge3$ sources introduces additional complexity, as different combinations of redundant, unique, and synergistic effects become possible. For example, $I_\partial(X_1,X_2X_3;Y)$ represents the redundant information between source $\{X_1\}$ and $\{X_2,X_3\}$, with the latter term including the synergistic information $X_2$ and $X_3$ provide about $Y$. For more details on the PID lattice and the notation used in this work, we refer to Appendix \ref{appendix:mathematical_bg}.

The conceptual clarity and mathematical simplicity of this framework has contributed to its widespread popularity, with applications spanning from artificial~\cite{beer2015information, wibral2017partial, ince2017measuring, tax2017partial, proca2024synergistic} and biological neural networks~\cite{luppi2020synergisticCore, varley2023partial, varley2023multivariate, luppi2024information, gatica2021high, gatica2022high, combrisson2025higher, coronel2025integrated, santoro2025beyond}, to gene regulatory systems~\cite{chan2017gene, chen2018evaluating}, cellular automata~\cite{finn2018quantifying, rosas2018information}, and labour market dynamics \cite{rajpal2025synergistic}. 
However, to uniquely solve the linear system of equations given by Eq.~\eqref{eq:PID_partial_def}, an additional definition of intersection information $\Ic$ is required, as it is absent from the classical theory. Since specifying a redundancy function which has sufficiently many desired properties has been the primary route forward in this area, we now explore these additional constraints. In the next section, we outline the various properties that have been proposed over the years for the measure $I_\cap$, giving a brief history of their introduction, as well as implications and incompatibilities. 

\subsection{A summary of PID properties} \label{sec:properties}

Throughout the development of the partial information decomposition framework, a wide range of PID properties have been suggested. These properties, often referred to as \textit{axioms}, specify desiderata that a reasonable notion of redundancy is argued to satisfy. Generally, they are motivated by intuitive principles that one might expect redundancy to obey, supported by operational interpretations, grounded in empirical considerations, or inspired by the mathematical properties of classical information measures.
Unfortunately, it was demonstrated early in the literature that some of these properties are, in fact, mutually inconsistent, so that no single redundancy measure can simultaneously satisfy all of them. As a consequence, prolific discussion about which of these properties should be favoured or abandoned has exploded, with a lack of consensus persisting to the present day.

The first properties to be introduced, Self-Redundancy \SR, (Weak) Symmetry \Sz, and (Weak) Monotonicity \Mz, were introduced by Williams and Beer for the construction of an interpretable redundancy lattice~\cite{williams2010nonnegative}. In the same work, the authors also showed that, taken together, these properties guarantee a non-negative redundancy function -- a property referred to as Global Positivity, \GP. 
In addition, the authors showed that their proposed PID measure \Imin satisfied a property called Local Positivity, \LP, whereby all atoms obtained from the decomposition are themselves non-negative. Since then, \LP has long been a heavily desired property of PID definitions, as non-negative quantities enable a clear interpretation of the resulting PID atoms, e.g.\ in terms of information transmission, as is possible for standard information-theoretic measures.

More recently, PID measures that reject \LP in favour of a pointwise approach have been proposed~\cite{finn2018pointwise, makkeh2021introducing}. These suggest that negative atoms might have a natural interpretation as providing \textit{misinformation} about the target, similar to how negative values of pointwise mutual information naturally appear in the classical theory. Alternatively, such negative atoms might indicate the presence of the so-called ``mechanistic redundancy''~\cite{ince2017measuring}, i.e.\ redundancy that arises purely due to the source-target interdependence, independently of the source-source correlation~\cite{harder2013bivariate}. Within these approaches, the authors also reject the inequality condition of \Mz while retaining the equality statement, i.e.\ that $\Ic$ should remain unchanged if a source more informative than another is considered, thus introducing the Subset Equality property \SE~\cite{ince2017measuring}, later also referred to as ``deterministic equality''~\cite{kolchinsky2022novel}.

A fundamental property which is often not explicitly stated is Equivalence-class Invariance \EI, first formally suggested by Griffith \etal~\cite{griffith2014intersection}. Under \EI, redundancy functions are required to be invariant under isomorphisms, i.e.\ invertible transformations of the random variables. This invariance is a key principle in statistics, as it ensures that statistical quantities depend solely on the statistical structure of the variables and not on semantic details such as their particular labelling or representation. Importantly, this requirement highlights a fundamental distinction between statistical analysis and the study of underlying mechanisms, as it is known that different mechanisms can sometimes provide isomorphic probability distributions~\cite{pearl2014probabilistic}, therefore being indistinguishable from a purely statistical perspective~\cite{james2017multivariate}. For this reason, if \EI is taken to hold, purely causal considerations cannot serve to guide intuitions for the construction or interpretations of PID definitions~\cite{finn2018pointwise}. In the early works on PID, \EI was majorly adopted implicitly, being a universal and well-established property in statistics. However, its application to specific, seemingly intuitive toy gates quickly contributed to the emergence of profoundly contrasting views of how redundancy should behave.

Along this line, an apparently simple yet profound system that has fundamentally shaped the history of PID is the Two-Bit Copy (TBC) gate. The TBC represents the situation in which the target is a copy of the two sources (Table~\ref{tab:TBC}). 
\begin{DIFnomarkup}
\begin{table}[t]
    \centering
    \begin{tabular}{ccc}
    \toprule
         $X_1$ & $X_2$ & $Y$ \\
    \midrule
         0 & 0 & 00 $\equiv$ A \\
         0 & 1 & 01 $\equiv$ B \\
         1 & 0 & 10 $\equiv$ C \\
         1 & 1 & 11 $\equiv$ D \\
     \bottomrule
    \end{tabular}
    \caption{\textbf{Truth table for the Two-Bit Copy system (TBC).}  The relabelling of the target is ensured by \EI.}
    \label{tab:TBC}
\end{table}
\end{DIFnomarkup}
Importantly, if we accept that information measures should be invariant under bijective transformations \EI, this system can be seen in two ways:
\begin{enumerate}
    \item the target can be viewed as a multivariate random variable $Y=(Y_1,Y_2)$, where $Y_1=X_1$ and $Y_2=X_2$; or
    \item the target can be seen as an entirely univariate variable with four states, with outcomes that can be renamed (e.g.\ to $\{A,B,C,D\}$). 
\end{enumerate}
In the former instance (1), from mechanistic intuitions, one could expect that redundancy is only present if the two sources are strongly correlated. This is more apparent if we reflect on an empirical scenario, where we consider two time series $X_1^t,X_2^t$ with high autocorrelation and low cross-correlation. If the past states $X_1^{t-1},X_2^{t-1}$ are considered as sources and the joint future state ($X_1^t,X_2^t$) as the target, one would expect low redundancy and high unique information, with redundancy increasing with the correlation between the sources. In the limit, one could expect zero redundant information for independent sources. This reasoning has led Harder \etal to first argue against \Imin as it assesses the same \textit{amount} of information but not the same \textit{kind} of information, and then to introduce the Identity property \ID~\cite{harder2013bivariate}.

Formally described, \ID requires that the redundancy in the TBC should be equal to the mutual information between the sources~\cite{harder2013bivariate}, guaranteeing some continuity between the \textit{sharedness} of mutual information and the \textit{sharedness} of redundancy. However, in the second scenario above (2), it is not entirely clear what about the system should indicate sharedness of information between the sources. With this perspective, interpreting or explaining the meaning of the \ID property is not completely trivial. However, if one additionally assumes \EI, then relabelling outcomes should not affect the statistical structure of the system, and hence the redundancy calculation should not depend on whether one chooses perspective (1) or (2). Therefore, the TBC highlight the divergence between the two approaches even for simple systems, challenging the coexistence of \EI with \ID. Moreover, it can be argued that \ID prevents the characterisation of ``synergistic entropy'', the additional uncertainty arising over and above the uncertainty guaranteed if they were independent~\cite{ince2017measuring}.

To partially alleviate these issues, the Independent Identity property \IID has been introduced as a weaker variant of \ID, prescribing that independent sources should not provide redundancy in the TBC~\cite{ince2017measuring}. This property maintains the intuitions developed in the example above, while avoiding the stronger implications of \ID for correlated sources. Since its appearance, \IID has been embraced as a key desideratum by numerous authors in the field~\cite{ince2017measuring, kolchinsky2022novel, lyu2024explicit}. However, even if one accepts \IID and not \ID, it has been argued that a non-vanishing redundancy should still be expected even with uncorrelated sources in the TBC since they can ``happen to provide the same [probability mass] exclusions with respect to the [two-event] target distribution''~\cite{finn2018pointwise}. Following this reasoning, \ID and \IID must be rejected, as accepting \EI naturally implies that any conclusion should not be inferred from causal interpretations, and hence that the intuitions derived from (1) are misleading~\cite{finn2018pointwise}.

On the other hand, a complementary approach advocated by Chicharro is to drop \EI in favour of \IID~\cite{chicharro2018identity}, suggesting that the semantic value is meaningful as it is crucial to assess mechanistic redundancy. 
As we see later, this difference in stance is one of the major factors that changes the results obtained by the different PID definitions.
We formalise the inconsistencies between \ID, \EI, and other properties in the following sections, further discussing the interpretational consequences of adopting a specific axiom.

\begin{DIFnomarkup}
\begin{table}[t]
    \centering
    \setlength{\tabcolsep}{-1pt}
    \begin{tabularx}{\textwidth}{c c >{\centering\arraybackslash}X}
        \toprule
         \textbf{Name} & \textbf{Symbol} & \textbf{Description} \\
         \midrule
        Self-redundancy & \SR & \Ic of a single source equals its mutual information. \\
        (Weak) Symmetry & \Sz & \Ic is invariant under any permutation of the sources. \\
        (Weak) Monotonicity & \Mz & Adding a source to \Ic does not increase redundancy\footnotemark. \\
        Global Positivity & \GP & \Ic is non-negative. \\
        (Weak) Local Positivity & \LPz & All $I_\partial$ atoms are non-negative for $n=2$ sources. \\
        Strong Local Positivity & \LPo & All $I_\partial$ atoms are non-negative for $n\ge3$ sources. \\
        Local Positivity & \LP & All $I_\partial$ atoms are non-negative $\forall\, n$ sources. \\
        Independent Identity & \IID & In the TBC, independent sources provide vanishing \Ic\footnotemark[2]. \\
        Identity & \ID & In the TBC, \Ic is equal to the mutual information between sources\footnotemark[2]. \\
        Target Monotonicity & \TM & Adding a target does not decrease the value of \Ic. \\
        Target Chain rule & \TC & The chain rule for \Ic holds for the target variables. \\
        Subset Equality & \SE & Removing a source more informative than another preserves \Ic. \\
        Lower Bound & \LB & A variable less informative than all sources lower-bounds \Ic. \\
        Target Equality & \TE & Adding the target to the sources does not change \Ic. \\
        Strong Monotonicity & \Mo & Adding a source to \Ic does not increase redundancy\footnotemark[\value{footnote}]. \\
        Strong Symmetry& \So & \Ic is invariant under any permutation of sources or target. \\
        Assumption ($\ast$) & \AST & Redundancy and unique information only depend on the marginal and pairwise distributions between sources and target.\\
        Blackwell Property & \BP & Unique information vanishes iff one source Blackwell-precedes another. \\
        Additivity & \AD & \Ic of two independent subsystems equals the sum of their \Ic. \\
        Continuity & \CO & \Ic is continuous under small changes in the probability distribution. \\
        Differentiability & \DI & \Ic is differentiable with respect to the probability distribution. \\
        Equivalence-class Invariance & \EI & \Ic is invariant under the equivalence relation $\sim$ (Eq.~\eqref{eq:def_equivalence}). \\
         \bottomrule
    \end{tabularx}
    \caption{\textbf{Summary of existing PID properties.}}
    \label{tab:property_summary}
\end{table}
\end{DIFnomarkup}

Following a different line of reasoning, several axioms have been proposed in an effort to mirror properties of classical Shannon information measures, such as Target Monotonicity \TM, Target Chain rule \TC~\cite{bertschinger2013shared, finn2018pointwise, makkeh2021introducing}, Strong Symmetry \So~\cite{bertschinger2013shared}, and Strong Monotonicity \Mo~\cite{griffith2014intersection}. It soon became apparent, however, that some of these PID properties are mutually incompatible with other fundamental ones, such as \LP and \ID~\cite{rauh2014reconsidering, matthias2025novel}. Although in this work we restrict our attention to the implications of these properties for redundancy functions, we note that previous studies have analysed what it means for the PID atoms themselves to satisfy the axioms~\cite{banerjee2015synergy, rauh2014reconsidering, rauh2017extractable}. Along this line, Rauh \etal have recently advocated a more focused investigation of additional properties that had previously received little attention: Additivity \AD, Continuity \CO, and Differentiability \DI. These are fundamental mathematical properties satisfied by all classical information measures~\cite{rauh2023continuity}. 

Operational considerations grounded in game theory and decision-making strategies have also contributed to the formulation of PID desiderata. Investigating what it means for an agent to be more informed than another in a decision-theoretic setting led to the introduction of Assumption \AST~\cite{bertschinger2014quantifying}, which suggests that redundancy and unique information should depend on the marginal distributions between sources and targets, as opposed to the full probability distribution of the system.

\footnotetext{Note that these properties also entail an equality condition, see Appendix for the precise definition.}
\footnotetext[2]{Note that these properties also hold for more general distributions than the TBC, see Appendix for more details.}

Since its introduction, \AST has been criticised from several perspectives. Firstly, it has been shown that it can artificially induce correlations between the sources~\cite{james2018uniquekey}. Moreover, the decision problem that originally prompted the introduction of \AST was later deemed ``too restrictive''~\cite{ince2017measuring}: when the operational approach was generalised to a broader game-theoretic setting, explicit violations of Assumption \AST were demonstrated~\cite{ince2017measuring}.

Another central concept in statistical decision theory is the so-called Blackwell order, which provides a partial ordering relation between random variables in terms of their informational usefulness for decision-making problems. Informally, a variable is said to be Blackwell superior to another if it enables at least as good performance in any decision problem. This intuition is formalised by Blackwell’s informativeness theorem~\cite{blackwell1953equivalent}. Drawing inspiration from this framework, Blackwell's theorem was operationalised within PID through the introduction of the Blackwell Property \BP, which states that a source provides zero unique information if and only if it is Blackwell inferior to all other sources~\cite{venkatesh2022partial, kolchinsky2022novel, mages2023decomposing}. While both \AST and \BP arise from operational considerations rooted in decision theory, it is important to emphasise that they are logically independent: neither implies nor is implied by the other, except for specific cases (see Appendix C of Ref.~\cite{venkatesh2022partial} for a detailed discussion).

Finally, a small subset of properties has been proposed primarily on the basis of their intuitive appeal. Target Equality \TE~\cite{griffith2014intersection, kolchinsky2022novel} states that adding the target variable itself to the set of sources should not alter the value of the redundancy \Ic. This property is closely related to Strong Monotonicity \Mo: although \TE was introduced as a separate property by Kolchinsky~\cite{kolchinsky2022novel}, it had already appeared as part of \Mo in the earlier work of Griffith \etal~\cite{griffith2014intersection}. With the aim of constraining the range of admissible redundancy values, the Lower Bound property \LB was introduced~\cite{griffith2014intersection}. \LB asserts that any random variable that is less informative about the target than all sources should provide a lower bound on the redundancy of the system. 

Additional properties have been suggested in the literature. Although we do not consider them in detail in this work, we briefly mention them here. Related to \AD and \CO is also \textit{superaddivity}, which is often satisfied by all measures containing a $\min$. The \textit{Two-event partition} property was proposed by~\cite{finn2018pointwise} as a pointwise version of \AST, for which redundancy and unique information should only depend on the marginal and pairwise pointwise distributions between each source and the target. %

In Ref.~\cite{chicharro2018identity}, Chicharro proposed two additional properties: the \textit{weak} and the \textit{strong} axioms on stochasticity, where a generalisation of \ID is formalised by setting specific constraints on the synergy atom when deterministic relations between source and targets are present. Interestingly, both conditions are stronger requirements than \ID, being both sufficient, but not necessary.

Another property related to the identity axioms is the \textit{combined secret sharing property} suggested by Rauh~\cite{rauh2017secret}. At its core, this property captures the intuition that PID should reflect the behaviour of ideal secret sharing schemes, in which groups of sources either fully know a secret or know nothing about it. Accordingly, redundancy is required to match exactly the amount of information that is jointly accessible to all the considered source groups~\cite{rauh2017secret}. In its pairwise version, this property implies \IID, but not \ID.

More recently, Kolchinsky introduced a property called \textit{order equality}~\cite{kolchinsky2022novel}, which resembles \SE but generalises it to a generic ordering of choice, e.g.\ Blackwell. In the case of a partial information decomposition with $n=2$ sources, the order equality property with Blackwell ordering corresponds to the left implication of \BP.

In the same work, Kolchinsky also discusses the \textit{inclusion-exclusion principle} (IEP). Although not strictly a PID property per se, the IEP is a fundamental assumption within PID that allows one to express synergistic effects in terms of redundancy effects. Kolchinsky argues that there is no universal reason why this property should hold, advocating for the need to define a union information quantity from which synergy can be derived~\cite{kolchinsky2022novel}. In particular, Kolchinsky highlights that rejecting the IEP might remove some of the incompatibilities that we describe in the next section. Although we do not treat IEP as a property in and of itself in this work, future research might benefit strongly from carefully reconsidering its role in the field of information decomposition.

For the reader's benefit, we provide a brief description of each property in Table~\ref{tab:PID_properties}, with the full definitions for each property available in the appendix.

\subsection{Implications and no-go theorems}
\label{sec:theorems_intro}

\begin{DIFnomarkup}
\begin{table}[t]
\centering
\renewcommand{\arraystretch}{1.1} %
\begin{tabular}{lcr}
\toprule
\textbf{Name} & \textbf{Implications} & \textbf{Source} \\
\midrule
Lemma~\ref{lemma:LP--GP} 
    & $\LP \implies \GP$ 
    & \cite{williams2010nonnegative} \\

Lemma~\ref{lemma:ID--IID} 
    & $\ID \implies \IID$ 
    & \cite{ince2017measuring} \\

Lemma~\ref{lemma:So--Sz} 
    & $\So \implies \Sz$ 
    & \cite{bertschinger2013shared} \\

Lemma~\ref{lemma:Mz--Mo} 
    & $\Mz \implies \SE$ 
    & \cite{ince2017measuring} \\

Lemma~\ref{lemma:DI--CO} 
    & $\DI \implies \CO$ 
    & \cite{makkeh2021introducing} \\
\midrule
Prop.~\ref{prop:M0-SR--LB} 
    & $\Mz, \SR \implies \LB$ 
    & \cite{griffith2014intersection} \\

Prop.~\ref{prop:LB-SR--GP} 
    & $\SR, \LB \implies \GP$ 
    & \cite{williams2010nonnegative} \\

Prop.~\ref{prop:Mz-SR--GP}
    & $\Mz, \SR \implies \GP$ 
    & \cite{williams2010nonnegative} \\

Prop.~\ref{prop:LP--M0} 
    & $\LP \implies \Mz$ 
    & \cite{williams2010nonnegative, gutknecht2025babel,matthias2025novel} \\

Prop.~\ref{prop:TC-GP--TM} 
    & $\GP, \TC \implies \TM$ 
    & \cite{bertschinger2013shared} \\

\midrule
Theo.~\ref{theo:TC-TE-SR--ID} 
    & $\SR, \TC, \TE \implies \ID$ 
    & new \\

Theo.~\ref{theo:M0-TE--M1} 
    & $\Mz, \TE \iff \Mo$ 
    & new \\

Theo.~\ref{theo:BP--TE-SE} 
    & $\BP \implies \SE, \TE \quad (n=2)$ 
    & new \\
    
Prop.~\ref{prop:ID-TM--ineq} 
    & $\ID, \TM \implies I_\cap(X_1,X_2;f(X_1,X_2)) \le I(X_1;X_2)$ 
    & \cite{banerjee2015synergy} \\

\midrule
Theo.~\ref{theo:Mz-So--TM} 
    & $\Mz, \So \implies \TM$ 
    & \cite{bertschinger2013shared} \\

Theo.~\ref{theo:S1-SE--TE} 
    & $\So, \SE \implies \TE$ 
    & new \\

Theo.~\ref{theo:SR-SE-So--ID} 
    & $\SR, \SE, \So \implies \ID$ 
    & new \\
\bottomrule
\end{tabular}
\caption{\textbf{Implications, equivalences, and inequalities among PID properties.}}
\label{tab:PID_implications}
\end{table}
\end{DIFnomarkup}

As mentioned above, not all the PID properties thus far proposed are independent of each other. 
In fact, multiple theorems have been established, demonstrating non-trivial relations between the properties, complicating the quest for finding the minimal requirements that a redundancy function should satisfy. Several results show how certain properties can be derived from others (Table~\ref{tab:PID_implications}). For instance, examples of standard implications can be found in the original works by Williams and Beer, who showed that \GP follows from the minimal axioms \Sz, \SR, and \Mz~\cite{williams2010nonnegative}, or in Bertschinger \etal~\cite{bertschinger2013shared}, who remarked that \TC and \GP ensure \TM. 

Further exacerbating the problem, it has been shown that not all the suggested properties can be simultaneously satisfied, raising the question of which axioms to drop and which ones to retain. One of the most influential inconsistency theorems was the one presented by Rauh \etal, who showed that the original Williams and Beer axioms (\Sz, \SR, and \Mz), along with \EI, \LPo, and \ID, are not compatible~\cite{rauh2014reconsidering}. This was the first demonstration that apparently intuitive and widely accepted properties could not be simultaneously reconciled, highlighting the need for a more systematic understanding of the relationships among PID axioms. Since then, many additional no-go theorems have begun to populate the PID literature, primarily regarding properties such as \So, \LP, \ID, and \TC (Table~\ref{tab:PID_incompatibilities}).

A common denominator across the majority of these incompatibility results is the use of the so-called XOR-Source-Copy gate, first introduced by Rauh \etal~\cite{rauh2014reconsidering} and then also used in the later works~\cite{finn2018pointwise, kolchinsky2022novel, lyu2024explicit, matthias2025novel}. This system is composed of three sources of the form $X_1,X_2\sim Bern(1/2)$ and $X_3=X_1 \oplus_2 X_2$, together with a target given by the joint variable $Y=(X_1,X_2,X_3)$. The fact that this gate serves as such a good counterexample for many, otherwise intuitive, properties begs the question: what is peculiar about this system that current measures struggle to understand? A plausible explanation is that it exhibits genuinely higher-order redundancy, i.e.\ redundancy at the triplet level that cannot be reduced to pairwise relations. This is a consequence of $X_1,X_2,X_3$ being pairwise independent, yet any one of them is fully determined by the other two. Such a configuration is clearly not possible in systems with only two sources. From this perspective, PID appears to lack the ability to distinguish between different kinds of redundancy. Indeed, Rauh pointed out that the XOR-Source-Copy is characterised by one bit of \textit{redundant} information which is not \textit{shared}, and as such it cannot be appointed to a specific atom of the PID lattice~\cite{rauh2017secret}. This observation suggests that intuitions based on shared information might be misleading when defining redundancy in multivariate settings. It is also worth noting that this counterexample leverages the relabelling invariance of the outcomes (\EI). Therefore, an alternative solution could be to drop \EI in favour of \IID (Sec.~\ref{sec:MR-ID-EI}).

Finally, we note that several results proposed in the literature were, unfortunately, later shown to be incorrect. These include the claim that \TC and \LP imply \ID, as stated in Ref.~\cite{bertschinger2013shared}, the assertion that \LP and \TM are incompatible, reported in Ref.~\cite{banerjee2015synergy}, and the fact that $\Sz, \Mz, \SR, \EI, \IID$ are incompatible for $n \ge 3$ (Lemma 3 of Ref.~\cite{lyu2024explicit}), as the proposed proof requires the property \LP, which is not stated in the hypotheses.

\subsection{A brief history of PID measures}
\label{sec:brief_history}

In this section, we provide a brief chronological overview of the introduction of the existing PID measures to date. A more comprehensive description is reported in the appendix, where we also include a visual illustration of the historical development of PID (Fig.~\ref{fig:timeline}).

\begin{DIFnomarkup}
\begin{table}[t]
\centering
\renewcommand{\arraystretch}{1.1} %
\begin{tabular}{lcr}
\toprule
\textbf{Name} & \textbf{Incompatible properties} & \textbf{Source} \\
\midrule
Theo.~\ref{theo:noIID-LP1} 
    & $\Sz, \SR, \EI, \LPo, \IID$ 
    & \cite{matthias2025novel} \\

Theo.~\ref{theo:LPz-ID--noTC} 
    & $\SR, \EI, \LP, \IID, \TC$ 
    & \cite{finn2018pointwise} \\

Theo.~\ref{theo:noSR-EI-LPo-TE-TC}
    & $\SR, \EI, \LP, \TE, \TC$ 
    & new \\

Theo.~\ref{theo:noTC-LP1} 
    & $\SR, \EI, \LPo, \TC$ 
    & \cite{matthias2025novel} \\

Theo.~\ref{theo:noGP-IID-TM-Mech}
    & $\GP, \IID, \TM,$ mechanistic redundancy 
    & \cite{rauh2017extractable,james2018unique} \\

Theo.~\ref{theo:noSo} 
    & $\Mz, \SR, \LPo, \So$ 
    & \cite{bertschinger2013shared} \\
\bottomrule
\end{tabular}
\caption{\textbf{Incompatibility results among PID properties.} Where applicable, \EI has been explicitly stated, and \ID has been relaxed into \IID.}
\label{tab:PID_incompatibilities}
\end{table}
\end{DIFnomarkup}
The first redundancy measure \Imin was proposed alongside the PID framework in the seminal work by Williams and Beer~\cite{williams2010nonnegative}, making use of the idea that redundant information is present in every realisation of the target. Despite being intuitive, it was then quickly criticised for overestimating redundancy by estimating the same \textit{amount} of redundant information, rather than the same information \textit{content}~\cite{bertschinger2013shared, harder2013bivariate, griffith2014quantifying}. 

This concern -- that quality counts as much as quantity -- motivated Harder \etal to introduce \Ired~\cite{harder2013bivariate}, a redundancy function grounded in information geometry. \Ired overcomes this shortcoming of \Imin by satisfying the Identity property \ID. Despite this, \Ired is majorly limited by its applicability, as it is only valid in the bivariate PID case~\cite{harder2013bivariate}.

Shortly thereafter, Griffith \etal proposed an algebraic approach to defining redundancy, first with \Iwedge, inspired by the Gács-Körner common information~\cite{griffith2014quantifying}, and later with \Ialpha, which follows from intuitive considerations based on Markov chains~\cite{griffith2015quantifying}. Although these measures are well-defined for any number of sources and satisfy highly desirable properties, most notably \TM in the case of \Iwedge, they fail to satisfy \LP, tend to impose overly restrictive bounds on redundancy~\cite{griffith2015quantifying, kolchinsky2022novel}, and, for \Iwedge, are insensitive to statistical correlations not captured by G\'acs-K\"orner common random variables~\cite{james2018unique}.

A major step forward came with \IBROJA, simultaneously introduced by Bertschinger \etal alongside Griffith and Koch~\cite{bertschinger2014quantifying, griffith2014quantifying}. \IBROJA was the first redundancy function backed by an explicit operational approach, defining unique information by comparing the maximal expected reward functions of each source. Despite its interpretational clarity and the fact that it satisfies many of the proposed PID properties, this measure has been mildly criticised for overestimating redundancy by artificially inflating the correlation between the sources~\cite{ince2017measuring, james2018uniquekey, james2018unique}, as well as for relying on a decision-theoretic setup deemed too restrictive~\cite{ince2017measuring}. Additionally, this measure is also limited to the bivariate case.

Up until this point, PID measures had focused exclusively on discrete systems. This changed with the introduction of \IMMI by Barrett~\cite{barrett2015exploration}, a redundancy measure specifically designed for Gaussian systems. \IMMI was shown to yield the same decomposition as \Imin, \Ired, and \IBROJA in the specific case of a bivariate Gaussian PID with univariate target, hence becoming very popular and widely applied. However, \IMMI inherited the same shortcoming of \Imin, being only able to measure the same \textit{amount} of information between sources, without discerning between qualitatively distinct components of the interaction. This limitation motivated the proposal of \IRR by Goodwell and Kumar, a modified version of \IMMI which provides zero redundant information for uncorrelated sources~\cite{goodwell2017temporal}. 

A parallel but conceptually distinct line of work then led to the introduction of pointwise PID measures. Ince proposed the first fully pointwise PID function, \ICCS, defining redundancy via common changes in surprisal~\cite{ince2017measuring}. In doing so, \ICCS departs from several PID properties that had previously been uncontroversially accepted, such as \Mz, while retaining the essential axioms required to construct the redundancy lattice. This perspective was also pursued by Finn and Lizier with \IPM~\cite{finn2018probability, finn2018pointwise}, which distinguished informative from \textit{misinformative} contributions and suggested a double decomposition over specificity and ambiguity lattices. As with \ICCS, this pointwise approach entails abandoning some of the original PID axioms, although they continue to hold at the local level.

Later approaches explored the role of a maximum-entropy principle in defining redundancy. Investigating the subtle differences between entropy maximisation and mutual information minimisation under Assumption \AST, James \etal defined \IMES~\cite{james2018uniquekey}. \IMES was never intended to be a fully fledged PID measure, as it suffers from the major shortcoming of not satisfying \SE.

Around the same time, inspired by works on cybernetics and reconstructability analysis~\cite{zwick2004overview, krippendorff2009ross}, James \etal also introduced \IDEP, a measure of unique information based on the construction of a constraint lattice. Originally proposed only for discrete systems, this measure was then formalised for continuous Gaussian processes~\cite{kay2018exact}. As with \IBROJA, \IDEP only provides a full decomposition in the bivariate case~\cite{james2018unique}, thereby restricting its applicability.

Subsequent definitions of redundant information have continued to explore new avenues of thought. Niu and Quinn proposed \IIG~\cite{niu2019measure}, an information-geometry-based PID inspired by Amari's hierarchical decomposition~\cite{amari2001information}. Originally defined for a bivariate discrete PID with full-support distributions, \IIG was later generalised to the bivariate Gaussian case~\cite{kay2024partial}. Aiming to account for the indirect mediated associations between sources and target, Sigtermans proposed the redundancy measure \ICT based on the framework of Causal Tensors~\cite{sigtermans2020towards, sigtermans2020partial}.

An additional pointwise redundancy measure named \ISX was then suggested by Makkeh \etal, both for discrete~\cite{makkeh2021introducing} and continuous scenarios~\cite{schick2021partial, ehrlich2024partial}. \ISX measures redundancy via the so-called principle of \textit{shared mass exclusion}. Similarly to \IPM, \ISX is based on separate information and misinformation lattices, thereby renouncing some of the original PID axioms at the global level. 

More recently, inspired by the algebraic measures \Iwedge and \Ialpha, Kolchinsky introduced \Iprec, a redundancy measure grounded in the notion of Blackwell order~\cite{kolchinsky2022novel}. This definition admits a clear operational interpretation in decision-theoretic terms, satisfies \IID, and is defined for any number of sources, but can provide negative atomic values.

With a similar intent to define a PID based on Blackwell order, Venkatesh and Schamberg proposed \Idelta for multivariate Gaussian systems~\cite{venkatesh2022partial}, extending the earlier results of Barrett~\cite{barrett2015exploration} by measuring how distant one source is from being Blackwell superior to the other. Although this approach generalises the behaviour of \IMMI, it is only defined for a bivariate PID.

Continuing along the Blackwell ordering paradigm, Mages \etal finally introduced \IRDR, quantifying redundancy via the set-overlap of the reachable decision region of each source~\cite{mages2023decomposing}. \IRDR yields a pointwise decomposition that nevertheless satisfies the original PID axioms for any number of sources, reducing to \Imin and \IBROJA in specific circumstances. 

Finally, the most recent PID measure introduced, \Ido by Lyu \etal~\cite{lyu2024explicit}, defines unique information using an interventional approach inspired by Pearl's \textit{do-calculus}~\cite{pearl1995causal, pearl2009causal, pearl2012calculus}. While satisfying several desirable PID properties, a major shortcoming of \Ido is that it does not satisfy \SE, assigning non-zero unique information even when the sources are identical. Again, this measure is also only defined for the case of two sources.

In passing, we also mention \IRAV, an unpublished proposal that defines redundancy via the maximisation of coinformation over functions of the sources~\cite{james2018dit}. Despite being unpublished, an implementation of \IRAV is available via the \verb|dit| package, and we have hence included it in our analyses.

\section{Results}
\label{sec:results}

In this section, we present our main results. We begin with a systematic overview of the PID properties satisfied by each existing PID measure. We then present a comprehensive collection of theorems and no-go theorems that elucidate the logical relationships among these properties.

\begin{DIFnomarkup}
\begin{center}
\begin{table*}[t]
    \centering
    \setlength{\tabcolsep}{3pt} %
    \renewcommand{\arraystretch}{1.1} %
\begin{tabular}{l*{19}{c}}
\toprule
 &
$\Imin$ & $\IMMI$ & $\IRDR$ &
$\IBROJA$ & $\Idelta$ & $\Ired$ &
$\IIG$ & $\IRR$ & $\IDEP$ & $\ICT$ & $\IRAV$ & 
$\Iwedge$ & $\Ialpha$ & $\Iprec$ & $\IMES$ & $\Ido$ &
$\IPM$ & $\ISX$ & $\ICCS$ \\
\midrule
\textbf{(SR)}       & \V & \V & \V & \V & \V & \V  & \V & \V & \V & \V & \V & \V & \V & \V & \V & \V & \V  & \V  & \V \\
\textbf{(S\ts{0})}  & \V & \V & \V & \V & \V & \V  & \V & \V & \V & \V & \V & \V & \V & \V & \V & \V & \V  & \V  & \V \\
\textbf{(M\ts{0})}  & \V & \V & \V & \V & \V & \V  & \V & \V & \V & \V & \Vd& \V & \V & \V & \X & \X & \Xdd& \Xdd& \X \\
\textbf{(GP)}       & \V & \V & \V & \V & \V & \V  & \V & \V & \V & \V & \V & \V & \V & \V & \V & \V & \Xdd& \Xdd& \X \\
\textbf{(LP\ts{0})} & \V & \V & \V & \V & \V & \V  & \V & \V & \V & \X & \V & \X & \X & \X & \X & \X & \Xdd& \Xdd& \X \\
\textbf{(LP\ts{1})} & \V & \V & \V & \na& \na& \na & \na& \na& \na& \X & \X & \X & \X & \X & \na& \na& \Xdd& \Xdd& \X \\
\textbf{(IID)}      & \X & \X & \X & \V & \V & \V  & \X & \V & \V & \V & \V & \V & \V & \V & \V & \V & \X  & \X  & \V \\
\textbf{(ID)}       & \X & \X & \X & \V & \V & \V  & \X & \X & \V & \V & \V & \X & \X & \X & \V & \V & \X  & \X  & \X \\
\textbf{(TM)}       & \X & \V & \X & \X & \X & \X  & \Vs& \X & \X & \V & \X & \V & \X & \X & \X & \X & \X  & \X  & \X \\
\textbf{(TC)}       & \X & \X & \X & \X & \X & \X  & \X & \X & \X & \X & \X & \X & \X & \X & \X & \X & \V  & \V  & \X \\
\textbf{(SE)}       & \V & \V & \V & \V & \V & \V  & \V & \V & \V & \V & \V & \V & \V & \V & \X & \X & \V  & \V  & \V \\
\textbf{(LB)}       & \V & \V & \V & \V & \V & \V  & \V & \V & \V & \V & \Vd& \V & \V & \V & \X & \X & \X  & \X  & \X \\
\textbf{(TE)}       & \V & \V & \Vd& \V & \V & \V  & \V & \V & \V & \V & \Vd& \X & \V & \V & \V & \V & \X  & \X  & \X \\
\textbf{(M\ts{1})}  & \V & \V & \Vd& \V & \V & \V  & \V & \V & \V & \V & \Vd& \X & \V & \V & \X & \X & \X  & \X  & \X \\
\textbf{(S\ts{1})}  & \X & \X & \X & \X & \X & \X  & \X & \X & \X & \X & \X & \X & \X & \X & \X & \X & \X  & \X  & \X \\
\textbf{\BB}        & \V & \V & \V & \V & \V & \V  & \X & \X & \X & \X & \X & \V & \X & \X & \V & \V & \X  & \X  & \X \\
\textbf{(BP)}       & \X & \X & \V & \V & \V & \V  & \X & \X & \X & \X & \X & \X & \X & \V & \X & \X & \Xdd& \X  & \X \\
\textbf{(AD)}       & \X & \X & \X & \V & \X & \X  & \X & \X & \X & \X & \X & \V & \V & \V & \V & \V & \X  & \X  & \X \\
\COp       & \V & \V & \V & \V & \V & \Vm & \V & \V & \V & \V & \X & \X & \X & \X & \V & \V & \V  & \V  & \X \\
\DIp                & \X & \X & \X & \X & \X & \X  & \V & \X & \X & \X & \X & \X & \X & \X & \X & \V & \X  & \V  & \X \\
\textbf{(EI)}       & \V & \V & \V & \V & \V & \V  & \V & \V & \V & \V & \V & \V & \V & \V & \V & \V & \V  & \V  & \V \\
\bottomrule
\end{tabular}
    \caption{\textbf{Comparison of theoretical properties across redundancy measures.} The ordering of the measures is inspired by the hierarchical clustering of Fig.~\figsubref{fig:similarity}{b}. 
    $^*$: this property is not formally proven but is supported by empirical simulations.   
    \dag: these properties only hold for $n=2$ sources.
    $\mathsection$: these properties only hold in the case of full-support distributions. 
    $\small\|$: these properties are intended globally; for many measures, the almost-everywhere version holds.
    \ddag: the pointwise version of these properties holds for both informative and misinformative lattices separately. 
    }
    \label{tab:PID_properties}
\end{table*}
\end{center}
\end{DIFnomarkup}

\subsection{A comprehensive classification of PID measures} \label{sec:results_measures}
As discussed above, the proliferation of PID properties and measures has led to substantial ambiguity regarding which axioms are satisfied by which measures. We address this issue by providing a complete and unified classification (Table~\ref{tab:PID_properties}), explicitly showing for all existent PID measures whether they satisfy or not a given PID axiom. 
While a small subset of these results is already available in the literature, the majority of them have not been established previously. We report detailed proofs of all findings in the Appendix, also indicating where previous results can be found in the literature.

This comprehensive picture provides many different insights at a glance. First, it allows us to gauge which properties are most commonly satisfied in the PID literature, from classic axioms such as \Mz and \GP, to others which are often not explicitly stated, like \TE and \CO (Fig.\figsubref{fig:similarity}{a}). Then, it enables drawing connections among properties, suggesting which ones are usually satisfied together. Additionally, it makes clear what a specific operational interpretation entails in terms of PID properties. Finally, it also helps us understand the theoretical relationships between the various redundancy functions. 

\begin{figure}[t]
    \centering
    \includegraphics[width=\linewidth]{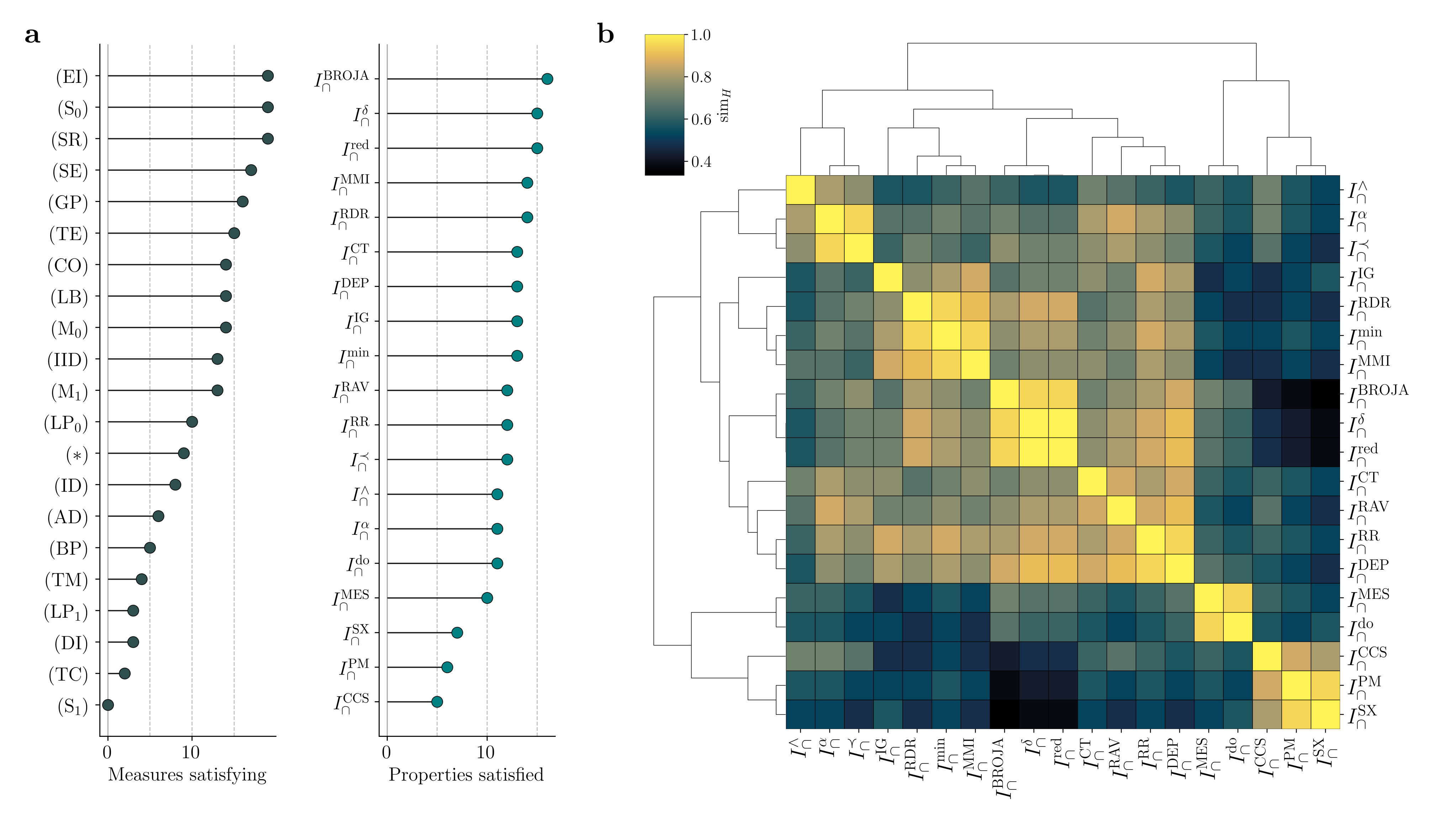}
    \caption{\textbf{Hierarchical organisation of PID measures induced by properties.}
    \textbf{a)} Number of measures satisfying each mathematical property (left), and number of properties satisfied by each measure (right), from Table~\ref{tab:PID_properties}. \textbf{(SR)}, \textbf{(EI)} and \textbf{(S\ts{0})} are satisfied by all measures, whereas \textbf{(S\ts{1})} is currently satisfied by no measures. $\IBROJA$ satisfies the most properties (16), whereas $\ICCS$ satisfies the least (5).
    \textbf{b)} Hierarchical clustering of the PID measures based on mathematical properties. 
    $\ICCS$, $\IPM$ and $\ISX$ clearly emerge as a distant community of measures likely due to their unique lack of possession of \textbf{(GP)}. $\IMES$ and $\Ido$ occupy another cluster in between this and the main community of measures, likely due to the absence of \textbf{(M\ts{0})}, \textbf{(M\ts{1})}, \textbf{(LB)}, which they share with $\ICCS$, $\IPM$ and $\ISX$. The main cluster comprises $\IRDR$, $\Imin$, $\IMMI$, $\Ired$, $\IBROJA$, $\Idelta$, $\IIG$, $\IRR$, $\IRAV$, $\IDEP$, and $\ICT$. A particularly compact subgroup is unsurprisingly formed by $\IRDR$, $\Imin$ and $\IMMI$ due to satisfying \LPo and their lack of \IID. Similarly, $\Ired$, $\IBROJA$, $\Idelta$ form a subcluster as a consequence of satisfying \textbf{(BP)}, \textbf{\BB}, and \ID. Finally, the last community comprises \Iwedge, \Ialpha, and \Iprec, which violate \LPz while retaining \IID.
    }
    \label{fig:similarity}
\end{figure}

Specifically, we make this latter observation quantitative by studying the similarity of the PID measures via a hierarchical clustering based on the properties they satisfy (Fig.~\figsubref{fig:similarity}{b}). Results show that the measures are clustered in groups characterised by a small subset of PID properties. The first big differentiation is among measures that do not satisfy \GP (\ICCS, \IPM, \ISX) against those that do. This bigger group can be further clustered first into \Ido and \IMES, which do not satisfy \Mz nor \LPz, then into the algebraic measures $\Iwedge, \Ialpha, \Iprec$, which satisfy \Mz but not \LPz, and finally into all the other measures which satisfy both \Mz and \LPz. Within this subgroup, we notice additional clusterings due to \Imin, \IMMI, and \IRDR satisfying both \LPo and \AST, \IBROJA, \Idelta, and \Ired satisfying \AST but not \LPo, and the remaining that satisfy neither. Finally, the final cluster is given by the measures that satisfy \ID (\IRAV, \IDEP, \ICT) and those that do not (\IIG, \IRR).
In sum, although we considered twenty axioms in this analysis, only around a quarter of them are sufficient to capture the main mathematical distinctions among the PID measures proposed to date. This highlights that the various properties are not independent, and raises the question of whether there exist yet undiscovered relationships linking the key discriminating properties to the others.

\begin{figure}[ht]
    \centering
    \includegraphics[width=\linewidth]{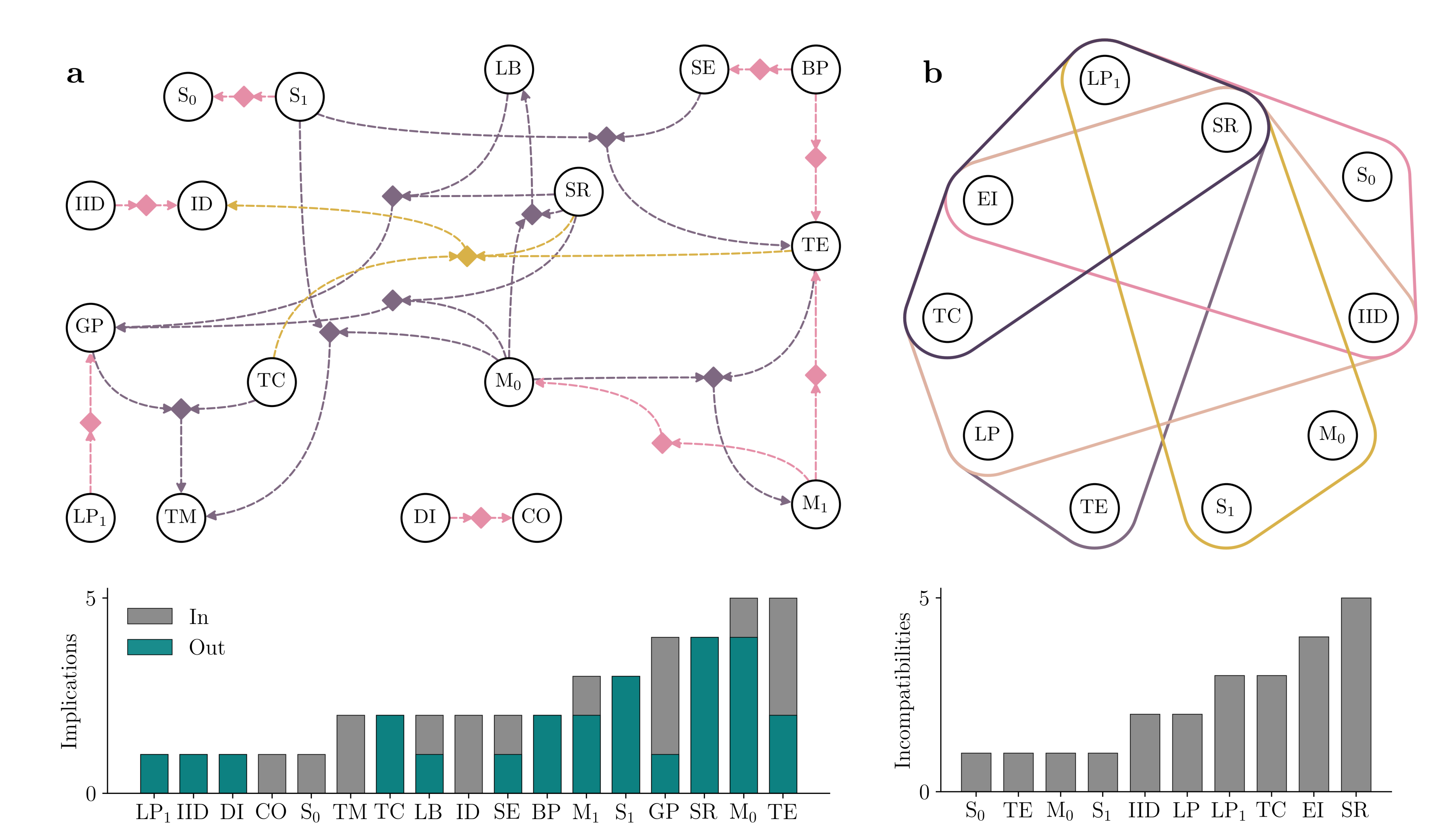}
    \caption{\textbf{Formal relationships among PID properties.} 
    \textbf{a)} Directed hypergraph representing logical implications of PID properties from Table~\ref{tab:PID_implications}, hyperedges are coloured by order of implication (upper). In- and out-degree distributions for property implications reveal that \textbf{(M\ts{0})} and \textbf{(SR)} imply the most properties (4), whereas \textbf{(TE)} and \textbf{(GP)} are implied by the most properties (3).
    \textbf{b)} Undirected hypergraph represented logical incompatibilities from Table~\ref{tab:PID_incompatibilities}. Degree distributions for property incompatibilities reveal that despite implying the most properties, \textbf{(SR)} is also incompatible with the most properties (4), tied with \textbf{(EI)}.
    }
    \label{fig:nogo}
\end{figure}

\subsection{A comprehensive view of theorems linking PID properties}
\label{sec:imps_nogos}

In this section, we present a comprehensive account of the relations among PID properties. We review results previously established in the literature, strengthen their hypotheses and clarify explicitly assumptions that were previously left implicit, and complement them with new findings. In particular, we accompany the presentation of the results with a brief commentary on what their consequences imply, and how these incompatibility issues could be overcome.
The mathematical formulation of the results, as well as their proofs, is reported in the Appendix.

The first demonstration that not all the PID properties can simultaneously coexist was proposed by Rauh \etal~\cite{rauh2014reconsidering}, who showed that \Sz, \Mz, \SR, \EI, \LPo and \ID are incompatible. Although not explicitly stated in the hypotheses, we underline the necessity of \EI required for the relabelling of the outcomes within the XOR-Source-Copy gate, as later clarified by Matthias \etal \cite{matthias2025novel}. In that work, the authors further refine Rauh \etalx's result by showing that \Mz is not necessary (Theorem 1 of \cite{matthias2025novel}). Here, we additionally observe that the requirement of \ID can be relaxed in favour of the weaker \IID, as the same counterexample holds. Hence, in its minimal assumptions, the incompatibility involves \Sz, \SR, \EI, \LPo, and \IID (Theo.~\ref{theo:noIID-LP1}). 

Interestingly, we also note that this same result has also been indirectly derived from different premises. Kolchinsky assumed \Sz, \Mz, \SR, \EI, \IID, and showed that the union information ($I_\cup$) can exceed the joint mutual information ($I$), leveraging this result to argue against the inclusion-exclusion principle (IEP) (Lemma 1 of Ref.~\cite{kolchinsky2022novel}). However, while having $I_\cup>I$ can be intuitively unpleasant, it only becomes a formal mathematical contradiction if one also assumes \LP, as $I_\cup>I$ together with IEP indeed implies the presence of negative atoms (see Prop.~\ref{prop:Union-LP}). Hence, if one were to retain IEP at all costs, this argument leads to the rejection of \LP, therefore recovering Theo.~\ref{theo:noIID-LP1}.
Using a different approach based on subsystem consistencies, Lyu \etal proposed that \Sz, \Mz, \SR, \EI, \IID were incompatible for $n\ge3$ sources (Lemma 3 of \cite{lyu2024explicit}). However, although not stated in their hypotheses, their proof implicitly relies on \LP, thus corresponding once again to Theo.~\ref{theo:noIID-LP1}.

On a parallel line of work, a property which has long been highly desired is Target Chain rule \TC, considered by many to be the \textit{holy grail} of the PID axioms. Unfortunately, besides not being satisfied by the majority of the existing PID measures, this property has also been shown to be incompatible with \SR, \EI, \LPz, and \IID for any number of sources (Theorem 6 of \cite{finn2018pointwise}, here Theo.~\ref{theo:LPz-ID--noTC}).

Notably, the incompatibilities above all involve the controversial property \IID.
Hence, if one wants to retain the PID lattice structure, the easiest solution would seem to drop \IID (and thus \ID) altogether. 
However, here we present a novel finding which directly relates \IID to other intuitive properties.
In fact, the combination of the desirable \TC with the intuitive \TE and the fundamental \SR promptly leads to the controversial \ID (and hence \IID) (Theo.~\ref{theo:TC-TE-SR--ID}).
This result has interesting consequences. If we additionally require \GP, then we further obtain \TM (see Prop.~\ref{prop:TC-GP--TM}), which was thought to be hard to reconcile with \ID~\cite{rauh2014reconsidering}.
Moreover, although gaining \ID is not necessarily problematic per se, it has a more dire consequence when combined with Theo.~\ref{theo:LPz-ID--noTC}.

In fact, by merging Theos.~\ref{theo:TC-TE-SR--ID}-\ref{theo:LPz-ID--noTC} we get that \SR, \EI, \LPz, \TE and \TC are not compatible for any number of sources (Theo.~\ref{theo:noSR-EI-LPo-TE-TC}). Hence, if one wants to retain \LP, either \TC or \TE necessarily need to be dropped.
At least for $n\ge3$ sources, this last finding can be improved, as it has recently been shown that \SR, \EI, \LPo, and \TC are incompatible, even without requiring \TE (Theorem 2 of \cite{matthias2025novel}, here Theo.~\ref{theo:noTC-LP1}). 
Therefore, these findings prevent any redundancy function from satisfying the typical properties of classical information measures. 

Thus, if abandoning \IID does not resolve these issues, how else can one avoid such undesired behaviours?
Taken together, these results suggest that the properties that classical information measures benefit from, such as \TC and \LP, should not be expected to hold for PID measures. This perspective can be more naturally accommodated by redundancy measures that give operational meaning and clear interpretation to negative atoms, such as \ICCS, \IPM, and \ISX, though at the cost of losing \Mz. 
Moreover, another possible point of reflection is that the mechanistic intuition underlying \IID might be fundamentally incompatible with the purely statistical perspective embodied by \EI. Consequently, despite the natural appeal of \EI, relaxing this property could offer a viable route to alleviate these issues. 
We refer to Sec.~\ref{sec:MR-ID-EI} for further discussion on these points.

In closing, we also mention a couple of novel results regarding the PID properties \TE and \BP. 
\TE is justified by the intuitive idea that adding the target to the set of sources should not affect redundancy. Interestingly, combining \TE with the fundamental \Mz provides a necessary and sufficient condition for \Mo, supporting the claim that \Mo is a useful desideratum \cite{griffith2014intersection} (Theo.~\ref{theo:M0-TE--M1}). 
Finally, we note that the decision-making-inspired \BP is a stronger condition of both \TE and \SE, at least for $n=2$ sources (Theo.~\ref{theo:BP--TE-SE}). Hence, if one accepts the operational idea underpinning \BP, then \TE follows directly.

We represent all these results and additional existing relations and incompatibilities in a hypergraph (Fig.~\ref{fig:nogo}), also reporting which properties are the most commonly involved.

To further investigate the logical relationships among the properties, we employed the Z3 automatic theorem prover, a satisfiability modulo theories program that allows us to consistently check implications and incompatibilities by deciding satisfiability of combinations of axioms \cite{de2008z3}. 
With this tool at hand, we verified that the properties satisfied by each measure (Table~\ref{tab:PID_properties}) are indeed compatible with the known relationships among the PID axioms.

Furthermore, we used Z3 to determine the maximal sets of properties that can coexist without leading to any incompatibility. In other words, these sets represent the largest collections of mutually compatible properties, meaning that no additional property can be added without introducing a conflict.
Our results indicate that, as expected, the maximal compatible sets contain 20 properties, achievable simply by abandoning either \LPo or \EI. If both are retained, however, the maximal set is reduced to 17 properties, at the expense of \ID, \IID, \TC, and \So. Importantly, these conclusions hold under the assumption that the fundamental axioms \Sz, \Mz, and \SR are satisfied.

This analysis carries two important implications. First, it encourages the development of new PID measures capable of satisfying a larger number of axioms, enabling researchers to determine the maximal set of compatible properties given certain desirable features. Second, the fact that no existing PID measure achieves the maximal set suggests the existence of yet undiscovered incompatibility relationships among the properties.

\section{Discussion} \label{sec:discussion}

\subsection{Identity property, Equivalence-class Invariance, and Mechanistic redundancy}\label{sec:MR-ID-EI}
As we have seen above, if one wants to retain the fundamental \Sz, \SR, and the convenient \LP, then \EI is not compatible with \ID or \IID, at least for $n\ge3$ sources (Theo.~\ref{theo:noIID-LP1}). %
The conceptual inconsistency between \EI and \IID already transpired from the conflicting interpretations briefly outlined in Sec.~\ref{sec:properties}. Here we further discuss the implications.

A concept strictly related to \ID, and \EI is that of \textit{mechanistic redundancy}. Introduced by Harder \etal~\cite{harder2013bivariate}, mechanistic redundancy indicates the redundant information that is generated by the mechanism, as opposed to the \textit{source redundancy} which stems from the correlation between the sources. The existence of mechanistic redundancy entails that redundant information can be non-zero even if the sources are independent. However, as seen above, if \EI is adopted, then the prospect of measuring mechanistic redundancy vanishes, since different mechanisms can lead to the same probability distribution, up to an isomorphism~\cite{pearl2014probabilistic}.
Thus, although \EI has not received much attention in the PID literature, we believe it is fundamental to investigate its significance and ponder on which scenarios it should or should not be expected to hold. 
To the best of our knowledge, the only work which explicitly argues in favour of \IID as opposed to \EI is that of Chicharro~\cite{chicharro2018identity}. In this work, the author specifically investigates the property of PID being invariant under isomorphisms of the target, concluding that if mechanistic redundancy is to hold, then redundancy depends on the composition of the target and one should drop the Target to Source Copy (TSC) isomorphism, i.e.\ a specific instance of \EI applied to the target. 
In fact, contrary to the strong axiom, the weak version is compatible with the violation of \EI, since it allows for dependencies on the semantic properties of the outcomes~\cite{chicharro2018identity}. %
Moreover, he also derives that the negativity of atoms follows from the existence of deterministic target-source dependencies and \EI~\cite{chicharro2018identity}. Hence, accepting the semantic value of specific realisations can also resolve the non non-negativity of the decomposition.

\subsection{Practical guidance for empirical applications}

With the many approaches available in the PID literature, each satisfying different properties, the endeavour of choosing a PID measure for empirical purposes might seem an overwhelming task. In this section, we aim to alleviate this burden by providing explicit, practical recommendations tailored to the specific application under consideration and to the nature of the sources and target. 

A possible starting point could be to reflect upon what kind of information independent sources are expected to provide. 
For example, PID is often employed in the study of dynamical processes evolving over time, with the sources being the states of two elements of the system in the past, and the target being the joint state of those components in a future timestep, rather than another variable \cite{faes2025predictive, liardi2025null}. In this respect, under a mechanistic perspective, one could expect that independent sources provide unique information to the target and zero redundancy or synergy, thus satisfying \IID. This reasoning alone already helps restrict the space of possible PIDs to choose from. 

Another key point to consider is the relevance of Local Positivity \LP, which is closely tied to the interpretation one wants to assign to the PID atoms~\cite{chicharro2018identity}. If PID atoms are interpreted operationally in the classical communication-theoretic sense, \LP becomes essential, as the atoms represent quantities of information that can be transmitted or exploited, with negative values lacking an operational meaning. Within this view, the informational architecture is necessarily grounded in the statistical structure of the variables alone, without reference to semantic content, labels, or specific realisations, hence in accordance with \EI. 
Belonging to this paradigm is the operational interpretation of PID atoms in terms of secret-key agreement rates, where information is quantified in terms of the ability of parties to establish shared keys under public communication constraints \cite{james2018uniquekey}. Within this view, one can distinguish between the so-called \textit{camel} and \textit{elephant} intuitions. The former considers the sources as contributors that jointly construct the outcome of the target, whereas the latter interprets the sources as partial and potentially overlapping observations of the target. Importantly, these two perspectives induce different directional interpretations of redundancy, and certain apparent shortcomings of PID measures arise only when a decomposition is assessed through one intuition rather than the other \cite{james2018uniquekey, banerjee2018unique}.

Conversely, if information is interpreted as a measure of statistical dependence, negative local contributions correspond to misinformation, i.e.\ a higher uncertainty about the target after observing the sources. 
Along this line, pointwise PID approaches relax \LP deliberately, favouring an interpretation of PID atoms in terms of local dependencies and belief updates, rather than communication capacity~\cite{ince2017measuring, finn2018pointwise, makkeh2021introducing}. 
Moreover, within this perspective, since PID atoms are associated with changes in the target distribution putatively induced by the sources, a semantic dependency based on the specific value of the variables might be contemplated, objecting \EI to retain mechanistic interpretations.
Therefore, whether \LP should be pursued depends on the intended interpretation and operational significance of the decomposition.

Finally, additional considerations relating \AD and \CO can be relevant in empirical settings characterised by noise and multiscale organisation, such as many biological systems \cite{walpole2013multiscale, dada2011multi}. In these contexts, \AD can be desirable when combining information contributions across different scales, e.g.\ by performing PID analyses considering microscopic and macroscopic states, as it allows to disentangle independent contributions \cite{liardi2025simple}.
Likewise, \CO plays an important practical role in the presence of sampling noise and estimation errors, guaranteeing that small perturbations in the underlying distributions lead to small changes in the resulting PID atoms. This robustness is particularly valuable when dealing with high-dimensional data with a limited sample size, where sharp discontinuities in the PID of choice may hinder interpretability or comparability across conditions. 

An additional consideration that further restricts the set of admissible measures is their domain of definition. Some PID measures are defined only for discrete variables, whereas others are formulated for continuous Gaussian settings, and only a small subset extends to generic continuous distributions \cite{ehrlich2024partial}. Moreover, several discrete PID definitions require the underlying distribution to have full support, imposing additional constraints on their applicability.

\subsection{Other approaches to information decomposition}
\label{sec:other_approaches}

In this paper, we centred our analysis on the Williams and Beer redundancy lattice, as it is the original and most widely used formalism for studying synergy and redundancy in recent work \cite{williams2010nonnegative}.
Accordingly, our results concern information decomposition measures based on this lattice.
Nevertheless, we acknowledge the existence of alternative approaches that investigate information-sharing relationships through different means.
In particular, some frameworks decompose mutual information using alternative lattice structures; others focus on decomposing functionals different from mutual information (e.g.\@, entropy-based decompositions); and others characterise synergy and redundancy without performing an explicit decomposition, relying instead on linear combinations of Shannon measures or other statistical quantities.

Approaches that decompose mutual information based on alternative lattice structures or different theoretical frameworks have been proposed for different purposes. 
Some are motivated by the goal of addressing specific limitations of the original PID lattice, while others aim to provide more general formalisms or to bridge disparate theoretical perspectives.
For instance, the constraint lattice originally introduced by Zwick in reconstructability analysis \cite{zwick2004overview} was later employed by James \etal to define \IDEP \cite{james2018unique}, and has since become widely adopted for information decomposition purposes.
In fact, Ay \etal proposed an extended version of the lattice by rooting in cooperative game theory, addressing the incompatibility between \LP and \ID \cite{ay2021information}. 
While this construction resolves the conflict between these two central properties, it achieves so by collapsing several redundant atoms of the PID lattice, thereby reducing the level of detail captured by the decomposition.
Similarly, Rosas \etal leveraged this extended constraint lattice to introduce a synergy-first decomposition, based on an operational notion of synergy motivated by recent developments in the literature of data privacy \cite{rosas2020operational}. 
Due to the superexponential growth of the number of atoms, they further proposed a backbone lattice, which summarises the decomposition by collapsing multiple terms into a one-dimensional representation.
This backbone construction was subsequently employed by Varley, who introduced a scalable synergy-based decomposition in which synergy is quantified in terms of the system’s resilience to the removal of its components \cite{varley2024scalable}.

Other proposals investigate the simultaneous use of multiple lattices, exploiting their complementarities by mapping their relationships \cite{chicharro2017synergy, kolchinsky2022novel}.
Within this line of work, Chicharro \etal proposed a synergy-first decomposition of mutual information, introducing a lattice based on mutual information loss dual to the original redundancy one, where synergy and redundancy contributions are inverted \cite{chicharro2017synergy}.
A similar approach has been explored by Kolchinsky, who posits the idea that the ``inclusion-exclusion principle'', upon which the PID lattice is constructed, should not be expected to hold, thus suggesting a double decomposition of union and intersection information \cite{kolchinsky2022novel}.
Beyond the inversion of synergy and redundancy, Pica \etal proposed analysing all possible PID decompositions obtained by swapping sources and target, then extracting invariant informational components shared across the different combinations \cite{pica2017invariant}.
From a different perspective, Perrone \etal introduced an information-geometric approach to capture the hierarchical organisation of interactions, offering an alternative viewpoint on how synergy can be quantified across different orders of dependencies \cite{perrone2016hierarchical}. 
More recently, an alternative lattice that removes singleton nodes has been introduced by Lyu \etal \cite{lyu2025whole}. 
This reduced lattice focuses on higher-order synergistic and unique contributions, at the cost of not resolving redundancy terms, resulting in a coarser decomposition. 

Other frameworks depart from decomposing mutual information altogether. 
To investigate how information is structured across multiple sources and targets, Mediano \etal developed the $\Phi$ID formalism \cite{mediano2025toward}. $\Phi$ID extends the PID redundancy lattice to a double-redundancy lattice, introducing additional atoms that describe novel modes of informational contributions. 
However, while this framework allows a finer-grained decomposition of information, it also requires the definition of a double-redundancy function.
A different direction consists of entropy-based decompositions such as Partial Entropy Decomposition (PED), which avoid the explicit distinction between sources and targets and instead decompose the overall informational structure of a set of random variables \cite{ince2017partial, varley2023partial}. 
Along similar lines, a more general proposal that embeds both entropy and mutual-information decompositions is the Generalised Information Decomposition (GID) proposed by Varley \cite{varley2024generalized}. GID provides a unifying perspective by performing a decomposition based on the Kullback–Leibler divergence, enabling a comparison of how different lattice-based approaches relate to one another. 

Finally, several approaches characterise synergy and redundancy without relying on lattices at all. While these approaches offer advantages, either in not relying on specific definitions of redundancy or being less computationally expensive, they also do not allow for decomposing information into all the atoms defined by PID.
Quax \etal introduced an operational definition of synergy based on synergistic auxiliary variables, providing a constructive way to isolate purely synergistic information \cite{quax2017quantifying}. 
Olbrich \etal proposed a geometric approach that decomposes synergistic contributions by projecting probability distributions onto exponential families \cite{olbrich2015information}.
Other methods quantify synergy and redundancy using only linear combinations of Shannon entropies or mutual information, thereby avoiding the need to define a redundancy function or a decomposition lattice. 
These approaches have the advantage of requiring no additional assumptions beyond classical information theory. 
Prominent examples include the O-information \cite{rosas2019quantifying }, the redundancy–synergy index \cite{chechik2001group}, and the coinformation \cite{bell2003co}, which identify whether synergy or redundancy is the dominant mode of information sharing, and the Total Correlation hierarchical decomposition \cite{amari2001information}, the neural information decomposition \cite{reing2021discovering}, the connected information \cite{schneidman2003network, schneidman2003synergy, schneidman2006weak} and the symmetric decomposition \cite{rosas2016understanding}, which instead aim to disentangle dependencies of different orders. 
A recent unifying framework formalises this class of measures by defining Shannon invariants as all combinations of informational atoms that can be directly computed as linear combinations of entropy or mutual information \cite{gutknecht2025shannon}.

An additional approach developed to study synergistic and redundant effects is the Logarithmic Decomposition (LD) by Down and Mediano~\cite{down2025logarithmic,down2025algebraic}. LD extends the M\"obius inversion to create a signed-measure space for classical information measures, where variables correspond to measurable sets on which standard set-theoretic operations can be performed. Using LD, the authors showed that the dyadic and triadic systems of James and Crutchfield~\cite{james2017multivariate} -- famously thought to be classically indistinguishable -- can in fact be algebraically distinguished by interrogating classical information measures carefully~\cite{down2025logarithmic}.

\subsection{Limitations and future work}
Despite the systematic and comprehensive nature of our work, we acknowledge some limitations.
While we presented the most complete map of relationships among PID axioms to date, collecting all the properties mentioned in the literature and introducing new theorems, it remains possible that additional implications or inconsistencies exist and have not been identified by our analysis.
Moreover, our analysis is primarily conceptual and theoretical, leaving for future work a systematic empirical evaluation across real-world systems.
In fact, while we focused on comparing formal properties and logical relationships between different measures, we did not assess their practical behaviour or performance on empirical data when addressing concrete scientific questions.
Future studies might further investigate the relationship between the different informational approaches of this study and the notions of causality and emergence, which are central to the study of complex systems.
In addition, although we concentrated on the standard PID lattice, future work could extend this analysis to alternative lattice structures and to decompositions of information-theoretic quantities beyond those considered here.

\section{Conclusion}

In this work, we provided a unified and systematic analysis of Partial Information Decomposition approaches grounded in the Williams and Beer redundancy lattice, organising the existing PID measures and properties and clarifying the relationships among them.
By collecting existing results, identifying previously implicit connections, and introducing new theorems, we offer a coherent map of the conceptual landscape underlying the quantification of redundancy, synergy, and unique information within this widely used formalism.
Beyond serving as a reference for existing PID measures, this analysis highlights the structural constraints that shape possible decompositions and clarifies which desiderata can, or cannot, be simultaneously satisfied.
We hope that this perspective will help guide both the development of new information-decomposition frameworks and the informed application of existing ones, contributing to a clearer understanding of multivariate information sharing in complex systems.

\subsubsection*{Acknowledgments}
We thank Hardik Rajpal, Gustavo Menesse, Fernando Rosas, Davide Orsenigo, Maximilian Kathofer, and all the members of the Imperial College MIND Lab for the useful discussions and precious comments. We also thank Aaron J. Gutknecht, Philip H. Matthias, and Artemy Kolchinsky for suggestions that helped improve this work.
M.N. has received funding from the French government through the ‘France 2030’ investment plan managed by the French National Research Agency (Agence Nationale de la Recherche; reference: ANR-16-CONV000X/ANR-17-EURE-0029) and from the Excellence Initiative of Aix-Marseille University-A*MIDEX (AMX-19-IET-004). M.N. has received funding from the Boehringer Ingelheim Fonds (BIF). G.B. is supported by the Leverhulme Trust.

\subsubsection*{Code availability}
The code used to perform the analyses with the Z3 automatic theorem prover is publicly available in the open-source repository at \url{https://github.com/Imperial-MIND-lab/pid-prover}.

\clearpage
\newpage
\appendix

\section{Notation and mathematical background} \label{appendix:mathematical_bg}
In this work, we make use of the following notation. 
We indicate a generic probability distribution for the random variables $\Xs$ as $P(\Xs)$, or equivalently as $P_{X_1\ldots \,X_n}$. On the other hand, we denote the probability associated with a specific outcome ${X_1=x_1,\ldots,X_n=x_n}$ as ${p(x_1,\ldots,x_n)}$, ${p(X_1=x_1,\ldots,X_n=x_n)}$, or ${P(X_1=x_1,\ldots,X_n=x_n)}$. Throughout the work, we use these notations interchangeably.
Unless otherwise specified, within PID the sources will be indicated as $\Xs$, while the (possibly multivariate) target is $Y\equiv (Y_1,\ldots,Y_n)$.
We indicate both the mutual information and the coinformation with the letter $I(\cdot)$, since the latter reduces to the former in the case of two arguments. The mutual information can be both expressed in terms of the probability of each outcome, or in terms of entropies:
\begin{equation}
\begin{aligned}
    I(X;Y) & = H(X) - H(X|Y) = H(X) + H(Y) - H(X,Y) \\
    & = \sum_{xy} p(x,y) \log \frac{p(x,y)}{p(x)p(y)} \,.
\end{aligned}
\end{equation}
For the coinformation, we write it following the convention of interaction information, i.e.\
\begin{equation}
    I(X_1;\ldots;X_{n+1}) = I(X_1;\ldots;X_n) - I(X_1;\ldots;X_n\mid X_{n+1}) \,.
\end{equation}
For three arguments, this reduces to 
\begin{equation}
    I(X_1;X_2;X_3) = I(X_1;X_3)+I(X_2;X_3)-I(X_1,X_2;X_3) \,.
\end{equation}
If we interpret ($X_1,X_2$) as sources and $X_3$ as a target of a PID, Williams and Beer showed that the coinformation corresponds the difference between redundancy and synergy, thus explaining the until then unexplained negative values obtained by the coinformation~\cite{williams2010nonnegative}.
To distinguish between multivariate and multiple arguments, we respectively make use of the comma $,$ and the semicolon $;\,$. For instance, $I(X_1,X_2,X_3;Y)$ is the mutual information between $(X_1,X_2,X_3)$ and $Y$, while $I(X_1,X_2,X_3;Y)$ is the coinformation between $(X_1,X_2)$, $X_3$ and $Y$. 

\subsection{Equivalence relations and partial ordering}

In Ref.~\cite{li2011connection}, an equivalence relation between random variables is proposed. Intuitively, this predicates that two random variables are informationally equivalent if and only if they can be mapped to each other through a one-to-one correspondence. Clearly, this induces an equivalence relation:
\begin{equation} \label{eq:def_equivalence}
    X\sim Y  \iff \exists, f \text{ s.t. } X=f(Y), \text{ and } Y=f^{-1}(X)\,.
\end{equation}

A similar idea was already introduced by Shannon to define an \textit{ordering} between random variables \cite{shannon1953lattice}. 
Within this approach, a random variable $X$ is more informative than $Y$ if there exists a function $f$ such that $Y=f(X)$. We indicate this as $Y\leq_I X$:
\begin{equation}
    Y\leq_I X\iff \exists\, f \text{ s.t. } Y=f(X) \,.
\end{equation}
If $f$ is bijective and can be inverted, then equality holds and we obtain the equivalence relation of Eq.~\eqref{eq:def_equivalence}. 

However, there are many other ways to define an ordering between random variables \cite{gacs1973common, griffith2015quantifying, aumann2016agreeing}.
Another fundamental approach is the one based on Blackwell sufficiency and operationalised by Blackwell's theorem~\cite{blackwell1953equivalent}. 
Blackwell sufficiency is a key concept of statistical decision theory~\cite{blackwell1953equivalent, torgersen1991comparison}, which formalises the idea that one random variable is at least as informative as another for any decision-making problem with respect to a given target, in the sense that any strategy based on the latter can be replicated, possibly after randomisation, using the former. 
This setup leads to Blackwell’s theorem~\cite{blackwell1953equivalent, rauh2017coarse}, which provides the following definition: a random variable $X_1$ is Blackwell sufficient for $X_2$ relative to the target $Y$ if there exists a stochastic channel $k$ such that
\begin{equation} \label{eq:def_blackwell}
    p(x_2|y) = \sum_{x_1} k(x_2|x_1) p(x_1|y) \,.
\end{equation}
Hence, intuitively, since the channel $p(x_2|y)$ can be obtained as a linear combination of $p(x_1|y)$, this operationalises the idea that $X_2$ does not provide any more information than $X_1$ about the target $Y$. We indicate this relation as $X_2\preceq_Y X_1$.
In other contexts, this condition is sometimes phrased by saying that the channel $p(x_1|y)$ is stochastically degraded w.r.t.\ $p(x_2|y)$~\cite{venkatesh2022partial}, or that $X_2$ is a (degraded) garbling of $X_1$ relative to $Y$. 
This operational reasoning has been the foundation of the Blackwell property \BP and various PID definitions~\cite{bertschinger2014quantifying, rauh2017extractable, venkatesh2022partial, kolchinsky2022novel, mages2023decomposing}.

\begin{figure}[t]
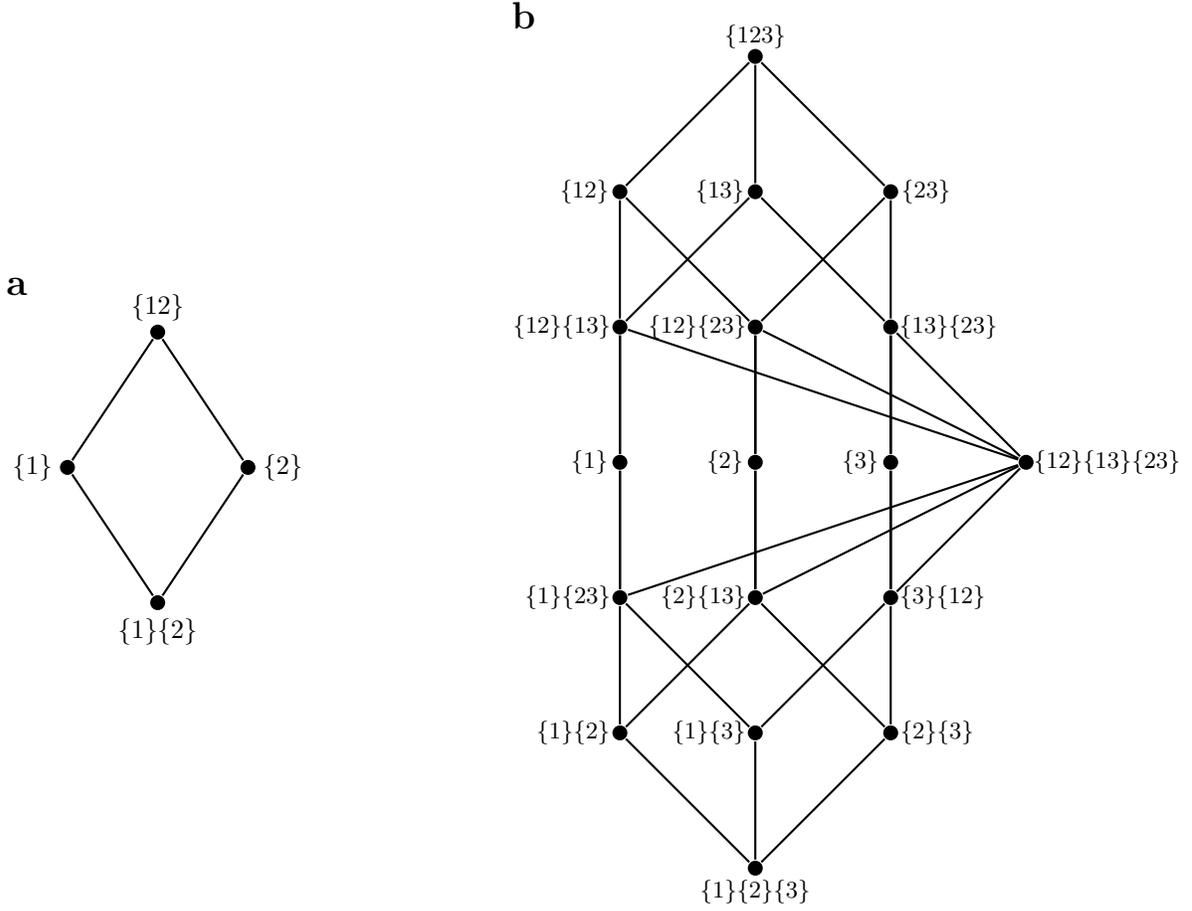

    \centering
    \begin{subfigure}[c]{0.38\textwidth}
        \centering
        \input{PID_N=2.tkz}
        \label{fig:pid-n2}
    \end{subfigure}
    \hfill
    \begin{subfigure}[c]{0.58\textwidth}
        \centering
        \input{PID_N=3.tkz}
        \label{fig:pid-n3}
    \end{subfigure}
    \caption{Partial information decomposition lattices for \textbf{a)} $n=2$ and \textbf{b)} $n=3$ sources.}
    \label{fig:pid-lattices}
\end{figure}

\subsection{Redundancy lattice and PID atoms}
Partial information decomposition (PID) is a framework designed to refine the notion of multivariate information by distinguishing how multiple sources jointly inform a target. Rather than treating the mutual information $I(\Xs;Y)$ as a single quantity, PID separates it into components that reflect information shared across sources, exclusive to individual sources, or that emerges from the joint consideration of several sources. These correspond to the so-called redundancy, unique information, and synergy. In large systems, these contributions lead to a combinatorial structure of informational terms.

Williams and Beer introduced a sophisticated way to organise these components by exploiting an ordering of the redundancy relations inspired by the inclusion–exclusion principle (IEP). 
Given two sets $X_1$ and $X_2$, the IEP states that the sizes of their union and intersection are not independent, being related as
\begin{equation}
    |X_1\cup X_2| = |X_1|+|X_2|-|X_1\cap X_2| \,.
\end{equation}
More generally, for $n$ sets, the IEP reads
\begin{equation}
\left| \bigcup_{i=1}^{n} X_i \right| = \sum_{i=1}^{n} |X_i| - \sum_{1 \le i < j \le n} \left |X_i \cap X_j\right | + \sum_{1 \le i < j < k \le n} |X_i \cap X_j \cap X_k| - \dots + (-1)^{n-1} |X_1 \cap \dots \cap X_n| \,.
\end{equation}
This setup enables a set-theoretic interpretation of information-theoretic quantities \cite{mcgill1954multivariate, yeung1991new} and, within PID, it defines a direct link between redundancy and synergy \cite{williams2010nonnegative}.

Hence, the key idea is to enumerate all meaningful ways in which information can be common to groups of sources, then arrange them within a partially ordered set. This construction gives rise to the so-called redundancy lattice, whose elements correspond to distinct notions of shared information.

Formally, the lattice is constructed from the following collections of sources:
\begin{equation}
\mathcal{A}(\Xs) \equiv { \alpha \in \mathcal{P}(\mathcal{P}(\Xs)) \mid \forall A_i,A_j \in \alpha,; A_i \not\subseteq A_j } \,
\end{equation}
where $\mathcal{P}(\mathcal{X})$ denotes the set of all non-empty subsets of $\mathcal{X}$. Each element $\alpha \in \mathcal{A}$ specifies a collection of source groups that define a candidate redundancy relationship, which we indicate as $\Ic(\alpha;Y)$.
The set $\mathcal{A}$ is also in correspondence with all so-called \textit{antichains} of the set ${X_1,\ldots, X_n}$ \cite{dilworth1990decomposition}. As such, for each antichain there exists a corresponding piece in the Partial Information Decomposition.

Furthermore, this set of antichains can then be endowed with a partial order given by 
\begin{equation}
    \alpha \preceq \beta \iff \forall B \in \beta \,\,\,\exists A\in\alpha \text{ s.t. } A\subseteq B \,,
\end{equation}
with $\alpha,\beta \in \mathcal{A}$. Intuitively, this ordering reflects the idea that redundancy defined over smaller or more specific source groups refines redundancy defined over larger ones.

Throughout this work, we denoted with \Ic the redundancy function, and as $I_\partial$ the PID atoms obtained by the decomposition. Given a collection of sources $\alpha$, the fundamental construction of the redundancy lattice imposes that:
\begin{equation} \label{eq:pid_lattice_defeq}
    \Ic(\alpha;Y) = \sum_{\beta\preceq\alpha} I_\partial(\beta;Y) \,,
\end{equation}
which then leads to the following recursive relation to find the PID atoms $I_\partial$:
\begin{equation}
    I_\partial(\alpha;Y) = \Ic(\alpha;Y) - \sum_{\beta\prec\alpha} I_\partial(\beta;Y) \,.
\end{equation}
To fix ideas, consider the collection $\alpha=\{\{X_1\},\{X_2,X_3\},\{X_4, X_5\}\}$, with a small abuse of notation, we also indicate this as $\alpha=\{\{1\},\{2,3\},\{4 ,5\}\}$. The corresponding redundancy function will be $\Ic(X_1,X_2X_3,X_4 X_5;Y)$, which indicates the shared information among the sets $\{X_1\}$,$\{X_2,X_3\}$, and $\{X_4,X_5\}$, where $\{X_i,X_j\}$ also include the synergistic dependencies between $X_i$ and $X_j$.
Hence, the only purely redundant atom is the one following from the set $\alpha=\{\{X_1\},\ldots,\{X_n\}\}$, whose redundancy function is $\Ic(\Xs;Y)$.

We remark the difference between $\Ic(X_i,X_j;Y)$ and $I_\cap(X_iX_j;Y)$: the first is the redundancy function between the sources $X_i$ and $X_j$, while the second is the redundancy between the composite variable $(X_1X_2)$ and the target, and is equal to the mutual information $I(X_1,X_2;Y)$ (by \SR).

In the simplest case of $n=2$ sources, the PID lattice is composed of only four atoms (Fig.~\figsubref{fig:pid-lattices}{a}), and the underlying equations directly read:
\begin{align}
        \Icap & = I_\partial(X_1,X_2;Y) \\
    I(X_1;Y) & = I_\partial(X_1,X_2;Y) + I_\partial(X_1;Y) \\
    I(X_2;Y) & = I_\partial(X_1,X_2;Y) + I_\partial(X_2;Y) \\
    I(X_1,X_2;Y) & = I_\partial(X_1,X_2;Y)+I_\partial(X_1;Y)+I_\partial(X_2;Y)+I_\partial(X_1X_2;Y) \,.
\end{align}
Although PID is mathematically well-defined for any number of sources, performing the decomposition on systems with multiple sources becomes quickly infeasible, with the number of atoms increasing with the Dedekind number. For instance, already for $n=3$ sources there are 18 atoms (Fig.~\figsubref{fig:pid-lattices}{b}), and 166 for $n=4$. 

As discussed in detail throughout this work, there are many approaches to defining the $\Icaps$, each characterised by distinct properties. In turn, such properties are linked via specific relationships and incompatibilities. 
We provide a visual overview of the historical development of PID in Fig.~\ref{fig:timeline}, indicating when each measure, property, and mutual relationship was introduced. 
\begin{figure}[t]
    \centering
    \includegraphics[width=1.0\linewidth]{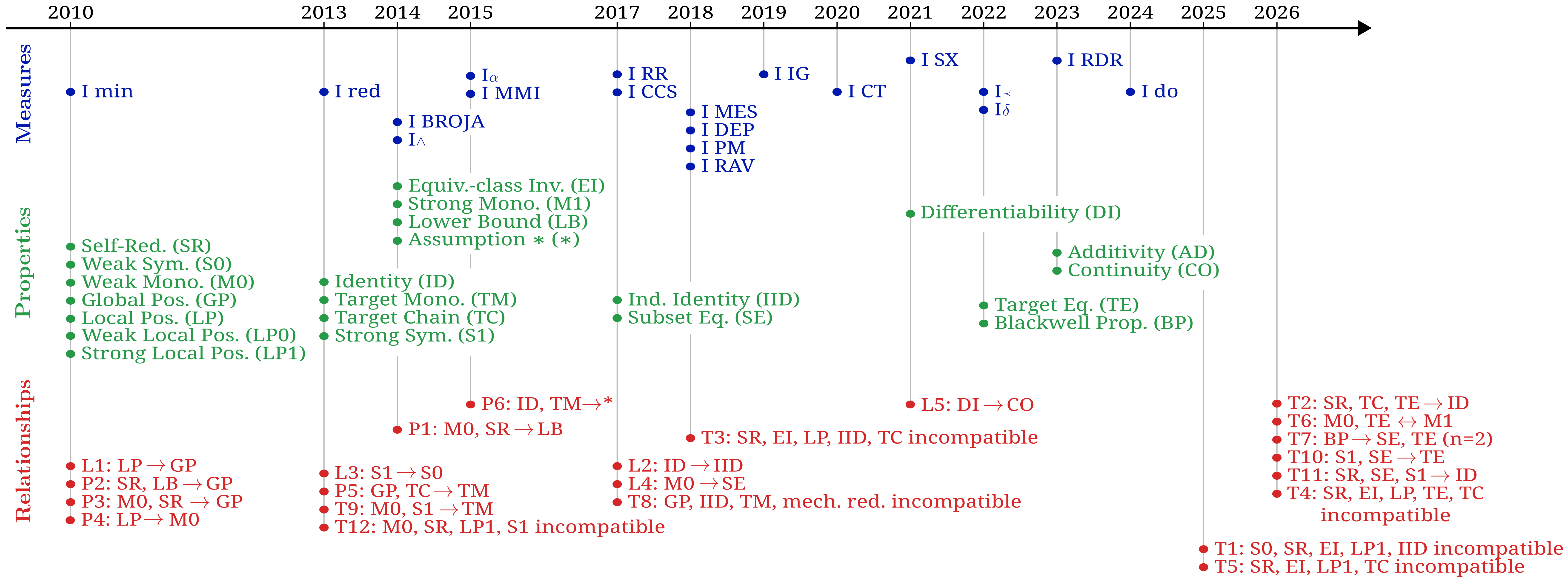}
    \caption{\textbf{Temporal timeline of the introduction of PID measures, properties, and their relationships.} $^*$ refers to the inequality: $I_\cap(X_1,X_2;f(X_1,X_2))\le I_\cap(X_1;X_2)$ (see Prop.~\ref{prop:ID-TM--ineq}).}
    \label{fig:timeline}
\end{figure}
Further information about the measures, properties, and interrelationships can be found in the following sections.

\section{PID properties}
\label{sec:properties_mathy}
\begin{itemize}
    \item \textbf{(SR)}: Self-redundancy~\cite{williams2010nonnegative}. \\
    The redundancy of a single source equals its mutual information:
    \begin{equation} \label{ax:SR}
        I_\cap(X_i;Y) = I(X_i;Y) \,.
    \end{equation}

    \item \textbf{(S\ts{0})}: Weak Symmetry~\cite{williams2010nonnegative}. \\
    Redundancy is invariant under any permutation of the sources:
    \begin{equation}\label{ax:S0}
        \Icaps = I_\cap(\sigma(X_1),\ldots,\sigma(X_n);Y) \,.
    \end{equation}
    for any permutation $\sigma$.
    
    \item \textbf{(M\ts{0})}: Weak Monotonicity~\cite{williams2010nonnegative}. \\
    Adding a source to the redundancy function does not increase redundancy, and removing a source more informative than another preserves redundancy:
    \begin{equation}\label{ax:M0}
        \Icaps \leq \Icapsm \,,
    \end{equation}
    with equality if there exists $Z\in\{\Xsm\}$ such that $Z=f(X_n)$, i.e.\ $Z\leq_I X_n$.

    \item \textbf{(GP)}: Global Positivity~\cite{williams2010nonnegative}. \\
    Redundancy is non-negative:
    \begin{equation}\label{ax:GP}
        \Icaps\ge0 \,.
    \end{equation}

    \item \textbf{(LP\ts{0})}: Weak Local Positivity~\cite{williams2010nonnegative}. \\
    The atoms $I_\partial$ are non-negative for a bivariate PID:
    \begin{equation}\label{ax:LP0}
        I_\partial(\alpha;Y)\ge0\,,
    \end{equation}
    where $\alpha$ indicates any collection of subsets of $n=2$ sources, i.e.\ $\alpha\in\mathcal{A}(X_1,X_2)=\{\{\{1\},\{2\}\},\{1\},\{2\},\{1,2\}\}$.

    \item \textbf{(LP\ts{1})}: Strong Local Positivity~\cite{williams2010nonnegative}. \\
    The atoms $I_\partial$ are non-negative for a PID with $n>2$ sources:
    \begin{equation}\label{ax:LP1}
        I_\partial(\alpha;Y)\ge0 \,,
    \end{equation}
    where $\alpha$ indicates any collection of subsets of $n\ge3$ sources. If both \LPz and \LPo hold, then we denote it as just \LP.

    \item \textbf{(IID)}: Independent Identity~\cite{ince2017measuring}. \\
    Independent sources provide vanishing redundancy about a target that is a copy of them:
    \begin{equation}\label{ax:IID}
        I_\cap(X_1,X_2;X_1X_2)=0 \,,
    \end{equation}
    for $X_1\ind X_2$. If $X_1$ and $X_2$ are binary random variables, this corresponds to the TBC gate.

    \item \textbf{(ID)}: Identity~\cite{harder2013bivariate}. \\
    The redundancy between two sources and a target that is a copy of them is equal to the mutual information between the sources:
    \begin{equation}\label{ax:ID}
        I_\cap(X_1,X_2;X_1X_2)=I(X_1;X_2) \,.
    \end{equation}
    If $X_1$ and $X_2$ are binary random variables, this corresponds to the TBC gate.

    \item \textbf{(TM)}: Target Monotonicity~\cite{bertschinger2013shared}. \\
    Adding a target does not decrease the value of redundancy:
    \begin{equation}\label{ax:TM}
        I_\cap(\Xs;Y_1\ldots Y_n)\ge I_\cap(\Xs;Y_1\ldots Y_{n-1}) \,.
    \end{equation}

    \item \textbf{(TC)}: Target Chain rule~\cite{bertschinger2013shared}. \\
    The chain rule property holds for the target variables in the redundancy function:
    \begin{equation}\label{ax:TC}
        I_\cap(\Xs;Y_1Y_2) = I_\cap(\Xs;Y_1) + I_\cap(\Xs;Y_2 | Y_1) \,,
    \end{equation}
    with $I_\cap(\Xs;Y_2 | Y_1) = \sum_{y_1}p(y_1)I_\cap(\Xs;Y_2 | y_1)$.

    \item \textbf{(SE)}: Subset Equality~\cite{ince2017measuring}, also denoted as \textit{deterministic equality} by Kolchinsky~\cite{kolchinsky2022novel}. \\
    Removing a source more informative than another preserves redundancy: 
    \begin{equation} \label{ax:SE}
        \Icaps =  \Icapsm \,,
    \end{equation}
    if there exists $Z\in\{\Xsm\}$ such that $Z=f(X_n)$, i.e.\ $Z\leq_I X_n$.

    \item \textbf{(LB)}: Lower Bound~\cite{griffith2014intersection}. \\
    A variable less informative than all sources is a lower bound for redundancy:
    \begin{equation}\label{ax:LB}
        \Icaps \ge I(Q;Y) \,\,\text{ for }\,\, Q\leq_I X_i \;\;\forall i=1,\ldots,n \,.
    \end{equation}

    \item \textbf{(TE)}: Target Equality~\cite{griffith2014intersection, kolchinsky2022novel}. \\
    Adding the target to the sources does not change redundancy:
    \begin{equation}\label{ax:TE}
        \Icaps = I_\cap(\Xs,Y;Y) \,.
    \end{equation}

    \item \textbf{(M\ts{1})}: Strong Monotonicity~\cite{griffith2014intersection}. \\
    Adding a source to the redundancy function does not increase redundancy, and removing a source more informative than another source or the target preserves redundancy:
    \begin{equation}\label{ax:M1}
        \Icaps \leq \Icapsm \,,
    \end{equation}
    with equality if there exists $Z\in\{\Xsm,Y\}$ such that $Z=f(X_n)$, i.e.\ $Z\leq_I X_n$.

    \item \textbf{(S\ts{1})}: Strong Symmetry~\cite{bertschinger2013shared}. \\
    Redundancy is invariant under any permutation of sources or target:
    \begin{equation}\label{ax:S1}
        \Icaps = I_\cap(\sigma(X_1),\ldots,\sigma(X_n);\sigma(Y))\,,
    \end{equation}
    for any permutation $\sigma$.

    \item \textbf{($\boldsymbol{\ast}$)}: Assumption ($\ast$) of Bertschinger \etal~\cite{bertschinger2014quantifying}. \\
     Redundancy and unique information only depend on the marginal and pairwise distributions between sources and the target:
    \begin{equation}\label{ax:AST}
        \Icaps \text{ is a function of only } p(Y), p(X_i,Y) \;\forall \,i =1,\ldots,n \,.
    \end{equation}

    \item \textbf{(BP)}: Blackwell Property~\cite{rauh2017extractable, kolchinsky2022novel, venkatesh2022partial}. \\
    The unique information of a source vanishes if and only if the source is Blackwell inferior to all others:
    \begin{equation}\label{ax:BP}
        I_\partial(X_i;Y) =0\iff X_i\preceq_Y X_j \;\;\forall X_i\in\{X_1,\ldots,X_{i-1},X_{i+1},\ldots,X_n\}\,.
    \end{equation}

    \item \textbf{(AD)}: Additivity~\cite{rauh2023continuity}. \\
    The redundancy of two independent subsystems is equal to the sum of their redundancies: 
    \begin{equation}\label{ax:AD}
    \begin{aligned}
        I_\cap(X_1,\ldots, X_n,X_1',\ldots, X_n',Z_1,\ldots, Z_n,Z_1',\ldots, Z_n';Y_1,\ldots, Y_n,Y_1',\ldots, Y_n') = \\
        = I_\cap(X_1,\ldots, X_n,Z_1,\ldots, Z_n;Y_1,\ldots, Y_n)+I_\cap(X_1',\ldots, X_n',Z_1',\ldots, Z_n';Y_1',\ldots, Y_n') \,,
    \end{aligned}
    \end{equation}
    where $(X_1,\ldots, X_n,Z_1,\ldots, Z_n;Y_1,\ldots, Y_n)\ind (X_1',\ldots, X_n',Z_1',\ldots, Z_n';Y_1',\ldots, Y_n')$.

    \item \textbf{(CO)}: Continuity~\cite{rauh2023continuity}. \\
    Redundancy is continuous under small changes in the probability distribution:
    \begin{equation}\label{ax:CO}
        ||p(x_1,\ldots,x_n)-p'(x_1,\ldots,x_n)||_p < \delta \implies  |I^p_\cap(\XsY)-I^{p'}_\cap(\XsY)|<\varepsilon \,,
    \end{equation}
    where we indicated with $||\cdot||_p$ the standard $p-$norm.

    \item \textbf{(DI)}: Differentiability~\cite{makkeh2021introducing}. \\
    Redundancy is differentiable with respect to the probability distribution:
    \begin{equation}\label{ax:DI}
        \lim_{||p' - p||_p \to 0} 
        \frac{I^{p'}_\cap(\XsY) - I^p_\cap(\XsY)}
        {||p' - p||_p} = \nabla_p \,I^p_\cap(\XsY)^\top (p' - p) < \infty \,,
    \end{equation}
    where $\nabla_p I^p_\cap(\XsY)$ denotes the gradient of the redundancy with respect to the probability distribution $p(x_1,\ldots,x_n)$.

    \item \textbf{(EI)}: Equivalence-class Invariance~\cite{griffith2014intersection}. \\
    Redundancy is invariant under the substitution of a random variable with an informationally equivalent one. In other words, it is invariant under the equivalence relation $\sim$ (Eq.~\eqref{eq:def_equivalence}).
\end{itemize}

\section{Relations between PID properties}

In this section, we present a comprehensive catalogue of implications and incompatibilities among PID properties, together with their precise mathematical formulations and proofs. With the exception of Prop.~\ref{prop:Mz-SR--GP}, Theo.~\ref{theo:SR-SE-So--ID}, and Theo.~\ref{theo:noSR-EI-LPo-TE-TC}, all results reported here are minimal, in the sense that additional relations can be derived by combining them. We nevertheless state these findings explicitly, due to their historical significance and practical importance.
    
\subsection{Basic implications}

\begin{lemma} \label{lemma:LP--GP}
    $\LP \implies \GP.$
\end{lemma}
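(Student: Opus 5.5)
The plan is to express the redundancy function as a sum of atoms via the defining lattice relation, Eq.~\eqref{eq:pid_lattice_defeq}, and then use non-negativity of every atom. Fix any number of sources $n$ and any collection $\alpha\in\mathcal{A}(\Xs)$. By construction of the redundancy lattice,
\begin{equation*}
    \Ic(\alpha;Y)=\sum_{\beta\preceq\alpha} I_\partial(\beta;Y)\,,
\end{equation*}
where the sum is finite because $\mathcal{A}(\Xs)$ is finite. Under \LP, meaning both \LPz for $n=2$ and \LPo for $n\ge3$, every term $I_\partial(\beta;Y)$ is non-negative. A finite sum of non-negative terms is non-negative, so $\Ic(\alpha;Y)\ge 0$. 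Taking $\alpha=\{\{X_1\},\ldots,\{X_n\}\}$ gives $\Icaps\ge0$, which is exactly \GP in the form of Eq.~\eqref{ax:GP}.

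There are two small points to check. The first concerns the cases covered. For $n=1$, the only element is $\{\{X_1\}\}$, which is the bottom of its lattice. Its atom therefore coincides with $\Ic(X_1;Y)$, and \LP gives non-negativity directly. Alternatively, \SR gives $I(X_1;Y)\ge0$, although \SR is not part of the hypotheses. For $n=2$ the argument uses \LPz, and for $n\ge3$ it uses \LPo, so the combined assumption \LP covers every $n$.

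The second point is that the atoms are well defined. This holds because they are obtained from the recursive Möbius inversion, so Eq.~\eqref{eq:pid_lattice_defeq} is an identity and not an extra assumption.

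I do not expect any real obstacle. The only subtlety is bookkeeping: one must make sure that the definition of \LP quantifies over all $n$, so that the argument applies to the lattice actually used for the redundancy term in question. The statement of \GP in Eq.~\eqref{ax:GP} is likewise quantified over all $n$.
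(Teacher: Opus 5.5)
Your proof is correct and follows essentially the paper's route. The paper gives no separate proof of this lemma and treats it as immediate from the lattice identity Eq.~\eqref{eq:pid_lattice_defeq}, which is the same reasoning it writes out in the proof of Prop.~\ref{prop:LP--M0}; note also that the bottom node $\{\{X_1\},\ldots,\{X_n\}\}$ has nothing strictly below it, so $I_\cap(X_1,\ldots,X_n;Y)$ is a single atom, and the $n=1$ case is already covered by \LPz because $\{\{X_1\}\}$ is a node of the two-source lattice.
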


\begin{lemma} \label{lemma:ID--IID}
    $\ID \implies \IID.$
\end{lemma}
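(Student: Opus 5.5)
The plan is to specialise the identity in \ID to the case of independent sources, which is exactly the hypothesis under which \IID makes its claim. Suppose $I_\cap$ satisfies \ID, so that for every pair of sources $X_1,X_2$ with target the joint copy $Y=X_1X_2$ we have $I_\cap(X_1,X_2;X_1X_2)=I(X_1;X_2)$, as in Eq.~\eqref{ax:ID}. To establish \IID we must show $I_\cap(X_1,X_2;X_1X_2)=0$ whenever $X_1\ind X_2$, as in Eq.~\eqref{ax:IID}.

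Fix an arbitrary pair $X_1,X_2$ with $X_1\ind X_2$. This pair is admissible in \ID, since \ID places no restriction on the joint distribution of the sources. Applying \ID gives $I_\cap(X_1,X_2;X_1X_2)=I(X_1;X_2)$.

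Independence means $p(x_1,x_2)=p(x_1)p(x_2)$ for all outcomes. Every logarithm in $\sum_{x_1x_2}p(x_1,x_2)\log\frac{p(x_1,x_2)}{p(x_1)p(x_2)}$ therefore vanishes, so $I(X_1;X_2)=0$. Equivalently, one can cite the standard fact recalled in the introduction that mutual information is zero if and only if the variables are independent. Hence $I_\cap(X_1,X_2;X_1X_2)=0$, which is precisely \IID.

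I do not expect any genuine obstacle. The only point needing care is that both properties are stated for the same general class of distributions, namely arbitrary (not necessarily binary) sources with target $X_1X_2$. Independence then merely selects a sub-case of the hypothesis of \ID, rather than requiring any change of setting. If one restricts to the TBC gate specifically, the same argument applies verbatim with $X_1,X_2\sim\mathrm{Bern}(1/2)$ independent.
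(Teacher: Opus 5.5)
Your proposal is correct and matches the paper, which states this lemma without a written proof because it is immediate. The implicit argument is exactly yours: specialise \ID to independent sources, where $I(X_1;X_2)=0$, which yields \IID.
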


\begin{lemma} \label{lemma:So--Sz}
    $\So \implies \Sz$.
\end{lemma}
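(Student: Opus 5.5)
The plan is to observe that \So is, by definition, a strengthening of \Sz: the permutations allowed in \Sz form a subset of those allowed in \So. The proof should therefore be a direct specialisation. No earlier result is needed, only the definitions in Eqs.~\eqref{ax:S0} and~\eqref{ax:S1}.

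Concretely, \So asserts that $\Icaps = I_\cap(\sigma(X_1),\ldots,\sigma(X_n);\sigma(Y))$ for every permutation $\sigma$ acting on the full collection $\{X_1,\ldots,X_n,Y\}$ of sources and target. Given an arbitrary permutation $\pi$ of the sources alone, I would extend it to a permutation $\sigma$ of $\{X_1,\ldots,X_n,Y\}$ by setting $\sigma(X_i)=\pi(X_i)$ and $\sigma(Y)=Y$. This $\sigma$ is a legitimate permutation of the enlarged set, since it fixes $Y$ and is bijective on the sources.

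Applying \So to this $\sigma$ gives $\Icaps = I_\cap(\pi(X_1),\ldots,\pi(X_n);Y)$, which is exactly Eq.~\eqref{ax:S0}. Since $\pi$ was arbitrary, \Sz holds.

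The only point requiring care is interpretive: one must read \So as quantifying over all permutations of sources and target jointly, with those fixing the target included. Under that (standard) reading the argument is immediate, and I expect no genuine obstacle.
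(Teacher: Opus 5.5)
Your proof is correct and matches the paper's intent. The paper states this lemma without proof, as an immediate consequence of the definitions, and the implicit argument is exactly yours: restricting the permutations in \So (Eq.~\eqref{ax:S1}) to those that fix the target $Y$ recovers \Sz.
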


\begin{lemma} \label{lemma:Mz--Mo}
    $\Mz \implies \SE.$
\end{lemma}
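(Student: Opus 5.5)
The statement to prove is Lemma~\ref{lemma:Mz--Mo}, $\Mz \implies \SE$. The plan is a direct unpacking of the definitions in Appendix~\ref{sec:properties_mathy}. No earlier result is needed, since \SE is literally the equality clause contained in \Mz.

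First, fix a redundancy function \Ic satisfying \Mz. Take arbitrary sources $\Xs$ and a target $Y$. Assume the hypothesis of \SE: there is some $Z\in\{\Xsm\}$ with $Z=f(X_n)$ for a function $f$, i.e.\ $Z\leq_I X_n$.

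Next, read off \Mz (Eq.~\eqref{ax:M0}). It asserts the inequality $\Icaps\le\Icapsm$ for every configuration. It also asserts that this inequality is an equality whenever some $Z\in\{\Xsm\}$ satisfies $Z\leq_I X_n$. This condition is word-for-word the premise of \SE, so \Mz gives $\Icaps=\Icapsm$, which is exactly Eq.~\eqref{ax:SE}.

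The only point needing care is that the two properties quantify over the same configurations. Both are stated for an arbitrary number of sources $n$ and arbitrary distributions, and the removed source is $X_n$ in both. If one wants \SE with the informative source in another position, one would combine the argument with \Sz. Since the paper states \SE with $X_n$ removed, this step is not needed.

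There is no genuine obstacle here. The implication is strict only in the sense that \SE discards the inequality part of \Mz, which is the relaxation adopted by the pointwise measures.
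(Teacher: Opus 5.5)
Your proposal is correct and matches the paper. The paper states this lemma without a separate proof, because \SE is by definition exactly the equality clause of \Mz: the same removed source $X_n$ and the same condition $Z\leq_I X_n$ for some $Z\in\{X_1,\ldots,X_{n-1}\}$. Your direct unpacking of the two definitions is all that is required.
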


\begin{lemma} \label{lemma:DI--CO}
    $\DI \implies \CO.$
\end{lemma}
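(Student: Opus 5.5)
The plan is to show that \DI implies \CO directly from the definition in Eq.~\eqref{ax:DI}: a function that is differentiable at a point is continuous there. We fix a distribution $p$ on the (finite) joint alphabet of $\Xs,Y$ and regard $I^p_\cap(\XsY)$ as a real function of the vector $p$, with norm $||\cdot||_p$. Since the alphabet is finite, all $p$-norms are equivalent. So it does not matter which norm appears in \CO versus \DI; we can bound one by a constant times the other.

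The key step is a first-order decomposition. Write
\begin{equation*}
I^{p'}_\cap - I^{p}_\cap = \nabla_p I^p_\cap(\XsY)^\top (p'-p) + r(p'),
\end{equation*}
where differentiability means $r(p')/||p'-p||_p \to 0$ as $p'\to p$. Hence there is $\delta_1>0$ such that $|r(p')| \le ||p'-p||_p$ whenever $||p'-p||_p<\delta_1$. The linear term is bounded by Cauchy--Schwarz (or H\"older) as $|\nabla_p I^p_\cap{}^\top(p'-p)| \le C\,\|\nabla_p I^p_\cap\|\,||p'-p||_p$. Here $C$ is a norm-equivalence constant, and the gradient is finite by the finiteness requirement in Eq.~\eqref{ax:DI}. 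Combining the two gives
\begin{equation*}
|I^{p'}_\cap - I^p_\cap| \le \bigl(C\|\nabla_p I^p_\cap\| + 1\bigr)\,||p'-p||_p .
\end{equation*}
Given $\varepsilon>0$, we then choose $\delta=\min\{\delta_1,\varepsilon/(C\|\nabla_p I^p_\cap\|+1)\}$, which yields Eq.~\eqref{ax:CO}.

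The only delicate point is interpretive rather than technical. The limit in Eq.~\eqref{ax:DI} is written loosely: its left side has a quotient whose right side still depends on $p'$. I would read it in the standard Fréchet sense stated above, namely that the remainder after the linear term is $o(||p'-p||_p)$. Under that reading the argument is immediate.

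A second point concerns boundary distributions. If $p$ lies on the boundary of the simplex, or if differentiability is only claimed almost everywhere (as flagged by the $\|$ marker in Table~\ref{tab:PID_properties}), the implication holds pointwise. That is, \DI at $p$ gives \CO at $p$, and the global statements follow by quantifying over all $p$. The argument also applies verbatim when perturbations are restricted to the simplex, that is, to one-sided or relative neighbourhoods.
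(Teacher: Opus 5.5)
Your proposal is correct and matches the paper's reasoning: the paper states this lemma without proof, treating it as the standard fact that (Fréchet) differentiability at a distribution $p$ implies continuity at $p$, and your $\varepsilon$--$\delta$ argument (an $o(\|p'-p\|_p)$ remainder plus a bounded linear term) spells that fact out. Your remarks on norm equivalence and on the implication holding pointwise, so that global \DI gives global \CO while almost-everywhere \DI gives only almost-everywhere \CO, are sensible refinements of the paper's loosely written definitions.
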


\begin{proposition}[\!\!\cite{griffith2014intersection}] \label{prop:M0-SR--LB}
    \Mz, \SR $\implies$ \LB
\end{proposition}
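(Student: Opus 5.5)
Let $Q$ be a random variable with $Q\leq_I X_i$ for every $i=1,\ldots,n$, so that $Q=f_i(X_i)$ for some functions $f_i$. The plan is to insert $Q$ as an extra source, use the equality clause of \Mz to remove the original sources one at a time, and finally apply \SR.

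\textbf{Step 1 (adding $Q$).} By the inequality part of \Mz, applied to the source list $(X_1,\ldots,X_n,Q)$,
\begin{equation}
    I_\cap(X_1,\ldots,X_n,Q;Y) \le I_\cap(X_1,\ldots,X_n;Y)\,.
\end{equation}
Here \Mz is stated with the new source in the last position, which is exactly where $Q$ sits, so no reordering is needed.

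\textbf{Step 2 (removing the original sources).} Next we strip the sources $X_n,X_{n-1},\ldots,X_1$ one at a time using the equality clause of \Mz. That clause requires the source being removed to occupy the last slot, while the statement includes no symmetry axiom. So before each removal we need to reorder the list, and this is the only delicate point.

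For the first removal we have a choice. If \Sz is assumed (it is a standing axiom in the paper), we simply permute $X_n$ to the last position. If we want to avoid \Sz, we read \Mz as applying to whichever source is added or removed, which is its intended meaning. In either reading, the list still contains $Q$ and $Q=f_n(X_n)$, so the equality clause applies and removing $X_n$ leaves $I_\cap$ unchanged.

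Iterating this argument, we remove $X_{n-1},\ldots,X_1$ in turn. At every stage $Q$ remains in the list and is a function of the source being removed. Once all the $X_i$ are gone we obtain
\begin{equation}
    I_\cap(X_1,\ldots,X_n,Q;Y)=I_\cap(Q;Y)\,.
\end{equation}

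\textbf{Step 3 (self-redundancy).} By \SR, $I_\cap(Q;Y)=I(Q;Y)$. Chaining this with the two displays above gives
\begin{equation}
    I(Q;Y) = I_\cap(X_1,\ldots,X_n,Q;Y) \le I_\cap(X_1,\ldots,X_n;Y)\,,
\end{equation}
which is \LB.

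The only real obstacle is the source-ordering bookkeeping in Step 2, which is resolved either by invoking \Sz or by the natural reading of \Mz described there. Every other step is a direct application of the axioms.
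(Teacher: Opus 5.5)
Your proposal is correct and follows essentially the same route as the paper: insert $Q$ using the inequality part of \Mz, remove each $X_i$ using the equality clause (valid because $Q$ is a function of every source), and finish with \SR. Your explicit treatment of the source-ordering issue, via \Sz or by reading \Mz as applying to any source, is a detail the paper leaves implicit when it applies the equality condition ``recursively.''
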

\begin{proof}
    If there exists $Q$ s.t. $Q\leq_I X_i\,\,\,\,\forall i =1,\ldots n$, then by applying the equality condition of \Mz recursively we get
    \begin{equation}
        I_\cap(X_1,\ldots,X_n;Y) \geq I_\cap(X_1,\ldots,X_n,Q;Y) = \ldots = I_\cap(Q;Y) = I(Q;Y) \,,
    \end{equation}
    where the last step follows from \SR.
\end{proof}

\begin{proposition}[\!\!\cite{williams2010nonnegative}] \label{prop:LB-SR--GP}
    \SR, \LB $\implies$ \GP.
\end{proposition}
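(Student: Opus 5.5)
The plan is to apply \LB to a single, deliberately trivial lower-bounding variable: a constant $Q$. Take $Q$ to be a deterministic random variable, say $Q\equiv c$ almost surely. For every source $X_i$, the constant map $f_i(x)=c$ gives $Q=f_i(X_i)$. Hence $Q\leq_I X_i$ for all $i=1,\ldots,n$, and $Q$ meets the hypothesis of \LB. This choice needs no assumption on the joint distribution of the sources and the target.

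By \LB we then obtain
\begin{equation}
    \Icaps \ge I(Q;Y) \,.
\end{equation}
Since $Q$ is constant, $H(Q)=0$, and therefore $I(Q;Y)=H(Q)-H(Q\mid Y)=0$. In general $0\le I(Q;Y)\le H(Q)=0$, so the bound reduces to $\Icaps\ge 0$, which is exactly \GP.

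The step that needs a little care is the role of \SR. In the argument above, \LB is already stated directly in terms of the mutual information $I(Q;Y)$, so \SR enters only implicitly. It is what identifies $I(Q;Y)$ with $I_\cap(Q;Y)$, which is the form \LB takes when it is derived through Prop.~\ref{prop:M0-SR--LB}. If one prefers to read \LB as a comparison with $I_\cap(Q;Y)$, then \SR is exactly the ingredient that turns this into the mutual information of a constant, which vanishes.

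The only potential obstacle is whether degenerate variables are admissible within the framework. I would note that constant variables are valid random variables and that $\leq_I$ is defined by the existence of any function, so the argument goes through. As a fallback, any $Q$ that is a common function of all sources and is independent of $Y$ gives the same conclusion.
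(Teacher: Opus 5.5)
Your proposal is correct and follows the paper's proof: both instantiate \LB with a constant variable (trivially a function of every source) and conclude $I_\cap(X_1,\ldots,X_n;Y)\ge I(Q;Y)=0$. Your remark on \SR is also accurate. The paper passes through $I_\cap(C;Y)=I(C;Y)$ using \SR, whereas with \LB phrased directly in terms of $I(Q;Y)$, as in its stated definition, \SR is not strictly needed.
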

\begin{proof}
    The proof directly follows from choosing $f_i(X_i)=C\,\,\,\forall \,i=1,\ldots,n$ as constant functions in the definition of \LB, which then implies
    \begin{equation}
        I_\cap(\Xs;Y) \ge I_\cap(C;Y) = I(C;Y) = 0 \,.
    \end{equation}
\end{proof}

\begin{proposition}[\!\!\cite{williams2010nonnegative}] \label{prop:Mz-SR--GP}
    \Mz, \SR $\implies$ \GP.
\end{proposition}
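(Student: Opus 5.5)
The quickest route is to compose two results already stated: Prop.~\ref{prop:M0-SR--LB} gives $\Mz, \SR \implies \LB$, and Prop.~\ref{prop:LB-SR--GP} gives $\SR, \LB \implies \GP$. Chaining them yields $\Mz, \SR \implies \GP$ immediately. This is consistent with the remark that this proposition is not minimal and is listed for historical reasons.

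To keep the argument self-contained and mirror Williams and Beer's original reasoning, I would also write out the direct version. Let $C$ be a constant random variable. Then $C = f_i(X_i)$ for every source, so $C \leq_I X_i$ for all $i$. Append $C$ as an extra source and use the inequality part of \Mz:
\begin{equation}
    I_\cap(\Xs;Y) \ge I_\cap(\Xs,C;Y) .
\end{equation}
Next, by \Sz, reorder the sources so that each $X_i$ in turn occupies the last slot. Since $C \leq_I X_i$, the equality clause of \Mz removes each $X_i$ one at a time. This leaves $I_\cap(C;Y)$, which equals $I(C;Y)=0$ by \SR.

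The only delicate point is bookkeeping in the equality clause. \Mz as stated only lets us drop the \emph{last} source $X_n$ when some earlier source is a function of it. So we either invoke \Sz to permute sources, or note that the proof of Prop.~\ref{prop:M0-SR--LB} already uses the clause in this recursive way. I would simply cite the two propositions and add a one-line remark that \Sz is implicitly assumed in the recursive application, as throughout the paper.
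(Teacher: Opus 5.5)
Your proposal is correct and matches the paper's proof, which simply chains Prop.~\ref{prop:M0-SR--LB} and Prop.~\ref{prop:LB-SR--GP}; your direct version just inlines those two arguments with $Q=C$ a constant. Your remark that removing each $X_i$ recursively via the equality clause tacitly needs \Sz (or an order-independent reading of \Mz) is a fair clarification; the paper leaves it implicit here, although Sec.~\ref{sec:theorems_intro} lists \Sz among the hypotheses when attributing this result to Williams and Beer.
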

\begin{proof}
This follows from Props.~\ref{prop:M0-SR--LB} and \ref{prop:LB-SR--GP}.
\end{proof}

\begin{proposition}[\!\!\cite{williams2010nonnegative, gutknecht2025babel, matthias2025novel}] \label{prop:LP--M0}
    \LP $\implies$ \Mz.
\end{proposition}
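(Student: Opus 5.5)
The plan is to prove the two parts of \Mz separately, working inside the redundancy lattice built on the sources $X_1,\ldots,X_n$. We take as given that \LP guarantees non-negative atoms for this $n$-source decomposition, which is \LPz when $n=2$ and \LPo when $n\ge3$. We also use the standing lattice identity Eq.~\eqref{eq:pid_lattice_defeq}, namely $\Ic(\alpha;Y)=\sum_{\beta\preceq\alpha}I_\partial(\beta;Y)$.

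\textbf{Inequality part.} Define the antichains
\[
\alpha=\{\{1\},\ldots,\{n-1\}\}, \qquad \alpha'=\{\{1\},\ldots,\{n\}\}.
\]
Then $\Icapsm=\Ic(\alpha;Y)$ and $\Icaps=\Ic(\alpha';Y)$.

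We first check that $\alpha'\preceq\alpha$. Every $B=\{i\}\in\alpha$ contains $A=\{i\}\in\alpha'$, so this holds.

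Since $\preceq$ is transitive, the down-set of $\alpha'$ is contained in the down-set of $\alpha$. Therefore
\[
\Ic(\alpha;Y)-\Ic(\alpha';Y)=\sum_{\beta\preceq\alpha,\;\beta\not\preceq\alpha'} I_\partial(\beta;Y),
\]
and this is $\ge0$ by \LP. This gives $\Icaps\le\Icapsm$.

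The one subtlety is that $\Icapsm$ must be evaluated as a node of the $n$-source lattice, not the $(n-1)$-source one. I will argue that the redundancy function is a single function of its argument collection, so both lattices assign the same value to the same collection. The decomposition of $\Ic(\alpha;Y)$ in the larger lattice is then legitimate.

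\textbf{Equality part.} Suppose $Z=X_j=f(X_n)$ for some $j<n$. We need the difference above to vanish. With only \LP available, this cannot come from positivity alone; it needs a structural input. The intended route is to show that every atom $\beta$ with $\beta\preceq\alpha$ but $\beta\not\preceq\alpha'$ is zero when $X_j\le_I X_n$.

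\textbf{Main obstacle.} The honest difficulty is that \LP by itself does not constrain the redundancy when sources are functionally related. The equality clause must therefore come from what Refs.~\cite{gutknecht2025babel,matthias2025novel} implicitly assume: the PID on the sources $(X_1,\ldots,X_n)$ and the PID on $(X_1,\ldots,X_{n-1})$ must be consistent.

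I would first try to make this precise as follows. Applying \LP to the lattice built on the source collection where $X_n$ is replaced by the pair $(X_j,X_n)$ gives an identification of $\{j\}$-containing nodes with $\{j,n\}$-nodes. Combining this with non-negativity of the atoms in the intervals between these identified nodes forces those interval atoms to sum to zero. Since each such atom is non-negative, each must individually vanish. This sandwich argument is the crux of the proof.

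If it fails, I would state the equality clause as following from \SE-type consistency and prove only the inequality from \LP.
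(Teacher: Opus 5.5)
Your inequality part is correct and is the same argument as the paper's. The paper's proof consists only of the lattice identity $I_\cap(\alpha;Y)=\sum_{\beta\preceq\alpha}I_\partial(\beta;Y)$ together with non-negativity of the atoms. In fact your down-set difference is the single atom $I_\partial(\alpha;Y)$, since the only antichains below $\alpha=\{\{1\},\ldots,\{n-1\}\}$ are $\alpha$ itself and the bottom $\alpha'$.

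The paper's proof never addresses the equality clause of \Mz, so you are right to single it out. This is where your proposal has a gap. You do not actually establish the equality, and your fallback would prove only half of \Mz. Your diagnosis is also mistaken: \LP does constrain the redundancy under functional relations, and no cross-lattice consistency assumption is needed. Finally, your sandwich pairs the wrong nodes. A source set containing $j$ is not informationally equivalent to the same set with $n$ added. The collapse goes the other way: $X_{A\cup\{j\}}\sim X_A$ whenever $n\in A$.

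The missing step is short. It stays inside the same $n$-source lattice and uses only \SR on single-collection nodes, which the paper's lattice equations already presuppose.
\begin{itemize}
\item Take $\gamma_1=\{\{n\}\}$ and $\gamma_2=\{\{j,n\}\}$. Since $X_j=f(X_n)$, \SR gives $I_\cap(\gamma_2;Y)=I(X_j,X_n;Y)=I(X_n;Y)=I_\cap(\gamma_1;Y)$.
\item Because $\gamma_1\preceq\gamma_2$, the difference of these two values is the sum of $I_\partial(\beta;Y)$ over the antichains $\beta$ that contain $\{j\}$ or $\{j,n\}$ but not $\{n\}$. This is a set difference of down-sets, not an interval.
\item That sum is zero and each term is non-negative by \LP, so every term vanishes.
\item Your $\alpha$ contains $\{j\}$ and not $\{n\}$, hence $I_\partial(\alpha;Y)=0$. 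Therefore $I_\cap(\alpha;Y)=I_\partial(\alpha;Y)+I_\partial(\alpha';Y)=I_\partial(\alpha';Y)=I_\cap(\alpha';Y)$, which is the equality clause.
\end{itemize}
For $n=2$ this reduces to a familiar observation. When $X_1=f(X_2)$, we have $I_\partial(X_1;Y)+I_\partial(X_1X_2;Y)=I(X_1,X_2;Y)-I(X_2;Y)=0$, which forces both atoms to vanish.
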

\begin{proof}
    It follows directly from the definition of the PID lattice, since:
    \begin{equation}
        I_\cap(\alpha;Y) = \sum_{\beta\preceq\alpha} I_\partial(\beta;Y) \,,
    \end{equation}
    and from \LP
    \begin{equation}
        I_\cap(\alpha;Y) \ge I_\partial(\beta;Y) \,\,\forall \alpha,\beta\,.
    \end{equation}
\end{proof}

\begin{proposition}[\!\!\cite{bertschinger2013shared}] \label{prop:TC-GP--TM}
    \GP, \TC $\implies$ \TM
\end{proposition}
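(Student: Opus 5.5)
The plan is to apply \TC once with a split of the enlarged target, and then use \GP on the conditional term. Write the enlarged target as $Y_1\ldots Y_n = (Y', Y_n)$, where $Y' \equiv Y_1\ldots Y_{n-1}$ is the smaller (possibly multivariate) target. Applying \TC (Eq.~\eqref{ax:TC}) with first target $Y'$ and second target $Y_n$ gives
\begin{equation}
    I_\cap(\Xs;Y_1\ldots Y_n) = I_\cap(\Xs;Y') + I_\cap(\Xs;Y_n\mid Y') \,,
\end{equation}
where, by definition, $I_\cap(\Xs;Y_n\mid Y') = \sum_{y'} p(y')\, I_\cap(\Xs;Y_n\mid y')$.

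Next, I interpret each term $I_\cap(\Xs;Y_n\mid y')$ as the redundancy function evaluated on the conditional distribution $P(\Xs, Y_n \mid Y'=y')$. This is a legitimate joint distribution of sources and target, so \GP applies to it and gives $I_\cap(\Xs;Y_n\mid y')\ge 0$ for every $y'$ with $p(y')>0$. Since the weights $p(y')$ are non-negative, the conditional redundancy is a convex combination of non-negative numbers and is therefore non-negative.

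Substituting this back into the chain-rule identity yields $I_\cap(\Xs;Y_1\ldots Y_n) \ge I_\cap(\Xs;Y_1\ldots Y_{n-1})$, which is exactly \TM (Eq.~\eqref{ax:TM}).

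The only delicate point is the second step. It requires that \GP is a property of the measure on \emph{every} distribution, including the conditional ones, and that \TC's conditional term is defined pointwise through those conditional distributions. I would state this reading explicitly, since it is implicit in how \TC is formulated. With that reading in place, the argument goes through without further assumptions, and no other earlier result is needed.
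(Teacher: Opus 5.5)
Your proof is correct and follows the paper's argument exactly: you apply \TC to split the enlarged target, then drop the conditional term because it is non-negative by \GP. Your explicit remark that \GP must hold for every conditional distribution $P(X_1,\ldots,X_n,Y_n\mid Y'=y')$ spells out a step the paper leaves implicit.
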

\begin{proof}
    It follows immediately from
    \begin{equation}
    I_\cap(X_1,\ldots,X_n;Y_1Y_2)=I_\cap(X_1,\ldots,X_n;Y_1)+I_\cap(X_1,\ldots,X_n;Y_2|Y_1)\ge I_\cap(X_1,\ldots,X_n;Y_1) \,.
    \end{equation}
\end{proof}

\subsection{Main implications and incompatibilities}

\begin{theorem}[\!\!\cite{matthias2025novel}]\label{theo:noIID-LP1} 
    \Sz, \SR, \EI, \LPo, and \IID\footnote{Although the original stated \ID in the hypotheses, their counterexample is more general and only requires \IID.} are incompatible. 
\end{theorem}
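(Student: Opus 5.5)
The plan is to exhibit a single distribution on which the five properties force contradictory values: the XOR-Source-Copy gate. Take three sources $X_1,X_2\sim Bern(1/2)$ independent, $X_3=X_1\oplus_2 X_2$, and target $Y=(X_1,X_2,X_3)$. The relevant facts are:
\begin{itemize}
\item the sources are pairwise independent;
\item any two of them determine the third;
\item $H(Y)=2$ bits, so $I(X_i;Y)=1$ and $I(X_1,X_2,X_3;Y)=2$.
\end{itemize}
I will compute the $n=3$ lattice atoms under \SR, \Sz, \EI, \IID, and then show that \LPo cannot hold.

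\textbf{Step 1: pairwise redundancies vanish.} For each pair $i\neq j$, the target $Y$ is a bijective function of $(X_i,X_j)$, since the third coordinate is the XOR of the other two. Hence $Y\sim X_iX_j$ in the sense of Eq.~\eqref{eq:def_equivalence}. By \EI,
\begin{equation}
I_\cap(X_i,X_j;Y)=I_\cap(X_i,X_j;X_iX_j),
\end{equation}
and by \IID this equals $0$, because $X_i\ind X_j$. This is the step where \EI is genuinely needed, and I expect it to be the only delicate point. One must check that \EI licenses replacing the target by an informationally equivalent variable inside $I_\cap$ while the sources are held fixed. \Sz ensures the ordering of the pair is immaterial, so the same holds for all three pairs.

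\textbf{Step 2: use \LPo to kill atoms.} The elements $\beta\preceq\{\{i\},\{j\}\}$ are exactly $\{\{1\},\{2\},\{3\}\}$ and $\{\{i\},\{j\}\}$. Since their atoms sum to $0$ and are nonnegative by \LPo, all of the following atoms vanish:
\begin{equation}
I_\partial(X_1,X_2,X_3;Y)=0,\qquad I_\partial(X_i,X_j;Y)=0 \ \text{ for each pair } i\neq j.
\end{equation}

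\textbf{Step 3: apply \SR to the single-source nodes.} The elements below $\{\{1\}\}$ are $\{1\}\{2\}\{3\}$, $\{1\}\{2\}$, $\{1\}\{3\}$, $\{1\}\{23\}$ and $\{1\}$. By \SR, $I_\cap(X_1;Y)=I(X_1;Y)=1$, and by Step 2 the first three atoms are zero, so
\begin{equation}
I_\partial(X_1,X_2X_3;Y)+I_\partial(X_1;Y)=1.
\end{equation}
The analogous identities hold for sources $2$ and $3$.

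\textbf{Step 4: contradiction at the top node.} The six atoms appearing in Step 3 are pairwise distinct lattice elements. Their total is therefore $3$ bits. By \SR applied to the joint source, $\sum_{\beta}I_\partial(\beta;Y)=I(X_1X_2X_3;Y)=2$. Since \LPo makes every remaining atom nonnegative, the total must be at least $3$, which contradicts the value $2$. Hence no redundancy function can satisfy \Sz, \SR, \EI, \LPo and \IID simultaneously.
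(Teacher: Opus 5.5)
Your argument is correct and follows essentially the same route as the proof the paper cites (Theorem 1 of Matthias et al., built on the XOR-Source-Copy gate): \EI relabels $Y$ as $X_iX_j$, \IID zeroes each pairwise redundancy, \LPo forces the atoms below those nodes to vanish, and \SR makes the six atoms $I_\partial(X_i;Y)$ and $I_\partial(X_i,X_jX_k;Y)$ sum to $3$ bits, exceeding $I(X_1,X_2,X_3;Y)=2$. As in that refinement, your proof never uses \Mz, and \Sz plays only a bookkeeping role.
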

\begin{proof}
    Theo. 1 of Ref.~\cite{matthias2025novel}.
\end{proof}

\begin{theorem} \label{theo:TC-TE-SR--ID}
    \SR, \TC, \TE $\implies$ \ID.
\end{theorem}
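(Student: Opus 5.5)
We need to show $I_\cap(X_1,X_2;X_1X_2)=I(X_1;X_2)$ from \SR, \TC and \TE. The plan is to apply \TC to the composite target $Y=(Y_1,Y_2)$ with $Y_1=X_1$ and $Y_2=X_2$. This gives
\begin{equation*}
I_\cap(X_1,X_2;X_1X_2)=I_\cap(X_1,X_2;X_1)+\sum_{x_1}p(x_1)\,I_\cap(X_1,X_2;X_2\mid x_1).
\end{equation*}
The aim is to show that the first term equals $I(X_1;X_2)$ and that every conditional term vanishes.

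For the first term, I will use \TE in reverse to add the target $X_1$ as a source. This gives $I_\cap(X_1,X_2;X_1)=I_\cap(X_1,X_2,X_1;X_1)$, which does not yet collapse anything.

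The more natural route is to evaluate both pieces directly:
\begin{itemize}
\item \textbf{Second term.} Conditioned on $X_1=x_1$, the source $X_1$ is constant. Hence $I_\cap(X_1,X_2;X_2\mid x_1)$ is a redundancy involving a constant source. To kill it, apply \TE to the conditional distribution, with target $X_2$ and sources $X_1,X_2$. Then use \TC once more on the target $(X_1,X_2)$ in the opposite order, $Y_1=X_2$ and $Y_2=X_1$.
\item \textbf{Symmetric expansion.} The two orderings give
\begin{equation*}
I_\cap(X_1,X_2;X_1)+I_\cap(X_1,X_2;X_2\mid X_1)=I_\cap(X_1,X_2;X_2)+I_\cap(X_1,X_2;X_1\mid X_2).
\end{equation*}
This alone is not enough.
\end{itemize}

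So I will instead pin down $I_\cap(X_1,X_2;X_1)$ via \TE in the forward direction. The sources $(X_1,X_2)$ with target $X_1$ become sources $(X_1,X_2,X_1)$, where the target copy coincides with the source $X_1$. Without \SE, that does not reduce either.

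Hence the cleaner strategy is to apply \TC with \TE to the target $X_1X_2$ viewed against the augmented source list $(X_1,X_2,X_1X_2)$:
\begin{equation*}
I_\cap(X_1,X_2;X_1X_2)=I_\cap(X_1,X_2,X_1X_2;X_1X_2).
\end{equation*}
Then expand the target by \TC. The resulting conditional redundancies all have a source that is either constant or equal to the conditioned target. From there, use \SR on the reduced single-variable terms, so that $I_\cap(X_2;X_2\mid x_1)$ contributes $H(X_2\mid x_1)$ pieces. Collecting terms should yield $H(X_1)+H(X_2\mid X_1)-H(X_1\mid X_2)-\ldots$, and I expect this to telescope to $I(X_1;X_2)$.

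\textbf{Key lemma.} The cleanest version of this is the following claim. For any sources $(A,B)$ and target $A$, \TC applied to the trivial split $A=(A,\text{const})$ together with \TE should force
\begin{equation*}
I_\cap(A,B;A)=I_\cap(A,B,A;A).
\end{equation*}
By \TC with $Y_1=B$, $Y_2=A$ and conditioning on $B$, this yields $I(A;B)+\sum_b p(b)\,I_\cap(A,b;A\mid b)$. Here the conditional redundancy has a constant source and should be $0$. That vanishing is obtained from \TC applied to a deterministic target, where $I_\cap(\cdot;C)=0$ for constant $C$, combined with \SR.

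\textbf{Main obstacle.} The main obstacle is exactly this step: showing that redundancy with a constant source vanishes using only \SR, \TC and \TE, without \Mz or \GP. My first attempt will be to use \TE to replace the constant source by the target (conditionally) and then apply \SR.
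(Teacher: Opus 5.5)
There is a genuine gap. The proposal never evaluates either term of the \TC expansion, and the missing idea is to use \TE in the \emph{reducing} direction.

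Your first step, $I_\cap(X_1,X_2;X_1X_2)=I_\cap(X_1,X_2;X_1)+\sum_{x_1}p(x_1)\,I_\cap(X_1,X_2;X_2\mid x_1)$, is right. In the first term, however, the source list $(X_1,X_2)$ already has the form ``$(X_2)$ together with the target $Y=X_1$''. So \TE (up to reordering the sources) deletes that copy of the target rather than adding another one: $I_\cap(X_1,X_2;X_1)=I_\cap(X_2;X_1)=I(X_1;X_2)$ by \SR.

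Likewise, for each fixed $x_1$ the conditional term has target $X_2$, and the source $X_2$ is that target. So \TE gives $I_\cap(X_1,X_2;X_2\mid x_1)=I_\cap(X_1;X_2\mid x_1)$, and \SR gives $I(X_1;X_2\mid x_1)=0$, since $X_1$ is constant under the conditioning. Summing yields $I(X_1;X_2)$, which is \ID.

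Your bullet on the second term gestured at this, since applying \TE to the conditional distribution is the right move, but you did not carry it through. For the first term you applied \TE the wrong way round, obtaining $I_\cap(X_1,X_2,X_1;X_1)$, concluded that nothing collapses, and abandoned the direct two-line computation.

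The detours that replace it do not work.
\begin{itemize}
\item In your key lemma, \TC with $Y_1=B$, $Y_2=A$ expands a redundancy whose target is the pair $(B,A)$. The target of $I_\cap(A,B,A;A)$ is $A$ alone, so the claimed expansion $I(A;B)+\sum_b p(b)\,I_\cap(A,b;A\mid b)$ is unjustified.
\item The ``telescoping'' of entropy terms in the augmented-source route is asserted rather than derived.
\item The ``main obstacle'' you identify, that a redundancy containing a constant source vanishes using only \SR, \TC and \TE, is never needed. Once \TE removes the source that coincides with the (conditional) target, the constant source stands alone, and \SR by itself gives zero.
\end{itemize}
As written, the proposal leaves its central step open and does not establish the theorem.
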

\begin{proof}
    \begin{equation}
        \begin{aligned}
            I_\cap(X_1,X_2;X_1X_2) & = I_\cap(X_1,X_2;X_1) + I_\cap(X_1,X_2;X_2|X_1) \\
            & = I_\cap(X_2;X_1) + I_\cap(X_1;X_2|X_1) \\
            & = I(X_2;X_1) + I(X_1;X_2|X_1) \\
            & = I(X_1;X_2) \,,
        \end{aligned}
    \end{equation}
    where in the first, second, and third steps we used \TC, \TE, and \SR, respectively.
\end{proof}

\begin{theorem}[\!\!\cite{finn2018pointwise}] \label{theo:LPz-ID--noTC}
    \SR, \EI, \LPz, \IID\footnote{Although the original stated \ID in the hypotheses, their counterexample is more general and only requires \IID.}, and \TC are incompatible.
\end{theorem}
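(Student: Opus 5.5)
The plan is to produce a single distribution on which the five properties force contradictory values of the same redundancy. I will take the independent-source Two-Bit Copy, $X_1,X_2\sim\mathrm{Bern}(1/2)$ independent with $Y=X_1X_2$, and rewrite its target in a form where \TC splits the redundancy into two pieces we can evaluate.

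\textbf{Step 1: use \IID.} Since $X_1\ind X_2$, \IID gives $I_\cap(X_1,X_2;X_1X_2)=0$.

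\textbf{Step 2: change the target with \EI.} The map $(x_1,x_2)\mapsto(x_1\oplus_2 x_2,\,x_1)$ is a bijection, so $X_1X_2\sim (X_1\oplus_2 X_2)X_1$. Hence \EI yields $I_\cap(X_1,X_2;(X_1\oplus_2X_2)X_1)=0$.

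\textbf{Step 3: split with \TC.} Applying \TC with $Y_1=X_1\oplus_2X_2$ and $Y_2=X_1$ gives
\begin{equation*}
0 = I_\cap(X_1,X_2;X_1\oplus_2X_2) + \sum_{z}p(z)\,I_\cap(X_1,X_2;X_1\mid X_1\oplus_2X_2=z).
\end{equation*}

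\textbf{Step 4: sign of the first term.} In the bivariate lattice this term is the redundancy atom itself. So \LPz, through Lemma~\ref{lemma:LP--GP} restricted to $n=2$, makes it non-negative.

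\textbf{Step 5: value of each conditional term.} Fix $z$. Under the conditional distribution, $X_2=X_1\oplus_2 z$, so $X_2$ is an invertible relabelling of $X_1$, $X_1$ is a uniform bit, and the target is $X_1$ itself. I will compute the redundancy directly from the bivariate lattice equations of Eq.~\eqref{eq:PID2_1}--\eqref{eq:PID2_4}, using only \SR and \LPz.
\begin{itemize}
\item Since $X_2$ is a function of $X_1$, we have $I(X_1,X_2;Y)=I(X_1;Y)$.
\item Subtracting the equation for $I(X_1;Y)$ from the one for $I(X_1,X_2;Y)$ gives $I_\partial(X_2;Y)+I_\partial(X_1X_2;Y)=0$.
\item By \LPz both atoms are non-negative, so both vanish.
\item The equation for $I(X_2;Y)$ then gives $I_\cap(X_1,X_2;Y)=I(X_2;Y)$.
\end{itemize}
Here $I(X_2;Y)=H(X_1)=1$ bit, using \SR. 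Therefore every conditional term equals $1$ bit, and so does their weighted sum.

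\textbf{Step 6: conclude.} Steps 3--5 give $0 \ge 0 + 1$ bit, a contradiction. Hence \SR, \EI, \LPz, \IID and \TC cannot hold together.

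The main obstacle is Step 5. There we cannot assume \Mz or \SE, so the redundancy of two copies of the same variable must be pinned down from \LPz and \SR alone, with \EI handling the relabelling $x_1\mapsto x_1\oplus z$. A second point to check is the \TC convention: the conditional redundancy is evaluated on the conditional distribution, and \EI must hold for each such conditional distribution. If this bookkeeping fails, my fallback is to use $Y_1=X_1$ instead and argue symmetrically.
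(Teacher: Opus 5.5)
Your proof is correct and is essentially the counterexample behind the result the paper cites (Theorem 6 of \cite{finn2018pointwise}), which the paper invokes without reproducing it. The argument takes the independent two-bit copy, re-encodes its target via \EI as $(X_1\oplus_2 X_2,\,X_1)$, and uses \TC to split it into a parity redundancy that \LPz makes non-negative plus a conditional term that \SR and \LPz pin to $1$ bit, contradicting the value $0$ forced by \IID. One caveat: your fallback with $Y_1=X_1$ would fail, since it gives the consistent identity $0=0+0$, so the parity re-encoding is essential; Step~5 as written already works and needs no further use of \EI.
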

\begin{proof}
    Theo. 6 or Ref.~\cite{finn2018pointwise}.
\end{proof}

\begin{theorem} \label{theo:noSR-EI-LPo-TE-TC}
    \SR, \EI, \LPz, \TE and \TC are not compatible.
\end{theorem}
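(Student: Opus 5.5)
The plan is to argue by contradiction, combining Theo.~\ref{theo:TC-TE-SR--ID} with Theo.~\ref{theo:LPz-ID--noTC}, exactly as the main text anticipates. Suppose a redundancy function $\Ic$ satisfies \SR, \EI, \LPz, \TE and \TC simultaneously. Since the hypotheses \SR, \TC and \TE are all available, Theo.~\ref{theo:TC-TE-SR--ID} applies directly. It yields \ID: for any pair of sources, $I_\cap(X_1,X_2;X_1X_2)=I(X_1;X_2)$.

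Next, I would invoke Lemma~\ref{lemma:ID--IID} to pass from \ID to \IID. Specialising to $X_1\ind X_2$ gives $I(X_1;X_2)=0$, and hence $I_\cap(X_1,X_2;X_1X_2)=0$. At this point $\Ic$ satisfies \SR, \EI, \LPz, \IID and \TC. Theo.~\ref{theo:LPz-ID--noTC} states that this set is incompatible, which is the desired contradiction.

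The only step needing care is checking that the hypotheses line up without hidden gaps. First, the derivation in Theo.~\ref{theo:TC-TE-SR--ID} uses \TE to remove a source equal to the target. Applying it to the pair $(X_2,X_1)$ with target $X_1$ requires \Sz, or the form of \TE with the target appended in any position. I would note that \TE is meant in this symmetric sense, or equivalently that only the ordering convention is involved. Second, the conditional term $I_\cap(X_1;X_2\mid X_1)$ must reduce to a conditional mutual information via \SR applied pointwise for each $x_1$, and it vanishes.

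Third, I would confirm that Theo.~\ref{theo:LPz-ID--noTC} (Theo.~6 of Ref.~\cite{finn2018pointwise}) needs only the bivariate positivity \LPz rather than full \LP, which matches its statement here. I expect no genuine obstacle beyond this bookkeeping, since the result is a direct composition of two earlier theorems.
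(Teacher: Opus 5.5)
Your proposal is correct and matches the paper's proof: it obtains \ID from \SR, \TC and \TE via Theo.~\ref{theo:TC-TE-SR--ID}, then uses \ID $\Rightarrow$ \IID to reach the incompatibility of Theo.~\ref{theo:LPz-ID--noTC}. Your remark that \TE must apply with the target in any source slot (or that \Sz is tacitly needed) is a fair point that the paper leaves implicit in the proof of Theo.~\ref{theo:TC-TE-SR--ID}.
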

\begin{proof}
    The result directly follows from Theos. \ref{theo:TC-TE-SR--ID}-\ref{theo:LPz-ID--noTC}.
\end{proof}

\begin{theorem}[\!\!\cite{matthias2025novel}]\label{theo:noTC-LP1}
    \SR, \EI, \LPo, and \TC are incompatible. 
\end{theorem}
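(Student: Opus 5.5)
The plan is to follow the route of Rauh \etalx and of Matthias \etal~\cite{matthias2025novel}: use the XOR-Source-Copy gate and let \TC do the job that \ID or \IID does in Theo.~\ref{theo:noIID-LP1}. Let $X_1,X_2\sim Bern(1/2)$ be independent, $X_3=X_1\oplus_2 X_2$, and $Y=(X_1,X_2,X_3)$. The variables $X_1,X_2,X_3$ are pairwise independent, and $I(X_1,X_2,X_3;Y)=H(Y)=2$ bits. I will assume \SR, \EI, \LPo and \TC, and derive a contradiction in the three-source lattice.

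\textbf{Monotonicity and positivity from \LPo.} Every two-source redundancy $\Ic(X_i,X_j;Z)$ is a node $\{\{i\},\{j\}\}$ of the $n=3$ lattice. Prop.~\ref{prop:LP--M0} and Lemma~\ref{lemma:LP--GP} then apply in that lattice, giving
\begin{equation}
0\le \Ic(X_i,X_j;Z)\le I(X_i;Z)
\end{equation}
for any target $Z$, with \SR used for the upper node. The same holds for every conditional distribution, because \LPo is assumed for all distributions.

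\textbf{Pairwise redundancies vanish.} By \EI, $Y\sim(X_i,X_j)$ for every pair $i\neq j$, since the third variable is the XOR of the other two. Applying \TC to this target gives
\begin{equation}
\Ic(X_i,X_j;Y)=\Ic(X_i,X_j;X_i)+\sum_{x}p(x)\,\Ic(X_i,X_j;X_j\mid X_i=x).
\end{equation}
The first term is bounded by $I(X_j;X_i)=0$. In each conditional term $X_i$ is constant, so that term is bounded by $I(X_i;X_j\mid X_i=x)=0$. Both terms are also nonnegative, hence $\Ic(X_i,X_j;Y)=0$. This replaces the \IID hypothesis of Theo.~\ref{theo:noIID-LP1}, and this is where \TC enters. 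Monotonicity and \GP then also give $\Ic(X_1,X_2,X_3;Y)=0$.

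\textbf{Counting atoms.} Each $X_i$ is a function of $(X_j,X_k)$, so the equality clause of \Mz (Prop.~\ref{prop:LP--M0}), together with \SR, gives
\begin{equation}
\Ic(X_i,X_jX_k;Y)=\Ic(X_i;Y)=I(X_i;Y)=1.
\end{equation}
The nodes strictly below $\{\{i\},\{j,k\}\}$ are $\{i\}\{j\}\{k\}$, $\{i\}\{j\}$ and $\{i\}\{k\}$. Their redundancies are all zero, so their atoms are zero, and therefore $\Ip(X_i,X_jX_k;Y)=1$ for each of the three choices of $i$. These three atoms are distinct nodes, all below the top node $\{1,2,3\}$. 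Since all atoms are nonnegative under \LPo,
\begin{equation}
I(X_1,X_2,X_3;Y)\ge 3>2,
\end{equation}
which is a contradiction.

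\textbf{Main obstacle.} The delicate step is justifying the monotonicity bounds for the conditional and two-source terms using only \LPo (for $n\ge3$) and not \LPz. I handle this by embedding every two-source quantity as a node of the three-source lattice, and by noting that \TC's conditional redundancies are redundancies of genuine distributions to which \LPo applies. I also need to check that the equality clause of \Mz is legitimately available from \LPo, by verifying that the lower nodes involved contribute zero.
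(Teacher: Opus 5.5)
Your argument is correct in structure and takes a different, more self-contained route than the paper, which gives no argument of its own and simply defers to Theorem~2 of~\cite{matthias2025novel}. In effect you assemble the result from pieces already in the paper. You embed every two-source node, including those of the conditional distributions appearing in \TC, into a three-source lattice. There \LPo supplies the bounds $0\le I_\cap(X_i,X_j;Z)\le I(X_i;Z)$, and these do the job that \TE does in Theo.~\ref{theo:TC-TE-SR--ID}. For the pairwise independent XOR-Source-Copy sources they already force both chain-rule terms to vanish, so \TC delivers the \IID-type conclusion $I_\cap(X_i,X_j;Y)=0$. You then rerun the counting of Theo.~\ref{theo:noIID-LP1}. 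This makes explicit why \TE can be dropped for $n\ge 3$, and why \Sz is not needed.

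One step, however, is not justified as written. Prop.~\ref{prop:LP--M0} only yields the inequality part of \Mz. The functional equality clause ($X_i=f(X_jX_k)\Rightarrow I_\cap(X_i,X_jX_k;Y)=I(X_i;Y)$) does not follow from non-negativity of the atoms below $\{\{i\},\{j,k\}\}$. Your proposed check, that these lower atoms vanish, only shows $I_\cap(X_i,X_jX_k;Y)=I_\partial(X_i,X_jX_k;Y)$. Equality with $I(X_i;Y)$ would also require the unique atom $I_\partial(X_i;Y)$ to vanish, which is exactly what is at stake.

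The gap is easy to close in either of two ways.

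(a) The down-sets of $\{\{i\}\}$ and $\{\{j,k\}\}$ intersect in the down-set of their meet $\{\{i\},\{j,k\}\}$, and both lie inside the whole lattice. So \LPo and \SR give $I(X_i;Y)+I(X_jX_k;Y)-I_\cap(X_i,X_jX_k;Y)\le I(X_1,X_2,X_3;Y)$, i.e.\ $I_\cap(X_i,X_jX_k;Y)\ge 1+2-2=1$. Together with the monotonicity upper bound $I_\cap(X_i,X_jX_k;Y)\le I(X_i;Y)=1$, this gives the value $1$ you need.

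(b) Bypass that node altogether. Once the pairwise and triple atoms vanish, \SR gives $I_\partial(X_i,X_jX_k;Y)+I_\partial(X_i;Y)=I(X_i;Y)=1$ for each $i$. These six atoms are distinct and non-negative, so all atoms sum to at least $3>2=I(X_1,X_2,X_3;Y)$. This is just the statement $I_\cup=3>I=2$ of Prop.~\ref{prop:Union-LP}.
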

\begin{proof}
    Theo. 2 of Ref.~\cite{matthias2025novel}.
\end{proof}

\begin{theorem} \label{theo:M0-TE--M1}
    \Mz, \TE $\iff$ \Mo 
\end{theorem}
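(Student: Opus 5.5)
We prove the two directions separately, working directly from the definitions of \Mz, \Mo, and \TE in Appendix~\ref{sec:properties_mathy}. The key structural observation is that \Mo and \Mz share the same inequality $\Icaps \le \Icapsm$. They differ only in the equality clause: under \Mo the witness $Z$ with $Z\leq_I X_n$ may be taken from $\{\Xsm, Y\}$, whereas under \Mz it must come from $\{\Xsm\}$. So the whole argument reduces to comparing the equality clauses, and \TE is exactly the extra case $Z=Y$.

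\emph{Direction $\Mo \implies \Mz, \TE$.} The inequality of \Mz is literally that of \Mo. The equality clause of \Mz is a special case of the one in \Mo, since any witness $Z\in\{\Xsm\}$ lies in $\{\Xsm,Y\}$; hence \Mz holds. For \TE, apply \Mo to the $(n+1)$-source collection $(\Xs, Y)$ with target $Y$, treating $Y$ as the added last source. The target $Y$ itself belongs to the allowed witness set and satisfies $Y = \mathrm{id}(Y)$, i.e.\ $Y\leq_I Y$, so the equality clause gives $I_\cap(\Xs,Y;Y)=\Icaps$, which is \TE.

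\emph{Direction $\Mz, \TE \implies \Mo$.} The inequality again comes for free from \Mz. For the equality clause, split on where the witness lives. If $Z\in\{\Xsm\}$, the equality is the clause of \Mz (equivalently \SE via Lemma~\ref{lemma:Mz--Mo}). The remaining case is $Z=Y$ with $Y=f(X_n)$. Here I would first use \TE to append $Y$ as a source:
\begin{equation*}
\Icaps = I_\cap(\Xs,Y;Y).
\end{equation*}
Next, using \Sz to reorder sources if the formalism requires the removed source to be last, I apply the equality clause of \Mz to remove $X_n$ from this enlarged collection. This is legitimate because $Y$ is now a source and $Y = f(X_n)$, giving
\begin{equation*}
I_\cap(\Xs,Y;Y) = I_\cap(\Xsm,Y;Y).
\end{equation*}
Finally, applying \TE again to the $n-1$ sources removes $Y$, so this equals $\Icapsm$. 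This yields the required equality.

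The main obstacle is bookkeeping rather than depth. In the last step we need \Mz's equality clause to remove a source other than the one whose removal is literally stated, so we must either invoke \Sz (which every measure here satisfies) or read \Mz as permutation-invariant in which source is dropped. I would state this reliance explicitly, or note that \Mz is understood up to ordering of sources. A second, smaller point is the degenerate case $n=1$, where $\{\Xsm\}$ is empty. There the equality reads $I_\cap(X_1;Y)=I_\cap(\emptyset;Y)$, and I would simply restrict to $n\ge2$, consistent with how \Mz and \Mo are used throughout.
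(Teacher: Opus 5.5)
Your proof is correct and follows the same route as the paper. \Mo contains \Mz and gives \TE by taking the target as its own witness; conversely, the new case $Y=f(X_n)$ is handled by \TE, then the \Mz equality clause with $Y$ now a source, then \TE again. Your version is more explicit than the paper's, which folds the first \TE step into a single appeal to \Mz and leaves the \Sz reordering implicit.
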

\begin{proof}
    $(\Rightarrow)$ The inequality condition follows from \Mz. For the equality, we only have to check the case where $Y\leq_I W$, as the rest is covered by \Mz: 
    \begin{equation}
        I_\cap(X_1,\ldots,X_n,W;Y) = I_\cap(X_1,\ldots,X_n,Y;Y) = I_\cap(X_1,\ldots,X_n;Y)
    \end{equation}
    where the first step follows from \Mz and $Y\leq_I W$, and the second from \TE. \\
    $(\Leftarrow)$ \Mz follows directly from the definition of \Mo. \TE follows from \Mo since the source $Y$ is a trivial function of the target $Y$.
\end{proof}

\begin{theorem} \label{theo:BP--TE-SE}
    For $n=2$ sources, \BP $\implies$ \SE, \TE.
\end{theorem}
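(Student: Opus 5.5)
The plan is to reduce both properties to a single observation. In a bivariate PID the lattice relations give
\begin{equation*}
I(X_i;Y) = I_\cap(X_1,X_2;Y) + I_\partial(X_i;Y),
\end{equation*}
using \SR for the single-source nodes, as in Eqs.~\eqref{eq:PID2_1}--\eqref{eq:PID2_4}. Consequently, whenever \BP forces a unique atom $I_\partial(X_i;Y)$ to vanish, the redundancy equals $I(X_i;Y)=I_\cap(X_i;Y)$. So in each case I only need to exhibit a stochastic channel $k$ as in Eq.~\eqref{eq:def_blackwell} showing that the relevant source is Blackwell inferior to the other. Then the ``$\Leftarrow$'' direction of \BP does the rest.

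\textbf{Step 1 (\SE).} For $n=2$, \SE says that if $X_1=f(X_2)$ then $I_\cap(X_1,X_2;Y)=I_\cap(X_1;Y)$. I take the deterministic channel $k(x_1|x_2)=\delta_{x_1,f(x_2)}$. Since $X_1=f(X_2)$, we have
\begin{equation*}
p(x_1|y)=\sum_{x_2}\delta_{x_1,f(x_2)}\,p(x_2|y),
\end{equation*}
so $X_1\preceq_Y X_2$. By \BP, $I_\partial(X_1;Y)=0$. The relation above then gives $I_\cap(X_1,X_2;Y)=I(X_1;Y)=I_\cap(X_1;Y)$. The symmetric case, where $X_2$ is a function of $X_1$, is identical after relabelling.

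\textbf{Step 2 (\TE).} With \BP available only for two sources, I read \TE in the form where the target is appended to a single source: $I_\cap(X;Y)=I_\cap(X,Y;Y)$. I consider the bivariate PID with sources $X$ and $Y$ and target $Y$. The target-conditional channel of the source $Y$ is the identity, and I choose $k(x|y')=p(x|y')$. This gives
\begin{equation*}
\sum_{y'}k(x|y')\,\delta_{y',y}=p(x|y),
\end{equation*}
so $X\preceq_Y Y$ holds for any joint distribution. By \BP, $I_\partial(X;Y)=0$ in this decomposition, and hence $I_\cap(X,Y;Y)=I(X;Y)=I_\cap(X;Y)$ by \SR. (Alternatively, Step 2 is a special case of Step 1's mechanism: $Y$ is trivially at least as informative about $Y$ as any source.)

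\textbf{Main obstacle.} The Blackwell constructions themselves are routine. The delicate points are bookkeeping. First, I must make explicit that the atoms in \BP are defined through the standard bivariate lattice with \SR for singleton nodes. Second, I must fix the scope of ``$n=2$'' in \TE, namely that the decomposition in which \BP is applied is bivariate. If the intended statement were instead the three-source identity $I_\cap(X_1,X_2;Y)=I_\cap(X_1,X_2,Y;Y)$, bivariate \BP would not reach it directly. In that case I would first try to push \BP's equality characterisation through the three-source lattice, which I expect to fail without extra assumptions such as \Mz. I would therefore state the result in the two-source-decomposition form above.
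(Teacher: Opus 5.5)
Your proposal is correct and essentially matches the paper's proof: in both cases you establish $X_1\preceq_Y X_2$ (from $X_1=f(X_2)$, resp.\ $X_2=Y$) and use the $(\Leftarrow)$ direction of \BP to force $I_\partial(X_1;Y)=0$; your explicit channels $k$ only spell out the Blackwell step the paper states directly. The detour through \SR is unnecessary, since $I_\cap(X_1;Y)=I_\cap(X_1,X_2;Y)+I_\partial(X_1;Y)$ already holds by M\"obius inversion on the bivariate lattice, so the argument needs no hypothesis beyond \BP.
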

\begin{proof}
    $(\BP \implies \SE)$: \\
    Consider two random variables $X_1,X_2$. If $X_1=f(X_2)$ then for any $Y$ we have $X_1\preceq_Y X_2$, and from \BP it follows $I_\partial(X_1;Y)=0$, i.e.\ $\Icap = I_\cap(X_1;Y)$, which is \SE. 
    \\ $(\BP \implies \TE)$: \\
    Similar to the reasoning above, if $X_2=Y$ then $X_1\preceq_Y X_2$, and thus from \BP we have $I_\partial(X_1;Y)=0$, i.e.\ $\Icap = I_\cap(X_1;Y)$, which is \TE. 
\end{proof}

\begin{proposition}[\!\!\cite{banerjee2015synergy}] \label{prop:ID-TM--ineq}
    $\ID, \TM \implies I_\cap(X_1,X_2;f(X_1,X_2))\leq I_\cap(X_1;X_2)$
\end{proposition}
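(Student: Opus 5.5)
The plan is to enlarge the target from $f(X_1,X_2)$ to the full joint variable $X_1X_2$ using \TM, and then evaluate the redundancy about the joint target with \ID. We read the right-hand side $I_\cap(X_1;X_2)$ as the mutual information $I(X_1;X_2)$ between the sources, which is the quantity \ID produces and the form recorded in Table~\ref{tab:PID_implications}.

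First, fix the sources $X_1,X_2$ and set $Y_1 = f(X_1,X_2)$. Consider the two-component target $Y_1\,(X_1X_2)$, i.e.\ $f(X_1,X_2)$ together with a copy of the pair. Applying \TM to this target, with $Y_1$ as the retained component and $X_1X_2$ as the added one, gives
\begin{equation*}
    I_\cap\big(X_1,X_2;f(X_1,X_2)\big) \le I_\cap\big(X_1,X_2;f(X_1,X_2)\,X_1X_2\big)\,.
\end{equation*}

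Second, note that $f(X_1,X_2)$ is a deterministic function of $(X_1,X_2)$. Hence the map $(x_1,x_2)\mapsto (f(x_1,x_2),x_1,x_2)$ is a bijection onto its image, so $f(X_1,X_2)\,X_1X_2 \sim X_1X_2$ in the sense of Eq.~\eqref{eq:def_equivalence}. We would then replace the enlarged target by $X_1X_2$ and apply \ID, which yields
\begin{equation*}
    I_\cap\big(X_1,X_2;f(X_1,X_2)\,X_1X_2\big) = I_\cap(X_1,X_2;X_1X_2) = I(X_1;X_2)\,.
\end{equation*}
Chaining this with the first step gives the claimed inequality.

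The main obstacle is the identification in the second step. Strictly, it uses target relabelling invariance, that is, \EI applied to the target, and this is not among the stated hypotheses. This is presumably why Banerjee \etal state the result informally, since \EI is treated as implicit. We would first try to avoid \EI by interpreting \ID as applying to any target that is a copy of the pair of sources, whatever redundant functions of them are appended. If that reading is not acceptable, we would state \EI as an additional standing assumption; this costs nothing in practice, because all measures in Table~\ref{tab:PID_properties} satisfy it. Everything else in the argument is a direct application of the definitions of \TM and \ID.
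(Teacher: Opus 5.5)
Your proof is correct and follows the same route as the paper: use \TM to pass from the target $f(X_1,X_2)$ to the joint target $X_1X_2$, then apply \ID to obtain $I(X_1;X_2)$. You also make explicit that identifying $f(X_1,X_2)\,X_1X_2$ with $X_1X_2$ requires target relabelling invariance (\EI), a step the paper's one-line proof uses only implicitly.
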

\begin{proof}
    \begin{equation}
        I_\cap(X_1,X_2;f(X_1,X_2))\leq I_\cap(X_1,X_2;X_1,X_2) = I(X_1;X_2)\,,
    \end{equation}
    where the first inequality comes from \TM and choosing the function $f$ to concatenate the sources.
\end{proof}

\begin{theorem}[\!\!\cite{rauh2017extractable, james2018unique}] \label{theo:noGP-IID-TM-Mech}
    \GP, \IID, \TM and \textbf{mechanistic redundancy} are incompatible.
\end{theorem}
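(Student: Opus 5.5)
The plan is to use the standard construction of Rauh \etal and James \etal: a source--target configuration in which mechanistic redundancy forces a strictly positive redundancy, together with a coarse-graining of the target that leads, through \TM and \IID, to a contradiction. The key example is the AND gate. Take $X_1,X_2\sim Bern(1/2)$ independent and $Y=X_1\wedge X_2$. The sources are independent, so all redundancy here is mechanistic. Requiring \textbf{mechanistic redundancy} means asserting that there are distributions with independent sources and a target that is a function of them for which $I_\cap(X_1,X_2;f(X_1,X_2))>0$. The AND gate is the canonical case, and every intuition in favour of mechanistic redundancy assigns it positive redundancy. I would formalise the property exactly this way: there exist $X_1\ind X_2$ and a function $f$ with $I_\cap(X_1,X_2;f(X_1,X_2))>0$.

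Next, embed $f(X_1,X_2)$ in the two-bit copy. The target $X_1X_2$ determines $f(X_1,X_2)$. By \EI, the variable $X_1X_2$ is equivalent to the joint variable $(X_1X_2, f(X_1,X_2))$, so it can be read as $f(X_1,X_2)$ with the extra target component $X_1X_2$ appended. Applying \TM, which says that adding a target cannot decrease redundancy, gives
\begin{equation}
I_\cap(X_1,X_2;f(X_1,X_2))\le I_\cap(X_1,X_2;f(X_1,X_2),X_1X_2)=I_\cap(X_1,X_2;X_1X_2)=0,
\end{equation}
where the last equality is \IID, since $X_1\ind X_2$. This mirrors the argument of Prop.~\ref{prop:ID-TM--ineq}, with \IID in place of \ID.

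Finally, \GP gives $I_\cap(X_1,X_2;f(X_1,X_2))\ge 0$, so together with the bound above the redundancy must be exactly $0$ for every function of independent sources. This contradicts the assumed mechanistic redundancy. In other words, the four properties force all redundancy to be source redundancy.

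The main obstacle is definitional rather than technical: mechanistic redundancy has to be stated as a precise property, namely strictly positive redundancy for some deterministic function of independent sources. I would also check that the relabelling step invoked through \EI is legitimate. If one prefers to avoid \EI, I would instead state \TM directly for targets of the form $(f(X_1,X_2),X_1X_2)$ and identify that pair with $X_1X_2$ as the same random variable.
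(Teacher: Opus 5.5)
Your argument is the paper's: \TM bounds $I_\cap(X_1,X_2;f(X_1,X_2))$ by the TBC redundancy $I_\cap(X_1,X_2;X_1X_2)$, which is Prop.~\ref{prop:ID-TM--ineq} with \IID in place of \ID; \IID makes this bound vanish for $X_1\ind X_2$, and \GP then forces the redundancy to be exactly zero for every $f$, which rules out mechanistic redundancy. The only difference is that you make explicit the identification of $(f(X_1,X_2),X_1X_2)$ with $X_1X_2$, which the paper leaves implicit; since \EI is not among the theorem's hypotheses, your fallback (identifying that pair with $X_1X_2$ directly, without invoking \EI) is the version that stays within \GP, \IID and \TM.
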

\begin{proof}
    Consider two independent sources $X_1\ind X_2$ and a redundancy measure that satisfies \GP, \IID, and \TM. In this case, we can still use Prop.~\ref{prop:ID-TM--ineq} for independent sources to obtain
    \begin{equation}
        I_\cap(X_1,X_2;f(X_1,X_2))\leq I(X_1;X_2) = 0\,.
    \end{equation}
    Because of \GP, also the LHS needs to vanish for any $f$ and therefore there is no mechanistic redundancy. 
\end{proof}

\subsection{Strong symmetry}

\begin{theorem}[\!\!\cite{bertschinger2013shared}] \label{theo:Mz-So--TM}
    \Mz, \So $\implies$ \TM 
\end{theorem}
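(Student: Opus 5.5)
The idea is to use \So to move the extra target variable into the source slots, and then let \Mz do the work there.

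Take sources $\Xs$ and a target split as $Y_1\ldots Y_{m-1}$ and $Y_m$. Write $Y' = Y_1\ldots Y_{m-1}$ and $Y=(Y',Y_m)$; the goal is
\begin{equation*}
I_\cap(\Xs;Y'Y_m)\ge I_\cap(\Xs;Y').
\end{equation*}

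The steps would run as follows.

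\emph{Step 1.} Treat the redundancy as a symmetric function of the $n+1$ arguments $\Xs,Y$. This is legitimate under \So, since any permutation of sources and target leaves $\Ic$ unchanged. In particular,
\begin{equation*}
I_\cap(\Xs;Y'Y_m)=I_\cap(X_2,\ldots,X_n,Y'Y_m;X_1),
\end{equation*}
so the joint target now sits as a source and $X_1$ plays the role of the target.

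\emph{Step 2.} Relate the source $Y'Y_m$ to $Y'$ using the equality clause of \Mz. Since $Y'\leq_I Y'Y_m$ (it is a projection), \Mz allows us to append the coarser source $Y'$ without changing the value:
\begin{equation*}
I_\cap(X_2,\ldots,X_n,Y'Y_m;X_1)=I_\cap(X_2,\ldots,X_n,Y'Y_m,Y';X_1).
\end{equation*}
Here \Sz, which is implied by \So via Lemma~\ref{lemma:So--Sz}, lets us reorder sources so that \Mz applies to the last slot.

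\emph{Step 3.} Apply the inequality clause of \Mz in the opposite direction, regarding the same expression as obtained from $I_\cap(X_2,\ldots,X_n,Y';X_1)$ by adding the source $Y'Y_m$. Adding a source cannot increase redundancy, so
\begin{equation*}
I_\cap(X_2,\ldots,X_n,Y'Y_m,Y';X_1)\le I_\cap(X_2,\ldots,X_n,Y';X_1).
\end{equation*}
This chain gives the inequality in the wrong direction, so the roles have to be arranged with care.

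\emph{Step 3 (repaired).} The correct arrangement puts the fine variable $Y'Y_m$ as the added source, so that its removal is what the equality clause controls:
\begin{equation*}
I_\cap(X_2,\ldots,X_n,Y';X_1)\ \ge\ I_\cap(X_2,\ldots,X_n,Y',Y'Y_m;X_1)=I_\cap(X_2,\ldots,X_n,Y'Y_m;X_1).
\end{equation*}
The equality removes $Y'$, which is a function of the present source $Y'Y_m$. Again \Mz's equality clause is stated with the coarser variable as the removed one, so an \Sz reordering is needed here. Translating back with \So, this reads $I_\cap(\Xs;Y')\ge I_\cap(\Xs;Y)$, which is \emph{anti}-monotonicity.

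\emph{Main obstacle.} The obstacle is therefore the orientation. With the target swapped into a source position, the paper's \Mz equality clause (drop $X_n$ when some $Z=f(X_n)$ is present) removes the \emph{finer} variable, not the coarser one. So I would re-examine it by keeping $Y'Y_m$ and dropping $Y'$ is not allowed, but keeping $Y'$ and dropping $Y'Y_m$ is. This gives
\begin{equation*}
I_\cap(\ldots,Y',Y'Y_m;X_1)=I_\cap(\ldots,Y';X_1).
\end{equation*}
Combining this with the inequality obtained by adding $Y'$ to $I_\cap(\ldots,Y'Y_m;X_1)$ yields
\begin{equation*}
I_\cap(\ldots,Y';X_1)\le I_\cap(\ldots,Y'Y_m;X_1),
\end{equation*}
which is exactly \TM after symmetrizing back with \So. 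Checking this bookkeeping, and in particular that \So permits swapping the target into an arbitrary source slot, is the delicate step; everything else is direct application of \Mz.
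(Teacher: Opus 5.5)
Your final argument is correct and is the paper's proof: swap the target into a source slot with \So, use the equality clause of \Mz to get $I_\cap(\ldots,Y';X_1)=I_\cap(\ldots,Y',Y'Y_m;X_1)$ and its inequality clause to bound this by $I_\cap(\ldots,Y'Y_m;X_1)$, then swap back. This spells out the two-clause use of \Mz that the paper compresses into a single ``$\geq$ by \Mz''. You should delete Steps 2 and 3 (repaired): both apply the equality clause to add or remove the \emph{coarser} variable $Y'$, which \Mz does not license and which is false in general (e.g.\ under \SR, $I_\cap(X_1X_2,X_1;Y)=I(X_1;Y)\neq I(X_1X_2;Y)$), and this is exactly what produced the spurious anti-monotonicity.
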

\begin{proof}
    \begin{equation}
        I(\Xs;Y_1Y_2) = I(X_1,\ldots,X_{n-1},Y_1,Y_2;X_n)\geq I(X_1,\ldots,X_{n-1},Y_1;X_n) = I(\Xs;Y_1)\,,
    \end{equation}
    where the first and last step follow from \So and the second from \Mz.
\end{proof}

\begin{theorem} \label{theo:S1-SE--TE}
    \So, \SE $\implies$ \TE.
\end{theorem}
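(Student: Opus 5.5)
The plan is to use \So to swap the role of the added source $Y$ with the target, so that the redundant source becomes a copy of the new target and can be removed by \SE. Concretely, the goal is $I_\cap(\Xs,Y;Y)=\Icaps$.

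\textbf{Step 1 (swap target and a source).} By \So, the redundancy is invariant under any permutation of the $n+2$ arguments. Exchange the source $X_n$ with the target $Y$:
\begin{equation}
    I_\cap(X_1,\ldots,X_n,Y;Y) = I_\cap(X_1,\ldots,X_{n-1},Y,Y;X_n) \,.
\end{equation}
Now $Y$ appears twice among the sources. Trivially $Y\leq_I Y$ via the identity map, so \SE lets us drop one copy. This gives $I_\cap(X_1,\ldots,X_{n-1},Y;X_n)$.

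\textbf{Step 2 (swap back).} Apply \So again, exchanging $Y$ and $X_n$, to obtain $I_\cap(X_1,\ldots,X_n;Y)$. This is exactly \TE.

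\textbf{Main obstacle.} The only delicate point is that \So, as stated in Eq.~\eqref{ax:S1}, is written for $n$ sources plus the target. We need it to permit a transposition between one source slot and the target slot in a list that contains duplicate entries. I would justify this by reading \So as invariance under any permutation of the full tuple of arguments, which is what is used in the proof of Theo.~\ref{theo:Mz-So--TM}. We also need the equality clause of \SE to apply when the dropped source coincides with another source, with $Z=X$ and $f=\mathrm{id}$, and possibly after reordering the sources.

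If \SE is only stated for removing the last source, \Sz would be needed to reorder. However, \Sz follows from \So by Lemma~\ref{lemma:So--Sz}, so no extra hypothesis is required. The case $n=1$ works the same way: $I_\cap(X_1,Y;Y)=I_\cap(Y,Y;X_1)=I_\cap(Y;X_1)=I_\cap(X_1;Y)$. Note that this uses \So only, not \SR.
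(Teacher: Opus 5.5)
Your proof is correct and is the same as the paper's: apply \So to exchange $X_n$ with the target, use \SE to drop the duplicated copy of $Y$, and apply \So again to swap back. Your remarks on reading \So as invariance under permutations of the full argument tuple (with repeated entries), and on using \Sz (via Lemma~\ref{lemma:So--Sz}) to reorder sources for \SE, make explicit assumptions the paper leaves implicit.
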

\begin{proof}
    \begin{equation}
    \begin{aligned}
        I_\cap(\Xs,Y;Y) & = I_\cap(\Xsm,Y,Y;X_n) \\
        & = I_\cap(\Xsm,Y;X_n) \\
        & = I_\cap(\Xs;Y) \,,
    \end{aligned}
    \end{equation}
    where in the first and last step we used \So, and in the second \SE.
\end{proof}

\begin{theorem} \label{theo:SR-SE-So--ID}
    \SR, \SE, and \So $\implies$ \ID.
\end{theorem}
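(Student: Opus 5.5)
The plan is to compute $I_\cap(X_1,X_2;X_1X_2)$ by a short chain of equalities, in the same spirit as Theo.~\ref{theo:S1-SE--TE}. First apply \So to move the joint target into the source slot and one source into the target slot. Then remove the now-redundant copy with \SE. Finally close the chain with \SR. Concretely, choose the permutation that swaps the target $X_1X_2$ with the source $X_2$:
\begin{equation}
    I_\cap(X_1,X_2;X_1X_2) = I_\cap(X_1,X_1X_2;X_2)\,.
\end{equation}
This step is exactly what \So grants, since it allows any permutation of the $n+1$ arguments of $I_\cap$.

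Next I use \SE on the source collection $\{X_1,\, X_1X_2\}$. The source $X_1X_2$ is more informative than $X_1$, because $X_1 = f(X_1X_2)$ with $f$ the projection onto the first coordinate, so $X_1\leq_I X_1X_2$. By \SE we may therefore drop the source $X_1X_2$ without changing the redundancy. If needed, we first reorder the sources using \Sz, which follows from \So by Lemma~\ref{lemma:So--Sz}, so that the dropped source sits in the last position as the definition of \SE requires. This gives
\begin{equation}
    I_\cap(X_1,X_1X_2;X_2) = I_\cap(X_1;X_2) = I(X_1;X_2)\,,
\end{equation}
where the last equality is \SR. Combining the two displays yields $I_\cap(X_1,X_2;X_1X_2)=I(X_1;X_2)$, which is precisely \ID.

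The only step needing care is the \So permutation. The swap must be legitimate when the target is a composite variable $X_1X_2$ that is itself a function of the sources. I treat $X_1X_2$ simply as another random variable, the joint variable, and \So as stated imposes no restriction on this. A second subtlety is the direction of \SE: the condition is that some remaining source $Z$ satisfies $Z=f(X_n)$ for the removed $X_n$. Here that holds with $Z=X_1$ and $X_n = X_1X_2$. No \EI or \Mz is needed.
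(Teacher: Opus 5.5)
Your proof is correct, and it takes a slightly shorter route than the paper's. The paper's chain is $I_\cap(X_1,X_2;X_1X_2)=I_\cap(X_1,X_1,X_2;X_2)=I_\cap(X_1,X_2;X_2)=I_\cap(X_1;X_2)=I(X_1;X_2)$. It uses \So, then \SE to delete a duplicated $X_1$, then \TE to delete the source equal to the target, and finally \SR. The \TE step is imported from Theo.~\ref{theo:S1-SE--TE}, which is itself derived from \So and \SE.

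After the swap you instead keep the joint variable $X_1X_2$ as a single source, and you apply \SE once with $Z=X_1=\pi_1(X_1X_2)$. This merges the paper's \SE and \TE steps into one and removes any dependence on Theo.~\ref{theo:S1-SE--TE}.

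A side benefit is that every step of your chain is a literal instance of an axiom as stated. Permuting the three arguments of $I_\cap(X_1,X_2;X_1X_2)$ gives $I_\cap(X_1,X_1X_2;X_2)$, which has two sources. The paper's intermediate $I_\cap(X_1,X_1,X_2;X_2)$, read literally, has three sources, so it is not a permutation of the original arguments. It is best read as shorthand, and the value agrees because both forms reduce to $I_\cap(X_1;X_2)$ under \SE and \TE.

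Your appeal to \Sz via Lemma~\ref{lemma:So--Sz} is harmless but unnecessary, since $X_1X_2$ already sits in the last source position.
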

\begin{proof}
    \begin{equation}
    \begin{aligned}
        I_\cap(X_1,X_2;X_1X_2) & = I_\cap(X_1,X_1,X_2;X_2) \\
        & = I_\cap(X_1,X_2;X_2) \\
        & = I_\cap(X_1;X_2) \\
        & = I(X_1;X_2) \,,
    \end{aligned}
    \end{equation}
    where, in order, we used \So, \SE, \TE (since it follows from Theo.~\ref{theo:S1-SE--TE}), and \SR.
\end{proof}

\begin{theorem}[\!\!\cite{bertschinger2013shared}] \label{theo:noSo}
     \Mz, \SR, \LPo, and \So are incompatible.
\end{theorem}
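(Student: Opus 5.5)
The plan is to exhibit a single distribution on which the four properties force a negative atom in the $n=3$ lattice. The system is the XOR-Source-Copy-type configuration: $X_1,X_2\sim Bern(1/2)$ independent, $X_3=X_1\oplus_2 X_2$, and target $Y=X_1X_2$. I take $Y=X_1X_2$ rather than $(X_1,X_2,X_3)$ so as to avoid invoking \EI, which is not among the hypotheses. The key tool is \So, which lets the target trade places with a source, combined with the equality part of \Mz.

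\textbf{Step 1 (redundancies among the three sources).} Since \Mz and \SR imply \GP (Prop.~\ref{prop:Mz-SR--GP}), all redundancies are non-negative. For the pairwise redundancies I first use \So to swap the target into a source slot. For example, $I_\cap(X_1,X_3;X_1X_2)=I_\cap(X_1,X_1X_2;X_3)$. The equality clause of \Mz then removes $X_1X_2$, since $X_1\leq_I X_1X_2$. By \SR this gives $I(X_1;X_3)=0$, using pairwise independence. The same argument gives $I_\cap(X_2,X_3;Y)=0$ and $I_\cap(X_1,X_2;Y)=I(X_1;X_2)=0$; the last is just \ID, obtained via Theo.~\ref{theo:SR-SE-So--ID}. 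By \Mz and \GP, the triple redundancy $I_\cap(X_1,X_2,X_3;Y)$ is then also $0$.

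\textbf{Step 2 (mixed nodes and singletons).} Each source is a function of the other two, so $X_i\leq_I X_jX_k$. The equality clause of \Mz then gives $I_\cap(X_i,X_jX_k;Y)=I_\cap(X_i;Y)=I(X_i;Y)=1$ bit. Pairs are fully informative about $Y$, so $I_\cap(X_iX_j;Y)=I(X_iX_j;Y)=2$ bits.

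\textbf{Step 3 (M\"obius inversion).} Every node strictly below $\{i\}\{jk\}$ carries zero redundancy, so each atom $I_\partial(X_i,X_jX_k;Y)$ equals $1$. The atom $I_\partial(X_i;Y)$ is then $0$, because $\{i\}\{jk\}\preceq\{i\}$ already exhausts $I(X_i;Y)=1$. Now consider the node $\{1,2\}$. Below it lie the three atoms $\{1\}\{23\}$, $\{2\}\{13\}$ and $\{3\}\{12\}$, each equal to $1$, together with singletons and lower nodes that are all $0$. Hence
\[
I_\partial(X_1X_2;Y)=2-3=-1<0,
\]
which contradicts \LPo.

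The main obstacle I expect is the bookkeeping of which antichains lie below $\{1,2\}$ under $\preceq$. In particular, $\{3\}\{12\}\preceq\{12\}$ holds, but $\{3\}\not\preceq\{12\}$. This is exactly what produces the overcount, so I would check those orderings carefully against the definition of $\preceq$. A secondary point is justifying that \So permits exchanging a joint target with a single source argument. I would state that this is how \So is read, as in Theo.~\ref{theo:Mz-So--TM}.
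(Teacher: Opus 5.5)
Your Steps 1 and 2 are sound. Using \So to send the joint target $Y=X_1X_2$ into a source slot, then applying the equality clause of \Mz and \SR, correctly makes the four bottom nodes vanish. It also gives $1$ bit on each mixed node, so $I_\partial(X_i,X_jX_k;Y)=1$ and $I_\partial(X_i;Y)=0$.

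The failure is in Step 3. The antichains strictly below $\{12\}$ are not only the bottom nodes, the mixed nodes and $\{1\},\{2\}$. They also include $\{12\}\{13\}\{23\}$, $\{12\}\{13\}$ and $\{12\}\{23\}$, since each of these contains the set $\{1,2\}$ itself. You never computed their redundancies, and they do not vanish. The pair-sources $X_1X_2$, $X_1X_3$, $X_2X_3$ are functions of one another, so the equality clause of \Mz gives
\[
I_\cap(X_1X_2,X_1X_3,X_2X_3;Y)=I_\cap(X_1X_2,X_1X_3;Y)=I_\cap(X_1X_2;Y)=2\text{ bits},
\]
and similarly $2$ bits for every node made of two pairs. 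M\"obius inversion then yields:
\[
I_\partial(X_1X_2,X_1X_3,X_2X_3;Y)=2-3=-1,\qquad I_\partial(X_1X_2,X_1X_3;Y)=I_\partial(X_1X_2,X_2X_3;Y)=0.
\]
Hence $I_\partial(X_1X_2;Y)=2-(3-1)=0$, not $-1$. The overcount you are after first appears at the join of the three $1$-bit atoms, namely $\{12\}\{13\}\{23\}$. It is already absorbed there before you reach $\{12\}$.

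The repair stays within your strategy. Compute $I_\cap(X_1X_2,X_1X_3,X_2X_3;Y)=2$ as above. The nodes strictly below $\{12\}\{13\}\{23\}$ are exactly $\{1\}\{2\}\{3\}$, $\{1\}\{2\}$, $\{1\}\{3\}$, $\{2\}\{3\}$ (atoms $0$) and the three mixed nodes (atoms $1$). No singleton $\{i\}$ lies below it, because $\{j,k\}$ does not contain $\{i\}$. The resulting atom $-1$ contradicts \LPo.

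Once fixed, your argument is a genuinely different route from the paper's. The paper defers to Ref.~\cite{bertschinger2013shared}, or combines Theo.~\ref{theo:noIID-LP1} with Theo.~\ref{theo:SR-SE-So--ID}. That combination inherits \EI from Theo.~\ref{theo:noIID-LP1}. Your choice $Y=X_1X_2$ together with \So makes the pairwise redundancies vanish without any relabelling. It therefore establishes the incompatibility using only the four stated properties.
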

\begin{proof}
    See Theo. 1 of Ref.~\cite{bertschinger2013shared}. This can also be derived by combining  Theo.~\ref{theo:noIID-LP1} and Theo.~\ref{theo:SR-SE-So--ID}. 
\end{proof}

\section{PID measures}

In this section, we provide more insights into the existing PID measures. We first present a more complete overview of the PID measures that have been proposed to date. Rather than following a chronological order, we now discuss the measures based on the philosophies behind their introductions and their similarities.
Then, we report the mathematical definitions of the measures in a unified notation with a brief description for each. 

\subsection{A long history of PID measures}
\label{sec:long_history}

Alongside the introduction of PID and the redundancy lattice, Williams and Beer proposed \Imin, the first redundancy function and one of the most widely known to date~\cite{williams2010nonnegative}. \Imin builds on an intuitive notion of redundancy, for which shared information is the minimum information provided by the sources for each outcome of the target. 
Closely related to this philosophy is \IMMI, where the minimum is taken across all realisations~\cite{barrett2010multivariate}.
\IMMI was the first measure specifically designed for continuous (Gaussian) systems, which also reconciled \Imin, \Ired, and \IBROJA by showing that they yield the same decomposition in the particular case of a bivariate Gaussian PID with univariate target. Due to its computational efficiency and simple interpretation, \IMMI has since become one of the most widely used redundancy measures, with applications ranging from neuroscience~\cite{luppi2020synergisticCore} to complex systems in general~\cite{mediano2025toward}. 
Interestingly, the first appearance of \IMMI was in the work by Bertschinger \etal~\cite{bertschinger2013shared} (there denoted by $I_I$).
Although it does not introduce any new PID measure, we mention this study as an influential analysis that shaped subsequent developments. In it, the authors proposed new PID axioms, including \TM and \TC (called left target monotonicity and left chain rule), and discussed the properties of the measures available at the time.
However, a major shortcoming of both \Imin and \IMMI, already noticed by Bertschinger \etal~\cite{bertschinger2013shared}, is that they tend to overestimate redundancy by quantifying the same \textit{amount} of information between the sources, as opposed to the same information~\cite{bertschinger2013shared, harder2013bivariate, griffith2014quantifying}.
With a specific empirical application to environmental time-series in mind, Goodwell and Kumar partially addressed this issue by proposing \IRR as a modified version of \IMMI, which provides zero redundant information for uncorrelated sources~\cite{goodwell2017temporal}.

Motivated by this issue, Harder \etal introduced the identity property \ID, resolving the apparent shortcoming of \Imin and \IMMI on the Two-Bit Copy system. Consequently, they introduced the PID function \Ired, estimating redundancy geometrically by defining redundancy as the projected information onto specific convex closures~\cite{harder2013bivariate}. 
Following these developments, another early measure that satisfied \ID was \IBROJA. Independently introduced by Bertschinger \etal and by Griffith and Koch~\cite{bertschinger2014quantifying, griffith2014quantifying}, \IBROJA was the first redundancy function backed by an explicit operational approach, explicitly acknowledging a specific functional dependency of redundant and unique information via Assumption \AST, and that satisfied many of the PID axioms present in the literature at the time. Within \IBROJA, unique information defines it via a minimisation of the conditional mutual information over a constrained set of probability distributions, leveraging concepts of decision theory~\cite{bertschinger2014quantifying}. However, despite its interpretational clarity, this measure has been argued to overestimate redundancy by artificially inflating the correlation between the sources~\cite{ince2017measuring, james2018uniquekey, james2018unique}, as well as for relying on a decision-theoretic setup that was later argued to be too restrictive~\cite{ince2017measuring}.
On a similar line of work, intending to define a maximum-entropy based redundancy function that satisfies Assumption \AST, James \etal investigated \IMES, a redundancy measure closely related to \IBROJA, underlining the conceptual differences between performing maximum-entropy projections and directly minimising mutual information~\cite{james2018uniquekey}. Although \IMES fails to satisfy the equality condition of \Mz, i.e.\ \SE, thus providing highly counterintuitive results, it highlights how source–target marginals can still implicitly encode higher-order interactions. In this sense, it provides an interesting discussion for understanding how synergistic information may persist even under maximum-entropy projections.
Exploiting maximum entropy models and studies on cybernetics and reconstructability analysis~\cite{zwick2004overview, krippendorff2009ross}, another PID measure that satisfies \ID is \IDEP by James \etal~\cite{james2018unique}. \IDEP is a measure of unique information that assesses the statistical dependencies via the construction of the so-called constraint lattice. In this framework, each level of the lattice introduces an additional constraint between the sources. The unique information of a source is then defined as the smallest change in mutual information between sources and target across all lattice edges involving the addition of the corresponding source-target constraint. Originally proposed only for discrete systems, this measure has then been defined also for Gaussian processes~\cite{kay2018exact}.
Although \Ired, \IBROJA, \IMES, and \IDEP all satisfy \ID, hence following the mechanistic intuition behind the TBC system~\cite{harder2013bivariate}, they are all limited to the bivariate case, and cannot be easily extended to antichains with three or more sources~\cite{harder2013bivariate, bertschinger2014blackwell, james2018unique, james2018uniquekey}.

A different line of work has seen the introduction of PID measures based on algebraic considerations. Within these approaches, redundancy is defined via the mutual information between the meet of the sources and the target variable
However, this approach entails choosing a notion of ordering between random variables, which might depend on the empirical application or operational interpretation behind the measure, hence leading to the introduction of several measures.  
Griffith \etal first proposed \Iwedge, a PID measure based on the Gács-Körner common information~\cite{wolf2004zero}, where they employed the more informative partial relation $\le_I$  ~\cite{griffith2014quantifying}. Although \Iwedge was the first measure that satisfied the highly desired \TM, it was criticised for providing an overly strict bound on redundancy~\cite{griffith2015quantifying, kolchinsky2022novel}, vanishing for all distributions with full support, and being insensitive to statistical correlations not captured by a common random variable~\cite{james2018unique}.
Building on this approach, Griffith \etal then subsequently proposed \Ialpha~\cite{griffith2015quantifying}, relaxing the constraints underlying \Iwedge by employing an ordering based on Markov chains and conditional independence~\cite{griffith2015quantifying, kolchinsky2022novel}, at the cost of sacrificing \TM. 
Along this line, Kolchinsky more recently proposed \Iprec, defining redundancy using the same algebraic foundation but employing under Blackwell order~\cite{kolchinsky2022novel}. Through Blackwell's theorem~\cite{blackwell1953equivalent}, this definition admits a clear operational interpretation in decision-theoretic terms, improves the previous algebraic formulations by using the ``more informative [Blackwell] ordering relation'', and provides an explicit intuition in set-theoretic terms~\cite{kolchinsky2022novel}. Moreover, in this work, the author advocates against the inclusion-exclusion principle, introducing the union information as the complementary notion of redundancy, arguing that it is required to define synergy.
These algebraic formulations provide an elegant mathematical groundwork to define the notion of shared information, bridging concepts of information theory and set theory. Although this approach ensures desirable properties such as \IID, it also leads to the rejection of \LP, complicating the interpretation of negative PID atoms. 

Proceeding along the Blackwell ordering perspective, Venkatesh and Schamberg proposed \Idelta, a PID formulation designed for multivariate Gaussian systems~\cite{venkatesh2022partial}. Arguing that the result obtained by Barrett~\cite{barrett2015exploration} with \IMMI in bivariate Gaussian distributions with univariate target is a trivial case of Blackwell ordering, they introduced a measure of how distant each source is from being Blackwell superior to the other in the general multivariate case. Naming such a distance \textit{deficiency}, they then employed it to define redundancy in a manner analogous to \IMMI. Although \Idelta reduces to \IMMI in the case of a univariate target, it is interesting that it satisfies \ID, unlike \IMMI. Despite this approach being more general than the one obtained with \IMMI, it is only defined for a bivariate PID. 
Finally, continuing along the Blackwell ordering paradigm, Mages \etal proposed \IRDR, a redundancy measure grounded on the set-overlap of the reachable decision regions of the sources~\cite{mages2023decomposing}. Such an overlap can be obtained by transforming the problem into a decision-making task for each realisation of the target. \IRDR yields a pointwise decomposition that nevertheless satisfies \Mz \GP and also \LP for any number of sources, hence violating \ID in the process. Moreover, in the specific scenarios where one of the sources is pointwise Blackwell superior to the other, or if the target is binary, \IRDR reduces to \Imin and \IBROJA, respectively.

A conceptually distinct line of work has focused on the definition of pointwise PID measures. 
Starting from a fundamentally different premise, Ince proposed the first fully pointwise PID decomposition \ICCS, in which redundant information is defined at the local level via the pointwise coinformation~\cite{ince2017measuring}. Specifically, redundancy is identified with the \textit{common change in surprisal} induced by the sources about the target: when the pointwise information contributions from all sources share the same sign, the pointwise coinformation captures the overlap between the individual pointwise information values, providing a local, set-theoretic notion of redundancy. In this process, \ICCS departs from several global PID properties, such as \Mz and \GP, while instead introducing and satisfying \IID and \SE. 
Shortly after, Finn and Lizier proposed another pointwise PID definition, \IPM, based on a double decomposition over specificity and ambiguity lattices~\cite{finn2018probability, finn2018pointwise}. \IPM focuses on distinguishing \textit{informative} information, which reduces uncertainty about the target, from \textit{misinformative} information, which instead increases it. This approach renounces global \Mz and \GP in favour of properties such as \TC, with \Mz, \GP, and \LP still holding locally and separately for the specificity and ambiguity lattice. 
Following a similar philosophy, Makkeh \etal introduced the measure \ISX, first in discrete systems~\cite{makkeh2021introducing}, and subsequently in continuous scenarios~\cite{schick2021partial, ehrlich2024partial}. \ISX measures redundancy by assessing the portion of probability mass associated with a particular target realisation that can be jointly excluded by all sources together. Similarly to \IPM, \ISX is based on separate informative and misinformative lattices, both singularly satisfying \Mz, \GP and \LP, though it loses these properties upon recombination into the global redundancy measure.
Although negative PID atoms are often avoided -- making \LP one of the most sought-after properties -- pointwise approaches not only argue that negative values are not problematic, but they actively embrace them. In fact, the presence of negative atoms would indicate the presence of mechanistic redundancy within \ICCS, ~\cite{ince2017measuring}, or scenarios in which the sources are providing misinformative information about the target for \IPM and \ISX~\cite{finn2018pointwise, makkeh2021introducing}.

Additional definitions of redundant information have then exploited unexplored ideas. Inspired by the hierarchical decomposition of Amari~\cite{amari2001information}, Niu and Quinn proposed \IIG, a redundancy function based on concepts of information geometry. Within \IIG, unique information and synergy are obtained by projecting the full probability distribution onto specific submanifolds characterised by vanishing interaction terms~\cite{niu2019measure}. \IIG was originally defined for a bivariate discrete PID with full-support distributions, and was later generalised to the continuous Gaussian case by Kay~\cite{kay2024partial}. 
Sigtermans proposed the redundancy measure \ICT based on the framework of Causal Tensors -- multivariate stochastic functions that map sources into the target via distinct transmission paths~\cite{sigtermans2020towards, sigtermans2020partial}. Unlike previous redundancy measures, \ICT has been advanced to specifically account for the indirect mediated associations, explicitly representing how information is transmitted through cascades of stochastic channels. 
A different approach led to the introduction of \Ido by Lyu \etal~\cite{lyu2024explicit}. This framework adopts an explicitly interventional perspective inspired by causal inference, where unique information is defined employing Pearl's \textit{do-calculus}~\cite{pearl1995causal, pearl2009causal, pearl2012calculus}. Specifically, unique information is expressed as a conditional mutual information evaluated under a do-operation.
Although \Ido is among the few measures that satisfy both \CO and \AD, it does not satisfy \Mz since it lacks \SE, leading to a non-vanishing unique information even in the case when the sources are identical.

Finally, in this work, we also considered \IRAV, an unpublished PID measure that quantifies redundancy by maximising the coinformation between sources, target, and an arbitrary function of the sources, taken over all possible such functions~\cite{james2018dit}.

\subsection{Mathematical definitions}
\label{sec:measures_mathy}
Here we list the formal PID definitions considered in this work, ordered chronologically by their appearance in the literature.

\subsubsection*{Williams and Beer's Minimum Information $(\Imin)$}
This was the first PID measure proposed \cite{williams2010nonnegative}. It defines redundancy as the minimum information any individual source provides about the target, calculated pointwise for each outcome of the target. It has been criticised for not distinguishing between ``whether different random variables carry the same information or just the same amount of information'', thus sometimes overestimating redundancy~\cite{bertschinger2013shared, harder2013bivariate, griffith2015quantifying}.
    \begin{equation}
        \Imin(\Xs;Y) = \sum_y p(y) \min_{i=1,\ldots,n} I(X_i;Y=y) \,.
    \end{equation}

\subsubsection*{Harder \etalx's Projected Information $(\Ired)$}
$\Ired$ defines redundancy in terms of information geometry \cite{harder2013bivariate}. Redundant information is the part of the information that can be represented by projecting one source’s conditional distributions onto the convex closure of the other’s. It is specifically defined for the bivariate case, and it avoids some of the overestimation issues of $\Imin$. It was the first redundancy measure that satisfied \ID. 
    \begin{align}
    \pi_A(p) &= \underset{r \in A}{\arg\min}\,D_{\mathrm{KL}}(p \| r), \\[3pt]
    C_{\mathrm{cl}}(A) & = \left\{ \lambda p + (1-\lambda) q \;\middle|\; p, q \in A,\; \lambda \in [0,1] \right\} \\[3pt]
    \langle X_i \rangle_Y & = \{p(Y|x_i) \text{ s.t. } x_i\in\mathcal{X}_i\} \\[3pt]
    p_{(x_i\searrow \,X_j)}(Y) &= \pi_{C_{\mathrm{cl}}(\langle X_j \rangle_Y)} \big( p(Y|x_i) \big), \\[3pt]
    I_Y^\pi(X_1 \!\!\searrow\! X_2) &= \sum_{x_1,y} p(x_1,y) \log \frac{p_{(x_1 \searrow\,X_2)}(y)}{p(y)}, \\[3pt]
    \Ired(X_1, X_2; Y) &= \min \big\{ I_Y^\pi(X_1 \!\!\searrow\! X_2),\; I_Y^\pi(X_2 \!\!\searrow\! X_1) \big\} \,,
    \end{align}
    where $\pi_{C_{\mathrm{cl}}(\langle X_1 \rangle_Y)}(q)$ denotes the information projection of a distribution $q$ onto the convex closure of the set of conditional distributions of $Y$ given $X_1$.

\subsubsection*{Griffith \etalx's Common Information Redundancy ($\Iwedge$)}
The ``intersection information'' \Iwedge defines redundancy as the maximum mutual information obtainable from a variable $Q$ that can be obtained from any source via a deterministic function~\cite{griffith2014intersection}. It was the first proposed measure that satisfied \TC, serving as ``a lower bound on any reasonable redundancy measure''~\cite{griffith2014intersection}:
    \begin{equation}
        I_\cap^\wedge(\Xs;Y) = \max_{Q} I(Q;Y) \quad\text{such that}\; H(Q|X_i)=0\; \forall i \in 1 , \ldots , n \,.
    \end{equation}

\subsubsection*{Griffith \etalx's Relaxed Common Information Redundancy ($\Ialpha$)}
$\Ialpha$ defines redundancy as the maximal information obtainable from a variable $Q$ that does not increase the mutual information between each source and the target \cite{griffith2015quantifying}. This formalism loosens the constraints given by $\Iwedge$ and provides a tighter bound on a plausible measure of redundancy, as $\Iwedge\leq \Ialpha$.
\begin{equation}
    I_\cap^\alpha(\Xs;Y) = \max_{Q} I(Q;Y) \quad\text{such that}\; I(Q,X_i;Y)=I(X_i;Y) \; \forall i \in 1 , \ldots , n \,.
\end{equation}

\subsubsection*{BROJA/Griffith and Koch's Constrained Optimisation ($\IBROJA$)}
$\IBROJA$ defines redundancy via constrained optimisation over joint distributions~\cite{bertschinger2014quantifying, griffith2014quantifying}. The idea is to find all distributions that preserve each source-target marginal, then define redundancy as the maximum coinformation in these distributions. It was independently proposed by Bertschinger \textit{et al.}~\cite{bertschinger2014quantifying} and Griffith and Koch~\cite{griffith2014quantifying}, and admits an operational interpretation from decision theory. It is widely used in empirical applications due to computability and intuitiveness, though it is only defined for $n=2$ sources.
    \begin{align}
        I_\cup^{\text{BROJA}}(X_1,X_2;Y)& = I(X_1,X_2;Y) - \min_{Q\in\Delta_P} I_Q(X_1,X_2;Y) \\
        I_\cap^{\text{BROJA}}(X_1,X_2;Y) &= \max_{Q\in\Delta_P} I_Q(X_1;X_2;Y) \,,
    \end{align}
    where $\Delta_P=\{Q\in\Delta \;|\; P(X_i;Y)=Q(X_i;Y) ,\;i=1,2\}$, and $I_Q$ refers to the mutual information evaluated on the distribution $Q$.

\subsubsection*{Barrett's Minimum Mutual Information ($\IMMI$)}
Minimal Mutual Information (\IMMI) defines redundancy as the minimum mutual information between any individual source and the target~\cite{barrett2015exploration}. Therefore, it captures the smallest guaranteed contribution from any source, analogous to $\Imin$, but applied directly at the global level of mutual information, rather than pointwise. As such, it serves as the upper bound for any redundancy measure. Its simplicity makes it easy to implement and particularly popular in empirical applications.
\begin{equation}
    I_\cap^{\text{MMI}}(\XsY) = \min_{i=1,\ldots,n}I(X_i;Y) \,.
\end{equation}

\subsubsection*{Goodwell and Kumar's Rescaled Redundancy ($\IRR$)}
Rescaled Redundancy ($\IRR$) was introduced as a rescaled version of $\IMMI$ so that it would provide zero redundancy for independent sources \cite{goodwell2017temporal}. It achieves this by interpolating between the $\IMMI$ and a positive coinformation, weighted by the dependence between sources. $\IRR$ was designed for the study of dynamical processes where it is desired that independent sources should not carry redundancy~\cite{goodwell2017temporal}. 
    \begin{align}
        R_{\min}(X_1,X_2,Y) & = \max(0, I(X_1;X_2;Y))  \\
        R_{\text{MMI}}(X_1,X_2,Y) & = \min(I(X_1;Y),I(X_2;Y))  \\
        I_s(X_1,X_2) & = \frac{I(X_1;X_2)}{\min(H(X_1),H(X_2))} \\
        I_\cap^{\text{RR}}(X_1,X_2;Y) & = R_{\min}(X_1,X_2,Y) + I_s(X_1,X_2)\bigl(R_{\text{MMI}}(X_1,X_2,Y)-R_{\min}(X_1,X_2,Y)\bigr) \,.
    \end{align}

\subsubsection*{Ince's Common Change in Surprisal ($\ICCS$)}
Common Change in Surprisal ($\ICCS$) defines redundancy based on coherent changes in surprisal \cite{ince2017measuring}. A set of sources is redundant if they all increase or decrease the surprisal of the target in the same direction for a given outcome on a constrained maximum entropy distribution. Hence, it captures redundancy as the pointwise coinformation between all sources. The overall redundancy is then computed via the expectation value of the pointwise change in surprisal.
    \begin{align}
    \hat{P}(\Xs,Y)&=\underset{Q \in \Delta_P}{\arg\max}\;H_Q(\Xs,Y)
    \\[6pt]
    \Delta_P=\{Q \in \Delta \;|\; & Q(X_i,Y) = P(X_i,Y),\, i=1,\ldots,n, \text{ and }\,
    Q(\Xs) = P(\Xs)\} \\[6pt]
    \Delta_s h^{\mathrm{com}}(x_1,\ldots,x_n,y)
    &= \begin{cases} 
    c_{\hat{p}}(x_1,\ldots,x_n,y), & \text{if }  \operatorname{sgn}\Delta_s h(x_1) = \cdots = \operatorname{sgn}\Delta_s h(x_n) \\[2pt] &= \operatorname{sgn}\Delta_s h(x_1,\ldots,x_n) = \operatorname{sgn}c(x_1,\ldots,x_n,y), \\[6pt]
    0, & \text{otherwise}.
    \end{cases} \\[6pt]
    I_\cap^{\mathrm{CCS}}(\Xs;Y)
    &= \sum_{x_1,\ldots,x_n,y} \hat{p}(x_1,\ldots,x_n,y)\,
    \Delta_s h^{\mathrm{com}}(x_1,\ldots,x_n,y) \,,
    \end{align}
    where $c_q$ indicates the pointwise coinformation evaluated on the distribution $q$.

\subsubsection*{James \etalx's Unique Information via Dependency Constraints ($\IDEP$)}
Unique information via Dependency Constraints ($\IDEP$) defines the unique information of each source using maximum entropy projections \cite{james2018unique}. Specifically, $\IDEP$ requires the construction of a constraint lattice where each edge represents the addition of a constraint between a source and the target. Unique information is then computed as the minimal information gained when the source-target constraint is added. Since this decomposition stems from unique information, it provides a complete definition of the lattice only for $n=2$ sources.
\begin{equation}
    I_\partial(X_i;Y) = \min\Bigl(I_{X_iY:X_jY}(X_i;Y|X_j), I_{X_iX_j:X_iY:X_jY}(X_i;Y|X_j) \Bigr) \,,
\end{equation}
where $I_{A:B:C}(\cdot)$ indicates that the mutual information is calculated on the maximum entropy distribution with constraint $p(A)=q(A), \,p(B)=q(B), \,p(C)=q(C)$. A continuous version of $\IDEP$ was subsequently introduced for Gaussian systems by Kay and Ince \cite{kay2018exact}.

\subsubsection*{James \etalx's Maximum Entropy Star ($\IMES$)}
Maximum Entropy Star ($\IMES$) defines redundancy as the coinformation in the maximum entropy distribution that preserves each source’s marginal with the target \cite{james2018uniquekey}. Its definition follows $\IBROJA$'s approach, where, instead of minimising the mutual information, entropy is maximised under the constraints given by Assumption \AST~\cite{bertschinger2014quantifying}. 
\begin{align}
    Q & = \underset{Q'\in\Delta_P}{\arg\max}\,\, H_{Q'}(X_1,X_2,Y) \\
    I_\cup^{\text{MES}}(X_1,X_2;Y) & = I(X_1,X_2;Y) - I_Q(X_1,X_2;Y) \\
    \IMES(X_1,X_2;Y) & = I_Q(X_1;X_2;Y) \,,
\end{align}
where $\Delta_P=\{Q\in\Delta \;|\; P(X_i;Y)=Q(X_i;Y) ,\;i=1,2\}$.

\subsubsection*{Finn and Lizier's Specificity and Ambiguity ($\IPM$)}
Plus-Minus (\IPM) redundancy operates at the level of individual realisations of the sources and target. It separates the positive and negative contributions to local surprisal, defining a specificity and ambiguity lattice based on informative and misinformative exclusions~\cite{finn2018probability}. Specifically, redundant specificity and ambiguity are associated with the source event which induces the smallest total or misinformative exclusions, respectively~\cite{finn2018pointwise}.
Finally, the overall redundancy can then be computed via an expectation value. 
\begin{align}
    i_\cap^{\text{PM}\,+}(x_1,\ldots,x_n;y) & = \min_{i=1,\ldots,n} h(x_i) \\
    i_\cap^{\text{PM}\,-}(x_1,\ldots,x_n;y) & = \min_{i=1,\ldots,n} h(x_i|y) \\
    i^{\text{PM}}_{\cap}(x_1,\ldots,x_n;y)&= i^{\text{PM}\,+}_{\cap}(x_1,\ldots,x_n;y)- i^{\text{PM}\,-}_{\cap}(x_1,\ldots,x_n;y), \\
I^{\text{PM}}_\cap(\XsY) & = \sum_{x_1,\ldots,x_n,y} p(x_1,\ldots,x_n,y)\, i_\cap^\text{PM}(x_1,\ldots,x_n;y) \,.
\end{align}

\subsubsection*{Niu and Quinn's Information-Geometric Redundancy ($\IIG$)}
Information Geometry ($\IIG$) uses an information-geometric approach inspired by Amari's Total Correlation hierarchical decomposition~\cite{amari2001information}, providing a directed decomposition of mutual information\cite{niu2019measure}. In this scenario, synergy is defined by the distance between the joint distribution and the geodesic that passes through the points of conditional independence of the original distribution. Unique information is then defined as the distance between such a projection and the points of conditional independence. The decomposition is only defined for $n=2$ sources.
    \begin{align}
    {{\bar P}_1}\; &: = \underset{Q \in {S_1}}{\arg\min}\,\, D\left( {{P_{{X_1}{X_2}Y}}||Q} \right) = \frac{{{P_{{X_1}}}_Y{P_{{X_1}{X_2}}}}}{{{P_{{X_1}}}}} \\ 
    {{\bar P}_2}\; &: = \underset{Q \in {S_2}}{\arg\min}\,\, D\left( {{P_{{X_1}{X_2}Y}}||Q} \right) = \frac{{{P_{{X_2}}}_Y{P_{{X_1}{X_2}}}}}{{{P_{{X_2}}}}}. \\
    t^* & = \underset{t\in\mathbb{R}}{\arg\min}\,\, D(P_{X_1X_2Y}||\mathcal{N}\bar{P}_1^t\bar{P}_2^{(1-t)}) \\
    P^* & = \mathcal{N}\bar{P}_1^{t^*}\bar{P}_2^{(1-t^*)}\\
    I_\partial(X_1X_2;Y) & = D(P_{X_1X_2Y}||P^*) \\
    I_\cap^\text{IG}(X_1,X_2;Y)\, & \begin{aligned}[t] = & \, D(\bar{P}_1||P_{X_1X_2}P_Y) - D(P^*||\bar{P}_2) \\
                        = & \, D(\bar{P}_2||P_{X_1X_2}P_Y) - D(P^*||\bar{P}_1)
                        \end{aligned} \,, \label{eq:defIIG}
    \end{align}
where $S_i$ is a submanifold with specific coupling parameters set to zero, and $\mathcal{N}$ is a normalisation constant. Parametrising the full manifold as $M = \{ {P_{{X_1}{X_2}Y}} \text{ s.t. } {P_{{X_1}{X_2}Y}} = \exp ( {\theta _1}{F_1}({x_1}) + {\theta _2}{F_2}({x_2}) + {\theta _3}{F_3}(y) + {\theta _{12}}{F_{12}}({x_1},\;{x_2}) + {\theta _{13}}{F_{13}}({x_1},\;y) + {\theta _{23}}{F_{23}}({x_2},\;y) + {\theta _{123}}{F_{123}}({x_1},\;{x_2},\;y) - \psi (\theta ) ) \}$, we define $S_i=\{P\in M\text{ s.t. } \theta_{j3}=\theta_{123}=0\},\;i,j=1,2$. 

A continuous version of $\IIG$ was subsequently introduced for (bivariate) Gaussian systems by Kay \cite{kay2024partial}.

\subsubsection*{Sigtermans' Redundancy via Causal Tensors ($\ICT$)}

Causal Tensor (\ICT) defines redundancy using the framework of Causal Tensors, i.e.\ multilinear stochastic maps that transform source data into destination data, enabling one to express an indirect association in terms of the direct, constituent associations \cite{sigtermans2020partial, sigtermans2020towards}. \ICT compares the information transmitted from a source to the target across all possible transmission paths (cascades), and defines redundancy as the weakest indirect path that includes all sources and ends at the target. For two variables, this corresponds to the minimum of the two directional cascades:
\begin{align}
    I_{P_{X_1-Y}}\argct{X_1}{X_2}{Y} & = \sum_{ijk} p^{ijk} \log \frac{\sum_{l} A^l_iB_l^k}{p^k} = \sum_{x_1 x_2 y} p{(x_1, x_2, y)} \log \frac{\sum_{x_2'} p(x_2'|x_1)p(y|x_2')}{p(y)}  \\
    I_\cap^\text{CT}(X_1,X_2;Y) & = \min \Bigl(I_{P_{X_1-Y}}\argct{X_1}{X_2}{Y},I_{P_{X_2-Y}}\argct{X_2}{X_1}{Y} \Bigr) \,,
\end{align}
where the subscript $P_{X_i-Y}$ indicates that the path information is calculated on the probability distribution induced by the cascade $X_i \to X_j \to Y$.

\subsubsection*{Makkeh \etalx's Shared Exclusion ($\ISX$)}

Shared Exclusion ($\ISX$) is a pointwise PID measure that assigns redundancy to the shared exclusion of probability mass \cite{makkeh2021introducing, schick2021partial, ehrlich2024partial}. In other words, redundancy is defined as the amount of information that can be inferred when the joint realisation of all sources excludes a particular realisation of the target. \ISX decomposes the information into an informative and a misinformative part, which can then be combined to obtain the overall redundancy.
\begin{align} \label{eq:SX-eq1}
i^{\text{SX}\,+}_{\cap}(x_1,\ldots,x_n;y)&= \log_2 \frac{1}{P(x_1 \cup x_2 \cup \ldots \cup x_n)}, \\ \label{eq:SX-eq2}
i^{\text{SX}\,-}_{\cap}(x_1,\ldots,x_n;y) &= \log_2 \frac{P(y)}{P\!\left(y \cap (x_1 \cup x_2 \cup \ldots \cup x_n)\right)} \\
i^{\text{SX}}_{\cap}(x_1,\ldots,x_n;y)&= i^{\text{SX}\,+}_{\cap}(x_1,\ldots,x_n;y)- i^{\text{SX}\,-}_{\cap}(x_1,\ldots,x_n;y), \\
\ISX(\XsY) & = \sum_{x_1,\ldots,x_n,y} p(x_1,\ldots,x_n,y)\, i_\cap^\text{SX}(x_1,\ldots,x_n;y) \,. 
\end{align}
Originally defined only for discrete distributions, this approach has then been generalised to continuous settings \cite{schick2021partial, ehrlich2024partial}.

\subsubsection*{Kolchinsky's Blackwell Redundancy ($\Iprec$)}

$\Iprec$ defines redundancy as the maximum information obtainable from a variable $Q$ that is Blackwell inferior to all the sources \cite{kolchinsky2022novel}. As such, it is a more liberal definition than $I_{\wedge,\alpha}$ since $\Iwedge\leq \Ialpha \leq \Iprec\leq \IMMI$. \Iprec is based on an operational interpretation based on utility maximisation in decision theory. Importantly, in the proposed decomposition, the author rejects the inclusion-exclusion principle and defines the union information needed to complete the lattice decomposition~\cite{kolchinsky2022novel}.  
\begin{align}
    I_\cup^\prec(\Xs;Y) & = \min_{Q} I(Q;Y) \quad\text{such that}\; Q\succeq_Y X_i \; \forall i \in 1 , \ldots , n \\
    I_\cap^\prec(\Xs;Y) & = \max_{Q} I(Q;Y) \quad\text{such that}\; Q\preceq_Y X_i \; \forall i \in 1 , \ldots , n \,.
\end{align}

\subsubsection*{Venkatesh and Schamberg's Deficiency Redundancy ($\Idelta$)}

$\Idelta$ employs statistical deficiencies to generalise $\IMMI$'s definition and provide a Blackwellian PID for multivariate Gaussian systems \cite{venkatesh2022partial}. This is achieved by removing from the marginal mutual information the extent to which one source is distant from being Blackwell inferior to the other source, i.e.\ the \textit{deficiency} $\delta$. This approach is only defined for $n=2$ source variables.  
\begin{align}
    (P_{A|B}\circ P_{B|C})(a|c) & = \int db \, P(A=a|B=b)P(B=b|C=c) \\  
    \delta(Y;X_i\backslash X_j) & = \inf_{P(X_i'|X_j)} \mathbb{E}_{P_Y}\left[D(P_{X_i|Y}||P_{X_i'|X_j}\circ P_{X_j|Y})\right] \\
    I_\cap^\delta(X_1,X_2;Y) & = \min \Bigl(I(X_1;Y)-\delta(Y;X_1\backslash X_2), I(X_2;Y)-\delta(Y;X_2\backslash X_1)\Bigr)\,.
\end{align}

\subsubsection*{Mages and Rohner's Decision Reachability Redundancy ($\IRDR$)}

Reachable Decision Region ($\IRDR$) is a redundancy definition based on a direct decision-making viewpoint \cite{mages2023decomposing}. For every possible outcome of the target $y$, it defines a one-vs-rest variable $Y^y=y$, and for every source, it considers its “reachable decision region”, i.e.\ the entire set of performance trade-offs you can achieve using that source alone. 
Hence, since these regions contain all true/false-positive pairs that any decision rule based on that source can reach, they capture the pointwise uncertainty of each source about $Y=y$. Geometrically, these structures correspond to the zonotopes~\cite{eppstein1995zonohedra}. 
Using these regions, \IRDR then expresses redundancy as their overlap, i.e.\ the part of the decision region that is reachable using any of the sources on their own. 
\begin{align}
Y^y & = \begin{cases} y & \text{if } Y = y \\ 1-y & \text{otherwise} \end{cases} \\ \label{eq:RDR_eq1}
i_\cap^\text{RDR}\!\left(X_i; Y^y \right) &= \sum_{x_i} P\!\left(X_i=x_i \mid Y^y = y\right)\, \log \frac{ P\!\left(X_i=x_i \mid Y^y = y\right) }
     { \displaystyle
       \begin{array}{l}
         P(Y^y=y)\,P\!\left(X_i=x_i \mid Y^y = y\right)+ \\[0mm]
         \qquad\qquad+ \bigl(P(Y^y\ne y)\bigr)\, P\!\left(X_i=x_i \mid Y^y \neq y\right)
       \end{array} } \\[4pt]
i_\cap^\text{RDR}\!\left(X_i \cup \beta ; Y^y \right) &= i_\cap^\text{RDR}\!\left(X_i; Y^y\right) + i_\cap^\text{RDR}\!\left(\beta; Y^y\right) - i_\cap^\text{RDR}\!\left( \left\{ X_i \,\sqcup_{Y^y}\, X_j \;\middle|\; X_j \in \beta \right\}; Y^y \right) \\[1em] 
\IRDR(\alpha;Y) &= \sum_y p(y) \,i_\cap^\text{RDR}\left(\alpha; Y^y\right) \,,
\end{align}
where $\alpha,\beta$ are collection of subsets of sources, and $X_i \,\sqcup_{Y^y}\, X_j$ indicates the union under Blackwell order w.r.t.\ the target $Y^y$.

\subsubsection*{Lyu \etalx's do-calculus Redundancy ($\Ido$)}
$\Ido$ (\cite{lyu2024explicit}) is grounded on the interventional approach of the do-calculus proposed by Pearl~\cite{pearl1995causal, pearl2009causal, pearl2009causality}. Within this framework, unique information is defined as the conditional mutual information between an intervened source variable and the target. Since the decomposition is based on a definition of unique information, this approach fully determines all PID atoms only for $n=2$ sources.
\begin{align} \label{eq:DO-def}
    p(X_i'=x_i,Y=y|X_j=x_j) & = p(X_i=x_i|Y=y)p(Y=y|X_j=x_j) \\
    I_\partial(X_i;Y) & = I(X_i';Y|X_j) \\
    \Ido(X_1,X_2;Y) & = I(X_1';X_2)= I(X_2';X_1) \,. \label{eq:DO-def2}
\end{align}

\subsubsection*{Redundancy via Auxiliary Variable ($\IRAV$)}
Redundancy via Auxiliary Variable ($\IRAV$) defines shared information as the maximal coinformation achievable by the sources, the target, and any function of the sources. It captures redundancy as information that can be extracted collectively from sources, regardless of whether it is present individually. This measure is not published, but it is part of the \texttt{dit} package~\cite{james2018dit}. Although it can be easily defined for any number of source variables, we underline that \IRAV satisfies many of the PID properties only in the case $n=2$, which is hence the case we focused on.
\begin{equation}
    I^\text{RAV}_\cap(\Xs;Y) = \max_f \,I(X_1;\ldots;X_n;Y;f(\Xs)) \,.
\end{equation}

\section{Properties of PID measures}
In this section, we provide the justifications of the PID properties for each measure, as well as the proofs needed for the construction of Table~\ref{tab:PID_properties}.

\subsection{List of properties}

\subsubsection*{Williams and Beer's Minimum Information $(\Imin)$}
    \SR, \Sz, \Mz, \GP, \LPz follow directly from the definition~\cite{williams2010nonnegative}; \\
    \ID and \IID are violated, as shown by direct checks on the TBC (e.g., Table~1 of~\cite{harder2013bivariate}); \\
    \TM is violated: a counterexample is provided in the table in Sec.~5 of Ref.~\cite{bertschinger2013shared}; \\
    \TC fails as a consequence of \GP together with the violation of \TM (see Prop.~\ref{prop:TC-GP--TM}); \\
    \So is violated: see Table~1 of Ref.~\cite{griffith2014intersection}; \\
    \Mo is satisfied: see Ref.~\cite{griffith2015quantifying} (although it is not reported in Table~1 of Ref.~\cite{griffith2014intersection}); \\
    \LPo is satisfied: see Refs.~\cite{williams2010nonnegative, griffith2014intersection}; \\
    \BP is violated, whereas \TE and \AST are satisfied: see Ref.~\cite{kolchinsky2022novel}; \\
    \AD is violated, while \CO holds: see Ref.~\cite{rauh2023continuity}; \\
    \DI fails globally but holds almost everywhere, as \Imin is a minimum of differentiable functions; \\  
    \LB follows from \Mz and \SR (see Prop.~\ref{prop:M0-SR--LB}); \\
    \EI is satisfied: see Ref.~\cite{griffith2014intersection}; \\
    \SE follows from \Mz.
    
\subsubsection*{Harder \etalx's Projected Information $(\Ired)$}
    \SR, \Sz, \Mz, \GP, \LPz, \ID, and \IID come from its definition~\cite{harder2013bivariate}; \\
    \TM is violated: see Refs.~\cite{bertschinger2013shared, banerjee2015synergy}; \\
    \TC fails as a consequence of \GP together with the violation of \TM (see Prop.~\ref{prop:TC-GP--TM}); \\
    \So is violated: see Table 1 of Ref.~\cite{griffith2014intersection}; \\
    \Mo is satisfied: see Table 1 of Ref.~\cite{banerjee2015synergy} (although it is not reported in Table 1 of Ref.~\cite{griffith2014intersection}); \\
    \LPo is violated, as the $\Ired$ decomposition is only defined for $n=2$ sources; \\
    \BP, \TE, and \AST are satisfied: see Ref.~\cite{kolchinsky2022novel}; \\
    \AD and \CO are violated: see Ref.~\cite{rauh2023continuity}; \\
    \DI fails globally but holds almost everywhere, as \Ired is a minimum of differentiable functions; \\  
    \LB follows from \Mz and \SR (see Prop.~\ref{prop:M0-SR--LB}); \\
    \EI is satisfied: see Ref.~\cite{griffith2014intersection}; \\
    \SE follows from \Mz. \\
    Note that \CO would hold in the case of full-support distributions.
    
\subsubsection*{Griffith \etalx's Common Information Redundancy ($\Iwedge$)}
    \SR, \Sz, \Mz, and \GP follow from the definition, whereas \LPz and \ID are violated~\cite{griffith2014intersection}; \\
    \IID is satisfied: see Table 1 of Ref.~\cite{kolchinsky2022novel}; \\
    \TM follows from its definition~\cite{griffith2014intersection}; \\
    \TC is violated, since
    $\Iwedge(\XsY_1Y_2)=\max_Q I(Q;Y_1Y_2)=\max_Q (I(Q;Y_1)+I(Q;Y_2|Y_1))\le\max_Q I(Q;Y_1)+\max_Q I(Q;Y_2|Y_1) = \Iwedge(\XsY_1)+\Iwedge(\XsY_2|Y_1)$; \\
    \So is violated by definition~\cite{griffith2014intersection}; \\
    \Mo does not hold: see Table~1 of Refs.~\cite{griffith2014intersection, banerjee2015synergy}; \\
    \LPo fails as a consequence of the violation of \LPz; \\
    \BP and \TE are violated, whereas \AST is satisfied: see Ref.~\cite{kolchinsky2022novel}; \\
    \AD is satisfied, while \CO is violated: see Ref.~\cite{rauh2023continuity}; \\
    \DI is violated as \CO does not hold; \\
    \LB follows from \Mz and \SR (see Prop.~\ref{prop:M0-SR--LB}); \\
    \EI comes from the definition~\cite{griffith2014intersection}; \\
    \SE follows from \Mz.

\subsubsection*{Griffith \etalx's Relaxed Common Information Redundancy ($\Ialpha$)}
    \SR, \Sz, \Mz, and \GP follow from the definition, whereas \LPz is violated~\cite{griffith2015quantifying}; \\
    \ID is violated: see Table 1 of Ref.~\cite{banerjee2015synergy}; \\
    \IID is satisfied: see Table 1 of Ref.~\cite{kolchinsky2022novel}; \\
    \TM is violated: see Refs.~\cite{banerjee2015synergy, kolchinsky2022novel} (despite being claimed without proof in Ref.~\cite{griffith2015quantifying}); \\ 
    \TC fails as a consequence of \GP together with the violation of \TM (see Prop.~\ref{prop:TC-GP--TM}); \\ 
    \So is violated due to the clear asymmetry between sources and target in the maximisation; \\
    \Mo is satisfied: see Table 1 of Ref.~\cite{banerjee2015synergy}; \\
    \LPo is violated as a consequence of the violation of \LPz; \\
    \BP and \AST are violated, whereas \TE is satisfied: see Ref.~\cite{kolchinsky2022novel}; \\
    \AD is satisfied, while \CO is violated: see Ref.~\cite{rauh2023continuity}; \\
    \DI is violated as \CO does not hold; \\
    \LB follows from \Mz and \SR (see Prop.~\ref{prop:M0-SR--LB}); \\
    \EI comes from the invariance of mutual information under equivalence classes~\cite{li2011connection}; \\
    \SE follows from \Mz.
    
\subsubsection*{BROJA/Griffith and Koch's Constrained Optimisation ($\IBROJA$)}
    \SR, \Sz, \Mz, \GP, \LPz, \ID, and \IID follow directly from the definition~\cite{bertschinger2014quantifying}; \\
    \TM is violated: see Ref.~\cite{rauh2014reconsidering}; \\
    \TC fails as a consequence of \GP together with the violation of \TM (see Prop.~\ref{prop:TC-GP--TM})~\cite{griffith2015quantifying, banerjee2015synergy}; \\
    \So is violated due to the clear asymmetry between sources and target induced by the constraints defining $\Delta_P$; \\  
    \Mo is satisfied: see Table 1 of Ref.~\cite{banerjee2015synergy}; \\
    \LPo is violated, as the $\IBROJA$ decomposition is only defined for $n=2$ sources; \\
    \BP, \TE, and \AST are satisfied: see Ref.~\cite{kolchinsky2022novel}; \\
    \AD and \CO are satisfied: see Ref.~\cite{rauh2023continuity}; \\
    \DI does not hold: see Refs.~\cite{rauh2021properties, rauh2023continuity}. \\
    \LB follows from \Mz and \SR (see Prop.~\ref{prop:M0-SR--LB}); \\
    \EI comes from the invariance of mutual information under equivalence classes~\cite{li2011connection}; \\
    \SE follows from \Mz.
    
\subsubsection*{Barrett's Minimum Mutual Information ($\IMMI$)}
    \SR, \Sz, \Mz, \GP, and \LPz follow directly from the definition~\cite{barrett2015exploration}; \\
    \ID and \IID are violated, as shown by direct checks on the TBC; \\
    \TM follows directly from the definition, since
    $I(X_1,\ldots,X_n;Y_1,Y_2)\ge I(X_1,\ldots,X_n;Y_i), \,i=1,2$; \\
    \TC is violated, since
    $\IMMI(X_1,X_2;Y_1Y_2) = \min \Bigl(I(X_1; Y_1) + I(X_1; Y_2 \mid Y_1),\; I(X_2; Y_1) + I(X_2; Y_2 \mid Y_1)\Bigr)\leq\min \Bigl(I(X_1; Y_1) + I(X_2; Y_1)\Bigr) + \min \Bigl( I(X_1; Y_2 \mid Y_1) + I(X_2; Y_2 \mid Y_1)\Bigr) = \IMMI(X_1,X_2;Y_1) + \IMMI(X_1,X_2;Y_2|Y_1)$; \\
    \So is violated, as shown by direct evaluation on the TBC with $X_1\ind X_2$, for which
    $\IMMI(X_1,X_2;Y)=\min\{H(X_1),H(X_2)\}\neq \IMMI(X_1,Y;X_2)=0$; \\
    \Mo follows from \Mz and \TE (see Theo.~\ref{theo:M0-TE--M1}); \\
    \LPo is satisfied: see Prop.~\ref{prop:MMI-LPo}; \\
    \BP is violated, whereas \TE and \AST are satisfied: see Ref.~\cite{kolchinsky2022novel}
    (note that \BP holds in the Gaussian case of two sources and a univariate target~\cite{venkatesh2022partial}); \\
    \AD is violated, while \CO is satisfied: see Ref.~\cite{rauh2023continuity}; \\
    \DI fails globally but holds almost everywhere, as \IMMI is a minimum of differentiable functions; \\  
    \LB follows from \Mz and \SR (see Prop.~\ref{prop:M0-SR--LB}); \\
    \EI comes from the invariance of mutual information under equivalence classes~\cite{li2011connection}; \\
    \SE follows from \Mz.
    
\subsubsection*{Goodwell and Kumar's Rescaled Redundancy ($\IRR$)}
    \SR, \Sz, and \Mz follow directly from the definition~\cite{goodwell2017temporal}; \\
    \GP comes from Prop.~\ref{prop:RR-GP}; \\
    \LPz follows from Prop.~\ref{prop:RR-LPz}; \\
    \ID is violated, as shown by direct evaluation on the TBC; \\
    \IID follows directly from the definition~\cite{goodwell2017temporal}; \\
    \TM is violated: for instance, taking $X_1\ind X_2$ and assuming $I(X_1;X_2;Y_1Y_2)\ge0$ yields
    $\IRR(X_1,X_2;Y_1Y_2)=I(X_1;X_2;Y_1Y_2)\ngeq I(X_1;X_2;Y_1) = \IRR(X_1;X_2;Y_1)$; \\
    \TC is violated by the same reasoning as for $\IMMI$, together with \GP and the violation of \TM (see Prop.~\ref{prop:TC-GP--TM}); \\
    \So is violated, as shown by direct evaluation on a system with $X_1=Y$, for which
    $R_{\min}=0$ and $\IMMI(X_1,X_2;Y)=\IMMI(X_1,Y;X_2)$, but $I_s(X_1,X_2)\ne I_s(X_1,Y)$; \\
    \Mo follows from \Mz and \TE (see Theo.~\ref{theo:M0-TE--M1}); \\
    \LPo is violated, as the $\IRR$ decomposition is only defined for $n=2$ sources; \\
    \BP is violated: see Prop.~\ref{prop:RR-noBP}; \\
    \TE is satisfied: see Prop.~\ref{prop:RR-TE}; \\
    \AST is violated, since both $R_{\min}$ and $I_s(X_1,X_2)$ depend explicitly on $I(X_1;X_2)$, and thus on $p(x_1,x_2)$; \\
    \AD is violated, since $\IRR\propto\IMMI$ for $I(X_1;X_2;Y)\ge0$ and $\IMMI$ is not additive; \\
    \CO comes from the continuity of $\IMMI$ and the coinformation; \\
    \DI fails globally but holds almost everywhere, as \IRR contains minima and maxima of differentiable functions; \\  
    \LB follows from \Mz and \SR (see Prop.~\ref{prop:M0-SR--LB}); \\
    \EI comes from the invariance of mutual information under equivalence classes~\cite{li2011connection}; \\
    \SE follows from \Mz.
    
\subsubsection*{Ince's Common Change in Surprisal ($\ICCS$)}
    \SR and \Sz are satisfied, whereas \Mz and \GP are violated by definition~\cite{ince2017measuring}; \\
    \LPz fails as a consequence of the violation of \GP; \\
    \ID is violated and \IID is satisfied, as given by the definition~\cite{ince2017measuring}; \\
    \TM, \TC, and \So are violated as observed from empirical tests; \\
    \Mo does not hold due to the violation of \Mz; \\
    \LPo is violated because \GP does not hold; \\
    \BP, \TE, and \AST are violated: see Ref.~\cite{kolchinsky2022novel}; \\
    \AD and \CO are violated: see Ref.~\cite{rauh2023continuity}; \\
    \DI does not hold as \CO fails; \\
    \LB fails as a consequence of the violation of \GP (see Prop.~\ref{prop:LB-SR--GP}); \\
    \EI is satisfied due to invariance of local mutual information under equivalence classes~\cite{li2011connection}; \\
    \SE is satisfied by definition~\cite{ince2017measuring}.

\subsubsection*{James \etalx's Unique Information via Dependency Constraints ($\IDEP$)}
    \SR, \Sz, \Mz, \GP, \LPz, and \ID are satisfied by definition~\cite{james2018unique}; \\
    \IID follows from \ID; \\
    \TM is violated by definition~\cite{james2018unique}; \\
    \TC fails due to the validity of \GP and the violation of \TM (see Prop.~\ref{prop:TC-GP--TM}); \\
    \So is violated as observed from empirical tests; \\
    \Mo is satisfied, as it follows from \Mz and \TE (see Theo.~\ref{theo:M0-TE--M1}); \\
    \LPo is violated, as the $\IDEP$ decomposition is only defined for $n=2$ sources; \\
    \BP is violated: see Ref.~\cite{kolchinsky2022novel}; \\
    \TE is satisfied: see Prop.~\ref{prop:DEP-TE} (despite being reported otherwise in Ref.~\cite{kolchinsky2022novel}); \\
    \AST is violated: see Ref.~\cite{kolchinsky2022novel}; \\
    \AD is violated while \CO is satisfied: see Ref.~\cite{rauh2014reconsidering}; \\
    \DI fails globally but holds almost everywhere, as \IDEP is a minimum of differentiable functions; \\  
    \LB follows from \Mz and \SR (see Prop.~\ref{prop:M0-SR--LB}); \\
    \EI is satisfied due to invariance of mutual information under equivalence classes~\cite{li2011connection}; \\
    \SE follows from \Mz.

\subsubsection*{James \etalx's Maximum Entropy Star ($\IMES$)}
    \SR and \Sz follows directly from the definition of $\IMESs$ as a mutual information (see Prop.~\ref{prop:MES-redef}); \\
    \Mz fails due to the violation of \SE (see Prop.~\ref{prop:MES-noSELB}); \\
    \GP is satisfied by definition of $\IMES$ being a mutual information (see Prop.~\ref{prop:MES-redef}); \\
    \LPz is violated, as synergy can be negative as seen from empirical tests; \\
    \IID is follows from \ID; \\
    \ID is satisfied, since $\IMES(X_1,X_2;Y)=I_Q(X_1;X_2)$, and for the TBC the maximum entropy distribution $Q$ equals $P$ (see Prop.~\ref{prop:MES-redef} and Lemma~\ref{lemma:MES_TBC}); \\
    \TM is violated as observed from empirical tests; \\
    \TC fails due to the validity of \GP and the violation of \TM (see Prop.~\ref{prop:TC-GP--TM}); \\
    \So does not hold due to the clear asymmetry between sources and target given by the constraints; \\
    \Mo is violated as \Mz fails (see Theo.~\ref{theo:M0-TE--M1}); \\
    \LPo fails as $\IMES$ is only defined for $n=2$ sources; \\
    \BP is violated: see Prop.~\ref{prop:MES-noBP}; \\
    \TE is satisfied, since $\IMES(X_1,Y;Y)=I_Q(X_1;Y;Y)=I_Q(X_1;Y)=\IMES(X_1;Y)$; \\
    \AST holds by definition~\cite{james2018uniquekey}; \\
    \AD is satisfied: see Prop.~\ref{prop:MES-AD}; \\
    \CO holds: see Prop.~\ref{prop:MES-CO}; \\
    \DI is violated similarly to \IBROJA: see Ref.~\cite{rauh2021properties}; \\
    \LB is violated: see Prop.~\ref{prop:MES-noSELB}; \\
    \EI is satisfied due to invariance under equivalence classes~\cite{li2011connection}; \\
    \SE is violated: see Prop.~\ref{prop:MES-noSELB}.

\subsubsection*{Finn and Lizier's Specificity and Ambiguity ($\IPM$)}
    \SR and \Sz hold, whereas \Mz and \GP are violated by definition~\cite{finn2018pointwise}; \\
    \LPz does not hold as a consequence of the violation of \GP; \\
    \ID and \IID do not hold, as confirmed by direct checks on the TBC~\cite{finn2018pointwise}; \\
    \TM is violated and \TC is satisfied by definition~\cite{finn2018pointwise}; \\
    \So is violated due to the clear asymmetry between sources and target; \\
    \Mo does not hold due to the violation of \Mz; \\
    \LPo does not hold because \LPz fails; \\
    \BP, \TE, and \AST do not hold: see Ref.~\cite{kolchinsky2022novel}; \\
    \AD is violated, since the $\min$ function is superadditive; \\
    \CO holds, as $\IPM$ is a composition of continuous functions; \\
    \DI fails globally but holds almost everywhere, as \IPM contains a minimum of differentiable functions; \\  
    \LB does not hold, since $i^\pm(q;y)\le i^\pm(x_i;y)$ does not imply $I(Q;Y)\le I^\pm(X_i;Y)$ (see also Prop.~\ref{prop:LB-SR--GP}); \\
    \EI holds due to invariance of local mutual information under equivalence classes~\cite{li2011connection}; \\
    \SE follows from the informative and misinformative \Mz (since $i_\cap^\pm(\Xs;Y)=i^\pm(\Xsm;Y)$ when  $X_i=f(X_n)$ for some $i=1,\ldots,n-1$, and hence $i_\cap(\Xs;Y)=i_\cap^+(\Xs;Y)-i_\cap^-(\Xs;Y)=i_\cap(\Xsm,Y)$). \\
    We remark that \Mz, \GP, and \LP hold separately at the pointwise level for the informative and misinformative lattices.

\subsubsection*{Niu and Quinn's Information-Geometric Redundancy ($\IIG$)}
    \SR holds by construction;  \\
    \Sz follows from Prop.~\ref{prop:IG-Sz}; \\
    \Mz follows from Prop.~\ref{prop:IG-Mz}; \\
    \GP comes from its original definition~\cite{niu2019measure}; \\
    \LPz is satisfied since \GP and both unique information and synergy are KL divergences~\cite{niu2019measure}; \\
    \ID and \IID are violated, as confirmed by direct application of the independent TBC;  \\
    \TM holds from empirical tests; \\
    \TC is violated as observed from empirical tests; \\
    \So is violated since there is a clear asymmetry between sources and targets in the geometric projections; \\
    \Mo is satisfied since \Mz and \TE hold (see Theo.~\ref{theo:M0-TE--M1}); \\
    \LPo is violated since the \IIG decomposition is only valid for $n=2$ sources; \\
    \BP is violated as observed from empirical tests; \\
    \TE follows from Prop.~\ref{prop:IG-TE}; \\
    \AST fails as proved in Prop.~\ref{prop:IG-noAST}; \\
    \AD is violated and \CO is satisfied: see Ref.~\cite{rauh2023continuity};  \\
    \DI holds as \IIG is a composition of smooth functions~\cite{rauh2023continuity}; \\
    \LB since \Mz and \SR (see Prop.~\ref{prop:M0-SR--LB}); \\
    \EI comes from the invariance of probability distribution under equivalence class~\cite{li2011connection}; \\
    \SE follows from \Mz.

\subsubsection*{Sigtermans' Redundancy via Causal Tensors ($\ICT$)}
    \SR, \Sz, \Mz, and \GP are satisfied by definition~\cite{sigtermans2020partial}; \\
    \LPz is violated as synergy can be negative~\cite{sigtermans2020partial}; \\
    \ID follows from its definition~\cite{sigtermans2020partial}; \\
    \IID is satisfied as a consequence of \ID; \\
    \TM is satisfied by definition~\cite{sigtermans2020partial}; \\
    \TC is violated: similarly to \IMMI, the chain rule holds for mutual information, but taking the minimum breaks it; \\
    \So does not hold: see Theorem 3 of Ref.~\cite{sigtermans2020partial}; \\
    \Mo is satisfied, following from \Mz and \TE (see Theo.~\ref{theo:M0-TE--M1}); \\
    \LPo is violated by definition~\cite{sigtermans2020partial}; \\
    \BP does not hold: see Prop.~\ref{prop:CT-BP}; \\
    \TE is satisfied, since paths including $Y$ as sources cannot be considered as they are directed, and hence $\ICT(\Xs,Y;Y)=\ICT(\Xs;Y)$; \\
    \AST is violated, since altering the distribution between $X_1$ and $X_2$ evidently changes $\ICT(X_1,X_2;Y)$ (see Prop.~\ref{prop:CT-noMechRed}); \\
    \AD does not hold as observed from empirical tests; \\
    \CO is satisfied, as \ICT is the minimum of mutual information terms; \\
    \DI fails globally but holds almost everywhere, as \ICT is a minimum of differentiable functions; \\  
    \LB follows from \Mz and \SR (see Prop.~\ref{prop:M0-SR--LB}); \\
    \EI is satisfied, since causal tensors can be rewritten in terms of conditional probabilities, which are invariant under equivalence classes~\cite{li2011connection}; \\
    \SE follows from \Mz. 

\subsubsection*{Makkeh \etalx's Shared Exclusion ($\ISX$)}
    \SR and \Sz are satisfied by definition~\cite{makkeh2021introducing}; \\
    \Mz, \GP, and \LPz do not hold due to the pointwise nature of the measure~\cite{makkeh2021introducing}; \\
    \ID and \IID are violated, as it can be verified on the independent TBC; \\
    \TM is not satisfied as observed from empirical tests; \\
    \TC follows from its original definition~\cite{makkeh2021introducing}; \\
    \So does not hold due to the clear asymmetry between sources and target; \\
    \Mo is violated, because \Mz does not hold; \\
    \LPo does not hold, as \LPz fails; \\
    \BP is violated as \TE fails (see Theo.~\ref{theo:BP--TE-SE}); \\
    \TE does not hold as $i_\cap^{SX-}(\Xs,Y;Y)=0$ but $i_\cap^{SX+}(\Xs,Y;Y)\ne0$; \\
    \AST is not satisfied due to the pointwise unique behaviour of the measure~\cite{makkeh2021introducing}; \\
    \AD is violated as observed from empirical tests; \\
    \CO and \DI are satisfied as \ISX is a composition of smooth functions~\cite{makkeh2021introducing}; \\
    \LB does not hold, since \GP fails (see Prop.~\ref{prop:LB-SR--GP}); \\
    \EI is satisfied due to invariance of probability distributions under equivalence classes~\cite{li2011connection}; \\
    \SE follows from the informative and misinformative \Mz (since $i_\cap^\pm(\Xs;Y)=i^\pm(\Xsm;Y)$ when  $X_i=f(X_n)$ for some $i=1,\ldots,n-1$, and hence $i_\cap(\Xs;Y)=i_\cap^+(\Xs;Y)-i_\cap^-(\Xs;Y)=i_\cap(\Xsm,Y)$). \\
    We remark that \Mz, \GP, and \LP are satisfied separately at the pointwise level for the informative and misinformative lattices.
    
\subsubsection*{Kolchinsky's Blackwell Redundancy ($\Iprec$)}
    \SR, \Sz, and \Mz hold by definition (see Theo.~1 in Ref.~\cite{kolchinsky2022novel}); \\
    \GP is satisfied, as it is a supremum of mutual information~\cite{kolchinsky2022novel}; \\
    \LPz is violated as observed from empirical tests; \\
    \ID does not hold and \IID is satisfied by definition~\cite{kolchinsky2022novel}; \\
    \TM is not satisfied: see Prop.~\ref{prop:Prec-noTM}; \\
    \TC fails as a consequence of \GP and the violation of \TM (see Prop.~\ref{prop:TC-GP--TM}); \\
    \So is violated due to Blackwell order asymmetry between sources and targets; \\
    \Mo holds following from \TE and \Mz (see Theo.~\ref{theo:M0-TE--M1}); \\
    \LPo does not hold, since \LPz is violated; \\
    \BP, \TE, and \AST are satisfied: see Ref.~\cite{kolchinsky2022novel}; \\
    \AD is satisfied while \CO does not hold (though it is locally continuous under specific assumptions~\cite{kolchinsky2022novel}): see Ref.~\cite{rauh2023continuity}; \\
    \DI is violated as \CO does not hold; \\
    \LB follows from \Mz and \SR (see Prop.~\ref{prop:M0-SR--LB}); \\
    \EI holds due to invariance under equivalence classes~\cite{li2011connection}; \\
    \SE is satisfied, as \Iprec obeys the more general ``order equality'' and follows from \Mz~\cite{kolchinsky2022novel}. \\
    
\subsubsection*{Venkatesh and Schamberg's Deficiency Redundancy ($\Idelta$)}
    \SR is satisfied by construction; \\
    \Sz holds by definition~\cite{venkatesh2022partial}; \\
    \Mz follows from Prop.~\ref{prop:Delta-Mz}; \\
    \GP holds since $0\leq \delta(Y;X_1\backslash X_2)\leq I(X_1;Y)$; \\
    \LPz is satisfied according to empirical tests~\cite{venkatesh2022partial}; \\
    \ID follows from Prop.~\ref{prop:Delta-ID}; \\
    \IID holds as a consequence of \ID; \\
    \TM is violated as observed from empirical tests; \\
    \TC fails due to \GP combined with the violation of \TM (see Prop.~\ref{prop:TC-GP--TM}); \\
    \So is violated: for univariate target $Y$, \Idelta reduces to \IMMI which does not satisfy \So; \\
    \Mo follows from \Mz and \TE (see Theo.~\ref{theo:M0-TE--M1}); \\
    \LPo does not hold, as the decomposition is defined only for $n=2$ sources; \\
    \BP is satisfied by definition~\cite{venkatesh2022partial}; \\
    \TE follows from Prop.~\ref{prop:Delta-TE}; \\
    \AST is satisfied, since $\delta$ depends only on the marginal distributions $P(X_1;Y)$ and $P(X_2;Y)$ (see also Appendix C in Ref.~\cite{venkatesh2022partial}); \\
    \AD is violated, since in the univariate case \Idelta reduces to \IMMI, which does not satisfy \AD; \\
    \CO is satisfied, as \Idelta is a composition of KL divergences and expectations; \\
    \DI fails globally but holds almost everywhere, as \Idelta is a minimum of differentiable functions; \\  
    \LB follows from \Mz and \SR (see Prop.~\ref{prop:M0-SR--LB}); \\
    \EI is satisfied due to invariance under equivalence classes~\cite{li2011connection}; \\
    \SE follows from \Mz.

\subsubsection*{Mages and Rohner's Decision Reachability Redundancy ($\IRDR$)}
    \SR, \Sz, \Mz, \GP, and \LPz are satisfied, whereas \ID does not hold by definition~\cite{mages2023decomposing}; \\
    \IID is violated based on direct checks on the independent TBC; \\
    \TM is not satisfied, since $\Imin$ fails it and \IRDR equals $\Imin$ under pointwise Blackwell ordering; \\
    \TC does not hold as a consequence of \GP combined with the violation of \TM (see Prop.~\ref{prop:TC-GP--TM}); \\
    \So is violated, as neither $\Imin$ nor $\IBROJA$ satisfy it, and \IRDR is equal to them in specific scenarios; \\
    \Mo follows from \Mz and \TE (see Theo.~\ref{theo:M0-TE--M1}); \\
    \LPo and \BP are satisfied by definition~\cite{mages2023decomposing}; \\
    \TE holds for $n=2$ as a consequence of \BP (see Theo.~\ref{theo:BP--TE-SE}), or equivalently as $\IRDR(X_1,Y;Y)=I(X_1;Y)$, since $I_\partial(X_1;Y)=0$; \\
    \AST is satisfied, as $\IRDR$ depends only on $p(X_1,Y)$ and $p(X_2,Y)$ (see Eq.~\eqref{eq:RDR_eq1}); \\
    \AD does not hold, since pointwise information does not factorise for independent subsystems (see denominator of Eq.~\eqref{eq:RDR_eq1}); \\
    \CO is satisfied, as $\IRDR$ is a composition of continuous functions; \\
    \DI fails globally but holds almost everywhere, as \IRDR is a combination of differentiable functions joint by Blackwell union; \\  
    \LB follows from \Mz and \SR (see Prop.~\ref{prop:M0-SR--LB}); \\
    \EI holds due to the invariance under equivalence classes~\cite{li2011connection}; \\
    \SE follows from \Mz. \\
    We remark that \Mo and \TE do not hold for $n>2$ (counterexamples can be found empirically). 

\subsubsection*{Lyu \etalx's do-calculus Redundancy ($\Ido$)}
    \SR holds by construction (see Remark~\ref{rem:DO-SR}); \\
    \Sz is satisfied by definition~\cite{lyu2024explicit}; \\
    \Mz does not hold, as \SE fails: see Prop.~\ref{prop:DO-noSE}; \\
    \GP is satisfied by definition~\cite{lyu2024explicit}; \\
    \LPz is violated, as it only holds when $H(Y|X_1,X_2)=0$ (Remark 1 in Ref.~\cite{lyu2024explicit}); \\
    \ID holds: see Prop.~\ref{prop:DO-ID}; \\
    \IID is satisfied, since \ID holds; \\
    \TM is violated as observed from empirical tests; \\
    \TC fails as a consequence of \GP and violation of \TM (see Prop.~\ref{prop:TC-GP--TM}); \\
    \So is violated due to the clear asymmetry between sources and target; \\
    \Mo does not hold, since \Mz fails; \\
    \LPo does not hold, as \Ido decomposition is defined only for $n=2$ sources; \\
    \BP is violated, since $X_1=X_2$ gives non-zero $I_\partial(X_1;Y)$, or equivalently as \SE fails (see Theo.~\ref{theo:BP--TE-SE}); \\
    \TE is satisfied, since $\Ido(X_1,Y;Y)=\Ido(X_1';Y)=\Ido(X_1;Y)$ (using $H(X_1'|Y)=H(X_1|Y)$ from Lemma 1 in Ref.~\cite{lyu2024explicit}); \\
    \AST is satisfied, as $\Idos=I(X';Y)$ depends only on pairwise marginals $p(X_1,Y)$ and $p(X_2,Y)$; \\
    \AD, \CO, and \DI are satisfied by definition~\cite{lyu2024explicit}; \\
    \LB does not hold: see Prop.~\ref{prop:DO-noLB}; \\
    \EI holds due to the invariance under equivalence classes~\cite{li2011connection}; \\
    \SE is violated, since $\Ido(X_1,f(X_1);Y)\ne \Ido(X_1;Y)$ (see Prop.~\ref{prop:DO-noSE}).
    
\subsubsection*{Redundancy via Auxiliary Variable ($\IRAV$)}
    \SR is satisfied: see Prop.~\ref{prop:RAV-SR}; \\
    \Sz follows directly from the definition of coinformation; \\
    \Mz follows from Prop.~\ref{prop:RAV-Mz}; \\
    \GP holds, since $\IRAV(\Xs;Y)= \max_f I(X_1;\ldots;X_n;Y;f(\Xs))\ge I(X_1;\ldots;X_n;Y;C)=0$ by choosing $f(\Xs)=C$ to be a constant; \\
    \LPz is satisfied: see Prop.~\ref{prop:RAV-LPz}; \\
    \ID holds: see Prop.~\ref{prop:RAV-ID}; \\
    \IID is satisfied as a consequence of \ID; \\
    \TM is violated as observed from empirical checks; \\
    \TC fails as a consequence of \GP and violation of \TM (see Prop.~\ref{prop:TC-GP--TM}); \\
    \So is violated due to the clear asymmetry between sources and targets in the argument of $f$; \\
    \Mo follows from \Mz and \TE (see Theo.~\ref{theo:M0-TE--M1}); \\
    \LPo does not hold as observed from empirical tests; \\
    \BP is violated as observed from empirical checks; \\
    \TE holds (see Prop.~\ref{prop:RAV-TE}); \\
    \AST is violated, since $\IRAV(\Xs;Y)=\max_f I(X_1;\ldots;X_n;Y;f(\Xs))$ also depends on $p(\Xs)$; \\ 
    \AD does not hold, due to the presence of the maximum and to the function $f$, which might correlate independent sources; \\
    \CO is violated, as the $\arg \max_f$ can change abruptly for small perturbations; \\
    \DI does not hold as \CO fails; \\
    \LB follows from Prop.~\ref{prop:M0-SR--LB}; \\
    \EI is satisfied due to invariance of entropies under equivalence classes~\cite{li2011connection}; \\
    \SE follows from \Mz. \\
    We remark that \Mz, \Mo, \TE, and \LB do not hold for $n>2$ (empirical counterexamples exist).

\subsection{Proofs}

\begin{proposition} \label{prop:Union-LP}
    Consider a system $\mathcal{S}$ with sources $\Xs$ and target $Y$. If for such a system the union information satisfies $I_\cup(\Xs;Y)>I(\Xs;Y)$, then any PID performed on $S$ violates \LP.
\end{proposition}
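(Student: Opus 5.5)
The plan is to express the excess $I_\cup - I$ as minus a sum of PID atoms, using the lattice equations together with the inclusion–exclusion principle (IEP). First fix how the union information is obtained from the redundancy lattice. Under the IEP, the union information of the sources is the inclusion–exclusion combination of redundancies:
\begin{equation}
I_\cup(\Xs;Y)=\sum_{\emptyset\neq S\subseteq\{1,\ldots,n\}}(-1)^{|S|-1}\,I_\cap(\{X_i\}_{i\in S};Y).
\end{equation}
Substituting Eq.~\eqref{eq:pid_lattice_defeq} for each $I_\cap$ turns the right-hand side into a sum over atoms. Each atom $I_\partial(\beta;Y)$ gets coefficient $\sum_{S}(-1)^{|S|-1}$, where $S$ ranges over nonempty sets with $\beta\preceq\{\{i\}\}_{i\in S}$. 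This condition means that every $i\in S$ lies in some element of $\beta$ that is a singleton. Let $T_\beta$ be the set of indices appearing as singletons in $\beta$. The coefficient is then $1-(1-1)^{|T_\beta|}$, which is $1$ if $T_\beta\neq\emptyset$ and $0$ otherwise.

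This gives
\begin{equation}
I_\cup(\Xs;Y)=\sum_{\beta:\,T_\beta\neq\emptyset}I_\partial(\beta;Y).
\end{equation}
In words, $I_\cup$ is the sum of all atoms lying below at least one single-source node $\{\{i\}\}$.

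Next, use \SR and Eq.~\eqref{eq:pid_lattice_defeq} at the top node $\{\{1,\ldots,n\}\}$. Every $\beta\in\mathcal{A}$ lies below the top node, so
\begin{equation}
I(\Xs;Y)=\sum_{\beta\in\mathcal{A}}I_\partial(\beta;Y).
\end{equation}
Subtracting the two expressions gives
\begin{equation}
I(\Xs;Y)-I_\cup(\Xs;Y)=\sum_{\beta:\,T_\beta=\emptyset}I_\partial(\beta;Y).
\end{equation}
The hypothesis $I_\cup>I$ makes the left-hand side strictly negative. A sum of reals that is negative must contain a negative term, so some atom $I_\partial(\beta;Y)$ is negative, and \LP is violated.

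The only delicate step is the combinatorial coefficient computation. It requires checking that the order $\preceq$ gives exactly the condition ``$S\subseteq T_\beta$''. The reason is that an element of $\beta$ contained in the singleton $\{i\}$ must equal $\{i\}$ itself. For $n=2$ a simpler route is available: $I_\cup=I(X_1;Y)+I(X_2;Y)-I_\cap$, and $I-I_\cup$ is exactly the synergy atom $I_\partial(X_1X_2;Y)$, so this case can be checked directly. It is also worth noting in the write-up that if $I_\cup$ is defined independently rather than through the IEP, the statement needs the IEP as an assumption. This matches the paper's earlier remark that $I_\cup>I$ together with the IEP forces negative atoms.
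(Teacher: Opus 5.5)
Your proposal is correct and follows the same route as the paper. You expand $I_\cup$ by inclusion--exclusion over the singleton redundancy nodes and substitute the lattice sums, so that $I_\cup$ becomes the sum of atoms below at least one single-source node; comparing with $I(X_1,\ldots,X_n;Y)$ at the top node then leaves a residual sum of atoms that must be negative. Your explicit coefficient computation $\sum_{\emptyset\neq S\subseteq T_\beta}(-1)^{|S|-1}=1-(1-1)^{|T_\beta|}$ makes rigorous the step the paper only asserts, namely that the alternating signs reduce every multiplicity to one.
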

\begin{proof}
    By definition of union information and PID lattice (Eq.~\eqref{eq:pid_lattice_defeq}), we have that:
    \begin{equation}
    \begin{split}
        I_\cup(\Xs;Y) & = \sum_{J\subseteq\{1,\ldots,n\}} (-1)^{|J|-1} \Ic(X_{J_1}, \ldots, X_{J_{|J|}};Y) \\
        & = \sum_{J\subseteq\{1,\ldots,n\}} (-1)^{|J|-1}\sum_{\alpha\preceq \{X_{J_1}, \ldots, X_{J_{|J|}}\}} I_\partial(\alpha;Y) \\
        & = \sum_{\substack{\alpha\preceq \{X_{J_i}\}\\i=1,\ldots,n}} I_\partial(\alpha;Y) \,,
    \end{split}
    \end{equation}
    where the last line follows from the structure of the PID lattice, as the alternating signs reduce all multiplicities of the PID atoms to 1. This can be seen explicitly by writing out the different combinations of the atoms, but can be more easily understood via the set-theoretic relationship between intersection and union information \cite{williams2010nonnegative, williams2011information}.
    Hence we obtain
    \begin{equation}
    \begin{split}
        I(\XsY) & = \sum_{\alpha\preceq{\{X_1\ldots X_n\}}} I_\partial(\alpha;Y) \\
        & =I_\cup(\XsY) + \sum_{\substack{\alpha\succ \{X_{J_i}\}\\i=1,\ldots,n}}I_\partial(\alpha;Y) \,.
    \end{split}
    \end{equation}
    Thus, if $I_\cup(\XsY) > I(\XsY)$, the last term must be negative, hence violating \LP.
\end{proof}

\begin{lemma} \label{lemma:2constr-maxent}
    The maximum entropy distribution of the system $(X_1,X_2,Y)$ with constraints $p(X_i,Y)=q(X_i,Y)\,\,i=1,2$ is
    \begin{equation}
        q(x_1,x_2,y) = p(x_1|y)p(x_2|y)p(y) = \frac{p(x_1,y)p(x_2,y)}{p(y)}
    \end{equation}
\end{lemma}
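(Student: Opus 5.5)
The natural route is to treat this as a convex optimisation problem: maximise the concave functional $H_q(X_1,X_2,Y)$ over the polytope $\Delta_P=\{q : q(x_i,y)=p(x_i,y),\ i=1,2\}$. We then verify that the candidate $q^*(x_1,x_2,y)=p(x_1|y)p(x_2|y)p(y)$ is the maximiser.

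The first step is feasibility. Summing $q^*$ over $x_2$ gives $p(x_1|y)p(y)=p(x_1,y)$, and the case of $x_1$ is symmetric. So $q^*\in\Delta_P$.

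The second step, and the cleanest route, is a chain-rule/conditional-independence argument rather than Lagrange multipliers. For any $q\in\Delta_P$ we write
\begin{equation}
H_q(X_1,X_2,Y)=H_q(Y)+H_q(X_1|Y)+H_q(X_2|Y)-I_q(X_1;X_2|Y).
\end{equation}
The first three terms depend only on the pairwise marginals $q(x_1,y)=p(x_1,y)$ and $q(x_2,y)=p(x_2,y)$. They are therefore constant on $\Delta_P$, equal to $H_p(Y)+H_p(X_1|Y)+H_p(X_2|Y)$. The last term satisfies $I_q(X_1;X_2|Y)\ge 0$, with equality if and only if $X_1\ind X_2\mid Y$ under $q$.

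It follows that the entropy is maximised exactly by those $q\in\Delta_P$ under which $X_1$ and $X_2$ are conditionally independent given $Y$. Such a $q$ is uniquely $q(x_1,x_2,y)=q(y)q(x_1|y)q(x_2|y)=p(y)p(x_1|y)p(x_2|y)$, because all of these factors are fixed by the constraints. This establishes that $q^*$ is the unique maximiser.

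The final step is to rewrite $q^*$ as $p(x_1,y)p(x_2,y)/p(y)$. This form is valid on the support $p(y)>0$, and we set $q^*=0$ wherever $p(y)=0$.

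The only mild obstacle is the handling of zero-probability outcomes. Conditionals $p(x_i|y)$ are undefined when $p(y)=0$, but such $y$ carry no mass under any $q\in\Delta_P$, since $q(y)=p(y)$. They can therefore be dropped from all sums, and this is the convention under which the stated formula is to be read.
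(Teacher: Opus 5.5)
Your proof is correct, but it takes a different route from the paper's. The paper forms a Lagrangian with multipliers for normalisation and for the two marginal constraints. From the stationarity condition it reads off that the optimum must factorise as $q\propto A(x_2,y)B(x_1,y)$, then imposes the constraints to fix $A$ and $B$, obtaining $p(x_1,y)p(x_2,y)/Z(y)$ with $Z(y)=p(y)$. You instead use the identity $H_q(X_1,X_2,Y)=H_q(Y)+H_q(X_1|Y)+H_q(X_2|Y)-I_q(X_1;X_2|Y)$. Every term except the conditional mutual information is fixed on the constraint set, so the maximisers are exactly the feasible distributions under which $X_1$ and $X_2$ are conditionally independent given $Y$, and there is precisely one such distribution.

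Your argument gives global optimality and uniqueness in one step, and it handles outcomes with $p(x_i,y)=0$ or $p(y)=0$ cleanly. The Lagrangian argument only locates a stationary point and never checks that it is a maximum. To conclude that, it tacitly needs strict concavity of entropy plus an optimum in the interior of the simplex, and its logarithms and exponential parametrisation break down whenever the distribution lacks full support. Your route also makes explicit the fact $I_Q(X_1;X_2|Y)=0$, which the paper relies on later in its proofs concerning \IMES and \IDEP.

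The paper's approach, in turn, exposes the product (exponential-family) structure of maximum-entropy solutions. That is the technique that carries over to richer constraint sets, such as the extra $p(x_1,x_2)$ constraint used in \ICCS and \IDEP, where no chain-rule argument produces a closed form.
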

\begin{proof}
    We prove this explicitly via Lagrange multipliers. 
    Consider the Lagrangian
    \begin{equation}
        \mathcal{L} = \sum q(x_1,x_2,y) \log q(x_1,x_2,y) - \lambda \left(\sum q(x_1,x_2,y)-1\right) - \eta\left(\sum_{x_1} q(x_2,y)-p(x_2,y)\right) - \delta\left(\sum_{x_2} q(x_1,y)-p(x_1,y)\right) \,,
    \end{equation}
    differentiating we get 
    \begin{equation}
    \begin{aligned}
        \frac{\partial\mathcal{L}}{\partial q_i} & = \log q_i + 1-\lambda-\eta_{x_2 y}-\delta_{x_1 y} = 0 \\
        q_i & \propto \exp(\eta_{x_2 y}) \exp(\delta_{x_1 y} ) \equiv A(x_2,y) B(x_1, y)\,.
    \end{aligned}
    \end{equation}    
    Thus, the maximising joint distribution must have the form
    \begin{equation}
        q(x_1,x_2,y) = C\,A(x_2,y)B(x_1,y),
    \end{equation}
    for some normalisation constant $C$. We now impose the constraints to determine $A(x_2,y)$ and $B(x_1,y)$. 
    Summing over $x_1$ gives
    \begin{equation}
    \begin{aligned}
        \sum_{x_1} q(x_1,x_2,y) & = CA(x_2,y)\sum_{x_1} B(x_1,y) \\
        & q(x_2,y) = p(x_2,y),
    \end{aligned}
    \end{equation}
    and hence
    \begin{equation}
        CA(x_2,y) = \frac{p(x_2,y)}{\sum_{x_1} B(x_1,y)}. \label{eq:maxentA}
    \end{equation}
    Similarly, summing over $x_2$ gives
    \begin{equation}
    \begin{aligned}
        \sum_{x_2} q(x_1,x_2,y) & = CB(x_1,y)\sum_{x_2} A(x_2,y) \\
        & = q(x_1,y) = p(x_1,y),
    \end{aligned}
    \end{equation}
    and hence
    \begin{equation}
        CB(x_1,y) = \frac{p(x_1,y)}{\sum_{x_2} A(x_2,y)}. \label{eq:maxentB}
    \end{equation}
    Multiplying \eqref{eq:maxentA} and \eqref{eq:maxentB} we obtain
    \begin{equation}
        C^2 A(x_2,y) B(x_1,y) = \frac{p(x_1,y)p(x_2,y)}{\Big(\sum_{x_1} B(x_1,y)\Big)\Big(\sum_{x_2} A(x_2,y)\Big)}.
    \end{equation}
    But $q(x_1,x_2,y) = CA(x_2,y)B(x_1,y)$, so
    \begin{equation}
        q(x_1,x_2,y) = \frac{p(x_1,y)p(x_2,y)}{Z(y)},
    \end{equation}
    where $Z(y)$ is the normalisation factor
    \begin{equation}
        Z(y) = C\Big(\sum_{x_1} B(x_1,y)\Big)\Big(\sum_{x_2} A(x_2,y)\Big).
    \end{equation}
    Finally, summing over $x_1,x_2$ we must have $\sum_{x_1x_2} q(x_1,x_2,y) = p(y)$, hence $Z(y)=p(y)$ and therefore
    \begin{equation}
        q(x_1,x_2,y) = \frac{p(x_1,y)p(x_2,y)}{p(y)} = p(x_1|y)p(x_2|y)p(y).
    \end{equation}
\end{proof}

\subsubsection{Proofs regarding \IMMI}

\begin{proposition} \label{prop:MMI-LPo}
    $\IMMI$ satisfies \LPo.
\end{proposition}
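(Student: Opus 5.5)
The plan is to repeat, at the level of global mutual information, the argument Williams and Beer used to prove \LP for \Imin. Write $g(A) = I(X_A;Y)$ for a non-empty set of sources $A$, where $X_A$ is the joint variable. On a general antichain $\alpha\in\mathcal{A}(\Xs)$, \IMMI is
\begin{equation}
    f(\alpha) \equiv \IMMI(\alpha;Y) = \min_{A\in\alpha} g(A).
\end{equation}
I would prove the closed form
\begin{equation}
    I_\partial(\alpha;Y) = f(\alpha) - \max_{\beta\in\alpha^-} f(\beta),
\end{equation}
where $\alpha^-$ is the set of children of $\alpha$, i.e.\ the elements covered by $\alpha$ in the redundancy lattice. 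For the bottom element the atom is simply $f(\alpha)\ge0$. Non-negativity then follows once $f$ is shown to be monotone along $\preceq$.

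The first step is monotonicity. By the chain rule, $g$ is non-decreasing under inclusion: $A\subseteq B$ implies $g(A)\le g(B)$. Now take $\beta\preceq\alpha$. For every $B\in\alpha$ there is some $A\in\beta$ with $A\subseteq B$, so $g(B)\ge g(A)\ge f(\beta)$. Minimising over $B\in\alpha$ gives $f(\alpha)\ge f(\beta)$.

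The second step is the meet property. The meet $\beta\wedge\gamma$ in the redundancy lattice is the antichain of minimal elements of $\beta\cup\gamma$. Every element of $\beta\cup\gamma$ contains a minimal one, and $g$ is monotone, so the minimum of $g$ over $\beta\cup\gamma$ is attained on the minimal elements. Hence
\begin{equation}
    f(\beta\wedge\gamma) = \min\bigl(f(\beta),f(\gamma)\bigr),
\end{equation}
so $f$ turns lattice meets into minima.

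The third step computes the atom from Eq.~\eqref{eq:pid_lattice_defeq}. The recursion gives $I_\partial(\alpha;Y) = f(\alpha) - \sum_{\beta\prec\alpha} I_\partial(\beta;Y)$. The strict down-set $\{\beta:\beta\prec\alpha\}$ is the union of the down-sets ${\downarrow}\beta_1,\ldots,{\downarrow}\beta_k$ of the children $\beta_j\in\alpha^-$. For any subset $S$ of children, the intersection of their down-sets is the down-set of their meet $\bigwedge_{j\in S}\beta_j$. Inclusion--exclusion over this union, together with $f(\gamma)=\sum_{\delta\preceq\gamma}I_\partial(\delta;Y)$, gives
\begin{equation}
    \sum_{\beta\prec\alpha} I_\partial(\beta;Y) = \sum_{\emptyset\neq S\subseteq\alpha^-}(-1)^{|S|-1} f\Bigl(\bigwedge_{j\in S}\beta_j\Bigr) = \sum_{\emptyset\neq S\subseteq\alpha^-}(-1)^{|S|-1}\min_{j\in S} f(\beta_j) = \max_{j} f(\beta_j),
\end{equation}
where the last equality is the max--min identity for real numbers. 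Combining this with the monotonicity step gives $I_\partial(\alpha;Y)\ge0$ for every $\alpha$ and every number of sources $n$, which is \LPo (and in fact \LP).

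The main obstacle is the third step: making rigorous that the strict down-set is exactly the union of the children's down-sets, and that intersections of principal down-sets are principal down-sets of meets. These are standard facts for the finite distributive lattice of antichains (it is isomorphic to a lattice of down-sets, or free distributive lattice). If I want to avoid quoting this structure, I would instead cite the identical argument in Williams and Beer for \Imin. Their proof only uses that the redundancy is a minimum over $A\in\alpha$ of a set function that is monotone under inclusion. Pointwise specific information is such a function, and so is $g(A)=I(X_A;Y)$, so their proof transfers verbatim with $y$ dropped. No further properties of Gaussians or of the distribution are needed.
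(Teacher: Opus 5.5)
Your proof is correct and follows the same route as the paper's. Both use Williams and Beer's inclusion--exclusion over the children $\alpha^-$, reduce $\IMMI$ on a meet $\bigwedge\mathcal{B}$ to $\min_{\beta\in\mathcal{B}}\min_{B\in\beta}I(B;Y)$, apply the max--min identity to get $I_\partial(\alpha;Y)=\min_{A\in\alpha}I(A;Y)-\max_{\beta\in\alpha^-}\min_{B\in\beta}I(B;Y)$, and conclude by monotonicity of $I(A;Y)$ under inclusion. The only difference is that you derive the inclusion--exclusion formula from down-sets, where the paper cites Theorem 3 and Lemma 2 of Williams and Beer; the down-set facts you flag as the main obstacle hold in any finite lattice, so no distributivity is needed.
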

\begin{proof}
    We propose a proof similar to that of $\Imin$ (Theo.4 in Ref.~\cite{williams2010nonnegative}). Using Theo.3 and Lemma 2 of the same reference,  we have:
    \begin{equation} \label{eq:MMI-LPo}
        \begin{aligned} 
            I_\partial(\alpha;Y) & = \IMMI(\alpha;Y) - \sum_{k=1}^{|\alpha^-|} (-1)^{k-1} \sum_{\substack{\mathcal{B}\subseteq\alpha^- \\ |\mathcal{B}|=k}} \IMMI(\bigwedge \mathcal{B};Y) \\
            & = \IMMI(\alpha;Y) - \sum_{k=1}^{|\alpha^-|} (-1)^{k-1} \sum_{\substack{\mathcal{B}\subseteq\alpha^- \\ |\mathcal{B}|=k}} \min_{B \in \bigwedge \mathcal{B}}I(B;Y) \\
            & = \IMMI(\alpha;Y) - \sum_{k=1}^{|\alpha^-|} (-1)^{k-1} \sum_{\substack{\mathcal{B}\subseteq\alpha^- \\ |\mathcal{B}|=k}} \min_{\beta\in\mathcal{B}}\min_{B \in \beta}I(B;Y) \\
            & = \IMMI(\alpha;Y) - \max_{\beta\in\alpha^-}\min_{B \in \beta}I(B;Y) \\
            & = \min_{A\in\alpha} I(A;Y) - \max_{\beta\in\alpha^-}\min_{B \in \beta}I(B;Y) \\
            & \geq 0 \,.
        \end{aligned}
    \end{equation}
    The last line follows from Theo. 2 of Ref.~\cite{williams2010nonnegative}, i.e.\ $I(B;Y)\leq I(A;Y)$ for $B\subseteq A$, and the ordering of the antichain lattice ${\beta\preceq\alpha \iff\forall A\in \alpha \;\;\exists B\in\beta \;|\; B\subseteq A}$. 
    In fact, considering any $\beta\in\alpha^-$, we have that by definition of $\alpha^-$ ${\beta\preceq \alpha}$, thus defining $A^*\equiv\arg\min_{A\in\alpha} I(A;Y)$, there exists $B^*\in\beta \;|\;B^*\subseteq A^*$. Hence, we obtain:
    \begin{equation}
        I(A^*;Y) = \min_{A\in\alpha} I(A;Y) \ge I(B^*;Y)\ge\min_{B \in \beta}I(B;Y) \,.
    \end{equation}
    Since this holds $\forall \beta\in\alpha^-$, the last line of Eq.\eqref{eq:MMI-LPo} follows.
\end{proof}

\subsubsection{Proofs regarding \IRR}

\begin{proposition} \label{prop:RR-GP}
    $\IRR$ satisfies \GP.
\end{proposition}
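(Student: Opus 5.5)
The proof is a direct case analysis on the definition of $\IRR$. Write $R_{\min}=\max(0,I(X_1;X_2;Y))$, $R_{\text{MMI}}=\min(I(X_1;Y),I(X_2;Y))$ and $I_s=I(X_1;X_2)/\min(H(X_1),H(X_2))$. Then
\begin{equation}
\IRR(X_1,X_2;Y) = (1-I_s)\,R_{\min} + I_s\,R_{\text{MMI}}\,,
\end{equation}
a combination of $R_{\min}$ and $R_{\text{MMI}}$ with weights $1-I_s$ and $I_s$. The plan is to show that $0\le I_s\le 1$, so the combination is convex, and that both $R_{\min}$ and $R_{\text{MMI}}$ are non-negative. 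Non-negativity of $\IRR$ then follows at once.

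The steps are as follows.

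First, $R_{\min}\ge 0$ holds trivially, because it is a maximum with $0$.

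Second, $R_{\text{MMI}}\ge0$ holds because mutual information is non-negative. Equivalently, this is \GP for $\IMMI$, which follows from \Mz and \SR via Prop.~\ref{prop:Mz-SR--GP}.

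Third, I bound $I_s$. Non-negativity of mutual information gives $I_s\ge0$. For the upper bound, $I(X_1;X_2)\le\min(H(X_1),H(X_2))$ because $I(X_1;X_2)=H(X_i)-H(X_i|X_j)\le H(X_i)$ for each $i$, so $I_s\le1$.

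Combining these three facts, $\IRR$ is a convex combination of two non-negative numbers and is therefore non-negative. The comparison between $R_{\min}$ and $R_{\text{MMI}}$, which holds since $I(X_1;X_2;Y)=I(X_1;Y)-I(X_1;Y|X_2)\le I(X_1;Y)$ and symmetrically for $X_2$, is not needed here; it would matter only for an upper bound.

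The only delicate point is the degenerate case in which $\min(H(X_1),H(X_2))=0$, where $I_s$ is undefined. If, say, $H(X_1)=0$, then $X_1$ is constant, so $I(X_1;X_2)=0$ and also $R_{\text{MMI}}=0$ and $I(X_1;X_2;Y)=0$. I will adopt the natural convention $I_s=0$ (the $0/0$ limit), which gives $\IRR=R_{\min}=0\ge0$. With this convention the convex-combination argument goes through verbatim, and the statement holds in all cases.
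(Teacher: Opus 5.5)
Your proof is correct, but it rests on a different key inequality from the paper's. The paper keeps the form $\IRR = R_{\min} + I_s\,(R_{\text{MMI}}-R_{\min})$ and shows the increment over $R_{\min}$ is non-negative. For this it uses only $I_s\ge 0$ together with the ordering $R_{\min}\le R_{\text{MMI}}$. Assuming w.l.o.g.\ $I(X_1;Y)\le I(X_2;Y)$, that ordering comes from $I(X_1;X_2;Y)=I(X_1;Y)-I(X_1;Y|X_2)\le I(X_1;Y)$, which is exactly the comparison you set aside. You instead use the upper bound $I_s\le 1$ to read $\IRR$ as a convex combination of two non-negative numbers, so you never need the ordering, only $R_{\text{MMI}}\ge 0$.

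The two routes are robust in different directions. The paper's argument uses only the sign of $I_s$, so it would survive any non-negative reweighting of the source-dependence factor. It also gives the slightly stronger bound $\IRR\ge R_{\min}$, which is reused for the synergy atom in Prop.~\ref{prop:RR-LPz}. Your argument would survive replacing $R_{\min}$ by any non-negative quantity. The two ingredients are in fact complementary: the unique-information part of Prop.~\ref{prop:RR-LPz} needs both $1-I_s\ge 0$ and $R_{\text{MMI}}-R_{\min}\ge 0$, i.e.\ $R_{\min}\le\IRR\le R_{\text{MMI}}$.

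Your treatment of the degenerate case $\min(H(X_1),H(X_2))=0$, which the paper omits, is fine. The choice of convention there does not actually matter: in that case $R_{\min}=R_{\text{MMI}}=0$, so $\IRR=0$ for any finite value assigned to $I_s$.
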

\begin{proof}
    Without loss of generality, we assume $I(X_1;Y)\leq I(X_2;Y)$. Then we have
    \begin{equation}
        0\leq R_{\min} = \max (0, I(X_1;X_2;Y)) = \max (0, I(X_1;Y)-I(X_1;Y|X_2))\leq\max (0, I(X_1;Y)) \,.
    \end{equation}
    Hence $R_{\text{MMI}}-R_{\min}\ge0$, and since $I_s(X_1,X_2)\ge0$, it follows
    \begin{equation}
        \IRRs = R_{\min} + I_s(X_1,X_2)\bigl(R_\text{MMI}-R_{\min}\bigr)\ge 0 \,.
    \end{equation}
\end{proof}

\begin{proposition} \label{prop:RR-LPz}
    $\IRR$ satisfies \LPz.
\end{proposition}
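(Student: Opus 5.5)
Since \GP is already established (Prop.~\ref{prop:RR-GP}), the redundancy atom $I_\partial(X_1,X_2;Y)=\IRRs$ is non-negative, and only the two unique atoms and the synergy atom remain. The key observation is that $\IRR$ is a convex interpolation between $R_{\min}$ and $R_{\text{MMI}}$. My plan is to sandwich it as
\begin{equation}
    R_{\min}(X_1,X_2,Y)\;\le\;\IRRs\;\le\;R_{\text{MMI}}(X_1,X_2,Y)\,.
\end{equation}
I then read off each atom's sign from the corresponding side of this sandwich, using the bivariate lattice equations \eqref{eq:PID2_1}--\eqref{eq:PID2_4}.

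First I verify that the weight satisfies $0\le I_s(X_1,X_2)\le 1$. This holds because $0\le I(X_1;X_2)\le\min(H(X_1),H(X_2))$. The degenerate case where some $H(X_i)=0$ gives $I(X_1;X_2)=0$, so the weight term can be set to zero by convention, and this has to be mentioned. Next I reuse the step from the proof of Prop.~\ref{prop:RR-GP}, which shows $0\le R_{\min}\le R_{\text{MMI}}$. That proof gave $R_{\min}\le\max(0,I(X_1;Y))$ for the smaller of the two mutual informations, and the argument applies symmetrically. With $I_s\in[0,1]$ and $R_{\text{MMI}}-R_{\min}\ge0$, writing $\IRR=(1-I_s)R_{\min}+I_sR_{\text{MMI}}$ gives the sandwich immediately.

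The unique atoms follow from the upper bound. Indeed,
\begin{equation}
    I_\partial(X_i;Y)=I(X_i;Y)-\IRRs\ge I(X_i;Y)-R_{\text{MMI}}\ge 0
\end{equation}
for each $i$, since $R_{\text{MMI}}=\min_j I(X_j;Y)$. The synergy atom follows from the lower bound together with $R_{\min}\ge I(X_1;X_2;Y)$:
\begin{equation}
    I_\partial(X_1X_2;Y)=I(X_1,X_2;Y)-I(X_1;Y)-I(X_2;Y)+\IRRs\ge I(X_1,X_2;Y)-I(X_1;Y)-I(X_2;Y)+I(X_1;X_2;Y)=0\,,
\end{equation}
where the last equality uses the three-argument coinformation identity from the Appendix.

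The main obstacle is essentially bookkeeping: making sure $R_{\min}\le R_{\text{MMI}}$ holds with respect to the \emph{minimum} of the two source informations, not just the one chosen WLOG. I would handle this by noting that $I(X_1;X_2;Y)\le I(X_i;Y)$ for both $i$, because $I(X_i;Y\mid X_j)\ge0$. Hence $\max(0,I(X_1;X_2;Y))\le\min_i I(X_i;Y)$, using that both mutual informations are non-negative.
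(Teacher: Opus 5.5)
Your proof is correct and takes essentially the paper's route. You bound the unique atoms using $I_\cap^{\text{RR}}\le R_{\text{MMI}}$, which the paper writes as $(R_{\text{MMI}}-R_{\min})(1-I_s)\ge 0$, and the synergy atom using $I_\cap^{\text{RR}}\ge R_{\min}\ge I(X_1;X_2;Y)$. Your convex-combination sandwich also makes explicit two facts the paper uses only implicitly: $I_s\le 1$, and $R_{\min}\le \min_i I(X_i;Y)$ for both sources rather than only the one chosen without loss of generality.
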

\begin{proof}
    Since we have \GP, we just need to check unique information and synergy terms. For the former, we have
    \begin{equation}
    \begin{aligned}
        \Ip^\text{RR}(X_i;Y)& = I(X_i;Y)-\IRRs \\
        & \ge R_\text{MMI} - \IRRs \\
        & = R_\text{MMI} - R_{\min} - I_s(X_1,X_2)\bigl(R_\text{MMI}-R_{\min}\bigr) \\
        & = (R_\text{MMI} - R_{\min})(1-I_s(X_1,X_2)) \\
        & \ge 0 \,.
    \end{aligned}
    \end{equation}
    For synergy, we have:
    \begin{equation}
        \begin{aligned}
            \Ip^\text{RR}(X_1X_2;Y) &= I(X_1,X_2;Y)-I(X_1;Y)-I(X_2;Y) + \IRRs \\
            & = -I(X_1;X_2;Y)+\IRRs \\
            & \ge -I(X_1;X_2;Y)+R_{\min} \\
            & = \begin{cases}
                -I(X_1;X_2;Y) + I(X_1;X_2;Y)  &\text{if }\, I(X_1;X_2;X_2)\ge0 \\
                -I(X_1;X_2;Y) & \text{if }\, I(X_1;X_2;X_2)\le0
            \end{cases} \\
            & \ge 0 \,,
        \end{aligned}
    \end{equation}
    where we used that $ I_s(X_1,X_2)\bigl(R_\text{MMI}-R_{\min}\bigr)\ge 0$ (see Prop.~\ref{prop:RR-GP}).
\end{proof}

\begin{proposition} \label{prop:RR-noBP}
    $\IRR$ does not satisfy \BP.
\end{proposition}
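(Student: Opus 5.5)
The plan is to violate the ``if'' direction of \BP: I will exhibit a system where $X_1\preceq_Y X_2$ but $\Ip^\text{RR}(X_1;Y)>0$. Deterministic garblings are useless for this. If $X_1=f(X_2)$, then $I_s(X_1,X_2)=H(X_1)/\min(H(X_1),H(X_2))=1$, so $\IRR=R_\text{MMI}$ and the unique information vanishes. Markov garblings $X_1 - X_2 - Y$ fail too: there $I(X_1;Y|X_2)=0$, so $R_{\min}=I(X_1;Y)=R_\text{MMI}$. I therefore need a Blackwell garbling in which $X_1$ and $X_2$ are \emph{conditionally independent given $Y$}, rather than $X_1$ being a noisy copy of $X_2$.

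The construction is as follows. Let $Y\sim Bern(1/2)$. Let $X_2$ be $Y$ passed through a binary symmetric channel with flip probability $a$, and $X_1$ be $Y$ passed through an independent binary symmetric channel with flip probability $b$, where $0<a<b<1/2$. Since $\mathrm{BSC}(b)=\mathrm{BSC}(c)\circ\mathrm{BSC}(a)$ with $c=(b-a)/(1-2a)\in(0,1/2)$, Eq.~\eqref{eq:def_blackwell} holds with $k=\mathrm{BSC}(c)$. Hence $X_1\preceq_Y X_2$, and \BP would require $\Ip^\text{RR}(X_1;Y)=0$.

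The computation of $\IRR$ then proceeds in three steps.
\begin{itemize}
\item[(i)] By the data-processing inequality, $R_\text{MMI}=I(X_1;Y)=1-h(b)$, where $h$ is the binary entropy.
\item[(ii)] Conditional independence gives $I(X_1;X_2|Y)=0$, so the coinformation is $I(X_1;X_2;Y)=I(X_1;X_2)=1-h(a\star b)$, with $a\star b=a(1-b)+b(1-a)$. This is positive, and since $a\star b>b$ it is strictly smaller than $1-h(b)$. Hence $R_{\min}=I(X_1;X_2)<R_\text{MMI}$.
\item[(iii)] Since $H(X_1)=H(X_2)=1$, we get $I_s(X_1,X_2)=I(X_1;X_2)<1$.
\end{itemize}
Combining these as in the proof of Prop.~\ref{prop:RR-LPz} gives
\begin{equation}
\Ip^\text{RR}(X_1;Y)=(R_\text{MMI}-R_{\min})(1-I_s(X_1,X_2))>0 ,
\end{equation}
because both factors are strictly positive. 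This contradicts \BP.

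The only delicate step is choosing a garbling that is not Markov through $X_2$, so that $R_{\min}$ drops strictly below $R_\text{MMI}$; the rest is routine. As a sanity check I would fix concrete values, e.g.\ $a=0.1$, $b=0.2$, and evaluate the atom numerically to confirm it is strictly positive.
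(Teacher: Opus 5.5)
Your proof is correct, but it uses a different construction from the paper's. Both arguments attack the $(\Leftarrow)$ direction of \BP.

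The paper works in the regime $I(X_1;X_2;Y)\le 0$. There $R_{\min}=0$, so $\IRR=I_s(X_1,X_2)\,I(X_1;Y)$ and $\Ip^\text{RR}(X_1;Y)=(1-I_s(X_1,X_2))\,I(X_1;Y)$. It then only asserts that this ``can be non-zero''. It never exhibits a distribution that is Blackwell-ordered, has non-positive coinformation, and has $I_s<1$ with $I(X_1;Y)>0$. Its alternative remark that $\IRR\propto\IMMI$ is heuristic, since the factor $I_s$ depends on the distribution.

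You work in the opposite regime. Conditionally independent binary symmetric channels make the coinformation $I(X_1;X_2)$ lie strictly between $0$ and $R_\text{MMI}$, and every quantity is in closed form. The counterexample is therefore fully explicit: with $a=0.1$, $b=0.2$ one gets $\Ip^\text{RR}(X_1;Y)\approx 0.087$ bits.

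Your observation that Markov garblings $X_1 - X_2 - Y$ cannot work also exposes something the paper leaves implicit. Under such a chain the coinformation equals $I(X_1;Y)\ge 0$, so the paper's regime also needs a non-Markov coupling. One example: binary symmetric marginals with flip probabilities $0.2$ for $X_2$ and $0.3$ for $X_1$, with noises that are never both $1$. This makes $X_1\ind X_2$, hence $\IRR=0<I(X_1;Y)$ by Prop.~\ref{prop:RR-noMechRed}.

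The paper's route is shorter and isolates the factor $1-I_s$; yours actually supplies a checkable counterexample.
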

\begin{proof}
    We prove this by violating the implication $(\Longleftarrow)$ from the definition of Eq.~\eqref{ax:BP}. Assuming $I(X_1;X_2;Y)\le0$, we have that $X_1\preceq_Y X_2$ implies $\IRR(X_1,X_2;Y)=I_s(X_1,X_2)\,I(X_1;Y)\leq I(X_1;Y)$ and hence $\Ip^\text{RR}(X_1;Y)$ can be non-zero.

    Alternatively, just note that for $I(X_1;X_2;Y)\le0$ we have $\IRR\propto \IMMI$, and \IMMI does not satisfy \BP.
\end{proof}

\begin{proposition}\label{prop:RR-TE}
    $\IRR$ satisfies \TE.
\end{proposition}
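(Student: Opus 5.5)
We need to show $\IRR(X_1,Y;Y)=\IRR(X_1;Y)=I(X_1;Y)$, where the last equality is \SR. The plan is to substitute $X_2=Y$ into the three ingredients of $\IRR$ and show that the interpolation collapses to $I(X_1;Y)$.

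First, the coinformation. We have
\begin{equation}
I(X_1;Y;Y)=I(X_1;Y)-I(X_1;Y\mid Y)=I(X_1;Y),
\end{equation}
since conditioning on $Y$ removes all uncertainty about $Y$. This quantity is non-negative, so $R_{\min}=\max(0,I(X_1;Y))=I(X_1;Y)$.

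Second, the MMI term. Because $I(X_1;Y)\le H(Y)=I(Y;Y)$, we get $R_{\text{MMI}}=\min(I(X_1;Y),I(Y;Y))=I(X_1;Y)$.

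It follows that $R_{\text{MMI}}-R_{\min}=0$. The rescaling factor $I_s(X_1,Y)$ is therefore multiplied by zero, so its value does not matter. This gives
\begin{equation}
\IRR(X_1,Y;Y)=R_{\min}=I(X_1;Y)=\IRR(X_1;Y).
\end{equation}

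The only delicate point is that $I_s(X_1,Y)$ is undefined when $\min(H(X_1),H(Y))=0$. In that degenerate case one of the variables is constant, so $I(X_1;Y)=0$ and every term vanishes. Adopting the convention $I_s=0$ (or any finite value) handles it, since the factor multiplies zero anyway.

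Since $\IRR$ is only defined for two sources, \TE only needs to be checked in this $n=1\to2$ form, together with the form in which the original PID has two sources. For the latter I would note that \TE is required only where the decomposition is defined, so the bivariate identity above suffices. I expect no real obstacle beyond this degenerate-normalisation remark.
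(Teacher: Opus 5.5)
Your proof is correct and follows the paper's argument: setting $X_2=Y$ gives $R_{\min}=R_{\text{MMI}}=I(X_1;Y)$, so the $I_s$-weighted term vanishes and $\IRR(X_1,Y;Y)=I(X_1;Y)$. Your remark about $I_s$ being undefined when $\min(H(X_1),H(Y))=0$ is a minor extra detail that the paper omits.
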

\begin{proof}
    Consider the PID with $X_2=Y$, then $R_\text{min}(X_1,Y,Y)=R_\text{MMI}(X_1,Y,Y) = I(X_1;Y)$. It follows that
    \begin{equation}
        \begin{aligned}
            \IRR(X_1,Y;Y)=R_\text{min}(X_1,Y,Y)=I(X_1;Y) \,.
        \end{aligned}
    \end{equation}
\end{proof}

\begin{proposition}\label{prop:RR-noMechRed}
    $\IRR$ does not allow for mechanistic redundancy.
\end{proposition}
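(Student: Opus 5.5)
The plan is to show that, for independent sources $X_1\ind X_2$, $\IRR(X_1,X_2;f(X_1,X_2))=0$ for every deterministic target $f$. This is what "no mechanistic redundancy" means in Theo.~\ref{theo:noGP-IID-TM-Mech}. The argument has two steps, one for each term in the definition of $\IRR$.

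First, independence gives $I(X_1;X_2)=0$. Hence the normalised source dependence satisfies $I_s(X_1,X_2)=0$. The definition then collapses to
\begin{equation}
    \IRR(X_1,X_2;Y)=R_{\min}(X_1,X_2,Y)=\max\bigl(0,I(X_1;X_2;Y)\bigr).
\end{equation}
This step needs $\min(H(X_1),H(X_2))>0$. If either source is constant, then $I(X_i;Y)=0$ for that source, and both $R_{\min}$ and $R_{\text{MMI}}$ vanish anyway. I would handle that degenerate case separately.

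Second, I will show that the coinformation is non-positive whenever the sources are independent. By the symmetric definition of coinformation,
\begin{equation}
    I(X_1;X_2;Y)=I(X_1;X_2)-I(X_1;X_2\mid Y)=-I(X_1;X_2\mid Y)\le 0.
\end{equation}
Therefore $R_{\min}=0$, and so $\IRR=0$ for any target $Y$. This holds in particular for any $Y=f(X_1,X_2)$. Note that the deterministic structure of the target is never used: independent sources always receive zero redundancy, whatever the mechanism.

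The only point needing care is this second step. It uses the symmetry of coinformation under swapping which argument plays the role of the target, which follows from its entropy expansion. With that in hand, I would conclude that the positive-coinformation part of $\IRR$ cannot capture mechanistic redundancy, because it is killed by independence. I would also remark that this is consistent with Theo.~\ref{theo:noGP-IID-TM-Mech}, since $\IRR$ satisfies \GP and \IID.
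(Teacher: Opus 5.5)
Your proof is correct and follows the same route as the paper: independence gives $I_s(X_1,X_2)=0$ and forces $R_{\min}=0$, so $\IRR(X_1,X_2;Y)=0$ for every target $Y$. The paper simply asserts $R_{\min}=0$, whereas you justify it via $I(X_1;X_2;Y)=-I(X_1;X_2\mid Y)\le 0$ and also handle the degenerate constant-source case where $I_s$ would be $0/0$; both are useful additions.
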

\begin{proof}
    We just need to prove that if $X_1\ind X_2$ then $\IRR(X_1,X_2;Y)=0$ $\forall Y$. This directly follows from the definition, since $I_s(X_1,X_2)\propto I(X_1;X_2)=0$ and $R_{\min}=0$. Thus $\IRRs=0$.  
\end{proof}

\subsubsection{Proofs regarding \IDEP}

\begin{proposition} \label{prop:DEP-TE}
    $\IDEP$ satisfies \TE.
\end{proposition}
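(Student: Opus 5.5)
We need to show that $\IDEP(X_1,Y;Y)=\IDEP(X_1;Y)=I(X_1;Y)$. Since $\IDEP$ is defined through the unique information, this is equivalent to showing $I_\partial(X_1;Y)=0$ when the second source is $X_2=Y$. By the bivariate lattice equations, $\Ic(X_1,X_2;Y)=I(X_1;Y)-I_\partial(X_1;Y)$, so a vanishing unique atom gives exactly \TE. The plan is therefore to evaluate
\begin{equation*}
I_\partial(X_1;Y)=\min\Bigl(I_{X_1Y:X_2Y}(X_1;Y|X_2),\, I_{X_1X_2:X_1Y:X_2Y}(X_1;Y|X_2)\Bigr)
\end{equation*}
with $X_2=Y$, and to show that one of the two arguments is zero. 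Both arguments are non-negative, being conditional mutual informations under some distribution, so the minimum is then zero.

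The key observation is that every constraint set in the dependency lattice that contains the $X_2Y$ marginal fixes $p(x_2,y)=p(y)\,\mathbb{1}[x_2=y]$. Any distribution $q$ satisfying that constraint is therefore supported on $\{x_2=y\}$, so under $q$ the variable $Y$ is a deterministic function of $X_2$. Consequently
\begin{equation*}
I_q(X_1;Y|X_2)\le H_q(Y|X_2)=0 .
\end{equation*}
This holds in particular for the maximum entropy distribution of the first term, with constraints $X_1Y{:}X_2Y$; Lemma~\ref{lemma:2constr-maxent} gives it explicitly as $p(x_1|y)p(x_2|y)p(y)$, which is supported on $x_2=y$. It holds equally for the second term, whose constraint set also contains $X_2Y$. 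Both arguments of the min therefore vanish, so $I_\partial(X_1;Y)=0$.

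Next, I would recover the redundancy: $\IDEP(X_1,Y;Y)=I(X_1;Y)-0=I(X_1;Y)=\IDEP(X_1;Y)$ by \SR, which is \TE. As a consistency check, the atom $I_\partial(X_2;Y)$ should equal $I(Y;Y)-I(X_1;Y)=H(Y|X_1)\ge 0$, and the decomposition closes. I would also verify that the identity $\Ic=I(X_i;Y)-I_\partial(X_i;Y)$ gives the same value for $i=2$, since the measure is defined through unique information and this is how its redundancy is read off.

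The main obstacle is one of care rather than substance. The maximum entropy distribution of the second term, with constraints $X_1X_2{:}X_1Y{:}X_2Y$, has no closed form, and it lives on a degenerate (non-full-support) set. I would avoid computing it by relying only on the support argument above, which needs nothing beyond the $X_2Y$ constraint. I would also note explicitly that conditional mutual information is well defined on the support, so that the degenerate support does not make the maximum entropy problem ill-posed.
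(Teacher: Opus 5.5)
Your proposal is correct and follows the paper's route. In both, you set $X_2=Y$, show that the $\IDEP$ unique atom $I_\partial(X_1;Y)$ vanishes because $I_q(X_1;Y|X_2)=0$ under the maximum-entropy distribution carrying the $X_2Y$ constraint, and then read off $\IDEP(X_1,Y;Y)=I(X_1;Y)$. The only difference is that the paper evaluates just the first term of the minimum via the closed form of Lemma~\ref{lemma:2constr-maxent} and closes with \LPz, whereas your support argument ($H_q(Y|X_2)=0$ for every $q$ matching the $X_2Y$ marginal) makes both terms vanish at once and avoids appealing to \LPz.
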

\begin{proof}
    Consider the PID $(X_1,X_2;Y)$ with $X_2=Y$. For \TE to hold, we need $\IDEP(X_1,X_2;Y)=\IDEP(X_1;Y)=I(X_1;Y)$, i.e. $\Ip^\text{DEP}(X_1;Y)=0$. Using the definition of unique information, we have:
    \begin{equation}
    \begin{aligned}
        I_\partial^\text{DEP}(X_1;Y) & = \min\Bigl(I_{X_1Y:X_2Y}(X_1;Y|X_2), I_{X_1X_2:X_1Y:X_2Y}(X_1;Y|X_2) \Bigr) \\
        & \geq I_{X_1Y:X_2Y}(X_1;Y|X_2) \\
        & = 0 \,,
    \end{aligned}
    \end{equation}
    where in the last step we used that the $X_1\ind Y|X_2$ on the maximum entropy distribution with constraints $p(x_i,y)=q(x_i,y),\,i=1,2$ given by Lemma \ref{lemma:2constr-maxent}.
    In fact, we have:
    \begin{equation}
    \begin{aligned}
        q(x_1,y|x_2) & = \frac{q(x_1,x_2,y)}{q(x_2)} \\
        & =  \frac{p(x_1,y)\,p(x_2,y)}{p(x_2)p(y)} \\
        & =  \frac{p(x_1,y)\,\delta(x_2,y)}{p(y)} \\
        & =  p(x_1|y)\,\delta(x_2,y) \,,
    \end{aligned}
    \end{equation}
    and thus 
    \begin{equation}
    \begin{aligned}
        I_{X_1Y:X_2Y}(X_1;Y|X_2) & = \sum_{x_1,x_2,y} q(x_1,x_2,y) \log \left(\frac{q(x_1,y|x_2)}{q(x_1|x_2)q(y|x_2)} \right) \\
        & = \sum_{x_1,x_2,y} q(x_1,x_2,y) \log \left(\frac{p(x_1|y)\,}{p(x_1|y)}\delta(x_2,y)\right) \\
        & = 0 \,.
    \end{aligned}
    \end{equation}
    The result then follows from \LPz.
\end{proof}

\subsubsection{Proofs regarding \IMES}

\begin{proposition} \label{prop:MES-redef}
    $\IMES(X_1,X_2;Y)=I_Q(X_1;X_2;Y)=I_Q(X_1;X_2)$
\end{proposition}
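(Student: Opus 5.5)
The plan is to use Lemma~\ref{lemma:2constr-maxent}, which gives the maximising distribution in closed form, $Q(x_1,x_2,y)=p(x_1|y)\,p(x_2|y)\,p(y)$. Since $Q$ preserves the pairwise marginals $p(X_i,Y)$, the single-source terms are unchanged: $I_Q(X_i;Y)=I(X_i;Y)$ for $i=1,2$. The first equality, $\IMES(X_1,X_2;Y)=I_Q(X_1;X_2;Y)$, is just the definition of \IMES, so the only work is the second equality.

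For that, we expand the coinformation on $Q$ with the convention fixed in Appendix~\ref{appendix:mathematical_bg}, conditioning on $Y$ this time:
\begin{equation}
I_Q(X_1;X_2;Y) = I_Q(X_1;X_2) - I_Q(X_1;X_2\mid Y) \,.
\end{equation}
This is legitimate because the coinformation is symmetric in its three arguments, which is immediate from the entropy expression.

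The key step is to show $I_Q(X_1;X_2\mid Y)=0$. This follows because $Q$ factorises as $Q(x_1,x_2|y)=p(x_1|y)p(x_2|y)$ for every $y$ with $p(y)>0$. The conditional marginals of $Q$ are $Q(x_i|y)=p(x_i|y)$, obtained by summing out the other source. Hence $X_1\ind X_2\mid Y$ under $Q$, each summand of the conditional mutual information has log-ratio zero, and the result $\IMES=I_Q(X_1;X_2)$ follows.

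No step is a real obstacle. The only care needed is to restrict to the support of $p(y)$, so that the conditionals are well defined, and to check that the interaction-information sign convention matches the one used in the definition of \IMES. I would verify the latter by writing both sides as entropies: $H(X_1)+H(X_2)+H(Y)-H(X_1X_2)-H(X_1Y)-H(X_2Y)+H(X_1X_2Y)$. As a byproduct, this also gives \GP for \IMES, since it is now a mutual information.
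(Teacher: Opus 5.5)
Your proof is correct and follows the paper's key step: you expand $I_Q(X_1;X_2;Y)=I_Q(X_1;X_2)-I_Q(X_1;X_2\mid Y)$ and remove the conditional term using the factorised maximum-entropy solution of Lemma~\ref{lemma:2constr-maxent}. The only difference is the first equality. The paper does not take it as definitional but derives it from $I_\cup^{\text{MES}}$ and the inclusion--exclusion principle: synergy is $I(X_1,X_2;Y)-I_Q(X_1,X_2;Y)$, and $I(X_i;Y)$ is then replaced by $I_Q(X_i;Y)$ via the marginal constraints. This doubles as a check that the union and intersection definitions of \IMES agree, whereas reading the stated definition directly, as you do, makes the detour unnecessary.
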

\begin{proof}
    From the definition of $I_\cup^{\text{MES}}(X_1,X_2;Y)$ and the inclusion-exclusion principle we have
    \begin{equation}
    \begin{aligned}
        \IMES(X_1,X_2;Y) & = I(X_1;X_2;Y) + \Ip^\text{MES}(X_1X_2;Y)  \\
        & = I(X_1;Y) + I(X_2;Y) - I(X_1,X_2;Y) + I(X_1,X_2;Y) - I_Q(X_1,X_2;Y) \\
        & = I(X_1;Y) + I(X_2;Y) - I_Q(X_1,X_2;Y) \\
        & = I_Q(X_1;Y) + I_Q(X_2;Y) - I_Q(X_1,X_2;Y) \\
        & = I_Q\cxxy \\
        & = I_Q(X_1;X_2) - I_Q(X_1;X_2|Y) \\
        & = I_Q(X_1;X_2) \,,
    \end{aligned}
    \end{equation}
    where in the last step we noticed that $I_Q(X_1;X_2|Y)=0$ due to $Q$ being the constrained maximum entropy distribution (see Lemma \ref{lemma:2constr-maxent}. 
\end{proof}

\begin{proposition}
    $\Ip^\text{MES}(X_i;Y)\ge0$ 
\end{proposition}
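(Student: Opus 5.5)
By the inclusion--exclusion relations of the bivariate lattice, the unique atom is $\Ip^\text{MES}(X_i;Y)=I(X_i;Y)-\IMES(X_1,X_2;Y)$. The plan is to rewrite both terms on the maximum entropy distribution $Q$ of Lemma~\ref{lemma:2constr-maxent} and then compare them with a data-processing inequality. Since $Q\in\Delta_P$ preserves the pairwise marginals $p(x_i,y)$, we have $I(X_i;Y)=I_Q(X_i;Y)$. By Prop.~\ref{prop:MES-redef}, $\IMES(X_1,X_2;Y)=I_Q(X_1;X_2)$. The claim is therefore equivalent to showing $I_Q(X_1;X_2)\le I_Q(X_i;Y)$ for $i=1,2$.

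The key step is to note that, by Lemma~\ref{lemma:2constr-maxent}, $q(x_1,x_2,y)=p(x_1|y)\,p(x_2|y)\,p(y)$. Hence under $Q$ the variables form a Markov chain $X_1 - Y - X_2$, i.e.\ $X_1\ind X_2\mid Y$. The data processing inequality applied to this chain then gives both
\begin{equation}
    I_Q(X_1;X_2)\le I_Q(X_1;Y) \quad\text{and}\quad I_Q(X_1;X_2)\le I_Q(X_2;Y).
\end{equation}
Alternatively, one can use the chain-rule identity
\begin{equation}
    I_Q(X_i;Y)=I_Q(X_i;X_j)+I_Q(X_i;Y\mid X_j)-I_Q(X_i;X_j\mid Y),
\end{equation}
whose last term vanishes by conditional independence. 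This yields directly $\Ip^\text{MES}(X_i;Y)=I_Q(X_i;Y\mid X_j)\ge 0$, which is a slightly stronger and more explicit form of the statement.

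Putting the pieces together gives $\Ip^\text{MES}(X_i;Y)=I_Q(X_i;Y)-I_Q(X_1;X_2)\ge0$ for both $i$. The only delicate point is bookkeeping rather than a real obstacle. One must justify replacing $I(X_i;Y)$ by $I_Q(X_i;Y)$, which holds because of the marginal constraints defining $\Delta_P$. One must also handle zero-probability values of $y$ in the factorisation of $Q$; these carry zero weight and do not affect the argument.
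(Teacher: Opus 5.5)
Your argument is correct and is essentially the paper's own proof: both rewrite the atom as $I_Q(X_i;Y)-I_Q(X_1;X_2)$ using Prop.~\ref{prop:MES-redef} and the marginal constraints on $Q$, and both use $I_Q(X_1;X_2\mid Y)=0$ to get $I_Q(X_1;X_2)\le I_Q(X_i;X_j,Y)=I_Q(X_i;Y)$, which is data processing along $X_1 - Y - X_2$. Your explicit form $I_\partial^{\text{MES}}(X_i;Y)=I_Q(X_i;Y\mid X_j)$ just makes the slack in the paper's inequality explicit, and it matches how the paper later uses Lemma~\ref{lemma:MES-zeroIc}.
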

\begin{proof}
From Prop.~\ref{prop:MES-redef} we have:
\begin{equation} \label{eq:mes_unq1}
    \begin{aligned}
    \Ip^\text{MES}(X_i;Y) & = I(X_i;Y)- \IMESs \\
    & = I(X_i;Y)- I_Q(X_1;X_2) \,,
    \end{aligned}
\end{equation}
    but we can also notice that
    \begin{equation} \label{eq:mes_unq2}
        \begin{aligned}
            I_Q(X_1,X_2) \leq I_Q(X_1;X_2,Y)=I_Q(X_1;X_2|Y) + I_Q(X_1;Y) = I_Q(X_1;Y) = I(X_1;Y)
        \end{aligned}
    \end{equation}
    where we used $I_Q(X_1;X_2|Y)=0$ and the constraints on $Q$. Therefore, Eq.~\eqref{eq:mes_unq2} ensures that the RHS in Eq.\eqref{eq:mes_unq2} is non-negative.
    Analogous derivation holds for $X_2$.
\end{proof}

\begin{lemma} \label{lemma:MES-zeroIc}
    Consider $Q$ being the maximum entropy distribution between $X_1,X_2,Y$ with the constraints given by $\Delta_P$. Then $I_Q(X_1;Y|X_2)=0$ if and only if $X_1\ind Y$ on $P$.
\end{lemma}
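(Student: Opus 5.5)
The plan is to work entirely with the explicit form of $Q$ given by Lemma~\ref{lemma:2constr-maxent}, namely $q(x_1,x_2,y)=p(x_1|y)p(x_2|y)p(y)$. Under $Q$ this gives the Markov chain $X_1 - Y - X_2$, i.e.\ $I_Q(X_1;X_2|Y)=0$. By the constraints defining $\Delta_P$, it also gives $I_Q(X_i;Y)=I(X_i;Y)$. Applying the chain rule to $I_Q(X_1;X_2,Y)$ in two ways, I will first establish the identity
\begin{equation*}
I_Q(X_1;Y|X_2) = I_Q(X_1;Y) + I_Q(X_1;X_2|Y) - I_Q(X_1;X_2) = I(X_1;Y) - I_Q(X_1;X_2)\,.
\end{equation*}
This reduces the statement to asking when the data-processing inequality $I_Q(X_1;X_2)\le I_Q(X_1;Y)$ for the chain $X_1 - Y - X_2$ is tight.

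\emph{($\Leftarrow$)} If $X_1\ind Y$ on $P$, then $I(X_1;Y)=0$. By data processing, $0\le I_Q(X_1;X_2)\le I(X_1;Y)=0$, so the identity gives $I_Q(X_1;Y|X_2)=0$. More directly, $q(x_1|y)=p(x_1|y)=p(x_1)$, so under $Q$ the variable $X_1$ is independent of $(X_2,Y)$. This direction is routine.

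\emph{($\Rightarrow$)} This direction is the main obstacle, because it is false without further hypotheses. Taking $X_2=Y$ (exactly the case used in Prop.~\ref{prop:DEP-TE}) gives $I_Q(X_1;Y|X_2)=0$ for every $P$, including ones with $X_1\notind Y$. I would therefore prove the converse under a positivity assumption, and state explicitly that the lemma is meant in that regime: $q$ has full support, which holds whenever $p(x_1,y)>0$ and $p(x_2,y)>0$ for all values. The argument then goes as follows.
\begin{enumerate}
\item Under $Q$ we have both $X_1\ind X_2\mid Y$ (by construction) and $X_1\ind Y\mid X_2$ (by hypothesis).
\item For a strictly positive distribution, the intersection property of conditional independence combines these into $X_1\ind (X_2,Y)$ under $Q$.
\item In particular $X_1\ind Y$ under $Q$, and since $Q(X_1,Y)=P(X_1,Y)$, this transfers to $P$.
\end{enumerate}
If I want to avoid citing the intersection property, the direct proof is short. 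From $q(x_1|x_2,y)=q(x_1|y)$ and $q(x_1|x_2,y)=q(x_1|x_2)$, we get that $q(x_1|y)$ equals a function of $x_2$ alone for every pair $(x_2,y)$ with positive mass. When all pairs $(x_2,y)$ are connected, which is automatic under full support, this forces $q(x_1|y)$ to be constant in $y$.

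Finally, I would add a remark that in general the correct characterisation is weaker. $I_Q(X_1;Y|X_2)=0$ holds iff $p(x_1|y)$ is constant on each connected component of the bipartite support graph between the values of $X_2$ and $Y$. This includes the degenerate $X_2=Y$ case, and it is the form in which the lemma should be invoked in later proofs.
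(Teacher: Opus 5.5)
Your proposal is correct, and your forward argument is the paper's own argument made rigorous. The paper also starts from the closed form of Lemma~\ref{lemma:2constr-maxent}. It writes $q(x_1,y|x_2)=q(x_1|y)\,q(y|x_2)$, equates this with $q(x_1|x_2)\,q(y|x_2)$ to get $q(x_1|x_2)=q(x_1|y)$, and concludes $X_1\ind(X_2,Y)$ on $Q$ because ``the RHS does not depend on $x_2$ and the LHS does not depend on $y$''. It then transfers this to $P$ through $q(x_1,x_2,y)=q(x_1)q(x_2,y)$. That is exactly your direct proof, and your intersection-property version says the same thing.

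The step the paper leaves implicit is the one you flag. The factor $q(y|x_2)=p(x_2,y)/p(x_2)$ can only be cancelled where $p(x_2,y)>0$. So ``does not depend on'' needs full support of $P(X_2,Y)$, or at least connectivity of its bipartite support graph. Your counterexample $X_2=Y$ is valid, so the forward implication is false as literally stated. In fact, the paper's own proof of Prop.~\ref{prop:DEP-TE} computes $I_Q(X_1;Y|X_2)=0$ in precisely that configuration for arbitrary $X_1$. Your connected-component characterisation is also right. $I_Q(X_1;Y|X_2)=0$ holds iff, for each $x_2$, $p(\cdot|y)$ is the same for all $y$ with $p(x_2,y)>0$, and this is equivalent to constancy on components.

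Three smaller points:
\begin{itemize}
\item The paper gives no argument for ($\Leftarrow$). Yours fills it in: $q(x_1|y)=p(x_1)$ makes $X_1$ independent of $(X_2,Y)$ under $Q$.
\item Your chain-rule identity $I_Q(X_1;Y|X_2)=I(X_1;Y)-I_Q(X_1;X_2)$ is not in the paper's proof, but it is worth keeping. Combined with Prop.~\ref{prop:MES-redef}, it identifies $I_Q(X_1;Y|X_2)$ with the MES unique information $I^{\mathrm{MES}}_\partial(X_1;Y)$. The lemma is therefore really a criterion for that atom to vanish.
\item The caveat propagates to Prop.~\ref{prop:MES-noSELB}, which invokes the lemma with $X_1=X_2$ ``for any $Y\notind X_1$''. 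Taking $Y=X_1$ gives $Q=P$ and $I_Q(X_1;Y|X_2)=0$, and \SE actually holds there. That proof therefore needs a full-support example, which still suffices for its conclusion.
\end{itemize}
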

\begin{proof}
    $I_Q(X_1;Y|X_2)=0$ means that $q(x_1,y|x_2)=q(x_1|x_2)q(y|x_2)$. On the other hand, writing the joint conditional probability explicitly, we have:
\begin{equation}
\begin{aligned}
    q(x_1,y|x_2) & = \frac{q(x_1,x_2,y)}{q(x_2)} \\
    & = \frac{q(x_1,y)q(x_2,y)}{q(y)q(x_2)} \\
    & = q(x_1|y)q(y|x_2) \,.
\end{aligned}
\end{equation}
    Hence, for the partitioning to hold, we have the condition
    \begin{equation}
        q(x_1|x_2) = q(x_1|y) \,,
    \end{equation}
    since the RHS does not depend on $x_2$ and the LHS does not depend on $y$, it implies that $X_1\ind (X_2,Y)$ on $Q$. 
    Hence, we can write
    \begin{equation}
        \begin{aligned}
            q(x_1,x_2,y) & = q(x_1)q(x_2,y) \\
            \frac{p(x_1,y)p(x_2,y)}{p(y)} & = p(x_1)p(x_2,y) \\
            p(x_1,y) & = p(x_1)p(y)
        \end{aligned}
    \end{equation}
    which shows that $X_1\ind Y$ on $P$.
\end{proof}

\begin{proposition} \label{prop:MES-noSELB}
    $\IMES$ does not satisfy \LB and \SE.
\end{proposition}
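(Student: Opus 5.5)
The plan is to use a single counterexample that breaks \SE and \LB at the same time. The tool is Prop.~\ref{prop:MES-redef}, which gives $\IMES(X_1,X_2;Y)=I_Q(X_1;X_2)$, together with Lemma~\ref{lemma:2constr-maxent}, which gives $q(x_1,x_2,y)=p(x_1|y)p(x_2|y)p(y)$. Under $Q$ the chain $X_1 - Y - X_2$ is Markov. So whenever the two sources are deterministically related in $P$, the maximum-entropy projection $Q$ "forgets" this relation: in $Q$ they become two conditionally independent noisy copies of $Y$.

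\textbf{Construction.} Take $Y\sim\mathrm{Bern}(1/2)$. Let $X_1$ be the output of a binary symmetric channel with crossover probability $\varepsilon\in(0,1/2)$, and set $X_2=X_1$. Then $X_1=f(X_2)$ with $f$ the identity, and the hypotheses of both \SE and \LB are met.
\begin{itemize}
\item For \SE, together with \SR, one would need $\IMES(X_1,X_2;Y)=\IMES(X_1;Y)=I(X_1;Y)=1-h(\varepsilon)$, where $h$ is the binary entropy.
\item For \LB, choose $Q:=X_1$. This is a function of both sources, so \LB would require $\IMES(X_1,X_2;Y)\ge I(X_1;Y)=1-h(\varepsilon)$.
\end{itemize}

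\textbf{Computing $\IMES$.} Under $Q$, $X_1$ and $X_2$ are two independent BSC$(\varepsilon)$ outputs of the same uniform bit $Y$. Hence $X_1\oplus X_2\sim\mathrm{Bern}(2\varepsilon(1-\varepsilon))$ independently of $X_1$, and
\[
\IMES(X_1,X_2;Y)=I_Q(X_1;X_2)=1-h\bigl(2\varepsilon(1-\varepsilon)\bigr).
\]
Since $\varepsilon<2\varepsilon(1-\varepsilon)\le 1/2$ for $\varepsilon\in(0,1/2)$, and $h$ is strictly increasing on $[0,1/2]$, we get
\[
1-h\bigl(2\varepsilon(1-\varepsilon)\bigr)<1-h(\varepsilon).
\]
A concrete instance is $\varepsilon=0.1$, which gives roughly $0.32<0.53$ bits. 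This strict inequality violates the equality demanded by \SE and the lower bound demanded by \LB simultaneously.

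\textbf{Main obstacle.} The only delicate point is justifying strictness in general. If one wants a model-free argument rather than the explicit computation, the data-processing inequality along $X_1-Y-X_2$ in $Q$ gives $I_Q(X_1;X_2)\le I_Q(X_1;Y)=I(X_1;Y)$. Equality would require $Y$ to be a sufficient statistic recoverable from $X_2$, which fails for any noisy channel. The explicit BSC computation sidesteps this entirely, so I would present that.

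It is also worth remarking that \SR, \GP and \TE for $\IMES$ are untouched by this example. This is consistent with Prop.~\ref{prop:M0-SR--LB}: since \SR holds, the failure of \LB is forced once \Mz fails, and here \Mz fails precisely through its equality part, \SE.
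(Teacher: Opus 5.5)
Your proof is correct, and it takes a different route from the paper.

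\textbf{How the two arguments differ.} The paper argues generically. For any $P$ with $X_1=X_2$ and $X_1\notind Y$, it writes $\IMES(X_1,X_2;Y)=I(X_1;Y)-I_Q(X_1;Y|X_2)$. It then invokes Lemma~\ref{lemma:MES-zeroIc} ($I_Q(X_1;Y|X_2)=0$ iff $X_1\ind Y$) to get strict inequality. It handles \LB with a separate configuration, using sources $X_1$ and $X_1X_2$.

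You instead exhibit one explicit distribution: uniform $Y$, and $X_1=X_2$ a BSC$(\varepsilon)$ output of $Y$. You compute $\IMES=1-h(2\varepsilon(1-\varepsilon))<1-h(\varepsilon)=I(X_1;Y)$ in closed form, using only Prop.~\ref{prop:MES-redef} and Lemma~\ref{lemma:2constr-maxent}. You then dispose of \LB with the same example, taking $X_1$ itself as the common function of both sources.

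\textbf{What each approach buys.} The paper's route aims at more, namely failure of \SE for every dependent target. That generality is not actually available as stated. The step in Lemma~\ref{lemma:MES-zeroIc} from $q(x_1|x_2)=q(x_1|y)$ to $X_1\ind(X_2,Y)$ needs $q(x_2,y)>0$ for all pairs. For example, $X_1=X_2=Y$ gives $Q=P$, $I_Q(X_1;Y|X_2)=0$ and $\IMES(X_1,X_2;Y)=I(X_1;Y)$, even though $X_1\notind Y$. Your channel has full support, so your example is immune to this. Since one counterexample suffices for the proposition, your explicit computation is the more robust argument.

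\textbf{Side remarks to fix.} Neither of these affects the proof.
\begin{itemize}
\item Your data-processing aside is imprecise. Equality $I_Q(X_1;X_2)=I_Q(X_1;Y)$ holds iff $I_Q(X_1;Y|X_2)=0$. This does not require $Y$ to be recoverable from $X_2$; it holds, for instance, when $X_1=X_2=g(Y)$ is a non-injective deterministic coarse-graining.
\item Your closing remark reverses Prop.~\ref{prop:M0-SR--LB}. From $\Mz, \SR \implies \LB$ you get that, given \SR, failure of \LB forces failure of \Mz, not the converse. Here \LB fails only because the \SE violation happens to go downward.
\item Avoid reusing $Q$ for both the maximum-entropy distribution and the \LB variable.
\end{itemize}
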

\begin{proof}
    For both cases, we use the result of Lemma \ref{lemma:MES-zeroIc}.
    First, consider the PID with the two sources to be identical $X_1=X_2$ and not orthogonal to an otherwise generic target $Y$. Using Prop.~\ref{prop:MES-redef}, we have that
    \begin{equation}
        \begin{aligned}
            \IMES(X_1,X_2;Y) & = I_Q(X_1;X_2;Y) \\
            & = I_Q(X_1;Y) - I_Q(X_1;Y|X_2) \\
            & = I(X_1;Y) - I_Q(X_1;Y|X_2) \\
            & < I(X_1;Y) \,,
        \end{aligned}
    \end{equation}
    where the strict inequality follows from Lemma \ref{lemma:MES-zeroIc} for any $Y\notind X_1$, since $I_Q(X_1;Y|X_2)>0$. Hence $\IMES$ does not satisfy \SE.
    
    Similarly, if we consider the PID with first source $X_1$ and second source the joint state $(X_1 X_2)$ and a generic target $Y$, then the same reasoning above applies, and we obtain $\IMES(X_1,X_1;Y)< I(X_1;Y)$ for any $Y\notind X_1$. Hence $\IMES$ does not satisfy \SE.
\end{proof}

\begin{lemma} \label{lemma:MES_TBC}
    The maximum entropy distribution of the TBC given pairwise constraints is the TBC itself.
\end{lemma}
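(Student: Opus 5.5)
The plan is to apply Lemma~\ref{lemma:2constr-maxent} directly to the TBC. That lemma gives the maximum entropy distribution under the constraints $q(X_i,Y)=p(X_i,Y)$, $i=1,2$, in closed form:
\begin{equation*}
q(x_1,x_2,y)=\frac{p(x_1,y)\,p(x_2,y)}{p(y)} = p(x_1|y)\,p(x_2|y)\,p(y) \,.
\end{equation*}
It therefore suffices to show that, for the TBC, this product form coincides with $p(x_1,x_2,y)$. Equivalently, we must show $X_1\ind X_2\mid Y$ under $P$.

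The key step is to use the structure of the TBC target $Y=(X_1,X_2)$, relabelled as in Table~\ref{tab:TBC} by \EI. Conditioning on any outcome $y=(a,b)$ with $p(y)>0$ fixes both sources: $p(x_1|y)=\delta_{x_1,a}$ and $p(x_2|y)=\delta_{x_2,b}$. Hence
\begin{equation*}
q(x_1,x_2,y)=\delta_{x_1,a}\,\delta_{x_2,b}\,p(y) \,,
\end{equation*}
and this equals $p(x_1,x_2,y)$ because the TBC is deterministic: $p(x_1,x_2,y)=p(y)$ if $y=(x_1,x_2)$ and $0$ otherwise. Outcomes with $p(y)=0$ carry zero mass in both distributions. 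So $Q=P$, and in particular $I_Q(X_1;X_2|Y)=0$ is consistent with $P$.

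This argument never uses the independence of $X_1$ and $X_2$ or the uniformity of their marginals. It therefore holds for arbitrary (possibly correlated) $p(x_1,x_2)$ with $Y=X_1X_2$. This generality is what the \ID claim for \IMES needs, via Prop.~\ref{prop:MES-redef}, which gives $\IMES=I_Q(X_1;X_2)=I(X_1;X_2)$.

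I expect the only delicate point to be the zero-probability entries. Lemma~\ref{lemma:2constr-maxent} was derived through Lagrange multipliers, which implicitly assume interior (full-support) solutions, whereas the TBC joint distribution has support on the diagonal only. To handle this, I would either restrict the maximisation to distributions supported where the constraints allow (any $q$ with $q(x_i,y)=0$ must vanish there), or argue directly. For the direct argument: every $Q\in\Delta_P$ has $q(x_1,y)=\delta_{x_1,a}p(y)$ and $q(x_2,y)=\delta_{x_2,b}p(y)$, which forces $q(x_1,x_2,y)=\delta_{x_1,a}\delta_{x_2,b}p(y)$. Thus $\Delta_P=\{P\}$ is a singleton, and the maximum entropy distribution is trivially $P$. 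I would present this singleton argument as the main proof, since it bypasses the support issue entirely, and note that it agrees with the formula of Lemma~\ref{lemma:2constr-maxent}.
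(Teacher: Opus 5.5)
Your proof is correct, but your main argument takes a different route from the paper.

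The paper's proof is essentially your first paragraph. It plugs the TBC into the closed form of Lemma~\ref{lemma:2constr-maxent}, writing $q(x_1,x_2,y)=p(x_1,y)\,p(x_2,y)/p(y)=p(x_1,x_2)=p(x_1,x_2,y)$, and stops there.

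Your singleton argument replaces this. It observes that in the TBC each source is a function of the target, so the constraints $q(X_i,Y)=p(X_i,Y)$ alone force $Q=P$. Hence $\Delta_P=\{P\}$ and the entropy maximisation is vacuous.

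This route has several advantages:
\begin{itemize}
\item It is self-contained and needs no optimality conditions.
\item It is visibly valid for arbitrary, possibly correlated, $p(x_1,x_2)$.
\item It avoids the support issue you flag: the Lagrange-multiplier derivation assumes an interior optimum, and the TBC is not full-support. The paper's chain is also informal, since it identifies $q(x_1,x_2,y)$ with $q(x_1,x_2)$, tacitly assuming that $Y$ is a function of the sources under $Q$.
\item It yields a bonus: any quantity defined by optimising over $\Delta_P$ is evaluated on $P$ itself for the TBC. This explains at once why $\IBROJA$ satisfies \ID, since $I_P(X_1;X_2;Y)=I(X_1;X_2)$.
\end{itemize}

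The paper's route buys brevity, and it is not actually broken. The product formula is the maximiser with no full-support assumption. For every $Q\in\Delta_P$, $H_Q(X_1,X_2,Y)\le H_P(Y)+H_P(X_1|Y)+H_P(X_2|Y)$, because the right-hand side depends only on the fixed pairwise marginals. Equality holds iff $X_1\ind X_2\mid Y$ under $Q$, and the product form is the unique element of $\Delta_P$ with this property. So your concern identifies an informal derivation in Lemma~\ref{lemma:2constr-maxent}, not an incorrect formula.
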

\begin{proof}
    The proof directly follows from the closed-form solution for the maximum entropy distribution given pairwise constraints (see Prop.~\ref{prop:MES-redef}):
    \begin{equation}
        q(x_1,x_2,y) = q(x_1,x_2) = \frac{p(x_1,y)p(x_2,y)}{p(y)} = p(x_1,x_2) = p(x_1,x_2,y) \,.
    \end{equation}
\end{proof}

\begin{proposition} \label{prop:MES-CO}
    $\IMES$ satisfies \CO.
\end{proposition}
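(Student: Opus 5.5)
Using Prop.~\ref{prop:MES-redef}, I would write $\IMES(X_1,X_2;Y)=I_Q(X_1;X_2)$, where $Q=Q_P$ is the constrained maximum entropy distribution. By Lemma~\ref{lemma:2constr-maxent} this distribution has the closed form
\begin{equation}
    q_P(x_1,x_2,y)=\frac{p(x_1,y)\,p(x_2,y)}{p(y)} \quad \text{when } p(y)>0,
\end{equation}
and $q_P(x_1,x_2,y)=0$ otherwise. Continuity of $\IMES$ then reduces to two facts. The first is that the map $P\mapsto Q_P$ is continuous on the probability simplex of the finite alphabet $\mathcal{X}_1\times\mathcal{X}_2\times\mathcal{Y}$. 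The second is that $Q\mapsto I_Q(X_1;X_2)$ is continuous. The second fact is standard: on a finite alphabet, mutual information is a finite combination of entropies, and entropy is continuous because $t\mapsto -t\log t$ extends continuously to $t=0$. Continuity of the composition then gives \CO.

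For the first fact, the marginals $p(x_i,y)$ and $p(y)$ are linear, hence continuous, functions of $P$. So $q_P$ is a rational function of $P$, and it is continuous wherever $p(y)>0$. The only obstacle is the boundary, where some $p(y)\to0$. There I would use the bound
\begin{equation}
    0\le \frac{p(x_1,y)\,p(x_2,y)}{p(y)} \le p(x_2,y) \le p(y),
\end{equation}
which holds because $p(x_1,y)\le p(y)$. Consequently, as $p'(y)\to0$ along any sequence $P'\to P$, we get $q_{P'}(x_1,x_2,y)\to0=q_P(x_1,x_2,y)$. This shows that the extension by zero is continuous at every point of the simplex.

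Putting the two steps together, for every $\varepsilon>0$ there is a $\delta>0$ such that $\|P-P'\|_p<\delta$ implies $|\IMES^P-\IMES^{P'}|<\varepsilon$. The bound is in fact uniform, because the simplex is compact. The main step to check carefully is the boundary behaviour of $Q_P$ when the target has vanishing marginals, and the squeeze bound above handles it. As a sanity check, I would also note that $I(X_i;Y)$ and $I(X_1,X_2;Y)$ are continuous, so the remaining atoms are continuous as well.
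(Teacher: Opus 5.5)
Your proof is correct, and it takes a genuinely different route from the paper's.

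\textbf{The paper's route.} The paper argues quantitatively. It first bounds $|I_Q(X_1;X_2)-I_{Q'}(X_1;X_2)|$ by a Fannes--Audenaert-type continuity bound in terms of $\|Q-Q'\|_1$. It then derives from the closed form a Lipschitz estimate $\|Q-Q'\|_1\le C\|P-P'\|_1$, with $C$ of order $|\mathcal{Y}|^2/\min_y p(y)$, and makes $C$ uniform on a small neighbourhood of $P$. This gives an explicit modulus of continuity. However, it only works around distributions whose target marginal is strictly positive (the paper in fact assumes full support), because the constant blows up as $p(y)\to 0$.

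\textbf{Your route.} You argue qualitatively, composing the continuous map $Q\mapsto I_Q(X_1;X_2)$ with $P\mapsto Q_P$. Your squeeze bound $0\le q_{P'}(x_1,x_2,y)\le p'(y)$ handles exactly the boundary case the paper leaves out. You therefore get continuity, and by compactness uniform continuity, on the whole simplex. That is what the unqualified entry in Table~\ref{tab:PID_properties} asserts. What you give up is an explicit rate.

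\textbf{Combining the two.} Every partial derivative of $p(x_1,y)\,p(x_2,y)/p(y)$ with respect to an entry $p(x_1',x_2',y)$ lies in $[-1,1]$, however small $p(y)$ is. Hence $P\mapsto Q_P$ is globally Lipschitz, and the paper's $1/\min_y p(y)$ dependence is not intrinsic. Your boundary estimate is the degenerate case of this fact.

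\textbf{One point of rigour.} At boundary points, the closed form for $Q_P$ should not rest on the Lagrange-multiplier derivation of Lemma~\ref{lemma:2constr-maxent}, which implicitly assumes a strictly positive maximiser. Justify it instead in two steps. First, for $Q\in\Delta_P$ the chain rule gives $H_Q(X_1,X_2,Y)\le H(Y)+H(X_1|Y)+H(X_2|Y)$, with equality exactly when $X_1\ind X_2\mid Y$ under $Q$. Second, $p(y)=0$ forces $q(x_1,x_2,y)=0$ for every $Q\in\Delta_P$, which is your extension by zero.
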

\begin{proof}
    For ease of readability, here we denote with $X,Y$ the sources and $Z$ the target.
    Consider two probability distributions $P,P'$ with full support $\mathcal{X}\times\mathcal{Y}\times\mathcal{Z}$, and let 
    \begin{equation}
        Q = \underset{\tilde{Q} \in \Delta_P}{\arg\max}\,\, H_{\tilde{Q}}(X,Y,Z) \qquad Q' = \underset{\tilde{Q} \in \Delta_{P'}}{\arg\max}\,\, H_{\tilde{Q}}(X,Y,Z)
    \end{equation}
    If  $||Q-Q'||_1<\delta_Q$, by the Fannes–Audenaert continuity bound for mutual information, we have that
    \begin{equation}
    \begin{aligned}
        ||I^\text{MES}_{\cap,P}(X,Y;Z) - I^\text{MES}_{\cap,P'}(X,Y;Z)||_1 & = ||I_{Q}(X;Y) - I_{Q'}(X;Y)||_1 \\ 
        & \leq h(\delta_Q/2) + (\delta_Q/2) \log \min(|\mathcal{X}|, |\mathcal{Y}|) \\
        & \equiv g(\delta_Q) \,,
    \end{aligned}
    \end{equation}
    where $h(x)=-(x\log x+(1-x)\log (1-x))$ is the binary entropy function.
    The proof is then completed by showing that $||Q-Q'||_1\leq C||P-P'||_1$:
    \begin{equation} \label{eq:MES-QPbound}
    \begin{aligned}
        ||Q-Q'||_1 & = \left|\left| \frac{p(x,z)p(y,z)}{p(z)} - \frac{p'(x,z)p'(y,z)}{p'(z)} \right|\right|_1 \\
        & = \left|\left| \frac{p(x,z)p(y,z)p'(z)-p'(x,z)p'(y,z)p(z)}{p(z)p'(z)} \right|\right|_1 \\
        & \leq \frac1\epsilon \left|\left| p(x,z)p(y,z)p'(z)-p'(x,z)p'(y,z)p(z) \right|\right|_1 \\
        & \leq \frac1\epsilon \sum_{xyz} \left(|p'(z)|\left|p(x,z)p(y,z)- p'(x,z)p'(y,z)\right|+|p'(x,z)p'(y,z)||p'(z)-p(z)|\right) \\
        & \leq \frac1\epsilon \left(|\mathcal{Z}|\sum_{xyz} \left|p(x,z)p(y,z)- p'(x,z)p'(y,z)\right|+|\mathcal{Z}|^2\sum_z|p'(z)-p(z)| \right)\\
        & \leq \frac1\epsilon \left(|\mathcal{Z}|\sum_{xyz} \left(|p(x,z)|\left|p(y,z)-p'(y,z)\right|+|p'(y,z)||p(x,z)-p'(x,z)|\right)+|\mathcal{Z}|^2 ||P-P'||_1 \right)\\
        & \leq \frac1\epsilon \left(|\mathcal{Z}|^2 \sum_{yz}|p(y,z)-p'(y,z)|+|\mathcal{Z}|^2 \sum_{xz}|p(x,z)-p'(x,z)|+|\mathcal{Z}|^2 ||P-P'||_1 \right)\\
        & \leq \frac{3|\mathcal{Z}|^2}{\epsilon}||P-P'||_1\,,
    \end{aligned}
    \end{equation}
    where we introduced $\epsilon\equiv \min_{z\in\mathcal{Z}}(p(z), p'(z))$, used that $|ab-a'b'|\leq |a||b-b'|+|b'||a-a'|$ which follows from the triangle inequality, and used the fact that 
    \begin{equation}
        \begin{aligned}
            ||P-P'||_1 & = \sum_{xyz} |p(x,y,z)-p'(x,y,z)| \\
            & \ge \sum_{xz} \left|\sum_y(p(x,y,z)-p'(x,y,z)) \right| \\
            & = \sum_{xz} \left|p(x,z)-p'(x,z) \right| \\
            & = ||p(x,z)-p'(x,y)||_1 \,,
        \end{aligned}
    \end{equation}
    and similarly for the other marginals.
    Finally, to make the constant of Eq.\eqref{eq:MES-QPbound} uniform, we define $m\equiv\min_{z\in\mathcal Z} p(z)$ and consider a small neighbourhood of $P$ given by $\delta_P=m$. This way for $\|P-P'\|_1 < \delta_P$ we have that for every $z$
    $|p'(z)-p(z)| \le \|P-P'\|_1 < m/2$, and hence $p'(z)\ge p(z)-m/2 \ge m/2$. 
    Thus $\epsilon \ge m/2$, and consequently, the constant in the above bound can be taken as
    \begin{equation}
      C_0 \equiv \frac{3|\mathcal Z|^2}{m/2} \;=\; \frac{6|\mathcal Z|^2}{m}\,,
    \end{equation}
    so that for all $P'$ with $\|P-P'\|_1<\delta_P$ we have the uniform inequality
    \begin{equation}
      \|Q-Q'\|_1 \le C_0 \|P-P'\|_1\,.
    \end{equation}
    Finally, taking $\delta\equiv\min(\delta_Q,\delta_P)$, we have that for $||P-P'||_1<\delta$:
    \begin{equation}
        \big|I^\text{MES}_{\cap,P}(X_1,X_2;Y)-I^\text{MES}_{\cap,P'}(X_1,X_2;Y)\big|\le g(\|Q-Q'\|_1)
      \le g(C_0\|P-P'\|_1) < \varepsilon\,.
    \end{equation}
    which is the definition of continuity.
\end{proof}

\begin{proposition} \label{prop:MES-noBP}
    $\IMES$ does not satisfy \BP.
\end{proposition}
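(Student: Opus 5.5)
The quickest route is a contrapositive argument built from results already established above. By Theo.~\ref{theo:BP--TE-SE}, for $n=2$ sources \BP implies \SE. Since \IMES is only defined for two sources, it suffices to show that \IMES violates \SE. This was shown in Prop.~\ref{prop:MES-noSELB}. Hence \IMES cannot satisfy \BP, and the proposition follows.

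To make the failure concrete, I would also exhibit it directly on the example used in Prop.~\ref{prop:MES-noSELB}. Take $X_1=X_2$ and any target $Y$ with $Y\notind X_1$. The copy trivially satisfies $X_1\preceq_Y X_2$: take $k$ to be the identity channel in Eq.~\eqref{eq:def_blackwell}. The ``$\Leftarrow$'' direction of \BP would then require $\Ip^\text{MES}(X_1;Y)=0$.

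Using Prop.~\ref{prop:MES-redef}, the unique information is
\begin{equation}
\Ip^\text{MES}(X_1;Y) = I(X_1;Y)-\IMES(X_1,X_2;Y) = I_Q(X_1;Y)-I_Q(X_1;X_2;Y) = I_Q(X_1;Y|X_2) \,,
\end{equation}
where $Q$ is the constrained maximum entropy distribution of Lemma~\ref{lemma:2constr-maxent}. By Lemma~\ref{lemma:MES-zeroIc}, this quantity vanishes if and only if $X_1\ind Y$ on $P$. That is excluded by the choice of $Y$, so $\Ip^\text{MES}(X_1;Y)>0$ even though $X_1$ is Blackwell inferior to $X_2$. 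This contradicts \BP.

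Almost nothing here is a genuine obstacle, since the heavy lifting is done by Lemma~\ref{lemma:MES-zeroIc}. The only point needing care is the step $I(X_1;Y)=I_Q(X_1;Y)$, which holds because the marginal constraints defining $\Delta_P$ preserve $p(x_1,y)$. For a fully explicit instance, I would take $X_1=X_2=Y$ uniform on one bit. Then $Q(x_1,x_2,y)=\tfrac12\,\delta_{x_1y}\delta_{x_2y}$ gives $I_Q(X_1;Y|X_2)=0$, so this choice does not work (here $Q$ coincides with $P$ up to the copy structure). I would therefore use a noisy target instead, for example $X_1=X_2$ a fair bit and $Y$ equal to $X_1$ flipped with probability $1/4$. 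There $Q$ makes $X_1,X_2$ conditionally independent given $Y$, so $I_Q(X_1;Y|X_2)>0$.
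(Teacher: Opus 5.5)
Your contrapositive reduction is sound, and it takes a different route from the paper. The paper argues directly. It contrasts the garbling condition $p(x_1|y)=\sum_{x_2}k(x_1|x_2)p(x_2|y)$ with the condition for $\Ip^\text{MES}(X_1;Y)=0$. It then specialises to distributions with $X_1\ind X_2\mid Y$, where $Q=P$, and argues that vanishing unique information is strictly stronger than Blackwell inferiority; no explicit distribution is written down. You instead combine Theo.~\ref{theo:BP--TE-SE} with the failure of \SE, which is the same device the paper uses for \Ido and \ISX. Your route is shorter and modular, and it yields a checkable witness. It also places the failure in the deterministic special case $X_1=f(X_2)$ of the ``$\Leftarrow$'' direction of \BP, whereas the paper's argument targets genuinely stochastic garblings.

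The concrete part, however, contains a false step, and your own computation exposes it. You assert that for every $Y$ with $Y\notind X_1$, Lemma~\ref{lemma:MES-zeroIc} gives $I_Q(X_1;Y|X_2)>0$. Yet for $X_1=X_2=Y$ you correctly find $I_Q(X_1;Y|X_2)=0$; there $Q=P$ and $\IMES(X_1,X_2;Y)=I(X_1;Y)$, so \SE actually holds for that instance.

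The direction of the lemma you need, $I_Q(X_1;Y|X_2)=0\Rightarrow X_1\ind Y$, only holds when $p(x_2,y)>0$ for all pairs. In general, vanishing of $I_Q(X_1;Y|X_2)$ only forces $p(x_1|y)=q(x_1|x_2)$ on the support of $p(x_2,y)$. That makes $p(x_1|\cdot)$ constant on connected components of the bipartite support graph, which is vacuous whenever $X_2$ determines $Y$.

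For the same reason, your justification of the noisy example (``$Q$ makes $X_1,X_2$ conditionally independent given $Y$, so \ldots'') is a non sequitur. Every maximum-entropy $Q$ of this form has that property, including the degenerate one. What actually makes the example work is full support. In the noisy-copy example $p(x_2,y)>0$ everywhere, so $I_Q(X_1;Y|X_2)=0$ would force $p(x_1|y)$ to be independent of $y$. That contradicts $p(X_1=0\mid Y=0)=3/4\neq 1/4=p(X_1=0\mid Y=1)$.

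State the full-support hypothesis explicitly and build the concrete argument on that example only. Note also that the proof of Prop.~\ref{prop:MES-noSELB}, which your reduction cites, has the same caveat. Its conclusion still survives, because a single full-support counterexample suffices to violate \SE.
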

\begin{proof}
    We prove this by showing that the requirements for Blackwell sufficiency and having zero unique information are different. 
    Consider the Blackwell ordering $X_1\preceq_YX_2$, then $X_1$ is a garbling of $X_2$:
    \begin{equation} \label{eq:MES-bp_def}
        p(x_1|y) = \sum_{x_2} k(x_1|x_2)p(x_2|y) \,.
    \end{equation}
    On the other hand, having vanishing unique information $\Ip^\text{MES}(X_1;Y)$ means 
    \begin{equation}
        I(X_1;Y) = I_Q(X_1;X_2) \,,
    \end{equation}
    which implies 
    \begin{equation} \label{eq:MES-un0}
        p(x_1|y) = q(x_1|x_2) \,.
    \end{equation}
    The two conditions of Eqs.\eqref{eq:MES-bp_def}-\eqref{eq:MES-un0} are clearly different, hence \BP cannot hold. An easy way to see it is to consider a probability distribution $P$ where $X_1\ind X_2|Y$, hence $P=Q$ and Eq.\eqref{eq:MES-un0} implies that $p(x_1|\cdot)$ should be a constant not depending on the conditioning. Clearly, this is a stronger requirement than Eq.~\ref{eq:MES-bp_def}, hence Blackwell sufficiency cannot imply vanishing unique information.
\end{proof}

\begin{proposition} \label{prop:MES-AD}
    $\IMES$ satisfies \AD.
\end{proposition}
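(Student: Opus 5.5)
We prove additivity by working directly with the closed-form expression for \IMES. By Prop.~\ref{prop:MES-redef}, $\IMES(X_1,X_2;Y)=I_Q(X_1;X_2)$. Here $Q$ is the maximum-entropy distribution in $\Delta_P$. By Lemma~\ref{lemma:2constr-maxent} it is given explicitly by $q(x_1,x_2,y)=p(x_1,y)p(x_2,y)/p(y)$. The plan is to show that $Q$ factorises over independent subsystems whenever $P$ does, and then to use additivity of mutual information for independent product distributions.

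We consider the bivariate setting relevant for \IMES. The sources are $\tilde X_1=(X_1,X_1')$ and $\tilde X_2=(X_2,X_2')$, and the target is $\tilde Y=(Y,Y')$. We assume $(X_1,X_2,Y)\ind(X_1',X_2',Y')$, so that $p(\tilde x_1,\tilde x_2,\tilde y)=p(x_1,x_2,y)\,p'(x_1',x_2',y')$.

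\emph{Factorisation of the marginals.} The pairwise marginals factorise: $p(\tilde x_i,\tilde y)=p(x_i,y)\,p'(x_i',y')$ for $i=1,2$, and likewise $p(\tilde y)=p(y)p'(y')$.

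\emph{Factorisation of $Q$.} Substituting these marginals into the closed form of Lemma~\ref{lemma:2constr-maxent} gives
\begin{equation}
\tilde q(\tilde x_1,\tilde x_2,\tilde y)=\frac{p(x_1,y)p(x_2,y)}{p(y)}\cdot\frac{p'(x_1',y')p'(x_2',y')}{p'(y')}=q(x_1,x_2,y)\,q'(x_1',x_2',y').
\end{equation}
So the max-entropy projection of the joint system is the product of the projections of the two subsystems.

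\emph{Additivity of mutual information.} Under the product distribution $\tilde Q=Q\otimes Q'$, the pair $(X_1,X_2)$ is independent of $(X_1',X_2')$. The chain rule then gives $I_{\tilde Q}(X_1X_1';X_2X_2')=I_Q(X_1;X_2)+I_{Q'}(X_1';X_2')$. Applying Prop.~\ref{prop:MES-redef} to the joint system and to each subsystem gives the required identity. Equivalently, one can write the coinformation $I_{\tilde Q}(\tilde X_1;\tilde X_2;\tilde Y)$ in terms of entropies and use that entropies are additive for product distributions.

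\emph{Main obstacle.} The step needing most care is justifying that the closed form of Lemma~\ref{lemma:2constr-maxent} applies on the product alphabet. Zero-probability values of $y$ must be handled by restricting to the support. One must also confirm that the resulting product is in fact the unique maximiser in $\Delta_{\tilde P}$, rather than only a stationary point. This follows from strict concavity of entropy on the convex constraint set. As a cross-check, the product $Q\otimes Q'$ lies in $\Delta_{\tilde P}$ and has entropy $H(Q)+H(Q')$. Any element of $\Delta_{\tilde P}$ has entropy at most the sum of its two subsystem marginal entropies, each of which is bounded by $H(Q)$ and $H(Q')$ respectively. Hence the product is the maximiser.
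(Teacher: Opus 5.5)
Your proof is correct and follows essentially the same route as the paper. Both arguments factorise the maximum-entropy projection using the closed form of Lemma~\ref{lemma:2constr-maxent}, then conclude by additivity of information quantities under product distributions (the paper uses the coinformation $I_Q(X_1;X_2;Y)$, you use $I_Q(X_1;X_2)$, and these coincide by Prop.~\ref{prop:MES-redef}). Your subadditivity cross-check, showing directly that $Q\otimes Q'$ is the unique entropy maximiser in $\Delta_{\tilde P}$, is a useful addition, since it establishes the factorisation without relying on the closed form at all.
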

\begin{proof}
    Consider the PID with sources $S_1\equiv(X_1,X_1')$ and $S_2\equiv(X_2,X_2')$, and target $T\equiv(Y,Y')$, where $(X_1,X_2,Y_1)\ind(X_1',X_2',Y')$. It is easy to see that the solution to the maximum entropy projection with constraints $q(S_1,T)=p(S_1,T)$ and $q(S_2,T)=p(S_2,T)$ factorises as
    \begin{equation}
        \begin{aligned}
            q(s_1,s_2,t) & = \frac {p(s_1,t)p(s_2,t)}{p(t)} \\
            & = \frac{p(x_1,x_1',y,y')p(x_2,x_2',y,y')}{p(y,y')} \\
            & = \frac{p(x_1,y)p(x_2,y)}{p(y)}\frac{p(x_1',y')p(x_2',y')}{p(y')} \\
            & = q(x_1,x_2,y) q(x_1',x_2',y') \,.
        \end{aligned}
    \end{equation}
    Thus we have $(X_1,X_2,Y_1)\ind(X_1',X_2',Y')$ on $Q$, from which it follows that 
    \begin{equation}
        \begin{aligned}
            \IMES((X_1,X_1'),(X_2,X_2');YY') & = I_Q(X_1,X_1';X_2,X_2';YY') \\
            & = I_Q(X_1;X_2;Y) +I_Q(X_1';X_2';Y') \\
            & = \IMES(X_1,X_2;Y) +\IMES(X_1',X_2';Y') \,.
        \end{aligned}
    \end{equation}
\end{proof}

\subsubsection{Proofs regarding \IIG}

\begin{proposition} \label{prop:IG-Sz}
    $\IIG$ satisfies \Sz.
\end{proposition}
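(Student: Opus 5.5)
The plan is to show that the definition of $\IIG$ in Eq.~\eqref{eq:defIIG} is invariant under the swap $X_1\leftrightarrow X_2$. In the bivariate case this is the only non-trivial permutation, so \Sz reduces to showing $\IIG(X_1,X_2;Y)=\IIG(X_2,X_1;Y)$. The argument will track how each ingredient of the construction transforms under the relabelling, and then use the fact that the definition is already written in two equivalent forms.

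The first step concerns the submanifolds. Since $S_i=\{P\in M: \theta_{j3}=\theta_{123}=0\}$, the swap maps $S_1$ to $S_2$ and vice versa, because it exchanges $F_1\leftrightarrow F_2$, $F_{13}\leftrightarrow F_{23}$, $\theta_{13}\leftrightarrow\theta_{23}$, and leaves $F_{123}$ invariant up to reordering its arguments. The projections therefore exchange as well: $\bar P_1\leftrightarrow\bar P_2$. This can be checked directly from the closed forms $\bar P_1=P_{X_1Y}P_{X_1X_2}/P_{X_1}$ and $\bar P_2=P_{X_2Y}P_{X_1X_2}/P_{X_2}$, noting that $P_{X_1X_2}$ is symmetric under the relabelling.

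Next comes the geodesic. Under the swap, the family $\mathcal{N}\bar P_1^{t}\bar P_2^{1-t}$ becomes $\mathcal{N}\bar P_2^{t}\bar P_1^{1-t}=\mathcal{N}\bar P_1^{1-t}\bar P_2^{t}$, which is the same curve reparametrised by $t\mapsto 1-t$. The objective $D(P_{X_1X_2Y}\|\cdot)$ is unchanged, because $P_{X_1X_2Y}$ under relabelling is the same distribution. Hence the minimiser becomes $1-t^*$ and $P^*$ is unchanged. Here I should argue that the minimiser is unique, since $D(P\|\cdot)$ restricted to an exponential-family geodesic is convex in $t$. Alternatively, it is enough to note that the minimal value and the set of minimisers map onto each other, so any consistent choice of $P^*$ is preserved.

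Finally, the first expression in Eq.~\eqref{eq:defIIG}, $D(\bar P_1\|P_{X_1X_2}P_Y)-D(P^*\|\bar P_2)$, becomes $D(\bar P_2\|P_{X_1X_2}P_Y)-D(P^*\|\bar P_1)$ under the swap, since $P_{X_1X_2}P_Y$ is invariant. This is exactly the second form given in Eq.~\eqref{eq:defIIG}. The main obstacle is justifying that these two forms coincide, which is what makes the definition well-posed and hence symmetric. I would prove this via the Pythagorean relation in information geometry. Both $\bar P_1$ and $\bar P_2$ are m-projections of $P$ onto e-flat submanifolds, and $P^*$ lies on the e-geodesic joining them, so the decompositions $I(X_1,X_2;Y)=D(P\|P^*)+D(P^*\|\bar P_j)+D(\bar P_j\|P_{X_1X_2}P_Y)+\dots$ hold for $j=1,2$. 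Equating the two chains and using the orthogonality condition at $t^*$ (the derivative in $t$ vanishing gives $\mathbb{E}_{P}[\log(\bar P_1/\bar P_2)]=\mathbb{E}_{P^*}[\log(\bar P_1/\bar P_2)]$) should yield the equality. If this step proves delicate, I would instead cite Ref.~\cite{niu2019measure}, where the two expressions are established as equal.
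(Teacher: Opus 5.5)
Your proposal is correct and follows essentially the same route as the paper, which writes $\IIG(X_1,X_2;Y)$ in the first form of Eq.~\eqref{eq:defIIG}, switches to the second form, and recognises that form as $\IIG(X_2,X_1;Y)$ after relabelling. The paper takes the equality of the two forms as part of the definition, whereas you also track how the ingredients transform ($\bar P_1\leftrightarrow\bar P_2$, $t^*\mapsto 1-t^*$) and derive that equality. Your derivation is sound: the Pythagorean relations $D(P\|\bar P_j)=D(P\|P^*)+D(P^*\|\bar P_j)$ follow from your first-order condition at $t^*$, and together with $D(P\|P_{X_1X_2}P_Y)=D(P\|\bar P_j)+I(X_j;Y)$ this gives exactly three terms per chain (no ``$+\dots$'' is needed), which yields the claimed identity.
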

\begin{proof}
    This follows easily from the definition of unique information:
    \begin{equation}
    \begin{aligned}
        \IIGs & = D(\bar{P}_1||P_{X_1 X_2}P_Y)-D(P^*||\bar{P}_2) \\
        & = D(\bar{P}_2||P_{X_1 X_2}P_Y)-D(P^*||\bar{P}_1) \\
        & = \IIG(X_2,X_1;Y)
    \end{aligned}
    \end{equation}
\end{proof}

\begin{proposition} \label{prop:IG-Mz}
    $\IIG$ satisfies \Mz.
\end{proposition}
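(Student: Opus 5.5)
The plan is to use the closed form of \IIG in Eq.~\eqref{eq:defIIG}. The first step is to identify the first divergence term in each of the two equivalent expressions with a marginal mutual information. By definition $\bar P_1 = p(x_1,y)\,p(x_1,x_2)/p(x_1) = p(x_1,y)\,p(x_2|x_1)$, so
\begin{equation}
\log\frac{\bar P_1(x_1,x_2,y)}{p(x_1,x_2)p(y)}=\log\frac{p(y|x_1)}{p(y)} \,.
\end{equation}
The $(x_1,y)$-marginal of $\bar P_1$ is $p(x_1,y)$, so taking the expectation under $\bar P_1$ gives $D(\bar P_1\|P_{X_1X_2}P_Y)=I(X_1;Y)$. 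The same computation gives $D(\bar P_2\|P_{X_1X_2}P_Y)=I(X_2;Y)$. The two forms in Eq.~\eqref{eq:defIIG} therefore read
\begin{equation}
\IIGs = I(X_1;Y)-D(P^*\|\bar P_2) = I(X_2;Y)-D(P^*\|\bar P_1) \,.
\end{equation}

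The inequality part of \Mz for $n=2$ is $\IIGs\le I_\cap(X_i;Y)=I(X_i;Y)$, using \SR. It follows at once from the display above by non-negativity of the KL divergence, applied in each of the two forms.

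For the equality condition, suppose without loss of generality (by Prop.~\ref{prop:IG-Sz}) that $X_1=f(X_2)$. We must then show $\IIGs=I(X_1;Y)$, i.e.\ $D(P^*\|\bar P_2)=0$. Here $p(x_1,x_2)=p(x_2)\,\mathbb 1[x_1=f(x_2)]$, hence
\begin{equation}
\bar P_2=p(x_2,y)\,\mathbb 1[x_1=f(x_2)]=P_{X_1X_2Y} \,.
\end{equation}
The true distribution already coincides with $\bar P_2$. Choosing $t=0$ in the geodesic family $\mathcal N\bar P_1^t\bar P_2^{1-t}$ then gives $D(P_{X_1X_2Y}\|\mathcal N\bar P_1^0\bar P_2^{1})=0$. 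So $t^*=0$ is a minimiser and $P^*=\bar P_2$, whence $D(P^*\|\bar P_2)=0$. If the minimiser is not unique, any minimiser has zero divergence from $P$, so $P^*=P=\bar P_2$ regardless.

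The main obstacle is that \IIG was originally defined on full-support distributions, whereas the equality case $X_1=f(X_2)$ is intrinsically not full support. I expect to handle this by working on the (common) support of $P$, where $\bar P_1$, $\bar P_2$ and the exponential-family geodesic are still well defined. Failing that, I would define the value at such boundary points by continuity (\IIG satisfies \CO), taking limits of full-support perturbations for which the identities above hold termwise.
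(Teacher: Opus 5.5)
Your proof is correct and follows the same route as the paper. The inequality comes from writing $\IIGs = I(X_i;Y) - D(\cdot\|\cdot)$ and using non-negativity of the KL divergence. The equality case comes from observing that $X_1=f(X_2)$ forces $\bar P_2 = P_{X_1X_2Y}$, so $P^*=\bar P_2$ and the subtracted divergence vanishes.

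Your explicit identification $D(\bar P_i\|P_{X_1X_2}P_Y)=I(X_i;Y)$ and your restriction to the support of $P$ make precise two points the paper leaves implicit. The continuity fallback would not work ``termwise'', though: for full-support perturbations one has $\bar P_2\neq P$.
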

\begin{proof}
    The inequality condition follows from \SR and the non-negativity of unique information, as it is a KL divergence:
    \begin{equation}
        \IIGs \leq \IIG(X_i;Y)=I(X_i;Y) \leq \IIG(X_1 X_2;Y)=I(X_1 X_2;Y) \quad i=1,2\,,
    \end{equation}
    so it only remains to check the equality condition, which, in a bivariate case, reduces to 
    \begin{equation}
        X_1 = f(X_2) \implies \IIG(X_1,X_2;Y)=\IIG(X_1;Y)=I(X_1;Y) \,.    
    \end{equation}
    We have that
    \begin{align}
        \bar{P}_1 & = \frac{P(X_1,Y)P(X_2)\delta(f(X_2),X_1)}{P(X_1)} \\
        \bar{P}_2 & = \frac{P(X_2,Y)P(X_2)\delta(f(X_2),X_1)}{P(X_2)} = P(X_2,Y)\delta(f(X_2),X_1) = P(X_1,X_2,Y) \,.
    \end{align}
    Thus the $P^*$ that minimises $D(P||P^*)$ is exactly $\bar{P}_2$. This implies that:
    \begin{align}
        \Ip^\text{IG}(X_1\{X_1 X_2\};Y)& = D(P||P^*) = D(P||P)=0 \\
        \Ip^\text{IG}(X_1;Y) & = D(P^*||\bar{P}_2) = D(P||P) = 0 \\
        \Ip^\text{IG}(\{X_1X_2\};Y) & = D(P^*||\bar{P}_1) = D(P||\bar{P}_1) = I(X_1X_2;Y|X_1) \,.
    \end{align}
    Hence $\IIG(X_1,X_1X_2;Y)= I(X_1;Y)$, concluding the proof.
\end{proof}

\begin{proposition} \label{prop:IG-TE}
    $\IIG$ satisfies \TE.
\end{proposition}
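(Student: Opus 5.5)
We need to show $\IIG(X_1,Y;Y)=I(X_1;Y)$, which is exactly \TE in the bivariate case. Equivalently, for the PID with sources $X_1$ and $X_2=Y$, the unique information $\Ip^\text{IG}(X_1;Y)=D(P^*\|\bar P_2)$ must vanish, since by \Eqref-free reasoning the definition in Eq.~\eqref{eq:defIIG} gives $\IIG = I(X_1;Y)-\Ip^\text{IG}(X_1;Y)$ once \SR is used. The plan mirrors the proof of Prop.~\ref{prop:IG-Mz}: compute the two projections $\bar P_1,\bar P_2$ explicitly for this special distribution, identify the optimal geodesic point $P^*$, and read off the atoms.

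First, set $X_2=Y$, so that $P(x_1,x_2,y)=P(x_1,y)\,\delta(x_2,y)$. Then
\begin{equation}
    \bar P_2 = \frac{P(X_2,Y)P(X_1,X_2)}{P(X_2)} = \frac{P(y)\delta(x_2,y)\,P(x_1,x_2)}{P(x_2)} = P(x_1,y)\,\delta(x_2,y) = P \,,
\end{equation}
using $P(x_1,x_2)=P(x_1,y{=}x_2)$. So the projection onto $S_2$ is the original distribution itself, exactly as in the \Mz proof where $\bar P_2=P$. Next, since $P^*$ is defined as the point on the exponential family $\mathcal N \bar P_1^{t}\bar P_2^{1-t}$ minimising $D(P\|\cdot)$, and $t=0$ yields $\bar P_2=P$ with divergence zero, we get $t^*=0$ and $P^*=P$ (the minimum $0$ is attained and KL is non-negative). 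Hence the synergy $D(P\|P^*)=0$ and $\Ip^\text{IG}(X_1;Y)=D(P^*\|\bar P_2)=0$, giving $\IIG(X_1,Y;Y)=D(\bar P_1\|P_{X_1X_2}P_Y)-0$. To close, I would check directly that $D(\bar P_2\|P_{X_1X_2}P_Y)-D(P^*\|\bar P_1)$ also equals $I(X_1;Y)$, or more simply use the second expression: $D(\bar P_2\|P_{X_1X_2}P_Y)=D(P\|P_{X_1Y}P_Y)$, which with $X_2=Y$ equals $I(X_1X_2;Y)=H(Y)$, and $D(P\|\bar P_1)=I(X_1X_2;Y)-I(X_1;Y)$ by the Pythagorean/conditional-MI identity used in Prop.~\ref{prop:IG-Mz}; subtracting gives $I(X_1;Y)$.

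The main obstacle is the degenerate support: with $X_2=Y$ the distribution is not full support, while $\IIG$ was originally defined for full-support distributions, so the exponential-family parametrisation and the uniqueness of $t^*$ are delicate (the family $\bar P_1^t\bar P_2^{1-t}$ may collapse supports for $t>0$). I would handle this by interpreting everything on the support of $P$ (or by a limiting argument from full-support perturbations, invoking \CO of $\IIG$), noting that the minimiser $t^*=0$ is forced because any $t$ giving divergence zero must equal $P$, and the two expressions in Eq.~\eqref{eq:defIIG} agree there.
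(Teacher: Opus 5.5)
Your proposal is correct and follows essentially the same route as the paper: with $X_2=Y$ you show $\bar P_2=P$, so the minimiser is $P^*=\bar P_2=P$. This gives $I_\partial^{\text{IG}}(X_1;Y)=D(P^*\|\bar P_2)=0$ and hence $I_\cap^{\text{IG}}(X_1,Y;Y)=I(X_1;Y)=I_\cap^{\text{IG}}(X_1;Y)$. Your cross-check through the second expression ($H(Y)-H(Y|X_1)$) and your note on the non-full-support degeneracy, which the paper leaves unaddressed, are harmless additions but not needed for the argument.
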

\begin{proof}
    Considering the PID $(X_1,X_2;Y)$ with $X_2=Y$, we have
    \begin{align}
        & P(X_1,X_2,Y) = P(X_1,Y)\delta(X_2,Y) \\[0.3em]
        & \bar{P}_1 = \frac{P(X_1,Y)P(X_1,X_2)\delta(X_2,Y)}{P(X_1)} \\
        & \bar{P}_2 = \frac{P(X_2,Y)P(X_2,X_1)\delta(X_2,Y)}{P(Y)} = P(X_1,Y)\delta(X_2,Y) \,.
    \end{align}
    Thus the $P^*$ that minimises $D(P||P^*)$ is exactly $\bar{P}_2$. Similarly to Prop.~\ref{prop:IG-Mz}, this quickly implies that:
    \begin{equation}
        \Ip^\text{IG}(X_1;Y) = D(P^*||\bar{P}_2) = D(P||P) = 0 \,,
    \end{equation}
    and hence $\IIG(X_1,X_2;Y)=I(X_1;Y)=\IIG(X_1;Y)$.
    
\end{proof}

\begin{proposition} \label{prop:IG-noAST}
    \IIG does not satisfy \AST.
\end{proposition}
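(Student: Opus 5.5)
To show that \IIG violates \AST, I will exhibit two joint distributions $P$ and $P'$ on $(X_1,X_2,Y)$ that share all source--target marginals, $P(X_i,Y)=P'(X_i,Y)$ for $i=1,2$, yet give different values of $\IIGs$. Since \AST requires \Icap to be a function of $p(Y)$ and $p(X_i,Y)$ only, one such pair suffices.

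The formula \eqref{eq:defIIG} contains the term $D(\bar{P}_1\|P_{X_1X_2}P_Y)$, and $\bar{P}_1=P_{X_1Y}P_{X_1X_2}/P_{X_1}$ depends explicitly on the source--source marginal $P_{X_1X_2}$. This suggests fixing the pairwise source--target marginals and varying only the correlation between the sources.

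The concrete choice uses binary variables with $Y$ uniform and $X_1=Y$, $X_2=Y$ flipped with probability $\epsilon\in(0,1/2)$.
\begin{itemize}
    \item Under $P$, the flip noise is independent of everything.
    \item $P'$ is a second coupling with the same $(X_1,Y)$ and $(X_2,Y)$ marginals.
\end{itemize}
Because $X_1=Y$ is deterministic here, however, $X_2$'s relation to $X_1$ is forced, so any coupling with these marginals collapses back to $P$. I will therefore make $X_1$ noisy as well: $X_1=Y\oplus N_1$ and $X_2=Y\oplus N_2$. Under $P$, take $N_1\ind N_2$. Under $P'$, take $N_1=N_2$ (perfectly correlated noise). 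Both have identical $(X_i,Y)$ marginals, which is immediate since each $N_i\sim\mathrm{Bern}(\epsilon)$ independent of $Y$. Their $(X_1,X_2)$ marginals differ.

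Both $P$ and $P'$ must have full support, because \IIG was originally defined only for full-support distributions. $P$ has full support, but $P'$ with $N_1=N_2$ does not, since then $X_1=X_2$. So I will instead take $N_1,N_2$ correlated but not identical, with $P(N_1\neq N_2)=\eta>0$. This keeps full support while preserving the marginals.

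Then I evaluate $\IIG$ for both distributions. In the independent case $P$, we have $X_1\ind X_2\mid Y$, and the $\theta_{123}$ term should vanish by symmetry. A natural guess is that $P$ lies in the family $\bar P_1^t\bar P_2^{1-t}$, giving zero synergy. The redundancy is then computable in closed form as $D(\bar P_1\|P_{X_1X_2}P_Y)-D(P\|\bar P_2)$.

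The main obstacle is the non-closed-form minimisation over $t^*$ for $P'$. I would attempt to exploit symmetry between $X_1$ and $X_2$, which should force $t^*=1/2$ by \Sz (Prop.~\ref{prop:IG-Sz}) and convexity of the KL objective in $t$. With $t^*=1/2$ fixed, the redundancy in both cases reduces to explicit sums over eight outcomes, evaluated at a specific value such as $\epsilon=0.1$ and $\eta=0.05$. Showing that the two values differ then only needs a numerical evaluation, as in the paper's other empirical checks. If the symmetry argument fails, I would fall back on a direct numerical computation of $t^*$ for both distributions.
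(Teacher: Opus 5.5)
Your route is different from the paper's, and it works. The paper argues by inspection: it expands $D(P^*\|\bar P_2)$, notes that $p(x_1,x_2)$ appears in the summand and in the normaliser $\mathcal N$, and concludes that \IIG depends on more than the source--target marginals. That is quick, but strictly it is not a proof, because the explicit appearance of $p(x_1,x_2)$ does not rule out cancellation. For example, $D(\bar P_1\|P_{X_1X_2}P_Y)$ also contains $p(x_1,x_2)$, yet it equals $I(X_1;Y)$. An explicit pair of full-support distributions with identical $P(X_i,Y)$ and different redundancy, as you propose, is an actual witness. Your approach therefore trades brevity for rigour. Your noisy-copy family is a good choice; full support of $P'$ needs $0<\eta<2\epsilon$, which your values satisfy.

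Two of your intermediate remarks are wrong, though neither is needed.
\begin{itemize}
\item Since $D(\bar P_1\|P_{X_1X_2}P_Y)=I(X_1;Y)$, the dependence on $P_{X_1X_2}$ can only enter through the unique-information term $D(P^*\|\bar P_2)$, not through the first term as your motivation suggests.
\item The conditionally independent $P$ is not on the geodesic. For every $t$, $Q_t=\mathcal N\bar P_1^{t}\bar P_2^{1-t}$ gives $Q_t(0,0,0)/Q_t(0,0,1)=(1-\epsilon)/\epsilon$ (ordering $(x_1,x_2,y)$), whereas $P(0,0,0)/P(0,0,1)=(1-\epsilon)^2/\epsilon^2$. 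So the synergy is positive, and $D(\bar P_1\|P_{X_1X_2}P_Y)-D(P\|\bar P_2)$ is just the co-information, not \IIG.
\end{itemize}

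What carries the argument is your symmetry step, though the reason is not \Sz of the measure. Both $P$ and $P'$ are invariant under $X_1\leftrightarrow X_2$, and this swap maps $Q_t$ to $Q_{1-t}$. Hence $f(t)=D(P\|Q_t)$ satisfies $f(t)=f(1-t)$. Moreover $f$ is convex, because $-\log Q_t$ is affine in $t$ plus the log-partition function $\log\sum\bar P_1^{t}\bar P_2^{1-t}$. So $t^*=1/2$ for both distributions.

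With $t^*=1/2$ no numerics are needed. We have $P^*/\bar P_2=Z^{-1}\sqrt{p(y|x_1)/p(y|x_2)}$. Because the two channels coincide and $P^*$ is swap-symmetric, the log-ratio averages to zero, giving
\[
D(P^*\|\bar P_2)=-\log Z,\qquad Z=\sum_{x_1,x_2}p(x_1,x_2)\sum_y\sqrt{p(y|x_1)\,p(y|x_2)}=1-b\bigl(1-2\sqrt{\epsilon(1-\epsilon)}\bigr),
\]
where $b=P(N_1\neq N_2)$. With base-2 logarithms,
\[
I_\cap^{\text{IG}}(X_1,X_2;Y)=1-h(\epsilon)+\log\bigl(1-b\,(1-2\sqrt{\epsilon(1-\epsilon)})\bigr),
\]
which is strictly decreasing in $b$ for $\epsilon\neq 1/2$. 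Under $P$ we have $b=2\epsilon(1-\epsilon)$, and under $P'$ we have $b=\eta$. The two values therefore differ whenever $\eta\neq 2\epsilon(1-\epsilon)$. For $\epsilon=0.1$ and $\eta=0.05$ they are roughly $0.42$ and $0.50$ bits, which completes your counterexample.
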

\begin{proof}
    Recall the definition of redundancy (Eq.~\eqref{eq:defIIG}):
    \begin{equation}
        \IIGs = D(\bar{P}_1||P_{X_1X_2}P_Y)-D(P^*||\bar{P}_2) \,.
    \end{equation}
    The first term is the mutual information $I(X_1; Y)$, and hence only depends on the marginal distribution $P(X_1, Y)$. On the other hand, writing the second term explicitly, we obtain:
    \begin{equation}
        \begin{split}
            D(P^*||\bar{P}_2) & = \sum_{x_1,x_2,y} \mathcal{N}\bar{p}_1^t\bar{p}_2^{1-t} \log \frac{\mathcal{N}\bar{p}_1^t\bar{p}_2^{1-t}p(x_1)}{p(x_1,x_2)p(x_1,y)} = \\
            & = \sum_{x_1,x_2,y} \mathcal{N}p(x_1,x_2)p(y|x_1)^t p(y|x_2)^{1-t} \log \frac{\mathcal{N}p(x_1,x_2)p(y|x_1)^t p(y|x_2)^{1-t}p(x_1)}{p(x_1,x_2)p(x_1,y)} \\
            & = \sum_{x_1,x_2,y} \mathcal{N}p(x_1,x_2)p(y|x_1)^t p(y|x_2)^{1-t} \log \left(\mathcal{N}p(y|x_1)^{t-1} p(y|x_2)^{1-t}\right) \,,
        \end{split}
    \end{equation}
    with $\mathcal{N} = (p(x_1,x_2)p(y|x_1)^t p(y|x_2)^{1-t})^{-1}$.
    Hence, although there is no explicit term including $p(x_1,x_2)$ inside the logarithm, the redundancy evidently still depends on such a term in the summation and inside the normalisation constant $\mathcal{N}$. Thus \IIG does not satisfy \AST. 
\end{proof}

\subsubsection{Proofs regarding \ICT}

\begin{proposition} \label{prop:CT-noMechRed}
    $\ICT$ does not allow for mechanistic redundancy.
\end{proposition}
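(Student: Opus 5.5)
Following the pattern of Prop.~\ref{prop:RR-noMechRed}, the plan is to show that if $X_1\ind X_2$ then $\ICT(X_1,X_2;Y)=0$ for every target $Y$. Since $\ICT$ is the minimum of the two cascade path informations, it suffices to show that each path information vanishes identically under independence of the sources. Note that this is stronger than needed: the minimum would already vanish if only one cascade did, given \GP.

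\textbf{Step 1: the cascade $X_1\to X_2\to Y$.} Start from the explicit expression
\begin{equation*}
I_{P_{X_1-Y}}\argct{X_1}{X_2}{Y}=\sum_{x_1x_2y}p(x_1,x_2,y)\log\frac{\sum_{x_2'}p(x_2'|x_1)\,p(y|x_2')}{p(y)} .
\end{equation*}
For every $x_1$ with $p(x_1)>0$, independence gives $p(x_2'|x_1)=p(x_2')$. The numerator inside the logarithm then becomes $\sum_{x_2'}p(x_2')p(y|x_2')=p(y)$ by marginalisation. Hence the argument of the logarithm equals $1$ for every $(x_1,x_2,y)$ in the support, and the whole sum is $0$.

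\textbf{Step 2: the reverse cascade $X_2\to X_1\to Y$.} The same computation with the roles of $X_1$ and $X_2$ exchanged gives $I_{P_{X_2-Y}}\argct{X_2}{X_1}{Y}=0$. Therefore $\ICT(X_1,X_2;Y)=\min(0,0)=0$, independently of how $Y$ depends on $(X_1,X_2)$. In particular no redundancy can be generated purely by the mechanism, which is the claim. As a byproduct, this also shows that $\ICT$ depends on $p(x_1,x_2)$, since correlating the sources can make the path terms non-zero; this is the \AST violation referenced earlier.

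\textbf{Expected obstacle.} The only delicate point is bookkeeping rather than a real difficulty. The phrase ``calculated on the probability distribution induced by the cascade'' could be read as replacing the outer weights $p(x_1,x_2,y)$ by the cascade-induced joint distribution. I will note that under either reading the logarithm's argument is identically $1$, so the weights are irrelevant. I will also restrict the sums to the support, so that the conditionals $p(\cdot|x_1)$ are well defined.
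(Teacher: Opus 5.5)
Your proposal is correct and follows essentially the paper's argument: under $X_1\ind X_2$ the cascade channel $\sum_{x_2'}p(x_2'|x_1)\,p(y|x_2')$ reduces to $p(y)$, so the logarithm vanishes and $\ICT(X_1,X_2;Y)=0$ for every $Y$. The only difference is that you evaluate both cascades explicitly, whereas the paper assumes without loss of generality that the minimum is attained by the $X_1\to X_2\to Y$ path; this makes your version slightly more airtight.
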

\begin{proof}
    We prove that if $X_1\ind X_2$ then $\ICT(X_1,X_2;Y)=0$ $\forall Y$: 
    \begin{equation}
    \begin{aligned}
        \ICT(X_1,X_2;Y) & = \sum_{ijk} p^{ijk} \log \left(\frac{\sum_l A_i^lB^k_l}{p^k} \right) \\
        & = \sum_{ijk}p^{ijk}\log\left(\frac{\sum_l p^lB^k_l}{p^k} \right) \\
        & = \sum_{ijk}p^{ijk}\log\left(\frac{p^k}{p^k} \right) = 0 \,,
    \end{aligned}
    \end{equation}
    where in the first step we assumed without loss of generality that the maximum inside the redundancy is obtained by $I_{P_{X_1-Y}}\argct{X_1}{X_2}{Y}$, and in the second step we made use of the fact that $X_1\ind X_2$ and hence $A^j_i=p^j$.
\end{proof}

\begin{proposition} \label{prop:CT-BP}
    $\ICT$ does not satisfy \BP.
\end{proposition}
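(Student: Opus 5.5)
We need to show that \ICT violates \BP. \BP is an "if and only if", so the plan is to break one direction. The cleanest route is the direction "Blackwell inferiority $\Rightarrow$ zero unique information" ($\Longleftarrow$ in Eq.~\eqref{ax:BP}), shown false via an explicit distribution.

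We cannot reuse Theo.~\ref{theo:BP--TE-SE}, because \ICT satisfies both \SE and \TE. The counterexample must therefore be a genuine garbling, not a deterministic function of the other source.

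The key tool is Prop.~\ref{prop:CT-noMechRed}: whenever $X_1\ind X_2$, we have $\ICT(X_1,X_2;Y)=0$ for every target $Y$. Hence for independent sources
\[
I_\partial^{\text{CT}}(X_1;Y)=I(X_1;Y)-\ICT(X_1,X_2;Y)=I(X_1;Y).
\]
So it suffices to build independent sources $X_1\ind X_2$ with $I(X_1;Y)>0$ and $X_1\preceq_Y X_2$.

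I would use a conditionally independent construction, with $Y$ uniform on $\{0,1,2,3\}$. Take $X_2=(Y \bmod 2,\,\lfloor Y/2\rfloor)$ composed with a channel, and $X_1$ a stochastic garbling of $X_2$ drawn conditionally on $Y$. More concretely, let $Y=(Y_a,Y_b)$ be two independent fair bits, and set
\begin{itemize}
\item $X_2=Y_a\oplus Y_b$ (independent of each bit separately but informative about $Y$);
\item $X_1=k(X_2)$ generated through a channel applied to a fresh conditional copy, i.e.\ $p(x_1|y)=\sum_{x_2}k(x_1|x_2)p(x_2|y)$ with $X_1\ind X_2\mid Y$.
\end{itemize}
Blackwell inferiority $X_1\preceq_Y X_2$ then holds by construction, via Eq.~\eqref{eq:def_blackwell}. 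What remains is to check that marginally $X_1\ind X_2$ while $I(X_1;Y)>0$.

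This last check is the main obstacle. Conditional independence given $Y$ together with garbling usually induces correlation between $X_1$ and $X_2$, because $I(X_1;X_2)$ is driven by the shared dependence on $Y$. If independence fails, I would abandon it and instead compute \ICT directly on a small binary example. For instance, take $Y\sim\mathrm{Bern}(1/2)$, $X_2=Y$, and $X_1=Y$ passed through a binary symmetric channel with flip probability $\epsilon$, drawn conditionally independently.

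In that fallback, $X_1\preceq_Y X_2$ trivially holds. This direct computation is exactly \TE, which \ICT satisfies, so it gives zero unique information and no counterexample. A nontrivial case is therefore needed: $X_2$ a noisy copy of $Y$ with flip probability $\epsilon_2$, and $X_1$ a further BSC garbling of $X_2$ with flip probability $\epsilon_1$.

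I would evaluate both cascade path informations:
\begin{itemize}
\item For $X_1\to X_2\to Y$, the information is $I(X_1;Y)$, since $X_1 - X_2 - Y$ is Markov.
\item For $X_2\to X_1\to Y$, the cascade loses more information, giving a value strictly below $I(X_1;Y)$ for generic $\epsilon_1,\epsilon_2$.
\end{itemize}
The minimum is then less than $I(X_1;Y)$, so $I_\partial^{\text{CT}}(X_1;Y)>0$ despite Blackwell inferiority, which contradicts \BP. Verifying that strict inequality numerically, or via a data-processing argument on the composed BSCs, is the step I expect to require the most care.
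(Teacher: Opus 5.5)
There is a genuine gap: the distribution you finally settle on does not violate \BP. In your fallback the garbling is realised by the joint distribution itself, i.e.\ $Y - X_2 - X_1$ is a Markov chain, and you correctly find that the cascade $X_1\to X_2\to Y$ equals $I(X_1;Y)$. The problem is your claim that the cascade $X_2\to X_1\to Y$ falls strictly below $I(X_1;Y)$: it is false, and the reverse inequality holds for every such Markov garbling. Write $q(y|x_2)=\sum_{x_1}p(x_1|x_2)p(y|x_1)$. That path information then equals $I(X_2;Y)-\mathbb{E}_{x_2}D\bigl(p(y|x_2)\,\|\,q(y|x_2)\bigr)$. Convexity of the KL divergence in its second argument, together with $p(y|x_1,x_2)=p(y|x_2)$, gives
\[
\mathbb{E}_{x_2}D\bigl(p(y|x_2)\,\|\,q(y|x_2)\bigr)\le\sum_{x_1,x_2}p(x_1,x_2)\,D\bigl(p(y|x_2)\,\|\,p(y|x_1)\bigr)=I(X_2;Y|X_1),
\]
so the path information is at least $I(X_2;Y)-I(X_2;Y|X_1)=I(X_1;Y)$. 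Hence $\ICT(X_1,X_2;Y)=I(X_1;Y)$ and $I_\partial^{\text{CT}}(X_1;Y)=0$, exactly as \BP demands. For instance, with $\epsilon_1=\epsilon_2=0.1$ the second cascade gives about $0.433$ bits against $I(X_1;Y)\approx 0.320$ bits. The paper's proof locates the failure of \BP in joint distributions where the channel $k$ witnessing Blackwell inferiority is not the actual conditional between the sources. A physically degraded chain is the opposite situation.

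Your first route was the right one, and the obstacle you anticipated there is not real. Blackwell order only constrains $p(x_1|y)$ and $p(x_2|y)$, so the coupling of the sources given $Y$ is free. It need not be conditionally independent, and it can be chosen to cancel the correlation induced through $Y$. For example, let $X_1,X_2$ be independent fair bits, and let $Y=X_1$ if $X_1=X_2$ and an independent fair bit otherwise. Then:
\begin{itemize}
\item $p(x_1|y)$ and $p(x_2|y)$ are both the binary symmetric channel with flip probability $1/4$, so $X_1\preceq_Y X_2$ (take $k$ to be the identity);
\item Prop.~\ref{prop:CT-noMechRed} gives $\ICT(X_1,X_2;Y)=0$;
\item hence $I_\partial^{\text{CT}}(X_1;Y)=I(X_1;Y)=1-h(1/4)\approx 0.19$ bits, with $h$ the binary entropy.
\end{itemize}
This contradicts the $(\Longleftarrow)$ direction of \BP. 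Completed this way, your argument yields an explicit counterexample. That is more concrete than the paper's proof, which only notes that the relevant cascade need not reproduce $p(y|x_2)$ and never exhibits a distribution where it fails.
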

\begin{proof}
    We prove the statement by showing that \BP does not necessarily imply vanishing unique information. Consider the Blackwell ordering $X_2\preceq_Y X_1$, this implies that $\ICTs\leq I(X_2;Y)$, in fact to have equality we would need 
    \begin{equation} \label{eq:CT_BP_casc}
        \sum_m A_j^{\dagger m} C^k_m = B^k_j \,,
    \end{equation}
    but the LHS can be written as 
    \begin{equation} \label{eq:CT_BP_requir1}
        \sum_m A_j^{\dagger m} C^k_m = \sum_m p(m|x_2)p(y|m) = \sum_m \frac{p(x_2|m)p(m,y)}{p(x_2)} 
    \end{equation}
    where we used the following identity on the Blackwell order:
    \begin{equation} \label{eq:BP_rewriting}
    \begin{aligned}
        p(x_2|y) & = \sum_x k(x_2|x_1) p(x_1|y) \\
        p(x_2,y) & = \sum_x k(x_2|x_1) p(x_1,y) \\
        p(y|x_2) & = \sum_x \frac{k(x_2|x_1) p(x_1,y)}{p(x_2)} \,.
    \end{aligned}
    \end{equation}
    On the other hand, the RHS is 
    \begin{equation} \label{eq:CT_BP_requir2}
        B^k_j = p(y|x_2) = \sum_m \frac{k(x_2|m)p(m,y)}{p(x_2)} \,.
    \end{equation}
    However, Eqs.\eqref{eq:CT_BP_requir1}-\eqref{eq:CT_BP_requir2} need not be the same, as the stochastic channel $k$ can differ from the conditional probability distribution $p$.
    Since redundancy is smaller than the marginal mutual information, unique information cannot be zero.
\end{proof}

Although Prop.~\ref{prop:CT-BP} denies \BP for $\ICT$, one side of the implication still holds. 
\begin{remark} \label{rem:CT-BP-undirected}
    Vanishing unique information in $\ICT$ implies \BP.
\end{remark}
\begin{proof}
    We prove this by showing that if $\ICTs=I(X_2;Y)$, and hence $\Ip^\text{CT}(X_2;Y)=0$, then $X_1\preceq_Y X_2$. The general proof for an arbitrary number of sources is a direct extension of the following.
    First, we notice that the Blackwell ordering imposes that the cascade $X_2\rightarrow X_1\rightarrow Y$ is equal to $X_2\rightarrow Y$:
    \begin{equation} \label{eq:CT_BP_casc}
        \sum_m A_j^{\dagger m} C^k_m = \sum_m p(m|x_2)p(y|m) = \sum_m \frac{p(x_2|m)p(m,y)}{p(x_2)} = p(y|x_2) = B^k_j \,,
    \end{equation}
    where in the last step we used Eq.\eqref{eq:BP_rewriting} and took the stochastic channel $k$ to be the conditional probability $p$.
    With this at hand, we can write:
    \begin{equation} \label{eq:CT_BP_red}
    \begin{aligned}
        \ICTs & = \min\left(I_{P_{X_1-Y}}\argct{X_1}{X_2}{Y}, I_{P_{X_2-Y}}\argct{X_2}{X_1}{Y}\right) \\
        & = I_{P_{X_2-Y}}\argct{X_2}{X_1}{Y} \\
        & = \sum_{ijk} p^{ijk} \log \left(\frac{\sum_m A_j^{\dagger m} C^k_m}{p^k} \right) \\
        & = \sum_{ijk} p^{ijk} \log \left(\frac{B_j^k}{p^k} \right) \\
        & = I(X_1;Y) \,,
    \end{aligned}
    \end{equation}
    where in the first step we used that the cascade $X_2\rightarrow Y$ is directed ($\sum_lA_i^lB_l^k\ne C^k_i$), so it is not a valid path for redundancy.  
\end{proof}

\subsubsection{Proofs regarding \Iprec}

\begin{proposition} \label{prop:Prec-noTM}
    $\Iprec$ does not satisfy \TM\footnote{We thank Artemy Kolchinsky for the suggestion of the reported counterexample.}.
\end{proposition}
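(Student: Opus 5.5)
The plan is to refute \TM with an explicit small counterexample. We look for sources $X_1,X_2$ and a target split as $(Y_1,Y_2)$ such that $\Iprec(X_1,X_2;Y_1Y_2)<\Iprec(X_1,X_2;Y_1)$. Blackwell inferiority relative to a finer target is a stronger requirement, because the garbling channel must reproduce $p(q\mid y_1,y_2)$ rather than only $p(q\mid y_1)$. So refining the target can shrink the feasible set of $Q$ in the definition of $\Iprec$.

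\textbf{Construction.} Let $Y_1\sim Bern(1/2)$ and, independently, $Y_2\sim Bern(\varepsilon)$ with $0<\varepsilon<1/2$. Set $X_1=Y_1$ and $X_2=Y_1\oplus_2 Y_2$.

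\textbf{Step 1: target $Y_1$.} Relative to $Y_1$, the channel $p(x_2\mid y_1)$ is a binary symmetric channel with crossover $\varepsilon$. It is therefore obtained from the noiseless channel $p(x_1\mid y_1)$ by the garbling $k(x_2\mid x_1)=\mathrm{BSC}_\varepsilon$, so $X_2\preceq_{Y_1} X_1$. Taking $Q=X_2$ is feasible (it is trivially inferior to itself and to $X_1$). Since also $\Iprec\le\IMMI$, this gives $\Iprec(X_1,X_2;Y_1)=I(X_2;Y_1)=1-h(\varepsilon)>0$.

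\textbf{Step 2: target $(Y_1,Y_2)$.} This is the key step. Any feasible $Q$ satisfies $Q\preceq_{Y_1Y_2}X_1$, so $p(q\mid y_1,y_2)=\sum_{x_1}k_1(q\mid x_1)\,p(x_1\mid y_1,y_2)=k_1(q\mid y_1)$, because $X_1$ is a deterministic function of $y_1$. Likewise $Q\preceq_{Y_1Y_2}X_2$ gives $p(q\mid y_1,y_2)=k_2(q\mid y_1\oplus_2 y_2)$. Since $(Y_1,Y_2)$ has full support, the two expressions must agree on all four points. Thus $k_1(q\mid y_1)=k_2(q\mid y_1\oplus_2 y_2)$ for all $y_1,y_2$. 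Fixing $y_1$ and varying $y_2$ shows that $k_2(q\mid\cdot)$ is constant. Hence $p(q\mid y_1,y_2)$ does not depend on $(y_1,y_2)$, so $Q\ind(Y_1,Y_2)$ and $I(Q;Y_1Y_2)=0$. The maximum is therefore $\Iprec(X_1,X_2;Y_1Y_2)=0<1-h(\varepsilon)$, which violates \TM.

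The main obstacle is the argument that every common Blackwell-inferior $Q$ is uninformative about the joint target. The point to handle carefully is that the garbling is applied to the channels conditioned on the full target $(Y_1,Y_2)$, not on $Y_1$ alone. The full-support assumption on $(Y_1,Y_2)$ is what forces the garbling identities to hold at every point. I would write out that consistency argument explicitly and keep the rest brief.
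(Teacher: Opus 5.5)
Your proof is correct, and it is essentially the paper's counterexample. The map $(Y_1,Y_2)\mapsto(Y_1,Y_1\oplus_2 Y_2)=(X_1,X_2)$ is a bijection, so your joint target is a relabelled Two-Bit Copy of a full-support pair of dependent sources. Your coarse target $Y_1=X_1$ plays the role of the paper's target $X_2$.

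The difference lies in how the two redundancy values are obtained:
\begin{itemize}
\item For the joint target, the paper cites Kolchinsky's identification of $\Iprec(X_1,X_2;X_1X_2)$ with the G\'acs--K\"orner common information, which vanishes under full support. Your garbling-consistency argument is exactly that G\'acs--K\"orner argument, carried out by hand for this case.
\item For the coarse target, the paper invokes \BP. You instead exhibit the feasible $Q=X_2$ and close with the bound $\Iprec\le\IMMI$.
\end{itemize}

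Your self-contained version also makes explicit something the paper's wording glosses over. Strict positivity of the coarse-target redundancy $I(X_1;X_2)$ requires dependent sources, which your choice $\varepsilon<1/2$ guarantees. Full support alone does not suffice, as independent uniform bits show.
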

\begin{proof}
    The proof follows by a direct counterexample. Consider a TBC system with full support, i.e.\ $Y=(X_1,X_2)$ and $p(x_1,x_2)>0$ $\forall x_1, x_2$. In Ref.~\cite{kolchinsky2022novel} it was shown that $\Iprec(X_1,X_2;X_1,X_2)$ is equal to the G\'acs-K\"orner common information, which vanishes for distributions with full support. Hence $\Iprec(X_1,X_2;X_1,X_2)=0$. 
    
    On the other hand, by \BP we have that $\Iprec(X_1,X_2;X_2)=I(X_1;X_2)>0$, where the strict positivity comes from the full-support requirement. Thus $0=\Iprec(X_1,X_2;X_1,X_2)<\Iprec(X_1,X_2;X_2)=I(X_1;X_2)$ and \TM does not hold.
\end{proof}

\subsubsection{Proofs regarding \Idelta}

\begin{proposition} \label{prop:Delta-Mz}
    $\Idelta$ satisfies \Mz.
\end{proposition}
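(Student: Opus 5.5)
Since $\Idelta$ is only defined for $n=2$, the plan is to reduce \Mz to two claims, following the template of Prop.~\ref{prop:IG-Mz}. The first is the inequality $\Idelta(X_1,X_2;Y)\le \Idelta(X_i;Y)=I(X_i;Y)$ for $i=1,2$, using \SR for the right-hand side. The second is the equality condition: whenever $X_1=f(X_2)$, we need $\Idelta(X_1,X_2;Y)=I(X_1;Y)$. By \Sz (which holds by definition) it suffices to treat $X_1=f(X_2)$.

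The inequality is immediate from the nonnegativity of the deficiency. Since $\delta(Y;X_i\backslash X_j)$ is an infimum of expected KL divergences, it is $\ge 0$. Hence each argument of the minimum satisfies $I(X_i;Y)-\delta(Y;X_i\backslash X_j)\le I(X_i;Y)$. The minimum of the two is therefore bounded by both $I(X_1;Y)$ and $I(X_2;Y)$.

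For the equality case, assume $X_1=f(X_2)$. First, I will show $\delta(Y;X_1\backslash X_2)=0$. To do so, choose the deterministic channel $P(X_1'=x_1\mid X_2=x_2)=\mathbb{1}[x_1=f(x_2)]$. Then $(P_{X_1'|X_2}\circ P_{X_2|Y})(x_1|y)=\sum_{x_2:f(x_2)=x_1}p(x_2|y)=p(x_1|y)$, using the Markov structure $Y\to X_2\to X_1$. Each KL term therefore vanishes, and nonnegativity gives $\delta=0$, so the first argument of the minimum equals $I(X_1;Y)$. Second, I must show that the other argument satisfies $I(X_2;Y)-\delta(Y;X_2\backslash X_1)\ge I(X_1;Y)$, i.e.\ $\delta(Y;X_2\backslash X_1)\le I(X_2;Y)-I(X_1;Y)=I(X_2;Y\mid X_1)$, where the last identity uses $X_1=f(X_2)$.

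This bound on the reverse deficiency is the main obstacle. I would try the channel $P(X_2'|X_1)=p(x_2|x_1)$, the true conditional, which is well defined because $X_1$ is a function of $X_2$. The composed channel is then $\sum_{x_1}p(x_2|x_1)p(x_1|y)$. I would compute $\mathbb{E}_{P_Y}\big[D(P_{X_2|Y}\,\|\,P_{X_2'|X_1}\circ P_{X_1|Y})\big]$. Because $x_1=f(x_2)$ is fixed by $x_2$, the composition equals $p(x_2|f(x_2))\,p(f(x_2)|y)$. Meanwhile $p(x_2|y)=p(x_2|f(x_2),y)\,p(f(x_2)|y)$. The log-ratio is thus $\log \frac{p(x_2|x_1,y)}{p(x_2|x_1)}$, whose expectation is exactly $I(X_2;Y|X_1)$. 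Since the deficiency is an infimum over channels, this yields $\delta(Y;X_2\backslash X_1)\le I(X_2;Y|X_1)$, and hence the minimum in the definition of $\Idelta$ is attained at the first argument, equal to $I(X_1;Y)$. In the Gaussian setting where $\Idelta$ is formulated, the sums become integrals and the same argument should go through, provided the conditional densities exist. That technical point is the part I would double-check.
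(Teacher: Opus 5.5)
Your proof is correct. Its equality part is essentially the paper's \emph{alternative} argument: a deterministic channel shows the deficiency of the coarser source vanishes, and the true conditional bounds the reverse deficiency. The paper's main proof instead takes the inequality from \LPz and the equality from \BP.

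Your version is tighter in two respects. First, you obtain the inequality directly from $\delta\ge 0$. So you do not need \LPz, which the paper supports for \Idelta only by empirical tests. Second, you claim only $\delta(Y;X_2\backslash X_1)\le I(X_2;Y\mid X_1)$, which is all the argument needs. The paper instead asserts equality, on the grounds that the infimum is attained at the true channel, and that is false in general.

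For a counterexample, take $Y$ a uniform bit, $X_1=Y\oplus E$ with $E\sim\mathrm{Bern}(\varepsilon)$ independent, and $X_2=(X_1,Y)$. The channel $k(x_2\mid x_1)=p(x_2\mid Y=x_1)$ gives $q(\cdot\mid y)=(1-\varepsilon)\,p(\cdot\mid y)+\varepsilon\, p(\cdot\mid 1-y)$, and the two components have disjoint supports. The expected divergence is therefore $-\log(1-\varepsilon)$. This is strictly below $I(X_2;Y\mid X_1)=h(\varepsilon)$ for $0<\varepsilon<1/2$, where $h$ is the binary entropy. A jointly Gaussian analogue exists too. Since only the upper bound enters, the conclusion $\Idelta(X_1,X_2;Y)=I(X_1;Y)$ survives, but your one-sided formulation is the correct statement.

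On your Gaussian caveat: if $X_1=f(X_2)$ with $(X_1,X_2,Y)$ jointly Gaussian, then $f$ is affine almost surely. Both channels you use, the deterministic affine map and the true conditional, are then (possibly degenerate) linear-Gaussian channels, so the argument carries over.
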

\begin{proof}
    The inequality condition follows from \LPz. For the equality condition, it is enough to note that $X_2=f(X_1)$ is a stronger condition than the Blackwell ordering $X_2\preceq_Y X_1$, and hence from \BP we have $\Ip^\delta(X_2;Y)=0$, and thus $\Ideltas = I(X_2;Y)=\Idelta(X_2;Y)$.

    \textit{Alternative proof for equality condition.}
    For the equality condition, we need to prove that $\Idelta(X_1, f(X_1); Y)=\Idelta(f(X_1); Y)$.
    Consider $X_2=f(X_1)$, then we have
    \begin{equation}
        \begin{aligned}
            \Idelta(X_1, f(X_1); Y) & = \min \Bigl(I(X_1;Y)-\delta(Y;X_1\backslash f(X_1)), I(f(X_1);Y)-\delta(Y;f(X_1)\backslash X_1)\Bigr) \\
            & = \min \Bigl(I(X_1;Y)-\delta(Y;X_1\backslash f(X_1)), I(f(X_1);Y)\Bigr) \\
            & = \min \Bigl(I(X_1;Y)-I(X_1;Y|f(X_1)), I(f(X_1);Y)\Bigr) \\
            & = I(f(X_1);Y) \\
            & = I(X_2;Y) \,,
        \end{aligned}
    \end{equation}
    where we used that:
    \begin{equation}
        \begin{aligned}
            \delta(Y;X_1\backslash f(X_1)) & =\inf_k \mathbb{E}_{P_Y}\left[D(P_{X_1|Y}||K_{X_1|f(X_1)}\circ P_{f(X_1)|Y})\right] \\
            & = \inf_k \mathbb{E}_{P_Y}\left[\int dx_1 \,p(x_1|y)\log\frac{p(x_1|y)}{\int d{f(x_1)'} \,k(x_1|{f(x_1)'})p({f(x_1)'}|y)}\right] \\
            & = \mathbb{E}_{P_Y}\left[\int dx_1 \,p(x_1|y)\log\frac{p(x_1|y)}{\int df(x_1) \,\frac{p(x_1)p(f(x_1)'|x_1)}{p(f(x_1)')}p(f(x_1)'|y)}\right] \\
            & = \mathbb{E}_{P_Y}\left[\int dx_1 \,p(x_1|y)\log\frac{p(x_1|y)\,p(f(x_1))}{p(x_1)p(f(x_1)|y)}\right] \\
            & = \mathbb{E}_{P_Y}\left[\int dx_1 \,p(x_1|y)\log\frac{p(x_1|f(x_1),y)}{p(x_1|f(x_1))}\right] \\
            & = I(X_1;Y|f(X_1)) \,,
        \end{aligned}
    \end{equation}
    where we used that the infimum is attained by the true channel $p(x_1|f(x_1))$, and that $p(x_1,f(x_1))=p(x_1)$.
\end{proof}

\begin{proposition} \label{prop:Delta-TE}
    $\Idelta$ satisfies \TE.
\end{proposition}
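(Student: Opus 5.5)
The plan is to reduce \TE for \Idelta to \BP, mirroring the first argument in Prop.~\ref{prop:Delta-Mz}. In the bivariate setting, \TE asks that
\begin{equation}
    \Idelta(X_1,Y;Y)=\Idelta(X_1;Y)=I(X_1;Y),
\end{equation}
where the last equality is \SR. Equivalently, with $X_2=Y$ we need the unique information $\Ip^\delta(X_1;Y)$ to vanish. The quickest route is via the Blackwell order. Taking the channel $k(x_1|x_2)=p(x_1|y=x_2)$ in Eq.~\eqref{eq:def_blackwell} gives
\begin{equation}
    p(x_1|y)=\sum_{x_2}k(x_1|x_2)\,p(x_2|y),
\end{equation}
because $p(x_2|y)=\delta(x_2,y)$. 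Hence $X_1\preceq_Y X_2$ whenever $X_2=Y$. Since \Idelta satisfies \BP, $\Ip^\delta(X_1;Y)=0$, and the claim follows exactly as in Theo.~\ref{theo:BP--TE-SE}.

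To keep the result independent of the \BP citation, I would also give a direct computation from the definition. The key step is to evaluate the two deficiencies separately.

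\textbf{First deficiency.} $\delta(Y;X_1\backslash Y)$ should vanish. Choosing $P_{X_1'|Y}=P_{X_1|Y}$ makes the composed channel equal to $P_{X_1|Y}$, because $P_{Y|Y}$ is the identity. The KL term is then zero, so the infimum is $0$ (it cannot be negative). The first branch of the minimum is therefore $I(X_1;Y)$.

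\textbf{Second deficiency.} $\delta(Y;Y\backslash X_1)$ is nonnegative, since it is an expectation of KL divergences. The second branch is therefore at most $I(Y;Y)=H(Y)$. This alone does not settle which branch attains the minimum, so the comparison needs more care.

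The main obstacle is showing that the second branch, $H(Y)-\delta(Y;Y\backslash X_1)$, is at least $I(X_1;Y)$. Equivalently,
\begin{equation}
    \delta(Y;Y\backslash X_1)\le H(Y|X_1).
\end{equation}
I would bound the infimum by a specific channel, the Bayes posterior $k(y'|x_1)=p(y'|x_1)$. The objective becomes
\begin{equation}
    \mathbb{E}_y\bigl[-\log \textstyle\sum_{x_1}p(y|x_1)\,p(x_1|y)\bigr].
\end{equation}
Jensen's inequality (concavity of $\log$) bounds this by
\begin{equation}
    \mathbb{E}_{x_1,y}\bigl[-\log p(y|x_1)\bigr]=H(Y|X_1),
\end{equation}
which closes the argument. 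For Gaussian variables, where \Idelta is defined, the same steps use densities, and the differential-entropy analogue should be checked. If that proves delicate, I would fall back on the \BP argument above, which is distribution-agnostic.
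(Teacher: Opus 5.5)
Your argument is correct, and its first half is exactly the paper's proof. The paper also composes $k(x_1|y)=p(x_1|y)$ with the identity channel $p(y'|y)=\delta(y',y)$ to get $X_1\preceq_Y Y$, and hence $\delta(Y;X_1\backslash Y)=0$. The paper, however, stops there and writes $\Idelta(X_1,Y;Y)=I(X_1;Y)$. In doing so it tacitly assumes that the other branch of the minimum, $I(Y;Y)-\delta(Y;Y\backslash X_1)$, is not smaller. Your bound $\delta(Y;Y\backslash X_1)\le H(Y|X_1)$, obtained from the posterior channel and Jensen, supplies precisely this missing comparison. It is a special case of $\delta(Y;A\backslash B)\le I(A;Y|B)$, which follows from choosing $k(a|b)=p(a|b)$ and using joint convexity of the KL divergence. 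So in the discrete setting your proof is the more complete one.

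The \BP shortcut is legitimate within the paper, since it is Theo.~\ref{theo:BP--TE-SE} applied to the cited \BP of \Idelta. It does not genuinely bypass this step, though: the `if' direction of \BP for \Idelta needs the same second-branch comparison, so the shortcut just delegates it to the citation.

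Your caveat about the continuous case is well placed, and it applies equally to the paper. For continuous $Y$, $I(Y;Y)=\infty$, and the KL divergence from the point mass $P_{Y|Y=y}$ to any density is infinite. The second branch is therefore formally $\infty-\infty$, and the claim holds only under a convention. Neither the paper's Dirac-delta computation nor a density version of your Jensen step justifies that convention. In that sense the \BP fallback is not really distribution-agnostic.
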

\begin{proof}
    The proof follows trivially from the Blackwell ordering relation. Considering any two stochastic variables $(X_1,Y)$, we have that $X_1\preceq_Y Y$:
    \begin{equation}
        \int dy' k(x|y')p(y'|y) = \int dy' k(x|y')\delta(y',y) = k(x|y) = p(x|y) \,.
    \end{equation}
    Hence it follows that $\delta(Y;X_1\backslash Y)=0$ and 
    \begin{equation}
        \Idelta(X_1,Y;Y) = I(X_1;Y) = \Idelta(X_1;Y) \,.
    \end{equation}
\end{proof}

\begin{proposition} \label{prop:Delta-ID}
    $\Idelta$ satisfies \ID.
\end{proposition}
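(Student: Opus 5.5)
We need to show $\Idelta(X_1,X_2;X_1X_2)=I(X_1;X_2)$, where the target is $Y=(X_1,X_2)$. The plan is to evaluate both terms inside the minimum in the definition of $\Idelta$ directly, computing the two deficiencies $\delta(Y;X_1\backslash X_2)$ and $\delta(Y;X_2\backslash X_1)$ explicitly. Since the target is the copy, we have $I(X_i;Y)=H(X_i)$. It will therefore suffice to show $\delta(Y;X_1\backslash X_2)=H(X_1|X_2)$ and, symmetrically, $\delta(Y;X_2\backslash X_1)=H(X_2|X_1)$. Both terms of the minimum then equal $H(X_1)-H(X_1|X_2)=I(X_1;X_2)$.

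\textbf{Setting up the deficiency.} For $Y=(X_1,X_2)$, the channels are deterministic: $P_{X_1|Y}(x_1|y)=\delta(x_1,y_1)$ and $P_{X_2|Y}(x_2|y)=\delta(x_2,y_2)$. For any kernel $k(x_1|x_2)$, the composition $K\circ P_{X_2|Y}$ evaluated at $y=(y_1,y_2)$ is simply $k(\cdot|y_2)$. The KL divergence between a point mass at $y_1$ and $k(\cdot|y_2)$ is $-\log k(y_1|y_2)$. Hence
\begin{equation}
    \delta(Y;X_1\backslash X_2)=\inf_k \,\mathbb{E}_{p(x_1,x_2)}\bigl[-\log k(x_1|x_2)\bigr].
\end{equation}
This is a cross-entropy between $p(x_1|x_2)$ and $k(\cdot|x_2)$, averaged over $x_2$. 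By Gibbs' inequality, for each fixed $x_2$ the infimum is attained at $k(x_1|x_2)=p(x_1|x_2)$, giving exactly $H(X_1|X_2)$. The same computation with the roles of the two sources exchanged gives $\delta(Y;X_2\backslash X_1)=H(X_2|X_1)$.

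\textbf{Concluding.} Substituting into the definition gives
\begin{equation}
    \Idelta(X_1,X_2;X_1X_2)=\min\bigl(H(X_1)-H(X_1|X_2),\,H(X_2)-H(X_2|X_1)\bigr)=I(X_1;X_2).
\end{equation}

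\textbf{Main obstacle.} The delicate point is that $\Idelta$ was introduced for Gaussian systems, whereas a copy target makes the conditional laws degenerate: they become point masses, and the differential-entropy analogue diverges. I would first carry out the argument for discrete variables, as above, where every step is elementary. For the Gaussian setting I would state the result as a limit, taking $Y=(X_1,X_2)+\text{noise}$ with vanishing noise and checking that the closed-form Gaussian deficiency converges to $H(X_1|X_2)$ appropriately regularised. Alternatively, I would note that the deficiency is invariant under the infinite entropy terms that cancel between $I(X_1;Y)$ and $\delta$, so the difference $I(X_1;Y)-\delta(Y;X_1\backslash X_2)$ stays finite and equals $I(X_1;X_2)$.

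A secondary check is that the infimum is genuinely attained within the admissible class of channels. In the discrete case this is immediate, since $p(x_1|x_2)$ is itself a stochastic channel. In the Gaussian case it requires the optimal channel, the true conditional, to be Gaussian, which it is.
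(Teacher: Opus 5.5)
Your proof is correct and follows the paper's argument essentially step by step: with the copy target $Y=(X_1,X_2)$ the channels $p(x_i|y)$ are point masses, so the deficiency collapses to the conditional cross-entropy $\inf_k \mathbb{E}_{p(x_1,x_2)}[-\log k(x_1|x_2)]$, Gibbs' inequality gives $H(X_1|X_2)$ (and symmetrically $H(X_2|X_1)$), and both terms of the minimum reduce to $I(X_1;X_2)$. Your caveat about the degenerate continuous/Gaussian case is a point the paper's proof, written with integrals and delta functions, passes over without comment. Doing the discrete case rigorously and treating the Gaussian case as a regularised limit is therefore, if anything, more careful than the original.
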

\begin{proof}
    Consider the TBC system where the target is $Y=(X_1,X_2)$. From the definition, we have that the deficiency of $X_1$ w.r.t.\ $X_2$ about $Y$ is
    \begin{equation}
    \begin{aligned}
        \delta(Y;X_1\backslash X_2) & = \inf_k \mathbb{E}_{P_Y} \left[ D(p(X_1|y) ||k(X_1|X_2)\circ p(X_2|y)\right] \\
        & = \inf_k \mathbb{E}_{P_Y} \left[ \int dx_1'\, p(x_1'|y) \log \left(\frac{p(x_1'|y)}{\int k(x_1'|x_2')p(x_2'|y) dx_2'} \right)\right] \\
        & = \inf_k \int dx_1\, dx_2\, p(x_1,x_2) \left[ \int dx_1'\, \delta(x_1'-x_1) \log \left(\frac{\delta(x_1'-x_1)}{\int k(x_1|x_2')\delta(x_2'-x_2) dx_2'} \right)\right] \\
        & = \inf_k \int dx_1\, dx_2\, p(x_1,x_2) \left[ - \log k(x_1|x_2) \right] \\
        & = \inf_k \int dx_2\, p(x_2) \int dx_1\, p(x_1|x_2) \left[ - \log k(x_1|x_2) \right] \\
        & = \int dx_2\, p(x_2) H(X_1|X_2=x_2) \\
        & = H(X_1|X_2) \,,
    \end{aligned}
    \end{equation}
    where the infimum over the stochastic channel $k$ follows from Gibb's inequality~\cite{cover1999elements}. An analogous derivation holds for $\delta(Y;X_2\backslash X_1)$. 
    Hence, we have that
    \begin{equation}
    \begin{aligned}
        \Ideltas & = \min \left(I(X_1;Y)-\delta(Y;X_1\backslash X_2), I(X_2;Y)-\delta(Y;X_2\backslash X_1)\right) \\
        & = \min \left(H(X_1)- H(X_1|X_2), H(X_2)- H(X_2|X_1)\right) \\
        & = \min \left(I(X_1;X_2), I(X_2;X_1)\right) \\
        & = I(X_1;X_2) \,,
    \end{aligned}
    \end{equation}
    which concludes the proof.
\end{proof}

\subsubsection{Proofs regarding \Ido}

Although in the original publication the authors of $\Ido$ claimed that the measure satisfies monotonicity and self-redundancy (Axiom 3 in Ref.~\cite{lyu2024explicit}), we point out that their proof does not effectively show self-redundancy and does not check the subset equality condition (\SE) in the monotonicity definition, i.e.\ that $\Icap=I_\cap(X_1;Y)$ if $X_1=f(X_2)$. In fact, it turns out that $\Ido$ does not satisfy \SE. Below, we discuss what implications this has on \SR and \Mz. 

As for \SR, we underline that a PID measure based on the definition of unique information can be built so that it satisfies the \SR axiom constructively~\cite{james2018unique}, since there is otherwise no way to prove directly that $I_\cap(X_1;Y)=I(X_1;Y)$, as one cannot use the definition of unique information for a single source.
Hence, we make the following remark:
\begin{remark} \label{rem:DO-SR}
    $\Ido$ satisfies \SR constructively, i.e.\ we define $\Ido(X_1;Y)\equiv I(X_1;Y)$.
\end{remark}

We can use this constructive definition to rewrite $\Ido$ in a more convenient form.
\begin{lemma} \label{lemma:DO-Red-CoI}
    The redundancy function can be written as a coinformation: $\Ido(X_1,X_2;Y) = I(X_1';X_2;Y)=I(X_1;X_2';Y)$ 
\end{lemma}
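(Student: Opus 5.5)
The plan is to start from the defining relations of $\Ido$ in Eqs.~\eqref{eq:DO-def}--\eqref{eq:DO-def2} together with the constructive self-redundancy of Remark~\ref{rem:DO-SR}. By the bivariate lattice relations, $I(X_1;Y) = \Ido(X_1,X_2;Y) + I_\partial(X_1;Y)$. Here $I_\partial(X_1;Y)=I(X_1';Y\mid X_2)$, so
\begin{equation}
    \Ido(X_1,X_2;Y) = I(X_1;Y) - I(X_1';Y\mid X_2) \,.
\end{equation}
The goal is to rewrite the right-hand side as $I(X_1';Y) - I(X_1';Y\mid X_2)$, which equals the coinformation $I(X_1';X_2;Y)$ by the convention stated in Appendix~\ref{appendix:mathematical_bg}. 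The symmetric identity $I(X_1;X_2';Y)$ will then follow by the same argument with the roles of the sources exchanged, using $I_\partial(X_2;Y)=I(X_2';Y\mid X_1)$.

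The key step is showing $I(X_1';Y)=I(X_1;Y)$ under the interventional distribution. First I will write the joint law of $(X_1',X_2,Y)$ explicitly from Eq.~\eqref{eq:DO-def}:
\begin{equation}
    p'(x_1,x_2,y) = p(x_2)\,p(y\mid x_2)\,p(x_1\mid y) = p(x_2,y)\,p(x_1\mid y) \,.
\end{equation}
Marginalising over $x_2$ gives $p'(x_1,y)=p(y)p(x_1\mid y)=p(x_1,y)$. So the pair $(X_1',Y)$ has exactly the same law as $(X_1,Y)$, which yields $I(X_1';Y)=I(X_1;Y)$ (this is the content of Lemma~1 of Ref.~\cite{lyu2024explicit}). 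The pair $(X_2,Y)$ is also unchanged, and $X_1'$ is a Markov chain $X_2\to Y\to X_1'$.

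Substituting gives $\Ido(X_1,X_2;Y)=I(X_1';Y)-I(X_1';Y\mid X_2)=I(X_1';X_2;Y)$. I will also check consistency with Eq.~\eqref{eq:DO-def2}. Expanding the coinformation the other way, $I(X_1';X_2;Y)=I(X_1';X_2)-I(X_1';X_2\mid Y)$, and the Markov chain makes $I(X_1';X_2\mid Y)=0$, so this recovers $I(X_1';X_2)$.

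The main obstacle I expect is conceptual bookkeeping rather than computation: one must make sure that all three mutual-information terms are evaluated on the same interventional distribution $p'$, and that $X_2$ and $Y$ keep their original pairwise law there. Otherwise mixing $p$ and $p'$ in the coinformation would be illegitimate. The equality between the two coinformation forms then follows from the symmetry $I(X_1';X_2)=I(X_2';X_1)$ asserted in Eq.~\eqref{eq:DO-def2}.
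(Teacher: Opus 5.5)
Your proposal is correct and follows the paper's route: subtract the unique information $I(X_1';Y\mid X_2)$ from the self-redundancy $I(X_1;Y)$ (Remark~\ref{rem:DO-SR}), replace $I(X_1;Y)$ by $I(X_1';Y)$, read off the coinformation, and obtain the second form by exchanging the sources. Your explicit marginalisation $p'(x_1,y)=p(x_1,y)$ justifies the key step more transparently than the paper's appeal to $H(X_1')=H(X_1)$ and $H(X_1'\mid Y)=H(X_1\mid Y)$. Also, $p'(x_1,x_2,y)=p(x_1\mid y)\,p(x_2\mid y)\,p(y)$ is symmetric in the two sources, so both coinformation forms are the same quantity on the same distribution.
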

\begin{proof}
    The proof readily follows from the definition of unique information and \SR:
    \begin{equation} \label{eq:DO-CoI}
    \begin{aligned}
        \Ido(X_1,X_2;Y) & = \Ido(X_1;Y) - \Ido(X_1';Y|X_2) \\
        & = I(X_1;Y) - \Ido(X_1';Y|X_2) \\
        & = I(X_1';Y) - \Ido(X_1';Y|X_2) \\
        & =I(X_1;X_2';Y) \,,
    \end{aligned}
    \end{equation}
    where we used that $I(X_1';Y)=I(X_1;Y)$ since $H(X)=H(X')$ and $H(X|Y)=H(X'|Y)$. 
    An analogous procedure would give $\Idos =I(X_1;X_2';Y)$.
\end{proof}
Using this Lemma we can also obtain Lemma 2 of Ref.~\cite{lyu2024explicit} by noting that $X_1'$ and $X_2$ are conditionally independent given $Y$, in fact:
\begin{equation}
    \begin{aligned}
        p(X_1'=x_1,X_2=x_2|Y=y) & = \frac{p(X_2=x_2)}{p(Y=y)} p(X_1'=x_1,Y=y|X_2=x_2) \\
        & = \frac{p(X_2=x_2)}{p(Y=y)} p(X_1=x_1|Y=y) p(Y=y|X_2=x_2) \\
        & = p(X_1=x_1|Y=y) p(X_2=x_2|Y=y) \,.
    \end{aligned}
\end{equation}
Hence we have 
\begin{equation}
    \Idos = I(X_1';X_2;Z) = I(X_1';X_2) - I(X_1';X_2|Y) = I(X_1';X_2) \,.
\end{equation}

On the other hand, the lack of the equality condition in the monotonicity definition has an important consequence, which cannot be so easily alleviated: \Ido assigns non-zero unique information even when the sources are the same. Before proving this result, we provide a useful Lemma.
\begin{lemma} \label{lemma:DO-zeroIc}
    Consider the system $(X_1,X_2,Y)$ and the random variable $X_1'$ as defined in Eq.\eqref{eq:DO-def}. Then $I_Q(X_1';Y|X_2)=0$ if and only if $X_1\ind Y$.
\end{lemma}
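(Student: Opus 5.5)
The plan is to follow the proof of Lemma~\ref{lemma:MES-zeroIc} closely, working directly with the joint distribution $q$ of $(X_1',X_2,Y)$ induced by Eq.~\eqref{eq:DO-def}. First I would write this joint explicitly. Multiplying Eq.~\eqref{eq:DO-def} by $p(x_2)$ gives
\begin{equation}
    q(x_1',x_2,y) = p(x_2)\,p(x_1|y)\,p(y|x_2) = p(x_1|y)\,p(x_2,y) \,.
\end{equation}
So $q(y|x_2)=p(y|x_2)$, and
\begin{equation}
    q(x_1'|x_2)=\sum_{y'} p(x_1|y')\,p(y'|x_2) \,.
\end{equation}
As with the maximum-entropy distribution of Lemma~\ref{lemma:2constr-maxent}, $X_1'$ and $X_2$ are conditionally independent given $Y$ under $q$; this was already noted after Lemma~\ref{lemma:DO-Red-CoI}.

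Next I would characterise vanishing conditional mutual information. $I_Q(X_1';Y|X_2)=0$ holds if and only if $q(x_1',y|x_2)=q(x_1'|x_2)\,q(y|x_2)$ whenever $p(x_2,y)>0$. Substituting the expressions above, this becomes
\begin{equation}
    p(x_1|y)=\sum_{y'} p(x_1|y')\,p(y'|x_2) \qquad \forall\, x_1,\ \forall\,(x_2,y) \text{ with } p(x_2,y)>0 \,.
\end{equation}
The right-hand side depends only on $x_2$ and the left-hand side only on $y$. Hence $p(x_1|y)$ must be constant in $y$ over the support of $p(\cdot|x_2)$, and equal to that $x_2$-average.

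The ($\Leftarrow$) direction is then immediate. If $X_1\ind Y$, then $p(x_1|y)=p(x_1)$, both sides equal $p(x_1)$, and the conditional mutual information vanishes.

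For ($\Rightarrow$) I would argue as in Lemma~\ref{lemma:MES-zeroIc}: since the left-hand side cannot depend on $x_2$ and the right-hand side cannot depend on $y$, both equal a single function of $x_1$, namely $p(x_1)$. This gives $p(x_1,y)=p(x_1)p(y)$.

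This step is the main obstacle, because the identification only propagates across values of $y$ that are linked through a common $x_2$ in the support. If, for example, $Y$ is a deterministic function of $X_2$ (such as $X_2=Y$), the condition holds trivially for any dependence between $X_1$ and $Y$. I would therefore state and use a support assumption: either full support of $p(x_2,y)$, or, more minimally, that the bipartite graph on $\mathcal{X}_2\times\mathcal{Y}$ with edges $p(x_2,y)>0$ is connected. Under this assumption, constancy of $p(x_1|y)$ along every edge propagates to all of $\mathcal{Y}$, so $p(x_1|y)$ is independent of $y$, which is exactly $X_1\ind Y$.

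I would add a remark that, without such an assumption, only the ($\Leftarrow$) implication holds in general. This matters downstream: proving that \SE fails for $\Ido$ needs only one example with strictly positive conditional mutual information. For that it suffices to take a case like $X_1=X_2$ with $Y$ not a function of $X_2$ and $X_1\notind Y$, where the support condition is easy to verify directly.
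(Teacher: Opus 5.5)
Your argument is the paper's argument. Both reduce $I_Q(X_1';Y|X_2)=0$ to $p(x_1|y)=\sum_{\tilde y}p(x_1|\tilde y)\,p(\tilde y|x_2)$ and then use the fact that the right-hand side cannot depend on $y$.

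The difference is that you stop where the paper's proof overreaches. Its final sentence (``since the RHS does not depend on $y$, it implies that $X_1\ind Y$'') silently assumes that all $y$-values are linked through the support of $p(x_2,y)$. Without this, the ($\Rightarrow$) direction of the lemma is false as stated. For example, $X_1=X_2=Y$ a uniform bit gives $I_Q(X_1';Y|X_2)=0$ while $I(X_1;Y)=1$ bit. In general, the conditional mutual information vanishes iff $p(x_1|y)$ is constant on each connected component of the bipartite support graph you describe. Your connectivity hypothesis is therefore needed, not optional, and your proof of the corrected statement is complete.

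Your downstream remark is also right. Props.~\ref{prop:DO-noSE} and \ref{prop:DO-noLB} assert a strict inequality for every $X_1\notind Y$, which fails whenever $Y$ is a function of the conditioning source. Still, a single example with connected (e.g.\ full) support suffices to show that \Ido violates \SE and \LB.
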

\begin{proof}
    $I_Q(X_1';Y|X_2)=0$ means that $p(x_1',y|x_2)=p(x_1'|x_2)p(y|x_2)$. This happens when
\begin{align}
    p(x_1',y|x_2) & = p(x_1|y)p(y|x_2) \\
    p(x_1'|x_2) & = \sum_{\tilde{y}} p(x_1',\tilde{y}|x_2) = \sum_{\tilde{y}} p(x_1|\tilde{y})p(\tilde{y}|x_2) \,.
\end{align}
    Hence, for the partitioning to hold, we have the condition
    \begin{equation}
        p(x_1|y) = \sum_{\tilde{y}} p(x_1|\tilde{y})p(\tilde{y}|x_2) \,,
    \end{equation}
    since the RHS does not depend on $y$, it implies that $X_1\ind Y$. 
\end{proof}

Using this Lemma, we have:
\begin{proposition} \label{prop:DO-noSE}
    $\Ido$ does not satisfy \SE.
\end{proposition}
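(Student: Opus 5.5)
To show that $\Ido$ violates \SE, I will give an explicit family of systems in which one source is a deterministic function of the other, and show that the redundancy differs from that of the less informative source. The simplest choice is two identical sources, $X_1=X_2\equiv X$, with $X_1=f(X_2)$ for $f$ the identity. In that case \SE (in the bivariate case, together with Remark~\ref{rem:DO-SR}) would require $\Ido(X,X;Y)=\Ido(X;Y)=I(X;Y)$. Equivalently, the unique information $I_\partial(X_1;Y)=I(X_1';Y\mid X_2)$ would have to vanish.

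The key step is to use the definition of unique information in Eq.~\eqref{eq:DO-def}, which rewrites the redundancy as
\begin{equation}
\Ido(X_1,X_2;Y)=I(X_1;Y)-I(X_1';Y\mid X_2),
\end{equation}
exactly as in the first lines of the proof of Lemma~\ref{lemma:DO-Red-CoI}. Lemma~\ref{lemma:DO-zeroIc} then states that $I(X_1';Y\mid X_2)=0$ if and only if $X_1\ind Y$. So for any target with $X\notind Y$, the conditional term is strictly positive. This gives $\Ido(X,X;Y)<I(X;Y)=\Ido(X;Y)$, which contradicts \SE.

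To make the argument concrete and independent of any doubts about the lemma, I will also check it on a simple instance. Take $X\sim Bern(1/2)$ and $Y=X$. The intervened variable $X_1'$ satisfies $p(x_1',y\mid x_2)=p(x_1\mid y)\,p(y\mid x_2)=\delta(x_1,y)\,\delta(y,x_2)$. Here $X_1'$ is determined by $X_2$, so $I(X_1';Y\mid X_2)=0$. This particular case therefore does not violate \SE. I will instead take a noisy channel, for example $Y=X\oplus N$ with $N\sim Bern(\epsilon)$ and $0<\epsilon<1/2$. Conditioning on $X_2=x$, $Y$ is noisy and $X_1'$ is drawn from $p(x_1\mid y)$, so $X_1'$ and $Y$ remain dependent given $X_2$. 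This yields $I(X_1';Y\mid X_2)>0$, and I will compute the value explicitly to exhibit the strict gap.

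The main obstacle is the degenerate behaviour just noticed. Lemma~\ref{lemma:DO-zeroIc} as stated seems too strong: it claims the conditional term vanishes iff $X_1\ind Y$, but the deterministic case $Y=X$ is a counterexample to the "only if" direction. The real condition is that $p(x_1\mid y)=\sum_{\tilde y}p(x_1\mid\tilde y)p(\tilde y\mid x_2)$ on the support of $p(y\mid x_2)$. I will therefore rely on the explicit noisy counterexample rather than the lemma in full generality. The remaining routine work is to verify the strict inequality numerically or in closed form for the chosen $\epsilon$. Since a single counterexample suffices, this establishes that $\Ido$ does not satisfy \SE, and hence not \Mz either, by Lemma~\ref{lemma:Mz--Mo}.
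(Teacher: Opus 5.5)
Your argument is correct and has the same skeleton as the paper's proof: identical sources $X_1=X_2=X$, the identity $\Ido(X,X;Y)=I(X;Y)-I(X_1';Y\mid X_2)$, and strict positivity of the conditional term. It differs in where the strictness comes from, and your reason for the change is sound.

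The paper obtains the gap for \emph{every} $Y\notind X_1$ from Lemma~\ref{lemma:DO-zeroIc}. As you observe, that lemma's proof implicitly cancels the factor $p(y\mid x_2)$ without restricting to its support. The case $Y=X_1=X_2$ is a genuine counterexample to its ``only if'' direction. There $X_1'=X_2$ almost surely, so $I(X_1';Y\mid X_2)=0$ and $\Ido(X,X;X)=H(X)=I(X;Y)$, i.e.\ \SE actually holds in that instance.

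The lemma's conclusion is recovered under a support condition, e.g.\ if some $p(\cdot\mid x_2)$ has full support on $\mathcal{Y}$, and your noisy channel satisfies this. Your explicit example is therefore what makes the proof airtight. The price is that you prove only existence, rather than the paper's overstated ``for all dependent $Y$''.

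The computation you defer is short. With $X$ uniform, $p(x_1\mid y)$ is again a binary symmetric channel with crossover $\epsilon$. So given $X_2=x$, the variable $X_1'$ is $x$ sent through two such channels in cascade. Hence, with $h$ the binary entropy, $I(X_1';Y\mid X_2)=h(2\epsilon(1-\epsilon))-h(\epsilon)>0$ for $0<\epsilon<1/2$. This gives $\Ido(X,X;Y)=1-h(2\epsilon(1-\epsilon))<1-h(\epsilon)=I(X;Y)$.

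Finally, delete the sentence in your second paragraph asserting strict positivity for any $X\notind Y$. It relies on the very statement you later, correctly, reject.
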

\begin{proof}
    Consider the PID with identical first and second source $X_1=X_2$ and generic target $Y$. From Lemma\ref{lemma:DO-Red-CoI} we have that
    \begin{equation}
        \begin{aligned}
            \Ido(X_1,X_1;Y) & = I(X_1';Y) - I(X_1';Y|X_2) \\
         & = I(X_1;Y) - I(X_1';Y|X_2) \\
         & < I(X_1;Y) \,,
        \end{aligned}
    \end{equation}
    where the strict inequality follows from Lemma \ref{lemma:DO-zeroIc} when $X_1\notind Y$, since $I(X_1';Y|X_2)>0$. 
    Hence $\Ido$ does not satisfy \SE.
\end{proof}

A similar reasoning can be used for the following property.
\begin{proposition} \label{prop:DO-noLB}
    $\Ido$ does not satisfy \LB.
\end{proposition}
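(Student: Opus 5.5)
We need to show \Ido fails \LB. The statement says ``a similar reasoning'' to Prop.~\ref{prop:DO-noSE}, so the plan is a counterexample that recycles it. \LB requires $\Ido(X_1,X_2;Y)\ge I(Q;Y)$ for every $Q$ that is a deterministic function of both sources. The cleanest choice is identical sources, $X_1=X_2$, with $Q=X_1$. Then $Q\leq_I X_1$ and $Q\leq_I X_2$ trivially, so \LB would demand
\begin{equation}
    \Ido(X_1,X_1;Y)\ge I(X_1;Y) \,.
\end{equation}

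The key steps, in order, are as follows. First, invoke Lemma~\ref{lemma:DO-Red-CoI} and the computation in Prop.~\ref{prop:DO-noSE}:
\begin{equation}
    \Ido(X_1,X_1;Y)=I(X_1;Y)-I(X_1';Y|X_2) \,.
\end{equation}
Second, pick any target with $X_1\notind Y$, say $Y=X_1$ with $X_1\sim Bern(1/2)$, or a noisy copy. By Lemma~\ref{lemma:DO-zeroIc}, this forces $I(X_1';Y|X_2)>0$. Combining the two steps gives $\Ido(X_1,X_1;Y)<I(X_1;Y)=I(Q;Y)$, which contradicts \LB.

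As a sanity check I would compute one explicit instance. Take $Y=X_1$ a fair bit, so the channel is deterministic. Then $p(x_1'|y)=\delta$ and $p(y|x_2)=\delta$, hence $X_1'=X_2=X_1$. This gives $I(X_1';Y|X_2)=0$, which seems to contradict the Lemma. So for the concrete example I would instead use a noisy channel: $Y=X_1\oplus N$ with $N\sim Bern(\epsilon)$ and $0<\epsilon<1/2$. Here $X_1'$ is obtained by passing $X_2$ through the channel twice, so $I(X_1';Y|X_2)>0$ can be verified explicitly with binary entropies.

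The main obstacle is exactly this subtlety. Lemma~\ref{lemma:DO-zeroIc}'s ``iff'' looks too strong in the deterministic case. I would therefore not rely on it blindly, and would present a fully explicit noisy binary computation as the counterexample instead.

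A fallback exists if one prefers to avoid $Q$ being a source itself. Take $Q$ constant, for which \LB reduces to \GP; but \Ido satisfies \GP, so that gives nothing. Hence the identical-source route, with a nondeterministic target, is the intended one.
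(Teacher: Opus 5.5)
Your proof is correct and follows the paper's approach. In both, the sources are placed in a subset-equality configuration, $Q$ is taken to be the coarser source, and $\Ido<I(Q;Y)$ follows because the intervened unique term $I(X_1';Y|\cdot)$ is strictly positive. So the counterexample to \LB is also one to \SE.

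Two details differ:
\begin{itemize}
\item The paper uses sources $X_1$ and $X_1X_2$. Your identical sources are, up to relabelling, the case of constant $X_2$.
\item The paper closes the argument by citing Lemma~\ref{lemma:DO-zeroIc} for a ``generic'' target with $X_1\notind Y$, whereas you compute an explicit example.
\end{itemize}

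Your distrust of that lemma is justified. Its proof only needs $p(x_1|y)=\sum_{\tilde y}p(x_1|\tilde y)p(\tilde y|x_2)$ on the support of $p(x_2,y)$. So the ``iff'' fails whenever $Y$ is determined by the conditioning source. Your $Y=X_1$ case is one instance. In the paper's own configuration, take independent fair bits and $Y=X_1\wedge X_2$. Then $X_1\notind Y$, yet $H(Y|X_1X_2)=0$ forces $I(X_1';Y|X_1X_2)=0$, and there is no violation.

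The paper's proof therefore silently needs a genuinely noisy target. For example, full-support $p(y|x_1,x_2)$ suffices, since then the lemma's argument does go through. Your binary-symmetric-channel instance supplies exactly that. There $X_1'$ is $X_1$ passed through two BSC($\epsilon$) channels in series, so, with $h$ the binary entropy, $\Ido(X_1,X_1;Y)=1-h(2\epsilon(1-\epsilon))<1-h(\epsilon)=I(X_1;Y)$, because $\epsilon<2\epsilon(1-\epsilon)<1/2$.

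Your version is thus the more rigorous one. The paper's choice of $X_1X_2$ only makes the relation $Q=f(X_1X_2)$ look less trivial; it does not change the mechanism.
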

\begin{proof}
     We prove it by counterexample. Consider the PID with first source $X_1$ and second source the joint state $X_1 X_2$ and a generic target $Y$, then using Lemma \ref{lemma:DO-Red-CoI} we have
     \begin{equation}
     \begin{aligned}
         \Ido(X_1,X_1X_2;Y) & = I(X_1';Y) - I(X_1';Y|X_1X_2) \\
         & = I(X_1;Y) - I(X_1';Y|X_1X_2) \\
         & < I(X_1;Y) \,,
        \end{aligned}
    \end{equation}
    where the strict inequality follows from Lemma \ref{lemma:DO-zeroIc} when $X_1\notind Y$. On the other hand, \LB would require $\Ido(X_1,X_1X_2;Y)\geq I(X_1;Y)$ since $X_1=f(X_1 X_2)$. Hence $\Ido$ does not satisfy \LB.
\end{proof}

\begin{proposition} \label{prop:DO-ID}
    \Ido satisfies \ID.
\end{proposition}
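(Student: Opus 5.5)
We need to show $\Ido(X_1,X_2;X_1X_2)=I(X_1;X_2)$. The cleanest route uses the rewriting after Lemma~\ref{lemma:DO-Red-CoI}: $\Ido(X_1,X_2;Y)=I(X_1';X_2)$. Here $X_1'$ is the intervened variable of Eq.~\eqref{eq:DO-def}, and it is conditionally independent of $X_2$ given $Y$. So the whole task reduces to computing the joint distribution of $(X_1',X_2)$ when $Y=(X_1,X_2)$ and checking that it coincides with $p(x_1,x_2)$.

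\textbf{Step 1: joint law of $(X_1',X_2)$.} By Eq.~\eqref{eq:DO-def},
\[
p(X_1'=a,X_2=x_2)=\sum_{y}p(x_2)\,p(X_1=a\mid Y=y)\,p(Y=y\mid X_2=x_2).
\]
For the TBC-type target $Y=(X_1,X_2)$ we have $p(X_1=a\mid Y=(u,v))=\delta_{a,u}$. We also have $p(Y=(u,v)\mid X_2=x_2)=p(u\mid x_2)\,\delta_{v,x_2}$.

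\textbf{Step 2: collapse the sum.} Substituting these into the sum leaves only the term $u=a$, $v=x_2$. This gives $p(X_1'=a,X_2=x_2)=p(x_2)\,p(a\mid x_2)=p(a,x_2)$. Hence $(X_1',X_2)\sim(X_1,X_2)$, and therefore
\[
\Ido(X_1,X_2;X_1X_2)=I(X_1';X_2)=I(X_1;X_2),
\]
which is \ID. As a consistency check, the symmetric expression $I(X_1;X_2')$ gives the same value by the identical computation with the indices swapped.

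\textbf{Alternative route.} One can instead use Lemma~\ref{lemma:DO-Red-CoI}, $\Ido=I(X_1;Y)-I(X_1';Y\mid X_2)$, with $I(X_1;Y)=H(X_1)$. From Step 1 we get $p(x_1',y\mid x_2)=\delta_{x_1',u}\,p(u\mid x_2)\,\delta_{v,x_2}$, so $X_1'$ is a deterministic function of $Y$ and $I(X_1';Y\mid X_2)=H(X_1'\mid X_2)=H(X_1\mid X_2)$. This again gives $H(X_1)-H(X_1\mid X_2)=I(X_1;X_2)$.

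\textbf{Main obstacle.} The only delicate point is making sure the conditional-independence rewriting $I(X_1';X_2;Y)=I(X_1';X_2)$ is legitimate when $Y$ is deterministic in the sources, since conditionals are then degenerate. The fix is to restrict sums to the support: terms with $p(y)=0$ or $p(x_2)=0$ simply drop out. With that handled, I expect the delta-function bookkeeping of Step 1 to go through directly.
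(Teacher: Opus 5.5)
Your proof is correct and follows essentially the same route as the paper. Both arguments compute the law of the intervened variable $X_1'$ when $Y=(X_1,X_2)$, observe that $(X_1',X_2)$ has the same joint law as $(X_1,X_2)$, and conclude via $\Ido(X_1,X_2;Y)=I(X_1';X_2)$ from Eq.~\eqref{eq:DO-def2}. Your explicit marginalisation over $y$ is a cleaner version of the paper's delta-function bookkeeping, which works with $p(x_1',y\mid x_2)$ and then asserts ``$X_1'=X_1$''. Your alternative route through Lemma~\ref{lemma:DO-Red-CoI} is also valid.
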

\begin{proof}
    Considering the TBC system, and denoting as $y=(\tilde{x}_1,\tilde{x}_2)$ the outcome of the target, we have
    \begin{equation}
    \begin{aligned}
        p(X_1'=x_1,Y=y|X_2=x_2) & = p(X_1=x_1|Y=y)p(Y=y|X_2=x_2) \\
        & = p(X_1=x_1|X_1=\tilde{x}_1, X_2=\tilde{x}_2)p(X_1=\tilde{x}_1, X_2=\tilde{x}_2|X_2=x_2) \\
        & = p(X_1=x_1|X_1=x_1, X_2=x_2)p(X_1=x_1, X_2=x_2|X_2=x_2) \delta(\tilde{x}_1,x_1)\delta(\tilde{x}_2,x_2) \\
        & = p(X_1=x_1, X_2=x_2|X_2=x_2) \,,
    \end{aligned}
    \end{equation}
    which means that $X_1'=X_1$. Hence \ID follows from Eq.\eqref{eq:DO-def2}.
\end{proof}

\subsubsection{Proofs regarding \IRAV}
The following proofs only hold for bivariate PIDs ($n=2$ sources).

\begin{proposition} \label{prop:RAV-SR}
    $\IRAV$ satisfies \SR.
\end{proposition}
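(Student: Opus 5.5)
The plan is to verify \SR directly from the definition of \IRAV. With a single source the definition reads
\begin{equation*}
    \IRAV(X_1;Y)=\max_f I(X_1;Y;f(X_1)).
\end{equation*}
Expanding the three-way coinformation with the interaction-information convention of Appendix~\ref{appendix:mathematical_bg} gives
\begin{equation*}
    I(X_1;Y;f(X_1)) = I(X_1;Y) - I(X_1;Y\mid f(X_1)).
\end{equation*}
Proving \SR therefore reduces to two facts. First, the conditional term $I(X_1;Y\mid f(X_1))$ is always non-negative. Second, some admissible choice of $f$ makes it vanish.

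I would prove the upper bound first. Conditional mutual information is non-negative, so every $f$ satisfies $I(X_1;Y;f(X_1))\le I(X_1;Y)$, and hence $\IRAV(X_1;Y)\le I(X_1;Y)$.

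For the matching lower bound, I would take $f$ to be the identity, or any bijection of $X_1$, so that $f(X_1)\sim X_1$ in the sense of Eq.~\eqref{eq:def_equivalence}. Then $X_1$ is a deterministic function of $f(X_1)$. It follows that $H(X_1\mid f(X_1))=0$ and $H(X_1\mid Y,f(X_1))=0$, so $I(X_1;Y\mid f(X_1))=0$. This gives $I(X_1;Y;f(X_1))=I(X_1;Y)$, the maximum is attained, and $\IRAV(X_1;Y)=I(X_1;Y)$.

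The main obstacle is conventional rather than technical. I have to check that the maximisation in the definition of \IRAV ranges over all deterministic functions of the sources, identity included, and not over some restricted class such as non-injective coarse-grainings. I also have to fix the sign convention for the coinformation so that the ordering of arguments really yields $I(X_1;Y)-I(X_1;Y\mid f(X_1))$. By symmetry of coinformation the ordering should not matter, but I would still state it explicitly. As an alternative check of the same bound, I would also write the identity $I(X_1;Y;f(X_1)) = I(f(X_1);Y) - I(f(X_1);Y\mid X_1)$. The last term is zero because $f(X_1)$ is determined by $X_1$. So the objective equals $I(f(X_1);Y)$, which is at most $I(X_1;Y)$ by data processing, with equality for $f=\mathrm{id}$.
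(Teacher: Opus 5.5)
Your proof is correct and is essentially the paper's argument: the paper uses exactly your ``alternative check'' expansion $I(f(X_1);Y)-I(f(X_1);Y\mid X_1)$. It drops the conditional term because $f(X_1)$ is a function of $X_1$, then maximises $I(f(X_1);Y)$, with equality at the identity. Your primary expansion, which conditions on $f(X_1)$ instead, is an equivalent regrouping of the same coinformation, with non-negativity of conditional mutual information taking the role of data processing.
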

\begin{proof}
    \begin{equation}
        \begin{aligned}
            \IRAV(X_1;Y) & = \max_f I(X_1;Y;f(X_1)) \\
            & = \max_f \left( I(f(X_1);Y) - I(f(X_1);Y|X_1) \right) \\
            & = \max_f I(f(X_1);Y) \\
            & = I(X_1;Y) \,.
        \end{aligned}
    \end{equation}
\end{proof}

\begin{proposition} \label{prop:RAV-Mz}
    $\IRAV$ satisfies \Mz for $n=2$ sources.
\end{proposition}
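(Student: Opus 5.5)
We need to show, for $n=2$, that $\IRAV(X_1,X_2;Y)\le \IRAV(X_i;Y)=I(X_i;Y)$, with equality $\IRAV(X_1,X_2;Y)=I(X_1;Y)$ whenever $X_1=g(X_2)$. The plan is to expand the coinformation with respect to the auxiliary variable. The interaction-information convention of Appendix~\ref{appendix:mathematical_bg} gives
\begin{equation}
    I(X_1;X_2;Y;f) = I(X_1;Y;f) - I(X_1;Y;f\mid X_2)\,,
\end{equation}
and the four-variable expansion is symmetric in its arguments, so the roles of $X_1$ and $X_2$ can be exchanged freely.

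For the inequality part, conditioning on the appropriate source should kill the correction terms, because $f=f(X_1,X_2)$. Given $(X_1,X_2)$, the variable $f$ is constant. Hence every conditional mutual information with $f$ given both sources vanishes, for example $I(f;Y\mid X_1,X_2)=0$. My first attempt will be to write
\begin{equation}
    I(X_1;X_2;Y;f)=I(X_2;Y;f)-I(X_2;Y;f\mid X_1)\,.
\end{equation}
The plan then has three steps. First, bound $I(X_2;Y;f)\le I(X_2;Y)$, which holds as in Prop.~\ref{prop:RAV-SR}. Second, show that the conditional coinformation $I(X_2;Y;f\mid X_1)$ is non-negative. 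Third, maximise over $f$ and apply the same argument with the indices swapped.

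The second step is the main obstacle. Coinformation is not signed in general, so I would try to reduce the term to a conditional quantity in which $f$ is a function of the remaining conditioned variables. Concretely,
\begin{equation}
    I(X_2;Y;f\mid X_1)=I(Y;f\mid X_1)-I(Y;f\mid X_1,X_2)=I(Y;f\mid X_1)\ge 0\,.
\end{equation}
This works because, once $X_1$ is fixed, $f$ becomes a function of $X_2$ alone. This yields $I(X_1;X_2;Y;f)\le I(X_2;Y;f)\le I(X_2;Y)$ for every $f$. The same bound with the roles swapped gives $\le I(X_1;Y)$, so $\IRAV(X_1,X_2;Y)\le \min_i I(X_i;Y)$. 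I need to double-check the bound $I(X_2;Y;f)\le I(X_2;Y)$: by the expansion $I(X_2;Y;f)=I(X_2;Y)-I(X_2;Y\mid f)$, it holds since the conditional mutual information is non-negative.

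For the equality condition with $X_1=g(X_2)$, I will exhibit a maximiser, choosing $f=X_1=g(X_2)$. Then
\begin{equation}
    I(X_1;X_2;Y;X_1)=I(X_2;Y;X_1)-I(X_2;Y;X_1\mid X_1)=I(X_2;Y;X_1)\,,
\end{equation}
since conditioning on $X_1$ kills the second term. Since $X_1$ is a function of $X_2$, we have $I(X_1;Y\mid X_2)=0$, and therefore $I(X_2;Y;X_1)=I(X_1;Y)-I(X_1;Y\mid X_2)=I(X_1;Y)$. This attains the upper bound $I(X_1;Y)$ from the previous step, which proves the equality condition of \Mz.
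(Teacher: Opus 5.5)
Your proof is correct, and in both halves it takes a different and more careful route than the paper. For the inequality, the paper expands the coinformation with respect to the auxiliary variable, writing $I(X_1;Y)-I(X_1;X_2;Y;f)=I(X_1;Y\mid X_2)+I(X_1;X_2;Y\mid f)$. It then asserts without justification that this is non-negative. A conditional coinformation has no definite sign, so that claim silently depends on $f$ being a function of $(X_1,X_2)$.

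You instead expand with respect to a source. Determinism of $f$ given both sources collapses the conditional term to $I(Y;f\mid X_1)$, so the gap $I(X_2;Y)-I(X_1;X_2;Y;f)=I(X_2;Y\mid f)+I(Y;f\mid X_1)$ is visibly a sum of conditional mutual informations. With the roles swapped, your remainder $I(X_1;Y\mid f)+I(Y;f\mid X_2)$ equals the paper's precisely because $I(Y;X_1,f\mid X_2)=I(Y;X_1\mid X_2)$. Your decomposition therefore supplies exactly the step the paper leaves implicit.

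For the equality condition, the paper takes $X_2=g(X_1)$, drops $g(X_1)$ from the four-way coinformation, and concludes $\IRAV(X_1,X_2;Y)=\IRAV(X_1;Y)$. As written this step is not valid. For constant $g$ every such coinformation vanishes, so $\IRAV(X_1,X_2;Y)=0$. Moreover, when $X_2=g(X_1)$, \Mz in fact requires the value $I(X_2;Y)$, not $I(X_1;Y)$.

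Your argument takes the upper bound $I(X_1;Y)$ from the first part and shows it is attained by the admissible choice $f=X_1$ when $X_1=g(X_2)$. It uses the correct direction of the functional relation and needs no such reduction, so it closes the proof cleanly and correctly.
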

\begin{proof}
    For $n=2$ sources, for the inequality condition we only need to prove $\IRAV(X_i;Y)-\IRAV(X_1,X_2;Y)\ge0$. Considering e.g.\ $i=1$:
    \begin{equation}
        \begin{aligned}
            \IRAV(X_i;Y)-\IRAV(X_1,X_2;Y) & = I(X_1;Y) - \max_f I(X_1;X_2;Y;f(X_1,X_2)) \\
            & = I(X_1;Y) - I(X_1;X_2;Y) + \min_f I(X_1;X_2;Y|f(X_1,X_2)) \\
            & = I(X_1;Y|X_2) + \min_f I(X_1;X_2;Y|f(X_1,X_2)) \ge 0 \,.
        \end{aligned}
    \end{equation}
    For the equality condition, if we have $X_2 = g(X_1)$ then 
    \begin{equation}
        \begin{aligned}
            \IRAV(X_1,X_2;Y) = \max_f I(X_1;g(X_1);Y;f(X_1,X_2)) = \max_f I(X_1;Y;f(X_1)) = \IRAV(X_1;Y) \,,
        \end{aligned}
    \end{equation}
    which concludes the proof.
\end{proof}

\begin{proposition} \label{prop:RAV-ID}
    $\IRAV$ satisfies \ID.
\end{proposition}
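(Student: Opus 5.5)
The plan is to evaluate the defining maximisation of $\IRAV$ directly on a target that is a copy of the sources, $Y=(X_1,X_2)$ (the TBC and its generalisation in \ID). The key observation is that conditioning on a copy of the sources kills every conditional mutual information between them. This should collapse the four-way coinformation to a two-way quantity that is easy to optimise.

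\textbf{Step 1: reduce to a conditional mutual information.} Using the recursive definition of coinformation from Appendix~\ref{appendix:mathematical_bg}, with $F\equiv f(X_1,X_2)$ as the last argument, I will write
\begin{equation}
    I(X_1;X_2;Y;F) = I(X_1;X_2;Y) - I(X_1;X_2;Y\mid F) \,.
\end{equation}
I then expand each term once more in $Y$:
\begin{equation}
    I(X_1;X_2;Y)=I(X_1;X_2)-I(X_1;X_2\mid Y) \,,
\end{equation}
and similarly
\begin{equation}
    I(X_1;X_2;Y\mid F)=I(X_1;X_2\mid F)-I(X_1;X_2\mid Y,F) \,.
\end{equation}
Since $Y=(X_1,X_2)$, conditioning on $Y$ (with or without $F$) makes both sources deterministic. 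Hence $I(X_1;X_2\mid Y)=I(X_1;X_2\mid Y,F)=0$, and
\begin{equation}
    I(X_1;X_2;Y;F)=I(X_1;X_2)-I(X_1;X_2\mid F) \,.
\end{equation}

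\textbf{Step 2: optimise over $f$.} The maximisation over $f$ now amounts to minimising $I(X_1;X_2\mid f(X_1,X_2))$. This quantity is non-negative, so $I(X_1;X_2)$ is an upper bound on $\IRAV(X_1,X_2;X_1X_2)$. The bound is attained by any $f$ that determines one source, for instance $f(X_1,X_2)=X_1$ or $f$ the identity map, since then $I(X_1;X_2\mid X_1)=0$. Therefore $\IRAV(X_1,X_2;X_1X_2)=I(X_1;X_2)$, which is \ID.

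\textbf{Main obstacle.} The only delicate point is bookkeeping: the coinformation recursion must be applied in a consistent order. I will rely on the symmetry of coinformation in its arguments, so that conditioning on $F$ and then on $Y$ is legitimate. I also need the conditional coinformation to expand in the same way as the unconditional one, which holds because it is the same identity applied under a fixed conditioning variable. Once this is fixed, no optimisation difficulty remains, because the maximiser is explicit.
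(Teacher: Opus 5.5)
Your proof is correct and takes essentially the same approach as the paper. With $Y=(X_1,X_2)$, every term conditioned on $Y$ vanishes, so the four-way coinformation collapses to $I(X_1;X_2)-I(X_1;X_2\mid f(X_1,X_2))$, which is maximised by $f(X_1,X_2)=X_i$; the only difference is the order in which you apply the coinformation recursion, and your bookkeeping is cleaner than the paper's display.
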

\begin{proof}
    Consider the joint PID target $Y=(X_1,X_2)$. It follows that
    \begin{equation}
        \begin{aligned}
            \IRAVs & = \max_f I(X_1;X_2;Y;f(X_1,X_2)) \\
            & = \max_f \left( I(X_1;X_2;Y;f(X_1,X_2)) - I(X_1;X_2;f(X_1,X_2)|X_1X_2)\right) \\
            & = \max_f \left( I(X_1;X_2) - I(X_1,X_2|f(X_1,X_2)) \right) \\
            & = I(X_1;X_2) \,,
        \end{aligned}
    \end{equation}
    where in the last step it is enough to take $f(X_1,X_2)=X_i$. 
    \end{proof}

\begin{proposition}\label{prop:RAV-TE}
    $\IRAV$ satisfies \TE for $n=2$ sources.
\end{proposition}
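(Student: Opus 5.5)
We need to show $\IRAV(X_1,Y;Y)=I(X_1;Y)$, i.e.\ $\max_f I(X_1;Y;Y;f(X_1,Y)) = I(X_1;Y)$. The approach follows the proofs of Props.~\ref{prop:RAV-SR} and \ref{prop:RAV-ID}: reduce the repeated-variable coinformation, then bound the remaining optimisation from above and exhibit a maximiser.

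\textbf{Step 1: collapse the repeated $Y$.} By the recursive definition, $I(A;B;C;D)=I(A;B;C)-I(A;B;C\mid D)$. For any $A,C$ we have $I(A;Y;Y)=I(A;Y)-I(A;Y\mid Y)=I(A;Y)$, and the same identity holds conditionally on any $D$. Applying this with $A=X_1$, $C=f$, and using symmetry of coinformation under permutation of its arguments, gives
\begin{equation}
I(X_1;Y;Y;f(X_1,Y)) = I(X_1;Y;f(X_1,Y)) = I(X_1;Y) - I(X_1;Y\mid f(X_1,Y))\,.
\end{equation}

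\textbf{Step 2: upper bound.} Conditional mutual information is non-negative, so every $f$ gives a value at most $I(X_1;Y)$. Hence $\IRAV(X_1,Y;Y)\le I(X_1;Y)$.

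\textbf{Step 3: attaining the bound.} Choose $f(X_1,Y)=Y$ (or $f=X_1$). Then $I(X_1;Y\mid Y)=0$ (respectively $I(X_1;Y\mid X_1)=0$), so the maximum equals $I(X_1;Y)$. By \SR (Prop.~\ref{prop:RAV-SR}), $I(X_1;Y)=\IRAV(X_1;Y)$, which is exactly \TE for $n=2$.

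The main obstacle is Step 1. One must carefully justify that a duplicated argument in the 4-way coinformation (interaction-information convention) reduces to the 3-way one, including the conditional version. This needs the symmetry of interaction information, so the variable being conditioned on can be reordered. The rest is immediate.
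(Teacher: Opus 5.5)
Your proof is correct and follows essentially the same route as the paper: collapse the duplicated $Y$ to get $I(X_1;Y)-I(X_1;Y\mid f(X_1,Y))$, then use that the conditional term is non-negative and vanishes for a suitable $f$. The paper takes $f=(X_1,Y)$, while your $f=Y$ or $f=X_1$ works equally well. Also, Step 1 needs no symmetry argument: the recursive definition peels off $f$ as the last argument, and $I(X_1;Y\mid Y)=I(X_1;Y\mid Y,f)=0$ directly.
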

\begin{proof}
    For $n=2$ we just need to prove that $\IRAV(X_1,Y;Y)=\IRAV(X_1;Y)$.
    \begin{equation}
        \begin{aligned}
            \IRAV(X_1,Y;Y) & = \max_f I(X_1;Y;Y;f(X_1,Y)) \\
            & = \max_f I(X_1;Y;f(X_1,Y)) \\
            & = I(X_1;Y) - \min_f I(X_1;Y;Y|f(X_1,Y)) \\
            & = I(X_1;Y) \,,
        \end{aligned}
    \end{equation}
    where we took $f(X_1,Y)=\{X_1,Y\}$.
\end{proof}

\begin{proposition}\label{prop:RAV-LPz}
    $\IRAV$ satisfies \LPz.
\end{proposition}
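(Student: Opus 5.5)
We need to show that for $n=2$ the four atoms of \IRAV are non-negative. The redundancy atom is handled by \GP, which was already established for \IRAV by choosing $f$ constant. The two unique atoms are also already covered. The inequality part of Prop.~\ref{prop:RAV-Mz} gives
\begin{equation}
I_\partial(X_1;Y)=I(X_1;Y)-\IRAV(X_1,X_2;Y)=I(X_1;Y|X_2)+\min_f I(X_1;X_2;Y|f(X_1,X_2))\ge 0 ,
\end{equation}
and symmetrically for $X_2$, since the coinformation is symmetric in its arguments. So the plan reduces to the synergy atom
\begin{equation}
I_\partial(X_1X_2;Y)=I(X_1,X_2;Y)-I(X_1;Y)-I(X_2;Y)+\IRAV(X_1,X_2;Y)=\IRAV(X_1,X_2;Y)-I(X_1;X_2;Y).
\end{equation}

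For the synergy atom, non-negativity amounts to $\max_f I(X_1;X_2;Y;f(X_1,X_2))\ge I(X_1;X_2;Y)$. I would show this by exhibiting a single admissible $f$ that attains $I(X_1;X_2;Y)$. The natural choice is the full joint map $f(X_1,X_2)=(X_1,X_2)$. By the recursive definition of coinformation,
\begin{equation}
I(X_1;X_2;Y;F)=I(X_1;X_2;Y)-I(X_1;X_2;Y|F).
\end{equation}
Conditioning on $F=(X_1,X_2)$ makes $X_1$ deterministic, so every conditional mutual information involving $X_1$ vanishes. Hence $I(X_1;X_2;Y|F)=I(X_1;X_2|F)-I(X_1;X_2|Y,F)=0$, and the coinformation with this $f$ equals $I(X_1;X_2;Y)$. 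Taking the maximum then gives $\IRAV\ge I(X_1;X_2;Y)$, so the synergy atom is non-negative.

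The only delicate point is bookkeeping the sign conventions of the four-way coinformation. The fact that the $\max$ in the unique-information bound turns into a $-\min$ relies on the same recursive identity. I would check both against the interaction-information convention fixed in Appendix~\ref{appendix:mathematical_bg}. If that check failed for the unique atoms, the fallback would be to bound $\min_f I(X_1;X_2;Y|f)$ directly by choosing $f$ constant or the identity. Once the conventions are verified, the four atoms are non-negative, which is \LPz.
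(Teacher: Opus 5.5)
Your proof is correct and follows the paper's argument. Redundancy comes from \GP, the unique atoms from the inequality in Prop.~\ref{prop:RAV-Mz}, and synergy from exhibiting a single $f$ that makes $I(X_1;X_2;Y\mid f(X_1,X_2))$ vanish; the paper takes $f=X_2$ and you take $f=(X_1,X_2)$, which works equally well.

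One caveat: your stated fallback for the unique atoms would not work, since evaluating at a particular $f$ only gives an upper bound on $\min_f I(X_1;X_2;Y\mid f)$. It is not needed, because for any $F=f(X_1,X_2)$ one has $I(X_1;Y)-I(X_1;X_2;Y;F)=I(X_1;Y\mid F)+I(F;Y\mid X_2)\ge 0$, which is exactly the non-negativity that Prop.~\ref{prop:RAV-Mz} asserts.
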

\begin{proof}
    Redundancy is non-negative from \GP, and unique information is non-negative from \Mz and \GP. It remains to prove non-negativity for synergy: 
    \begin{equation}
        \begin{aligned}
            I_\partial(X_1X_2;Y) & = \IRAVs - I(X_1;X_2;Y) \\
            & = I(X_1;X_2;Y) -\min_f I(X_1;X_2;Y|f(X_1,X_2)) - I(X_1;X_2;Y) \\
            & = \max_f \left( - I(X_1;X_2;Y|f(X_1,X_2)) \right) \\
            & = \max_f \left( I(X_1;Y|X_2,f(X_1,X_2)) - I(X_1;Y|f(X_1,X_2)) \right)  \\
            & \geq  I(X_1;Y|X_2,X_2) - I(X_1;Y|X_2) \\
            & = 0\,,
        \end{aligned}
    \end{equation}
    where in the second last step we took $f(X_1,X_2)=X_2$.
\end{proof}

\bibliographystyle{ieeetr}
\bibliography{bib}

@article{williams2010nonnegative,
  title={Nonnegative decomposition of multivariate information},
  author={Williams, Paul L and Beer, Randall D},
  journal={arXiv preprint arXiv:1004.2515},
  year={2010}
}

@article{combrisson2025higher,
  title={Higher-order and distributed synergistic functional interactions encode information gain in goal-directed learning},
  author={Combrisson, Etienne and Basanisi, Ruggero and Neri, Matteo and Auzias, Guillaume and Petri, Giovanni and Marinazzo, Daniele and Panzeri, Stefano and Brovelli, Andrea},
  journal={Nature Communications},
  volume={16},
  number={1},
  pages={7179},
  year={2025},
  publisher={Nature Publishing Group UK London}
}

@article{neri2025taxonomy,
  title={A taxonomy of neuroscientific strategies based on interaction orders},
  author={Neri, Matteo and Brovelli, Andrea and Castro, Samy and Fraisopi, Fausto and Gatica, Marilyn and Herzog, Ruben and Mediano, Pedro AM and Mindlin, Ivan and Petri, Giovanni and Bor, Daniel and others},
  journal={European Journal of Neuroscience},
  volume={61},
  number={3},
  pages={e16676},
  year={2025},
  publisher={Wiley Online Library}
}

@article{coronel2025integrated,
  title={An integrated computational approach for diversity-sensitive personalized medicine},
  author={Coronel-Oliveros, Carlos and Gatica, Marilyn and Herzog, Rub{\'e}n and Neri, Matteo},
  journal={Neuroscience},
  year={2025},
  publisher={Elsevier}
}

@article{santoro2025beyond,
  title={Beyond Pairwise Interactions: Charting Higher-Order Models of Brain Function},
  author={Santoro, Andrea and Neri, Matteo and Poetto, Simone and Orsenigo, Davide and Diano, Matteo and Gatica, Marilyn and Petri, GIovanni},
  journal={bioRxiv},
  pages={2025--06},
  year={2025},
  publisher={Cold Spring Harbor Laboratory}
}

@article{battiston2020networks,
  title={Networks beyond pairwise interactions: Structure and dynamics},
  author={Battiston, Federico and Cencetti, Giulia and Iacopini, Iacopo and Latora, Vito and Lucas, Maxime and Patania, Alice and Young, Jean-Gabriel and Petri, Giovanni},
  journal={Physics reports},
  volume={874},
  pages={1--92},
  year={2020},
  publisher={Elsevier}
}

@article{robiglio2025synergistic,
  title={Synergistic signatures of group mechanisms in higher-order systems},
  author={Robiglio, Thomas and Neri, Matteo and Coppes, Davide and Agostinelli, Cosimo and Battiston, Federico and Lucas, Maxime and Petri, Giovanni},
  journal={Physical review letters},
  volume={134},
  number={13},
  pages={137401},
  year={2025},
  publisher={APS}
}

@article{mcgill1954multivariate,
  title={Multivariate information transmission},
  author={McGill, William},
  journal={Transactions of the IRE Professional Group on Information Theory},
  volume={4},
  number={4},
  pages={93--111},
  year={1954},
  publisher={IEEE}
}

@article{kay2018exact,
  title={Exact partial information decompositions for Gaussian systems based on dependency constraints},
  author={Kay, Jim W and Ince, Robin AA},
  journal={Entropy},
  volume={20},
  number={4},
  pages={240},
  year={2018},
  publisher={MDPI}
}

@article{rosas2022disentangling,
  title={Disentangling high-order mechanisms and high-order behaviours in complex systems},
  author={Rosas, Fernando E and Mediano, Pedro AM and Luppi, Andrea I and Varley, Thomas F and Lizier, Joseph T and Stramaglia, Sebastiano and Jensen, Henrik J and Marinazzo, Daniele},
  journal={Nature Physics},
  volume={18},
  number={5},
  pages={476--477},
  year={2022},
  publisher={Nature Publishing Group UK London}
}

@article{liardi2025simple,
  title={Simple physical systems as a reference for multivariate information dynamics},
  author={Liardi, Alberto and Sas, Madalina I and Blackburne, George and Knottenbelt, William J and Mediano, Pedro AM and Jensen, Henrik Jeldtoft},
  journal={arXiv preprint arXiv:2504.10372},
  year={2025}
}

@article{barrett2015exploration,
  title={Exploration of synergistic and redundant information sharing in static and dynamical Gaussian systems},
  author={Barrett, Adam B},
  journal={Physical Review E},
  volume={91},
  number={5},
  pages={052802},
  year={2015},
  publisher={APS}
}

@article{ince2017measuring,
  title={Measuring multivariate redundant information with pointwise common change in surprisal},
  author={Ince, Robin AA},
  journal={Entropy},
  volume={19},
  number={7},
  pages={318},
  year={2017},
  publisher={Multidisciplinary Digital Publishing Institute}
}

@incollection{griffith2014quantifying,
  title={Quantifying synergistic mutual information},
  author={Griffith, Virgil and Koch, Christof},
  booktitle={Guided self-organization: inception},
  pages={159--190},
  year={2014},
  publisher={Springer}
}

@article{harder2013bivariate,
  title={Bivariate measure of redundant information},
  author={Harder, Malte and Salge, Christoph and Polani, Daniel},
  journal={Physical Review E},
  volume={87},
  number={1},
  pages={012130},
  year={2013},
  publisher={APS}
}

@inproceedings{bertschinger2013shared,
  title={Shared information—New insights and problems in decomposing information in complex systems},
  author={Bertschinger, Nils and Rauh, Johannes and Olbrich, Eckehard and Jost, J{\"u}rgen},
  booktitle={Proceedings of the European conference on complex systems 2012},
  pages={251--269},
  year={2013},
  organization={Springer}
}

@article{chan2017gene,
  title={Gene regulatory network inference from single-cell data using multivariate information measures},
  author={Chan, Thalia E and Stumpf, Michael PH and Babtie, Ann C},
  journal={Cell systems},
  volume={5},
  number={3},
  pages={251--267},
  year={2017},
  publisher={Elsevier}
}

@article{chen2018evaluating,
  title={Evaluating methods of inferring gene regulatory networks highlights their lack of performance for single cell gene expression data},
  author={Chen, Shuonan and Mar, Jessica C},
  journal={BMC bioinformatics},
  volume={19},
  number={1},
  pages={1--21},
  year={2018},
  publisher={BioMed Central}
}

@article{wibral2017partial,
  title={Partial information decomposition as a unified approach to the specification of neural goal functions},
  author={Wibral, Michael and Priesemann, Viola and Kay, Jim W and Lizier, Joseph T and Phillips, William A},
  journal={Brain and cognition},
  volume={112},
  pages={25--38},
  year={2017},
  publisher={Elsevier}
}

@article{james2018unique,
  title={Unique information via dependency constraints},
  author={James, Ryan G and Emenheiser, Jeffrey and Crutchfield, James P},
  journal={Journal of Physics A: Mathematical and Theoretical},
  volume={52},
  number={1},
  pages={014002},
  year={2018},
  publisher={IOP Publishing}
}

@article{barrett2010multivariate,
  title={Multivariate Granger causality and generalized variance},
  author={Barrett, Adam B and Barnett, Lionel and Seth, Anil K},
  journal={Physical Review E},
  volume={81},
  number={4},
  pages={041907},
  year={2010},
  publisher={APS}
}

@article{rosas2019quantifying,
  title={Quantifying high-order interdependencies via multivariate extensions of the mutual information},
  author={Rosas, Fernando E and Mediano, Pedro AM and Gastpar, Michael and Jensen, Henrik J},
  journal={Physical Review E},
  volume={100},
  number={3},
  pages={032305},
  year={2019},
  publisher={APS}
}

@article{jaynes1957information,
  title={Information theory and statistical mechanics},
  author={Jaynes, Edwin T},
  journal={Physical review},
  volume={106},
  number={4},
  pages={620},
  year={1957},
  publisher={APS}
}

@article{luppi2020synergisticCore,
  title={A synergistic core for human brain evolution and cognition},
  author={Luppi, Andrea I and Mediano, Pedro AM and Rosas, Fernando E and Holland, Negin and Fryer, Tim D and O’Brien, John T and Rowe, James B and Menon, David K and Bor, Daniel and Stamatakis, Emmanuel A},
  journal={BioRxiv},
  year={2020},
  publisher={Cold Spring Harbor Laboratory}
}

@article{beer2015information,
  title={Information processing and dynamics in minimally cognitive agents},
  author={Beer, Randall D and Williams, Paul L},
  journal={Cognitive science},
  volume={39},
  number={1},
  pages={1--38},
  year={2015},
  publisher={Wiley Online Library}
}

@inproceedings{finn2018quantifying,
  title={Quantifying information modification in cellular automata using pointwise partial information decomposition},
  author={Finn, Conor and Lizier, Joseph T},
  booktitle={Artificial Life Conference Proceedings},
  pages={386--387},
  year={2018},
  organization={MIT Press One Rogers Street, Cambridge, MA 02142-1209, USA journals-info~…}
}

@article{rosas2018information,
  title={An information-theoretic approach to self-organisation: Emergence of complex interdependencies in coupled dynamical systems},
  author={Rosas, Fernando and Mediano, Pedro AM and Ugarte, Mart{\'\i}n and Jensen, Henrik J},
  journal={Entropy},
  volume={20},
  number={10},
  pages={793},
  year={2018},
  publisher={MDPI}
}

@article{quax2017quantifying,
  title={Quantifying synergistic information using intermediate stochastic variables},
  author={Quax, Rick and Har-Shemesh, Omri and Sloot, Peter MA},
  journal={Entropy},
  volume={19},
  number={2},
  pages={85},
  year={2017},
  publisher={MDPI}
}

@article{griffith2014intersection,
  title={Intersection information based on common randomness},
  author={Griffith, Virgil and Chong, Edwin KP and James, Ryan G and Ellison, Christopher J and Crutchfield, James P},
  journal={Entropy},
  volume={16},
  number={4},
  pages={1985--2000},
  year={2014},
  publisher={Molecular Diversity Preservation International (MDPI)}
}

@article{griffith2015quantifying,
  title={Quantifying redundant information in predicting a target random variable},
  author={Griffith, Virgil and Ho, Tracey},
  journal={Entropy},
  volume={17},
  number={7},
  pages={4644--4653},
  year={2015},
  publisher={MDPI}
}

@article{schneidman2003synergy,
  title={Synergy, redundancy, and independence in population codes},
  author={Schneidman, Elad and Bialek, William and Berry, Michael J},
  journal={Journal of Neuroscience},
  volume={23},
  number={37},
  pages={11539--11553},
  year={2003},
  publisher={Soc Neuroscience}
}

@article{tax2017partial,
  title={The partial information decomposition of generative neural network models},
  author={Tax, Tycho MS and Mediano, Pedro AM and Shanahan, Murray},
  journal={Entropy},
  volume={19},
  number={9},
  pages={474},
  year={2017},
  publisher={MDPI}
}

@article{rosas2020operational,
  title={An operational information decomposition via synergistic disclosure},
  author={Rosas, Fernando E and Mediano, Pedro AM and Rassouli, Borzoo and Barrett, Adam B},
  journal={Journal of Physics A: Mathematical and Theoretical},
  volume={53},
  number={48},
  pages={485001},
  year={2020},
  publisher={IOP Publishing}
}

@article{varley2023partial,
  title={Partial entropy decomposition reveals higher-order information structures in human brain activity},
  author={Varley, Thomas F and Pope, Maria and Maria Grazia and Joshua and Sporns, Olaf},
  journal={Proceedings of the National Academy of Sciences},
  volume={120},
  number={30},
  pages={e2300888120},
  year={2023},
  publisher={National Acad Sciences}
}

@article{gatica2021high,
  title={High-order interdependencies in the aging brain},
  author={Gatica, Marilyn and Cofr{\'e}, Rodrigo and Mediano, Pedro AM and Rosas, Fernando E and Orio, Patricio and Diez, Ibai and Swinnen, Stephan P and Cortes, Jesus M},
  journal={Brain connectivity},
  volume={11},
  number={9},
  pages={734--744},
  year={2021},
  publisher={Mary Ann Liebert, Inc., publishers 140 Huguenot Street, 3rd Floor New~…}
}

@article{gatica2022high,
  title={High-order functional redundancy in ageing explained via alterations in the connectome in a whole-brain model},
  author={Gatica, Marilyn and E. Rosas, Fernando and AM Mediano, Pedro and Diez, Ibai and P. Swinnen, Stephan and Orio, Patricio and Cofr{\'e}, Rodrigo and M. Cortes, Jesus},
  journal={PLoS Computational Biology},
  volume={18},
  number={9},
  pages={e1010431},
  year={2022},
  publisher={Public Library of Science San Francisco, CA USA}
}

@article{varley2023multivariate,
  title={Multivariate information theory uncovers synergistic subsystems of the human cerebral cortex},
  author={Varley, Thomas F and Pope, Maria and Faskowitz, Joshua and Sporns, Olaf},
  journal={Communications biology},
  volume={6},
  number={1},
  pages={451},
  year={2023},
  publisher={Nature Publishing Group UK London}
}

@article{luppi2024information,
  title={Information decomposition and the informational architecture of the brain},
  author={Luppi, Andrea I and Rosas, Fernando E and Mediano, Pedro AM and Menon, David K and Stamatakis, Emmanuel A},
  journal={Trends in Cognitive Sciences},
  year={2024},
  publisher={Elsevier}
}

@article{ince2017partial,
  title={The Partial Entropy Decomposition: Decomposing multivariate entropy and mutual information via pointwise common surprisal},
  author={Ince, Robin AA},
  journal={arXiv preprint arXiv:1702.01591},
  year={2017}
}

@article{varley2024generalized,
  title={Generalized decomposition of multivariate information},
  author={Varley, Thomas F},
  journal={Plos one},
  volume={19},
  number={2},
  pages={e0297128},
  year={2024},
  publisher={Public Library of Science San Francisco, CA USA}
}

@article{varley2024scalable,
  title={A scalable synergy-first backbone decomposition of higher-order structures in complex systems},
  author={Varley, Thomas F},
  journal={npj Complexity},
  volume={1},
  number={1},
  pages={9},
  year={2024},
  publisher={Nature Publishing Group UK London}
}

@PREAMBLE{
 "\providecommand{\noopsort}[1]{}" 
 # "\providecommand{\singleletter}[1]{#1}%" 
}

@article{amari2001information,
  title={Information geometry on hierarchy of probability distributions},
  author={Amari, S-I},
  journal={IEEE transactions on information theory},
  volume={47},
  number={5},
  pages={1701--1711},
  year={2001},
  publisher={IEEE}
}

@article{shannon1948mathematical,
  title={A mathematical theory of communication},
  author={Shannon, Claude Elwood},
  journal={The Bell System Technical Journal},
  volume={27},
  number={3},
  pages={379--423},
  year={1948},
  publisher={Nokia Bell Labs}
}

@article{gutknecht2025shannon,
  title={Shannon invariants: {A} scalable approach to information decomposition},
  author={Gutknecht, Aaron J and Rosas, Fernando E and Ehrlich, David A and Makkeh, Abdullah and Mediano, Pedro A M and Wibral, Michael},
  journal={arXiv preprint arXiv:2504.15779},
  year={2025}
}

@article{rosas2016understanding,
  title={Understanding interdependency through complex information sharing},
  author={Rosas, Fernando and Ntranos, Vasilis and Ellison, Christopher J and Pollin, Sofie and Verhelst, Marian},
  journal={Entropy},
  volume={18},
  number={2},
  pages={38},
  year={2016},
  publisher={Multidisciplinary Digital Publishing Institute}
}

@book{cover1999elements,
  title={Elements of Information Theory},
  author={Cover, Thomas M and Thomas, Joy A},
  year={1999},
  publisher={John Wiley \& Sons}
}

@article{liardi2025null,
  title={Null models for comparing information decomposition across complex systems},
  author={Liardi, Alberto and Rosas, Fernando E and Carhart-Harris, Robin L and Blackburne, George and Bor, Daniel and Mediano, Pedro AM},
  journal={PLOS Computational Biology},
  volume={21},
  number={11},
  pages={e1013629},
  year={2025},
  publisher={Public Library of Science San Francisco, CA USA}
}

@article{proca2024synergistic,
  title={Synergistic information supports modality integration and flexible learning in neural networks solving multiple tasks},
  author={Proca, Alexandra M and Rosas, Fernando E and Luppi, Andrea I and Bor, Daniel and Crosby, Matthew and Mediano, Pedro AM},
  journal={PLOS Computational Biology},
  volume={20},
  number={6},
  pages={e1012178},
  year={2024},
  publisher={Public Library of Science San Francisco, CA USA}
}

@article{down2025logarithmic,
  author={Down, Keenan J. A. and Mediano, Pedro A. M.},
  journal={IEEE Transactions on Information Theory}, 
  title={A Logarithmic Decomposition and a Signed Measure Space for Entropy}, 
  year={2026},
  volume={72},
  number={1},
  pages={17-41},
  doi={10.1109/TIT.2025.3635730}
}

@article{down2025algebraic,
  title={Algebraic representations of entropy and fixed-sign information quantities},
  author={Down, Keenan JA and Mediano, Pedro AM},
  journal={Entropy},
  volume={27},
  number={2},
  pages={151},
  year={2025},
  publisher={MDPI}
}

@article{down2026synergistic,
  title={Synergistic and redundant information dynamics are modulated by Alzheimer's disease and cognitive impairment},
  author={Down, Keenan JA and Huntley, Jonathan and Mediano, Pedro AM and Bor, Daniel},
  journal={bioRxiv},
  pages={2026--02},
  year={2026},
  publisher={Cold Spring Harbor Laboratory}
}

@book{pearl2009causality,
  title={Causality},
  author={Pearl, Judea},
  year={2009},
  publisher={Cambridge University Press}
}

@article{pearl2009causal,
  title={Causal inference in statistics: An overview},
  author={Pearl, Judea},
  journal={Statistics Surveys},
  volume={3},
  pages={96--146},
  year={2009}
}

@article{pearl1995causal,
  title={Causal diagrams for empirical research},
  author={Pearl, Judea},
  journal={Biometrika},
  volume={82},
  number={4},
  pages={669--688},
  year={1995},
  publisher={Oxford University Press}
}

@inproceedings{pearl2012calculus,
  title={The do-calculus revisited},
  author={Pearl, Judea},
  booktitle={Proceedings of the Twenty-Eighth Conference on Uncertainty in Artificial Intelligence},
  pages={3--11},
  year={2012}
}

@article{bertschinger2014quantifying,
  title={Quantifying unique information},
  author={Bertschinger, Nils and Rauh, Johannes and Olbrich, Eckehard and Jost, J{\"u}rgen and Ay, Nihat},
  journal={Entropy},
  volume={16},
  number={4},
  pages={2161--2183},
  year={2014},
  publisher={Multidisciplinary Digital Publishing Institute}
}

@article{chechik2001group,
  title={Group redundancy measures reveal redundancy reduction in the auditory pathway},
  author={Chechik, Gal and Globerson, Amir and Anderson, M and Young, E and Nelken, Israel and Tishby, Naftali},
  journal={Advances in neural information processing systems},
  volume={14},
  year={2001}
}

@inproceedings{rauh2014reconsidering,
  title={Reconsidering unique information: {Towards} a multivariate information decomposition},
  author={Rauh, Johannes and Bertschinger, Nils and Olbrich, Eckehard and Jost, J{\"u}rgen},
  booktitle={2014 IEEE International Symposium on Information Theory},
  pages={2232--2236},
  year={2014},
  organization={IEEE}
}

@article{finn2018pointwise,
  title={Pointwise partial information decomposition using the specificity and ambiguity lattices},
  author={Finn, Conor and Lizier, Joseph T},
  journal={Entropy},
  volume={20},
  number={4},
  pages={297},
  year={2018},
  publisher={MDPI}
}

@article{kolchinsky2022novel,
  title={A novel approach to the partial information decomposition},
  author={Kolchinsky, Artemy},
  journal={Entropy},
  volume={24},
  number={3},
  pages={403},
  year={2022},
  publisher={MDPI}
}

@article{james2018dit,
  title={``dit``: {A Python} package for discrete information theory},
  author={James, Ryan G and Ellison, Christopher J and Crutchfield, James P},
  journal={Journal of Open Source Software},
  volume={3},
  number={25},
  pages={738},
  year={2018}
}

@article{schneidman2006weak,
  title={Weak pairwise correlations imply strongly correlated network states in a neural population},
  author={Schneidman, Elad and Berry, Michael J and Segev, Ronen and Bialek, William},
  journal={Nature},
  volume={440},
  number={7087},
  pages={1007--1012},
  year={2006},
  publisher={Nature Publishing Group UK London}
}

@article{rauh2023continuity,
  title={Continuity and additivity properties of information decompositions},
  author={Rauh, Johannes and Banerjee, Pradeep Kr and Olbrich, Eckehard and Mont{\'u}far, Guido and Jost, J{\"u}rgen},
  journal={International Journal of Approximate Reasoning},
  volume={161},
  pages={108979},
  year={2023},
  publisher={Elsevier}
}

@article{lyu2025whole,
  title={The Whole Is Less than the Sum of Parts: Subsystem Inconsistency in Partial Information Decomposition},
  author={Lyu, Aobo and Clark, Andrew and Raviv, Netanel},
  journal={arXiv preprint arXiv:2510.14864},
  year={2025}
}

@article{banerjee2015synergy,
  title={Synergy, redundancy and common information},
  author={Banerjee, Pradeep Kr and Griffith, Virgil},
  journal={arXiv preprint arXiv:1509.03706},
  year={2015}
}

@article{goodwell2017temporal,
  title={Temporal information partitioning: Characterizing synergy, uniqueness, and redundancy in interacting environmental variables},
  author={Goodwell, Allison E and Kumar, Praveen},
  journal={Water Resources Research},
  volume={53},
  number={7},
  pages={5920--5942},
  year={2017},
  publisher={Wiley Online Library}
}

@article{sigtermans2020partial,
  title={A Partial Information Decomposition Based on Causal Tensors},
  author={Sigtermans, David},
  journal={arXiv preprint arXiv:2001.10481},
  year={2020}
}

@article{sigtermans2020towards,
  title={Towards a framework for observational causality from time series: when Shannon meets Turing},
  author={Sigtermans, David},
  journal={Entropy},
  volume={22},
  number={4},
  pages={426},
  year={2020},
  publisher={MDPI}
}

@inproceedings{banerjee2018unique,
  title={Unique informations and deficiencies},
  author={Banerjee, Pradeep Kr and Olbrich, Eckehard and Jost, J{\"u}rgen and Rauh, Johannes},
  booktitle={2018 56th Annual Allerton Conference on Communication, Control, and Computing (Allerton)},
  pages={32--38},
  year={2018},
  organization={IEEE}
}

@article{james2018uniquekey,
  title={Unique information and secret key agreement},
  author={James, Ryan G and Emenheiser, Jeffrey and Crutchfield, James P},
  journal={Entropy},
  volume={21},
  number={1},
  pages={12},
  year={2018},
  publisher={MDPI}
}

@inproceedings{lyu2024explicit,
  title={Explicit formula for partial information decomposition},
  author={Lyu, Aobo and Clark, Andrew and Raviv, Netanel},
  booktitle={2024 IEEE International Symposium on Information Theory (ISIT)},
  pages={2329--2334},
  year={2024},
  organization={IEEE}
}

@inproceedings{niu2019measure,
  title={A measure of synergy, redundancy, and unique information using information geometry},
  author={Niu, Xueyan and Quinn, Christopher J},
  booktitle={2019 IEEE International Symposium on Information Theory (ISIT)},
  pages={3127--3131},
  year={2019},
  organization={IEEE}
}

@article{li2011connection,
  title={On a connection between information and group lattices},
  author={Li, Hua and Chong, Edwin KP},
  journal={Entropy},
  volume={13},
  number={3},
  pages={683--708},
  year={2011},
  publisher={MDPI}
}

@inproceedings{venkatesh2022partial,
  title={Partial information decomposition via deficiency for multivariate gaussians},
  author={Venkatesh, Praveen and Schamberg, Gabriel},
  booktitle={2022 IEEE International Symposium on Information Theory (ISIT)},
  pages={2892--2897},
  year={2022},
  organization={IEEE}
}

@article{mages2023decomposing,
  title={Decomposing and tracing mutual information by quantifying reachable decision regions},
  author={Mages, Tobias and Rohner, Christian},
  journal={Entropy},
  volume={25},
  number={7},
  pages={1014},
  year={2023},
  publisher={MDPI}
}

@article{makkeh2021introducing,
  title={Introducing a differentiable measure of pointwise shared information},
  author={Makkeh, Abdullah and Gutknecht, Aaron J and Wibral, Michael},
  journal={Physical Review E},
  volume={103},
  number={3},
  pages={032149},
  year={2021},
  publisher={APS}
}

@article{rauh2017extractable,
  title={On extractable shared information},
  author={Rauh, Johannes and Banerjee, Pradeep Kr and Olbrich, Eckehard and Jost, J{\"u}rgen and Bertschinger, Nils},
  journal={Entropy},
  volume={19},
  number={7},
  pages={328},
  year={2017},
  publisher={MDPI}
}

@article{kay2024partial,
  title={A partial information decomposition for multivariate gaussian systems based on information geometry},
  author={Kay, Jim W},
  journal={Entropy},
  volume={26},
  number={7},
  pages={542},
  year={2024},
  publisher={MDPI}
}

@article{ehrlich2024partial,
  title={Partial information decomposition for continuous variables based on shared exclusions: Analytical formulation and estimation},
  author={Ehrlich, David A and Schick-Poland, Kyle and Makkeh, Abdullah and Lanfermann, Felix and Wollstadt, Patricia and Wibral, Michael},
  journal={Physical Review E},
  volume={110},
  number={1},
  pages={014115},
  year={2024},
  publisher={APS}
}

@article{schick2021partial,
  title={A partial information decomposition for discrete and continuous variables},
  author={Schick-Poland, Kyle and Makkeh, Abdullah and Gutknecht, Aaron J and Wollstadt, Patricia and Sturm, Anja and Wibral, Michael},
  journal={arXiv preprint arXiv:2106.12393},
  year={2021}
}

@article{finn2018probability,
  title={Probability mass exclusions and the directed components of mutual information},
  author={Finn, Conor and Lizier, Joseph T},
  journal={Entropy},
  volume={20},
  number={11},
  pages={826},
  year={2018},
  publisher={MDPI}
}

@article{nyquist1924certain,
  title={Certain Factors Affecting Telegraph Speed 1},
  author={Nyquist, H},
  journal={Bell System Technical Journal},
  volume={3},
  number={2},
  pages={324--346},
  year={1924},
  publisher={Wiley Online Library}
}

@article{hartley1928transmission,
  title={Transmission of information 1},
  author={Hartley, Ralph VL},
  journal={Bell System technical journal},
  volume={7},
  number={3},
  pages={535--563},
  year={1928},
  publisher={Wiley Online Library}
}

@book{clausius1865ueber,
  title={Ueber verschiedene f{\"u}r die Anwendung bequeme Formen der Hauptgleichungen der mechanischen W{\"a}rmetheorie: vorgetragen in der naturforsch. Gesellschaft den 24. April 1865},
  author={Clausius, Rudolf},
  year={1865},
  publisher={Verlag nicht ermittelbar}
}

@incollection{boltzmann1970weitere,
  title={Weitere studien {\"u}ber das w{\"a}rmegleichgewicht unter gasmolek{\"u}len},
  author={Boltzmann, Ludwig},
  booktitle={Kinetische Theorie II: Irreversible Prozesse Einf{\"u}hrung und Originaltexte},
  pages={115--225},
  year={1970},
  publisher={Springer}
}

@article{boltzmann1872weitere,
  author    = {Boltzmann, Ludwig},
  title     = {Weitere Studien über das Wärmegleichgewicht unter Gasmolekülen},
  journal   = {Sitzungsberichte der Kaiserlichen Akademie der Wissenschaften in Wien},
  series    = {Mathematisch-Naturwissenschaftliche Classe},
  volume    = {66},
  number    = {2},
  pages     = {275--370},
  year      = {1872},
  language  = {German}
}

@article{james2017multivariate,
  title={Multivariate dependence beyond Shannon information},
  author={James, Ryan G and Crutchfield, James P},
  journal={Entropy},
  volume={19},
  number={10},
  pages={531},
  year={2017},
  publisher={MDPI}
}

@inproceedings{bell2003co,
  title={The co-information lattice},
  author={Bell, Anthony J},
  booktitle={Proceedings of the fifth international workshop on independent component analysis and blind signal separation: ICA},
  volume={2003},
  year={2003}
}

@article{yeung1991new,
  title={A new outlook on Shannon's information measures},
  author={Yeung, Raymond W},
  journal={IEEE transactions on information theory},
  volume={37},
  number={3},
  pages={466--474},
  year={1991},
  publisher={IEEE}
}

@inproceedings{bertschinger2014blackwell,
  title={The Blackwell relation defines no lattice},
  author={Bertschinger, Nils and Rauh, Johannes},
  booktitle={2014 IEEE International Symposium on Information Theory},
  pages={2479--2483},
  year={2014},
  organization={IEEE}
}

@article{blackwell1953equivalent,
  title={Equivalent comparisons of experiments},
  author={Blackwell, David},
  journal={The annals of mathematical statistics},
  pages={265--272},
  year={1953},
  publisher={JSTOR}
}

@book{torgersen1991comparison,
  title={Comparison of statistical experiments},
  author={Torgersen, Erik},
  volume={36},
  year={1991},
  publisher={Cambridge University Press}
}

@article{rauh2017coarse,
  title={Coarse-graining and the Blackwell order},
  author={Rauh, Johannes and Banerjee, Pradeep Kr and Olbrich, Eckehard and Jost, J{\"u}rgen and Bertschinger, Nils and Wolpert, David},
  journal={Entropy},
  volume={19},
  number={10},
  pages={527},
  year={2017},
  publisher={MDPI}
}

@inproceedings{wolf2004zero,
  title={Zero-error information and applications in cryptography},
  author={Wolf, Stefan and Wultschleger, J},
  booktitle={Information Theory Workshop},
  pages={1--6},
  year={2004},
  organization={IEEE}
}

@article{mediano2025toward,
  title={Toward a unified taxonomy of information dynamics via Integrated Information Decomposition},
  author={Mediano, Pedro AM and Rosas, Fernando E and Luppi, Andrea I and Carhart-Harris, Robin L and Bor, Daniel and Seth, Anil K and Barrett, Adam B},
  journal={Proceedings of the National Academy of Sciences},
  volume={122},
  number={39},
  pages={e2423297122},
  year={2025},
  publisher={National Academy of Sciences}
}

@article{krippendorff2009ross,
  title={Ross Ashby's information theory: a bit of history, some solutions to problems, and what we face today},
  author={Krippendorff, Klaus},
  journal={International journal of general systems},
  volume={38},
  number={2},
  pages={189--212},
  year={2009},
  publisher={Taylor \& Francis}
}

@article{zwick2004overview,
  title={An overview of reconstructability analysis},
  author={Zwick, Martin},
  journal={Kybernetes},
  volume={33},
  number={5/6},
  pages={877--905},
  year={2004},
  publisher={Emerald Group Publishing Limited}
}

@article{pica2017invariant,
  title={Invariant components of synergy, redundancy, and unique information among three variables},
  author={Pica, Giuseppe and Piasini, Eugenio and Chicharro, Daniel and Panzeri, Stefano},
  journal={Entropy},
  volume={19},
  number={9},
  pages={451},
  year={2017},
  publisher={MDPI}
}

@article{chicharro2017synergy,
  title={Synergy and redundancy in dual decompositions of mutual information gain and information loss},
  author={Chicharro, Daniel and Panzeri, Stefano},
  journal={Entropy},
  volume={19},
  number={2},
  pages={71},
  year={2017},
  publisher={MDPI}
}

@article{ay2021information,
  title={Information decomposition based on cooperative game theory},
  author={Ay, Nihat and Polani, Daniel and Virgo, Nathaniel},
  journal={Kybernetika},
  volume={56},
  number={5},
  pages={979--1014},
  year={2021}
}

@article{rauh2017secret,
  title={Secret sharing and shared information},
  author={Rauh, Johannes},
  journal={Entropy},
  volume={19},
  number={11},
  pages={601},
  year={2017},
  publisher={MDPI}
}

@article{perrone2016hierarchical,
  title={Hierarchical quantification of synergy in channels},
  author={Perrone, Paolo and Ay, Nihat},
  journal={Frontiers in Robotics and AI},
  volume={2},
  pages={35},
  year={2016},
  publisher={Frontiers Media SA}
}

@article{olbrich2015information,
  title={Information decomposition and synergy},
  author={Olbrich, Eckehard and Bertschinger, Nils and Rauh, Johannes},
  journal={Entropy},
  volume={17},
  number={5},
  pages={3501--3517},
  year={2015},
  publisher={MDPI}
}

@article{schneidman2003network,
  title={Network information and connected correlations},
  author={Schneidman, Elad and Still, Susanne and Berry, Michael J and Bialek, William},
  journal={Physical review letters},
  volume={91},
  number={23},
  pages={238701},
  year={2003},
  publisher={APS}
}

@article{chicharro2018identity,
  title={The identity of information: How deterministic dependencies constrain information synergy and redundancy},
  author={Chicharro, Daniel and Pica, Giuseppe and Panzeri, Stefano},
  journal={Entropy},
  volume={20},
  number={3},
  pages={169},
  year={2018},
  publisher={MDPI}
}

@article{matthias2025novel,
  title={Novel Inconsistency Results for Partial Information Decomposition},
  author={Matthias, Philip Hendrik and Makkeh, Abdullah and Wibral, Michael and Gutknecht, Aaron J},
  journal={arXiv preprint arXiv:2512.16662},
  year={2025}
}

@article{gutknecht2025babel,
  title={From Babel to Boole: the logical organization of information decompositions},
  author={Gutknecht, Aaron J and Makkeh, Abdullah and Wibral, Michael},
  journal={Proceedings of the Royal Society A},
  volume={481},
  number={2310},
  pages={20240174},
  year={2025},
  publisher={The Royal Society}
}

@book{pearl2014probabilistic,
  title={Probabilistic reasoning in intelligent systems: networks of plausible inference},
  author={Pearl, Judea},
  year={2014},
  publisher={Elsevier}
}

@article{jansma2025fast,
  title={Fast M{\"o}bius transform: An algebraic approach to information decomposition},
  author={Jansma, Abel and Mediano, Pedro AM and Rosas, Fernando E},
  journal={Physical Review Research},
  volume={7},
  number={3},
  pages={033049},
  year={2025},
  publisher={APS}
}

@incollection{rota1964foundations,
  title={On the foundations of combinatorial theory: I. Theory of M{\"o}bius functions},
  author={Rota, Gian-Carlo},
  booktitle={Classic Papers in Combinatorics},
  pages={332--360},
  year={1964},
  publisher={Springer}
}

@article{eppstein1995zonohedra,
  title={Zonohedra and zonotopes},
  author={Eppstein, David},
  year={1995}
}

@article{shannon1953lattice,
  title={The lattice theory of information},
  author={Shannon, Claude},
  journal={Transactions of the IRE professional Group on Information Theory},
  volume={1},
  number={1},
  pages={105--107},
  year={1953},
  publisher={IEEE}
}

@article{aumann2016agreeing,
  title={Agreeing to disagree},
  author={Aumann, Robert J},
  journal={Readings in Formal Epistemology},
  pages={859--862},
  year={2016},
  publisher={Springer}
}

@article{gacs1973common,
  title={Common information is far less than mutual information.},
  author={G{\'a}cs, Peter and Korner, Janos and others},
  journal={Problems of Control and Information Theory},
  volume={2},
  pages={149--162},
  year={1973},
  publisher={Elsevier Science Limited: Oxford Fulfillment Center, PO Box 800, Kidlington~…}
}

@inproceedings{de2008z3,
  title={Z3: An efficient SMT solver},
  author={De Moura, Leonardo and Bj{\o}rner, Nikolaj},
  booktitle={International conference on Tools and Algorithms for the Construction and Analysis of Systems},
  pages={337--340},
  year={2008},
  organization={Springer}
}

@article{faes2025predictive,
  title={Predictive information decomposition as a tool to quantify emergent dynamical behaviors in physiological networks},
  author={Faes, Luca and Mijatovic, Gorana and Sparacino, Laura and Porta, Alberto},
  journal={IEEE Transactions on Biomedical Engineering},
  year={2025},
  publisher={IEEE}
}

@phdthesis{williams2011information,
  title={Information dynamics: Its theory and application to embodied cognitive systems},
  author={Williams, Paul L},
  year={2011},
  school={Indiana University}
}

@article{rauh2021properties,
  title={Properties of unique information},
  author={Rauh, Johannes and Sch{\"u}nemann, Maik and Jost, J{\"u}rgen},
  journal={Kybernetika},
  volume={57},
  number={3},
  pages={383--403},
  year={2021},
  publisher={Institute of Information Theory and Automation AS CR}
}

@article{walpole2013multiscale,
  title={Multiscale computational models of complex biological systems},
  author={Walpole, Joseph and Papin, Jason A and Peirce, Shayn M},
  journal={Annual review of biomedical engineering},
  volume={15},
  number={1},
  pages={137--154},
  year={2013},
  publisher={Annual Reviews}
}

@article{dada2011multi,
  title={Multi-scale modelling and simulation in systems biology},
  author={Dada, Joseph O and Mendes, Pedro},
  journal={Integrative Biology},
  volume={3},
  number={2},
  pages={86--96},
  year={2011},
  publisher={Oxford University Press}
}

@article{rajpal2025synergistic,
  title={Synergistic small worlds that drive technological sophistication},
  author={Rajpal, Hardik and Guerrero, Omar},
  journal={PNAS nexus},
  volume={4},
  number={4},
  pages={pgaf102},
  year={2025},
  publisher={Oxford University Press US}
}

@incollection{dilworth1990decomposition,
  title={A decomposition theorem for partially ordered sets},
  author={Dilworth, Robert P},
  booktitle={The Dilworth Theorems: Selected Papers of Robert P. Dilworth},
  pages={7--12},
  year={1990},
  publisher={Springer}
}

@article{reing2021discovering,
  title={Discovering higher-order interactions through Neural Information Decomposition},
  author={Reing, Kyle and Ver Steeg, Greg and Galstyan, Aram},
  journal={Entropy},
  volume={23},
  number={1},
  pages={79},
  year={2021},
  publisher={MDPI}
}

@article{gutknecht2021bits,
  title={Bits and pieces: Understanding information decomposition from part-whole relationships and formal logic},
  author={Gutknecht, Aaron J and Wibral, Michael and Makkeh, Abdullah},
  journal={Proceedings of the Royal Society A},
  volume={477},
  number={2251},
  pages={20210110},
  year={2021},
  publisher={The Royal Society Publishing}
}

\end{document}